\documentclass[journal,onecolumn]{IEEEtran}
 \usepackage{lipsum,graphicx,multicol}
\usepackage{color,soul}
\usepackage{mathtools}
\usepackage{graphicx}
\usepackage{color,soul}
\usepackage{setspace}
\usepackage{multirow}
\usepackage{tablefootnote}
\usepackage[norelsize, linesnumbered, ruled, lined, boxed, commentsnumbered]{algorithm2e}
\usepackage{pifont}
\usepackage[normalem]{ulem}
\usepackage{comment}
\usepackage[font=small,labelfont=bf]{caption}
\usepackage{subcaption}
\usepackage{bm, amsmath, amssymb, amsthm}
\usepackage{amsfonts}
\usepackage {amssymb,amsmath,xcolor}
\usepackage{booktabs}
\usepackage{cite}
\usepackage{array}
\usepackage[acronym]{glossaries}
\usepackage{cases}
\usepackage{array}
\usepackage{makecell}
\usepackage{float}

\newtheorem{definition}{Definition}

\usepackage[colorlinks = true,
    linkcolor = red,
    urlcolor  = black,
    citecolor = red, 
    anchorcolor = magenta]{hyperref}
\usepackage{cleveref}

\def\BibTeX{{\rm B\kern-.05em{\sc i\kern-.025em b}\kern-.08em
    T\kern-.1667em\lower.7ex\hbox{E}\kern-.125emX}}
    
  \newtheorem{theorem}{Theorem}

\newtheorem{problem}{Problem}
\newtheorem{corollary}{Corollary}
\newtheorem{proposition}{Proposition}
\newtheorem{lemma}{Lemma}
\newtheorem{remark}{Remark}

\newcommand{\bieee}{\begin{IEEEeqnarray}{rCl}}
\newcommand{\eieee}{\end{IEEEeqnarray}}

\usepackage{amsthm,mdframed,calc}
 
\usepackage{adjustbox}
\SetKwInput{KwInput}{Input}                
\SetKwInput{KwOutput}{Output}              
 
\newcommand{\cmark}{\ding{51}} 
\newcommand{\xmark}{\ding{55}} 

\begin{document}

\title{Robust Berrut-Approximated Coded Computing via Discrete Cosine Transforms}

\author{\IEEEauthorblockN{Rimpi Borah and J. Harshan}
\thanks{This work was presented in part at the 2025 IEEE Information Theory Workshop (ITW), Sydney, Australia, September 2025 \cite{SBACC}.}
\\ \IEEEauthorblockA{Department of Electrical Engineering, Indian Institute of Technology Delhi, India}}
\maketitle
\begin{abstract}
Coded computing is a reliable and fault-tolerant paradigm for executing large-scale computational tasks over distributed worker nodes. Among existing coded computing frameworks, Berrut Approximated Coded Computing (BACC) enables distributed computation of arbitrary non-polynomial functions through rational interpolation. Although BACC provides provable approximation guarantees and resilience against straggling workers, its robustness against Byzantine workers remains largely unexplored. To fill this research gap, we propose Robust Berrut Approximated Coded Computing (RBACC), which establishes a coding-theoretic framework for BACC by enabling error localization and error correction in the presence of Byzantine workers. In particular, RBACC introduces a new choice of evaluation points that establishes a connection between Berrut interpolation and Discrete Cosine Transform (DCT) codes, thereby enabling error localization and error correction under finite-precision arithmetic. We derive analytical upper bounds on the approximation error of RBACC under multiple operating scenarios, including straggler-only systems and systems with Byzantine workers under finite-precision arithmetic. Building upon this analysis, we formulate several optimization problems for selecting the DCT code dimension and for assigning encoded evaluations to unreliable workers. We show that these are previously unexplored design parameters that can be systematically optimized to improve the reconstruction accuracy. Experimental results demonstrate that the proposed RBACC framework effectively mitigates stragglers and Byzantine workers while offering improved reconstruction accuracy over the baselines. 
\end{abstract}

\begin{IEEEkeywords}
Coded Computing, Approximate Coded Computing, Berrut Rational Interpolation, Stragglers,  Numerical Stability, Byzantine Workers, Discrete Cosine Transform Codes
\end{IEEEkeywords}

\section{Introduction}
\label{sec:introduction}
Coded computing is an efficient and fault-tolerant framework comprising a master server and a set of multiple workers for performing large-scale computations in a distributed manner. Originally introduced to mitigate the impact of slow or failed workers (popularly known as stragglers) \cite{b1,w1}, coded computing incorporates redundancy into distributed computation such that the desired result can be recovered from only a subset of worker responses. Consequently, the
master server is not required to wait for all the workers to complete their computations, thereby reducing the impact of stragglers. Subsequent research in this field has revealed that distributed computing systems are also vulnerable to various adversarial threats. In particular, there may be the so-called Byzantine workers among the worker nodes, which return erroneous computations. Also, there may be honest-but-curious workers, which may collude to infer sensitive information about the underlying datasets. Interestingly, the redundancy introduced by coded computing is also known to facilitate privacy preservation as well as robustness against erroneous computations, leading to a unified framework that simultaneously addresses straggler mitigation, Byzantine robustness, and data privacy.

Over the last decade, coded computing has been successfully applied to a variety of distributed computing problems, particularly distributed matrix multiplication and distributed polynomial computation. Representative examples include codes for distributed matrix multiplication \cite{s1}, Polynomial Codes and Entangled Polynomial Codes \cite{s2,s3}, Lagrange Coded Computing (LCC) \cite{b3}, Generalized LCC \cite{b5}, Numerically Stable LCC (NSLCC) \cite{NSLCC}, Folded Lagrange Coded Computing (FLCC) \cite{b4}, and verifiable LCC for distributed computation of multiple functions \cite{verifiableCC}. Among these, LCC and its analog counterpart, namely Analog Lagrange Coded Computing (ALCC) \cite{b6}, have attracted significant attention for distributed polynomial function computation since they offer resilience against stragglers, robustness against Byzantine workers \cite{a2,robust ALCC}, and support privacy-preserving computation \cite{privacy ALCC,alcc DL}. Since the developments presented in this work are along the lines of LCC and ALCC, we briefly review the underlying coded computing framework along with the representative polynomial and rational coded computing schemes. We restrict our discussion to straggler resilience and Byzantine robustness, as privacy preservation is beyond the scope of this work.

\begin{table*}[ht!]
\centering
\footnotesize
\renewcommand{\arraystretch}{1.15}
\setlength{\tabcolsep}{4pt}
\caption{Comparison of representative coded computing frameworks in terms of the encoding function, encoding and evaluation points, reconstruction method, and resilience against stragglers and Byzantine workers.}
\label{tab:all method encode decode}

\begin{tabular}{|c|c|p{2.7cm}|p{3.1cm}|p{3.5cm}|>{\centering\arraybackslash}p{1.2cm}|>{\centering\arraybackslash}p{1.35cm}|}
\hline

\textbf{Scheme}
&
\makecell[c]{\textbf{Encoding}\\\textbf{Function}\\$\gamma_k(z)$}
&
\makecell[c]{\textbf{Encoding}\\\textbf{Points}\\$\alpha_k$}
&
\makecell[c]{\textbf{Evaluation}\\\textbf{Points}\\$z_i$}
&
\makecell[c]{\textbf{Reconstruction /}\\\textbf{Decoding Method}}
&
\makecell[c]{\textbf{Straggler}\\\textbf{Resilience}}
&
\makecell[c]{\textbf{Byzantine}\\\textbf{Resilience}}
\\
\hline

LCC~\cite{b3}
&
Lagrange
&
Distinct field elements
&
Distinct field elements
&
\makecell[l]{Polynomial interpolation using\\Vandermonde matrix inversion}
&
$\checkmark$
&
RS code
\\
\hline

ALCC~\cite{b6}
&
Lagrange
&
Roots of unity
&
Roots of unity
&
\makecell[l]{Polynomial interpolation using\\Vandermonde matrix inversion}
&
$\checkmark$
&
DFT code
\\
\hline

NSLCC~\cite{NSLCC}
&
Lagrange
&
Chebyshev nodes of the first kind
&
Chebyshev nodes of the first kind
&
\makecell[l]{Polynomial interpolation using\\Chebyshev Vandermonde \\ matrix inversion}
&
$\checkmark$
&
Not known
\\
\hline

BACC~\cite{b7}
&
Rational (Berrut)
&
Chebyshev points of the first kind
&
Chebyshev points of the second kind
&
Berrut interpolation
&
$\checkmark$
&
Not known
\\
\hline

\shortstack{\textbf{RBACC}\\\textbf{(This Work)}}
&
\textbf{Rational (Berrut)}
&
\textbf{Chebyshev points of the first kind}
&
\textbf{Chebyshev points of the first kind}
&
\makecell[l]{\textbf{Berrut rational interpolation}}
&
$\checkmark$
&
\textbf{DCT code}
\\
\hline
\end{tabular}
\end{table*}

\subsection{Overview of Coded Computing Frameworks}
\label{overview_coded_coputing}

Consider a distributed computing setup in which a master server intends to evaluate a target function $f:\mathbb{R}^{m\times n}\rightarrow\mathbb{R}^{m\times n}$ on multiple datasets $\{\mathbf{X}_0,\mathbf{X}_1,\ldots,\mathbf{X}_{K-1}\}$, where $\mathbf{X}_j\in\mathbb{R}^{m\times n}$ for each $j\in[K]$, with $[K]\triangleq\{0,1,\ldots,K-1\}$. In such a setting, the master first constructs an encoding function
$u(z)=\sum_{j=0}^{K-1}\gamma_j(z)\mathbf{X}_j,$
where the cardinal basis functions $\{\gamma_j(z)\}$ satisfy $\gamma_j(\alpha_k)=\delta_{jk}$ over a set of encoding points $\{\alpha_k\}_{k=0}^{K-1}$, with $\delta_{jk}$ denoting the Kronecker delta. The master then evaluates $u(z)$ at a collection of evaluation points $\{z_i\}_{i=1}^{N}$ and distributes the encoded data $\{u(z_i)\}$ to the workers. Each worker computes $f(u(z_i))$ and returns the result to the master. Upon receiving the worker computations, the master reconstructs the composite function $f(u(z))$ using a suitable reconstruction method and subsequently recovers the desired outputs $\{f(\mathbf{X}_j)\}$ by evaluating the reconstructed function at the encoding points $\{\alpha_j\}_{j=0}^{K-1}$. While the above framework is generic, the specific characteristics of a coded computing framework are determined by the choice of the basis functions, the encoding points, the worker evaluation points, and the corresponding reconstruction or decoding method employed at the master.

Most existing coded computing frameworks assume that the target function $f(\cdot)$ is polynomial. Under this assumption, if $u(z)$ is chosen as a finite-degree polynomial, then $f(u(z))$ also remains a finite-degree polynomial. Therefore, $f(u(z))$ can be reconstructed from only a subset of worker computations through polynomial interpolation using Vandermonde matrix inversion. Consequently, resilience against stragglers is achieved as long as the effective degree of $f(u(z))$ is much less than the total number of workers \cite{b3, b6}. Furthermore, the finite-dimensional nature of the composite polynomial $f(u(z))$ along with an appropriate choice of the worker evaluation points $\{z_i\}$ are known to induce an algebraic error-correcting code structure on the computations returned by the workers. As a result, erroneous computations introduced by a bounded number of Byzantine workers can be detected and corrected through suitable decoding algorithms. In short, while the encoding function governs the recovery threshold against stragglers through the degree of the composite function $f(u(z))$, the choice of the worker evaluation points influences the numerical stability and reconstruction accuracy in the presence of stragglers and Byzantine workers. This principle underlies several influential frameworks, including LCC \cite{b3}, NSLCC \cite{NSLCC}, and Robust ALCC \cite{robust ALCC} where different choices of worker evaluation points lead to distinct tradeoffs in numerical stability and coding-theoretic robustness.


\begin{figure}[ht]
    \centering
 \includegraphics[height=4.5cm]{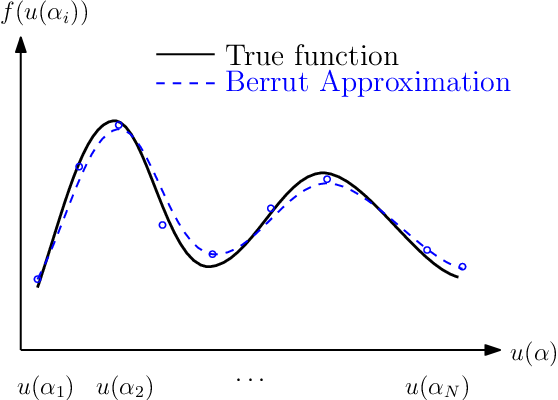}
 \caption{Depicting function approximation and reconstruction in BACC}
 \label{fig:BACC_intro}
\end{figure}

Extending polynomial coded computing frameworks to arbitrary non-polynomial target functions typically requires high-degree polynomial approximations. As the approximation degree increases, the degree of the composite function $f(u(z))$ also increases, thereby requiring a large number of worker nodes for successful reconstruction. Consequently, these frameworks are inherently unsuitable in small-to-medium sized distributed systems for many machine learning, deep learning, and signal processing applications involving non-polynomial functions. To address these limitations, Berrut Approximated Coded Computing (BACC) was proposed in \cite{b7}. The key idea of BACC is to replace the polynomial interpolants used in conventional coded computing schemes with Berrut rational interpolants. In particular, BACC employs Berrut rational interpolants as the encoding functions $\{\gamma_{i}(z) | 1 \leq i\leq K\}$, Chebyshev points of the first kind as the encoding points $\{\alpha_{j} | 1 \leq j\leq K\}$, and Chebyshev points of the second kind as the evaluation points $\{z_{i} | 1 \leq i \leq N\}$. Under this construction, BACC is known to enable numerically stable approximate distributed computation of arbitrary functions while retaining resilience against stragglers. For illustration, Fig.~\ref{fig:BACC_intro} shows how the worker evaluations $\{f(u(z_i))\}_{i=1}^{N}$ are used to reconstruct an approximation of the composite function $f(u(z))$ via Berrut rational interpolation, using which the approximate values of the desired outputs $\{f(u(\alpha_j))\}_{j=1}^{K}$ are subsequently recovered. Owing to these attractive approximation and numerical-stability properties, the BACC framework has inspired several subsequent extensions\cite{a3,b8,spotcc,barycentricMEC}. 

Overall, Table~\ref{tab:all method encode decode} summarizes some representative coded computing frameworks in terms of their encoding basis functions, encoding points, worker evaluation points, reconstruction methods, and resilience against stragglers and Byzantine workers.

\subsection{Motivation and Problem Statement}
\label{motivation and problem statement}

Despite its ability to approximately compute arbitrary functions with provable accuracy guarantees, the development of coding-theoretic techniques for mitigating Byzantine workers in the BACC framework remains largely unexplored. In contrast to LCC, ALCC, and their robust variants \cite{b3,robust ALCC}, where the worker computations naturally form codewords of well-understood algebraic codes due to the finite-dimensional polynomial structure of the composite function $f(u(z))$, the target function in BACC is generally non-polynomial. Consequently, the corresponding worker computations are not naturally confined to a finite-dimensional polynomial subspace, and therefore do not inherently satisfy the algebraic constraints required to form a classical error-correcting code. Furthermore, the choice of the worker evaluation points plays a fundamental role in determining the algebraic structure of the induced linear code, including its generator and parity-check matrices. This in turn determines whether the code possesses desirable properties such as the Maximum Distance Separable (MDS) property and supports efficient syndrome-based error detection and correction. Since BACC employs Chebyshev points of the second kind as worker evaluation points, it is not immediately clear whether the resulting worker computations admit such an algebraic code structure. Consequently, it remains an open question whether coding-theoretic tools developed for polynomial coded computing can be directly extended to the BACC framework. Thus, towards closing these research gaps, we pose the following research question: How can we develop a coding-theoretic framework for BACC in order to provide robustness against Byzantine workers for distributed computation of arbitrary non-polynomial functions? Specifically, can we develop a coding-theoretic framework that
\begin{enumerate}
    \item admits an underlying algebraic code structure that enables efficient syndrome-based error detection and correction?
    
    \item preserves the approximation accuracy and straggler resilience of BACC while retaining its capability to compute arbitrary non-polynomial functions?
    
    \item provides robustness against a few Byzantine workers through coding-theoretic error localization and correction? and
    
    \item admits rigorous theoretical guarantees through analytical characterizations under finite-precision computations in the presence of both stragglers and Byzantine workers?
\end{enumerate}

\begin{figure*}[ht]
    \centering
    \begin{subfigure}[b]{0.50\textwidth}
      \centering
 \includegraphics[height=6cm]{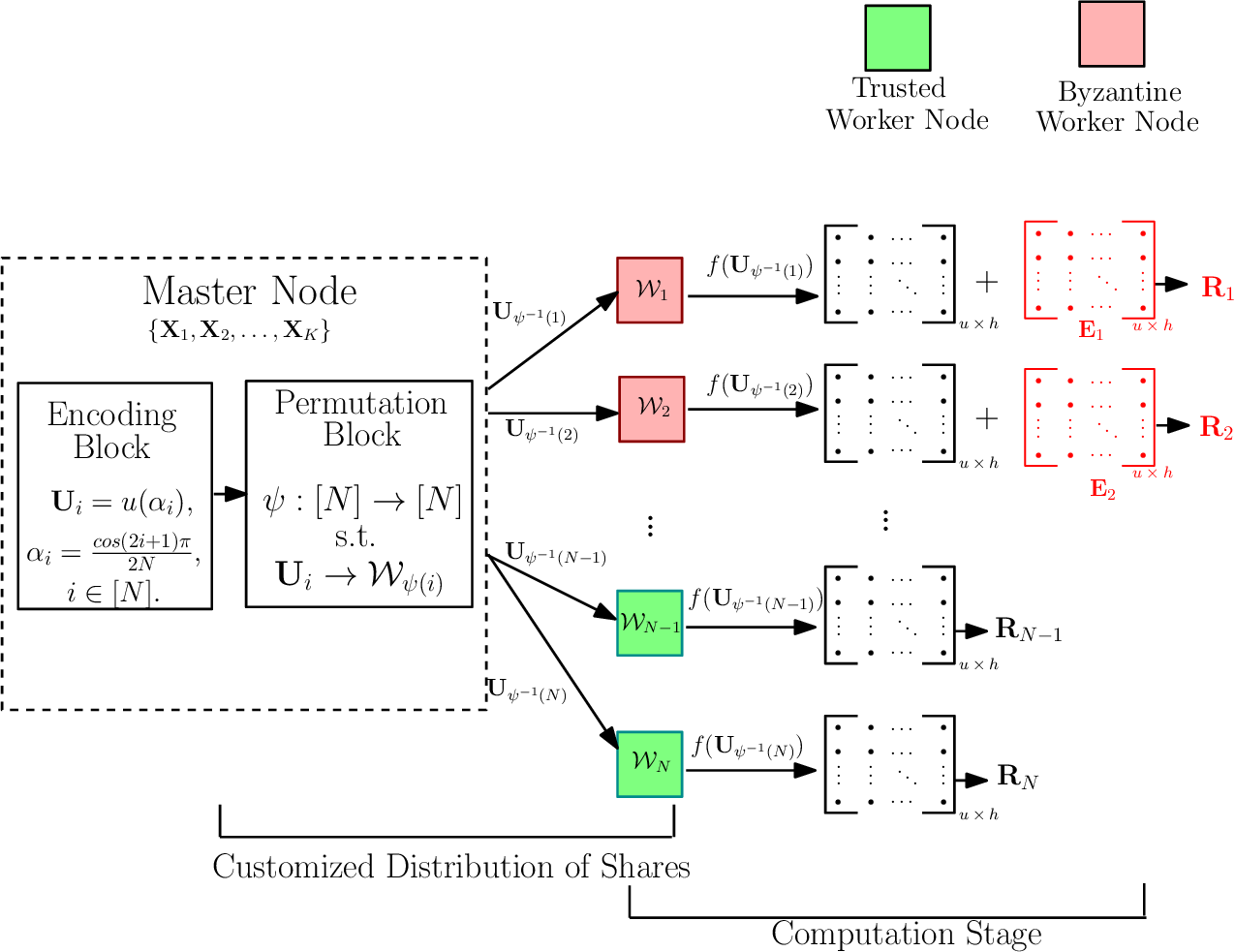}
  \caption*{(a)}
    \label{fig:1}
    \end{subfigure}
    \hfill
    \begin{subfigure}[b]{0.485\textwidth}
     \centering
  \includegraphics[height=6cm]{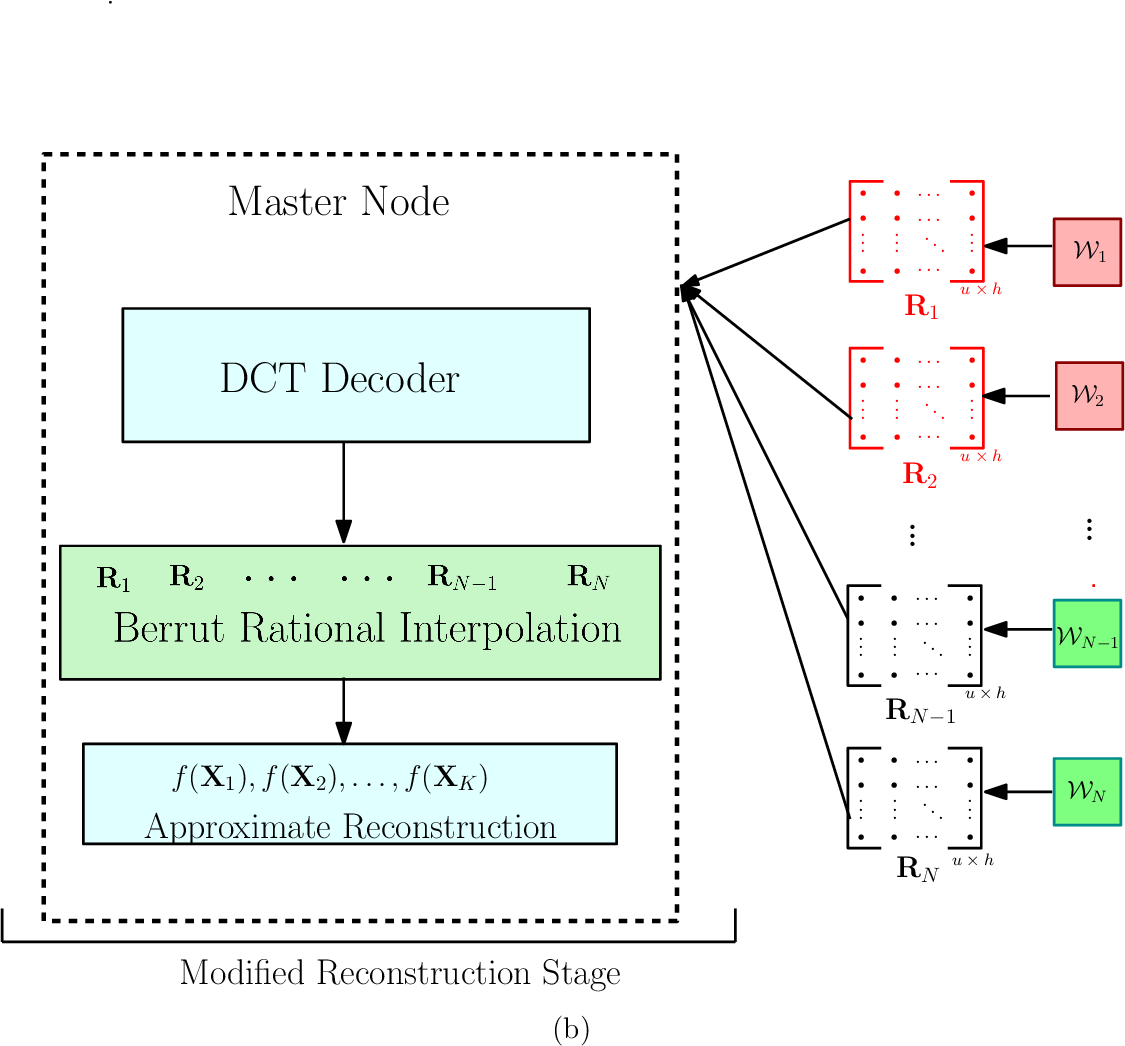}
  \caption*{(b)}
        \label{fig:2}
    \end{subfigure}
 \caption{Robust BACC framework involving Byzantine workers: (a) captures the stage of encoding at the master server and distributed computation at the workers where Byzantine workers inject noise into their computations. (b) depicts the reconstruction stage at the master server wherein DCT decoder is used to nullify the noise introduced by the Byzantine workers.} 
 \label{Fig:allc adv1}
\end{figure*}

\subsection{Our Contributions}
\label{subsec:Contribution}
In this work, we answer the above questions affirmatively. We first investigate whether the computations returned by the workers in the original BACC framework constitute codewords of an underlying linear code possessing the Maximum Distance Separable (MDS) property. To this end, we analyze the linear code when the Chebyshev points of the second kind are used as the worker evaluation points in BACC. We establish that the corresponding generator matrix neither admits a Bose–Chaudhuri–Hocquenghem (BCH)-like algebraic characterization nor is amenable to efficient minimum-distance characterization through existing low-complexity decoding techniques (see Section~\ref{sec:BACC}). Consequently, the algebraic framework for the underlying syndrome-based error detection and correction in polynomial coded computing cannot be directly extended to the original BACC construction. Motivated by this observation, we replace the worker evaluation points with Chebyshev points of the first kind and establish a previously unexplored connection between Berrut approximated coded computing and Discrete Cosine Transform (DCT) codes\cite{b9,b10}. We show that the resulting worker computations can be decomposed into a DCT-codeword component and a residual approximation component, thereby enabling DCT decoding algorithms to perform both error detection and error correction while preserving the accuracy benefits of BACC. Our specific contributions in this line are listed below:
\begin{itemize}
\item \textbf{Framework:} Building upon the proposed coding-theoretic framework, we develop Robust Berrut Approximate Coded Computing (RBACC) by suitably modifying the worker evaluation points of the original BACC framework. The proposed construction induces a DCT code structure on the worker computations, thereby enabling syndrome-based DCT decoding for Byzantine error localization and correction while retaining the Berrut interpolation framework. Furthermore, under infinite-precision arithmetic, the proposed DCT decoding framework can estimate the number of Byzantine workers with high accuracy, a capability unavailable in existing Berrut-based coded computing frameworks (See Section \ref{sec:RBACC}).
\item \textbf{Numerical Stability:} We prove that the Lebesgue constant associated with the proposed worker evaluation points exhibits the same logarithmic growth as that of the original BACC framework. Consequently, RBACC preserves the numerical stability, approximation accuracy, and straggler resilience of BACC despite incorporating robustness against Byzantine workers (See Section \ref{sec:straggler resilent new scheme}).
\item \textbf{Performance analysis and optimization:} We establish rigorous theoretical guarantees for RBACC, including analytical bounds on the approximation accuracy, DCT error-localization performance under finite-precision arithmetic, and reconstruction error in the presence of both stragglers and Byzantine workers. Since the target function is generally non-polynomial and not confined to a finite-dimensional function space, the DCT code dimension becomes a fundamental design parameter. To characterize this tradeoff, we introduce the effective variance of the DCT decoder, which jointly captures the effects of finite-precision noise and truncation error, and formulate an optimization problem for selecting the optimal DCT code dimension. Numerical results demonstrate that the theoretically optimal value of the DCT code dimension closely matches the empirically obtained solutions. Furthermore, both theoretical analysis and experimental results show that RBACC achieves the same approximation accuracy, numerical stability, and straggler resilience as the original BACC framework while significantly outperforming existing Berrut-based coded computing schemes, including ApproxIFER \cite{a3}, by correcting erroneous computations rather than discarding them (See Section \ref{sec:secure BACC s=0} and Section \ref{sec:optimal N1}).
\item \textbf{Reliability-aware worker assignment:} Finally, we propose a profile-aware worker assignment strategy that exploits information on the workers' trust profile in order to improve Byzantine localization and overall reconstruction accuracy. Based on the derived approximation error bounds, we formulate a reliability-aware evaluation-point assignment problem for allocating encoded data shares to unreliable workers. Numerical results demonstrate that the proposed strategy consistently outperforms random and contiguous assignment schemes, while achieving performance close to the empirically optimal assignment obtained through exhaustive search (See Section \ref{sec:placement of nodes}).
\end{itemize}

The overall architecture of the proposed RBACC framework is illustrated in Fig.~\ref{Fig:allc adv1}. Specifically, Fig.~\ref{Fig:allc adv1}(a) depicts the encoding stage, customized share assignment based on worker reliability, and distributed computation in the presence of Byzantine workers, while Fig.~\ref{Fig:allc adv1}(b) illustrates the proposed DCT-decoder-assisted reconstruction at the master server, where erroneous computations introduced by Byzantine workers are detected and corrected prior to Berrut rational interpolation. 



\begin{table*}[t]
\caption{Summary of key novelties of the proposed RBACC framework relative to existing coded computing approaches.}\label{tab:novelty-table}
\centering
\setlength{\tabcolsep}{1.pt}
\renewcommand{\arraystretch}{1.00}
\begin{adjustbox}{max width=\textwidth}
\begin{tabular}{|p{4.6cm}|c|c|c|c|c|c|c|c|}
\hline
\multirow{2}{*}{\textbf{Feature}}
& \multicolumn{4}{c|}{\textbf{Polynomial Coded Computing}}
& \multicolumn{4}{c|}{\shortstack{\textbf{Approximated Coded Computing Based on Berrut}\\\textbf{ Rational Interpolation}}}
 \\
\cline{2-9}
&
\textbf{LCC}
&
\textbf{NSLCC}
&
\textbf{ALCC}
&
\textbf{Robust ALCC}
&
\textbf{BACC}
&
\textbf{ApproxIFER}
&
\textbf{PBACC}
&
\textbf{RBACC}
\\
&
\cite{b3}
&
\cite{NSLCC}
&
\cite{b6}
&
\cite{ robust ALCC}
&
\cite{b7}
&
\cite{a3}
&
\cite{b8}
& This Work\\
\hline

 Handles arbitrary non-polynomial functions
& \xmark & \xmark & \xmark & \xmark
& \cmark & \cmark & \cmark & \cmark \\
\hline

Straggler resilience
& \cmark & \cmark & \cmark & \cmark
& \cmark & \cmark & ---- & \cmark \\
\hline

Numerically stable computation under straggler
& \xmark & \cmark & \xmark & \xmark
& \cmark & ---- & \cmark & \cmark \\
\hline

Privacy-preserving computation
& \cmark & \xmark & \cmark & \cmark
& \xmark & \xmark & \cmark & \xmark\\
\hline

Byzantine error localization
& \cmark & \xmark & \xmark & \cmark
& \xmark & \cmark & \xmark & \cmark \\
\hline

Byzantine error correction
& \cmark & \xmark & \xmark & \cmark
& \xmark & \xmark & \xmark & \cmark \\
\hline

Coding-theoretic robustness for
arbitrary functions
& \xmark & \xmark & \xmark & \xmark
& \xmark & \xmark & \xmark & \cmark \\
\hline

DCT-code interpretation
& \xmark & \xmark & \xmark & \xmark
& \xmark & \xmark & \xmark & \cmark \\
\hline

Localization error analysis under finite precision and bounds
& \xmark & \xmark & \xmark & \cmark
& \xmark & \xmark & \xmark & \cmark \\
\hline

Reconstruction error bounds with
Stragglers

& \xmark & \xmark & \cmark & \xmark
& \cmark & \xmark & \xmark & \cmark \\
\hline

Reconstruction error bounds with Byzantine workers
& \xmark & \xmark & \xmark & \xmark
& \xmark & \xmark & \xmark & \cmark \\
\hline

Optimal code-dimension selection
& \xmark & \xmark & \xmark & \xmark
& \xmark & \xmark & \xmark & \cmark \\
\hline

Reliability-aware evaluation-point assignment for minimizing overall reconstruction error
& \xmark & \xmark & \xmark & \xmark
& \xmark & \xmark & \xmark & \cmark \\
\hline

\end{tabular}
\end{adjustbox}
\label{tab:comparison}
\end{table*}

\subsection{Related Work and Novelty}
\label{subsec:related work}

As highlighted in Section \ref{motivation and problem statement}, our work develops a coding-theoretic framework to address the limitations of BACC when handling Byzantine workers. In Table~\ref{tab:comparison}, we capture the main differences between the key features of the proposed RBACC framework and the representative existing coded computing frameworks. Among the various related works captured in Table~\ref{tab:comparison}, ApproxIFER \cite{a3} is the closest to our work, which leveraged the BACC framework to develop a model-agnostic resilient prediction-serving system for distributed machine learning inference. By exploiting the numerical stability and approximation capability of Berrut rational interpolation, it enables reliable distributed inference for arbitrary non-polynomial machine learning models while identifying Byzantine workers through an algebraic error-localization procedure. However, it neither establishes an underlying algebraic code structure for the worker computations nor enables syndrome-based error detection and correction. Consequently, the number of Byzantine workers must be known \emph{a priori}, and the detected erroneous evaluations are discarded rather than corrected before reconstruction, thereby reducing the number of evaluations available for Berrut interpolation. Furthermore, although ApproxIFER establishes correctness guarantees for its algebraic error-localization procedure and validates its performance experimentally, it does not analytically characterize the finite-precision localization performance of the error-localization procedure as a function of the number of Byzantine workers and the precision noise. Nor does it derive analytical reconstruction-error bounds that characterize how Byzantine workers and finite-precision arithmetic affect the approximation accuracy of Berrut-based coded computing.

For other contributions that are remotely related to our work, we refer the readers to \cite{b8}, \cite{BSCC}, \cite{learning theory}. Among these, Privacy-aware BACC (PBACC) \cite{b8} introduces privacy-preserving computation over the vanilla BACC framework, whereas Basis-spline assisted coded computing \cite{BSCC} employs cubic B-spline interpolation to improve the approximation accuracy of smooth non-polynomial functions. As an alternate approach to execute coded computing, a learning-theoretic framework has been proposed in \cite{learning theory} to design the encoding and decoding functions for arbitrary target functions. These existing works do not address robustness against Byzantine workers, and therefore are not directly comparable to our work. To the best of our knowledge, this is the first work to establish a coding-theoretic connection between BACC and DCT codes, thereby enabling syndrome-based Byzantine error detection and correction for approximate coded computing. We presented preliminary aspects of this framework in the conference proceedings \cite{SBACC}, wherein we proposed a new choice of worker evaluation points that establishes the connection between the BACC and the DCT codes. This work substantially extends \cite{SBACC} by providing rigorous proofs of the main theoretical results. In particular, all the theoretical results presented in Section \ref{sec:straggler resilent new scheme}, Section \ref{sec:secure BACC s=0}, Section \ref{sec:optimal N1} and Section \ref{sec:placement of nodes} of this work are not available in our preliminary work \cite{SBACC}.

\begin{table*}[t]
\centering
\caption{Frequently used notations.}
\label{tab:notation}
\renewcommand{\arraystretch}{1.1}
\begin{tabular}{lp{0.62\textwidth}}
\hline
\textbf{Notation} & \textbf{Description} \\
\hline
$N$, $M$ & Number of worker nodes and non-straggling workers.\\
$K$, $K_1$ & Number of data points and DCT code dimension.\\
$\mathcal{X}$, $\mathbf{X}_j$ & Dataset and its $j$-th data point.\\
$S$, $A$ & Number of stragglers and Byzantine workers.\\
$\tau$, $\mu$ & Number of reliable and unreliable workers.\\
$\mathcal{W}$, $\mathcal{W}_{\mathrm{rel}}$, $\mathcal{W}_{\mathrm{unrel}}$
& Set of workers, reliable workers, and unreliable workers.\\
$\mathcal{A}$, $\mathcal{F}$ & Set of Byzantine workers and stragglers.\\
$\phi$ & Evaluation-point assignment mapping.\\
$f(\cdot)$ & Target function.\\
$\mathbf{Y}_j$ & Approximate output corresponding to $\mathbf{X}_j$.\\
$u(z)$ & Berrut rational encoding function.\\
$\alpha_j$, $z_i$ & Encoding point and worker evaluation point.\\
$\mathbf{U}_i$, $\mathbf{V}_i$ & Encoded share and corresponding computation at worker $i$.\\
$\mathbf{R}_i$, $\mathbf{C}_i$ & Received and corrected computations.\\
$\mathbf{E}_i$, $\mathbf{P}_i$ & Byzantine error and floating-point precision noise matrices.\\
$r_{\mathrm{Berrut},f}(z)$ & Berrut rational reconstruction function.\\
$\mathbf{G}$, $\mathbf{H}$ & Generator and parity-check matrices of the DCT code.\\
$\mathbf{Y}$ & Generator matrix induced by worker evaluation points.\\
$\Lambda_M$ & Lebesgue constant associated with the $M$ non-straggling evaluation points.\\
$\sigma_P^2$, $\sigma_r^2$ & Precision-noise and truncation-noise variances.\\
$\mathrm{Prob}(E_{\mathrm{Loc}})$ & Probability of localization error.\\
\hline
\end{tabular}
\end{table*}

\emph{Organization:} Section \ref{sec:BACC} reviews the BACC framework and discusses its limitations in handling Byzantine workers. Section \ref{sec:RBACC} presents the proposed RBACC framework and its coding-theoretic error detection and correction mechanism. Section \ref{sec:straggler resilent new scheme} establishes the well-spaced property of the proposed evaluation points, derives an upper bound on the corresponding Lebesgue constant, and analyzes the approximation error of RBACC in the presence of stragglers. Section \ref{sec:secure BACC s=0} derives approximation error bounds for RBACC in the presence of Byzantine workers and analyses the same in the presence of finite-precision noise. Section \ref{sec:optimal N1} investigates the optimization problem for finding optimal dimension of DCT code which minimizes the approximation error of RBACC. Section \ref{sec:placement of nodes} studies reliability-aware evaluation-point assignment for approximation error minimization. Finally, Section \ref{sec:summary} summarizes this work along with some interesting directions for future research.

\section{ Overview of Berrut Approximated Coded Computing and its Limitation}
\label{sec:BACC}
We consider the BACC based distributed computing scheme \cite{b7} comprising a master server, which is connected to $N$ workers, denoted by the set $\mathcal{W}=\{\mathcal{W}_{0}, \mathcal{W}_{1},\ldots, \mathcal{W}_{N-1}\}$ via dedicated links. This implies that the master communicates directly with each worker $\mathcal{W}_{i}$, for $i\in[N]$, where $[N] \triangleq \{0, 1, 2, \ldots, N-1\}$, however, there is no direct communication among the workers themselves. Let the dataset held by the master, on which the computations are performed, be denoted by $\mathcal{X} = (\mathbf{X}_0, \ldots, \mathbf{X}_{K-1})$, where $\mathbf{X}_j \in \mathbb{R}^{m \times n}$ for each $j \in [K]$, such that $[K] \triangleq \{0, 1, 2, \ldots, K-1\}$. The objective of BACC scheme is to approximately evaluate an “\textit{arbitrary function}” $f:\mathbb{R}^{m\times n} \rightarrow \mathbb{R}^{m\times n}$  over the matrices in $\mathcal{X}$ in a decentralized fashion\footnote{The mapping is written over identical domain and co-domain for simplicity, however the framework applies to functions between arbitrary matrix spaces.}. More specifically, given that $f(\cdot)$ can be a non-polynomial function, the aim of BACC is to ensure distributed computation of the function $f(\cdot)$ on $\mathbf{X}_{j}$ for $j\in [K]$ in a numerically stable manner with bounded errors. Denoting $\mathbf{Y}_{j}$ as the result of the distributed computation of the function $f(\cdot)$ on $\mathbf{X}_{j}$, the objective of BACC is to approximately compute $\mathbf{Y}_{j} \approx f(\mathbf{X}_{j})$ for $j \in [K]$ with tolerable accuracy loss, i.e., $\|\mathbf{Y}_{j} - f(\mathbf{X}_{j})\| \leq \epsilon,$ for some $\epsilon > 0$, where $\|\cdot\|$ denotes the Frobenius norm.


Although the BACC scheme in \cite{b7} enables distributed computation with bounded accuracy loss, wherein the loss is a function of the number of stragglers, it does not address scenarios involving untrusted workers, i.e., Byzantine workers that may return incorrect results to the master. In practice, a subset of workers in $\mathcal W$ may behave in a Byzantine manner, intentionally or unintentionally returning incorrect computation results to the master. Consequently, such adversarial behavior can degrade the accuracy of the distributed framework if it is not designed to be robust against Byzantine workers. In this direction, we show that the scheme in \cite{b7} is not inherently robust against such an adversarial behavior. In this context, the next subsection revisits the BACC-based distributed computation framework \cite{b7} for computing $\mathbf{Y}_j \approx f(\mathbf{X}_j)$, $j \in [K]$, in the presence of $S$ stragglers and $A$ Byzantine workers among the non-straggling workers\footnote{Privacy of the dataset against honest-but-curious workers is another attribute typically considered in distributed computation frameworks and can also be incorporated. However, it is not the focus of this work.}.

\subsection{Encoding in BACC}
\label{subsec:Encoding RBACC}

This section describes the encoding method of BACC framework, specifically highlighting how the data set $\mathcal{X}$ is encoded in order to distribute the shares among the $N$ workers. Using Berrut's rational interpolant \cite{b7}, the master constructs a rational function $u(z)$ in the indeterminate $z$, as
\[
u(z) = 
\sum_{j=0}^{K-1} 
\frac{
\frac{(-1)^j}{z -\alpha_{j}}
}{
\sum_{k=0}^{K-1} \frac{(-1)^k}{z - \alpha_{k}}
}\mathbf{X}_{j},
\]
where $\alpha_j$'s are the Chebyshev points of the first kind, defined as,
$\alpha_j = \cos \left( \frac{(2j + 1)\pi}{2K} \right)$. Note that, by  construction, the Berrut's interpolant satisfies $u(\alpha_{j})= \mathbf{X}_{j}$, for $j\in [K].$

\subsection{Distribution of Shares among the Workers}
\label{subsec:distribution of shares}
Upon constructing $u(z)$, the master computes its evaluation at $z_{i} \in \mathbb{R}$, and shares the matrix $\mathbf{U}_{i} = u(z_{i})$ with the worker node $\mathcal{W}_{i}$ for $i \in [N]$. In the BACC scheme, $z_{i}$'s are chosen from the Chebyshev points of the second kind, indicated by
\begin{equation}
\label{cheb_first_kind}
z_i = \cos \left( \frac{i \pi}{N} \right),i \in [N].
\end{equation}
\subsection{Computation at the Workers}
\label{sec:computation at worker}
After receiving its share $\mathbf{U}_{i}$ from the master, the worker $\mathcal{W}_{i}$ intends to compute $\mathbf{V}_i= f(\mathbf{U}_{i})$ for $i \in [N]$ and return the same to the master. Among the $N$ workers, let $\mathcal{F} \subset [N]$ denote the indices of the stragglers that do not return their computations to the master. As a result, we will only focus on the computations at the non-straggler workers with indices represented by the set $[N]\setminus\mathcal{F}$ such that, $\lvert [N]\setminus\mathcal{F} \rvert = M$. Furthermore, among the non-straggling workers in the set $[N]\setminus\mathcal{F}$, some may act as Byzantine and can send corrupted computations to the master. To formally characterize this, let $\mathcal{A}=\{i_{1}, i_{2}, \ldots, i_{A}\} \subset [N]\setminus\mathcal{F}$ denote the indices of the Byzantine workers among the non-straggling workers. Each Byzantine worker $\mathcal{W}_{i_a}$, where $i_a \in \mathcal{A}$ for $a\in\{1,2,\ldots,A\}$, returns a corrupted computation $\mathbf{V}_{i_a}+\mathbf{E}_{i_a}$ instead of the true result $\mathbf{V}_{i_a}=f(\mathbf{U}_{i_a})$, where $\mathbf{E}_{i_a}\in\mathbb{R}^{m\times n}$ denotes the noise matrix injected by worker $\mathcal{W}_{i_a}$.\footnote{Although the nature of errors introduced by the Byzantine workers can be arbitrary, we use the additive noise model in this work to study their impact on BACC. We further assume that the Byzantine workers do not collude. Since this is the first study of Byzantine resilience in the BACC framework, we focus on this simple yet practically meaningful threat model as an important first step.} The adversarial noise $\mathbf{E}_{i_a}$ is typically assumed to be arbitrary and independent across workers, representing malicious manipulations in the computation. Furthermore, owing to the precision noise added due to floating point operations at the worker $\mathcal{W}_{i_{a}}$, for $i_{a}\in\mathcal{A}$, the end results returned by the Byzantine workers to the master are denoted by $ \mathbf{V}_{i_{a}} + \mathbf{E}_{i_a}+ \mathbf{P}_{i_{a}}, i_{a} \in  [N]\setminus\mathcal{F}$, where $\mathbf{P}_{i_{a}} \in \mathbb{R}^{m\times n}$ captures the precision noise introduced by the worker $\mathcal{W}_{i_{a}}$. Formally, using $\mathbf{R}_{i}$ to denote the computations received at the master from the worker $\mathcal W_{i}$, we have 
\begin{equation}
\label{eq:adv_eq}
\mathbf{R}_{i} =
\begin{cases}
\mathbf{V}_i + \mathbf{E}_{i} + \mathbf{P}_{i}, & i \in \mathcal{A},\\
\mathbf{V}_i + \mathbf{P}_{i}, & i \notin \mathcal{A}\cup \mathcal{F},
\end{cases}
\quad \in \mathbb{R}^{m \times n}.
\end{equation}


\subsection{Function Reconstruction in BACC}
If the master is either unaware of the presence of Byzantine workers or does not have a strategy to detect Byzantine workers, it directly uses the received computations $\{\mathbf{R}_{l_i}\}_{i\in[M]}$ given in \eqref{eq:adv_eq}, where $\{l_1,l_2,\ldots,l_M\}$ denote the indices of the non-straggling workers. The master then applies Berrut interpolation to these values to obtain an approximation of $f(u(z))$ given by
\begin{equation}
r_{\text{Berrut}, f}(z) = 
\sum_{r=0}^{M-1} \frac{\frac{(-1)^r}{z - \bar{{z}}_{r}} }{
\sum_{k=0}^{M} \frac{(-1)^k}{z - \bar{z}_{k}}} ~ \mathbf{R}_{l_{r+1}} ,
\end{equation}
where $\bar{z}_{r} = z_{l_{r+1}}$, such that $z_{l_{r+1}} = \mbox{cos}\left( \frac{(l_{r+1})\pi}{N} \right)$. Finally, to recover $\mathbf{Y}_{j}$, for $j \in [K]$, the master obtains an approximate version of $f(\mathbf{X}_{j})$ as $\mathbf{Y}_{j} = r_{\text{Berrut}, f}(\alpha_{j})$, for $j \in [K]$.

In the absence of Byzantine workers, the BACC framework does not impose a strict recovery threshold or a minimum number of worker responses. Instead, the master utilizes all the available results from non-straggling workers to compute the final output. As the number of received computations increases, the approximation accuracy improves. Consequently, the BACC framework enables approximate computation even in the presence of up to $S$ stragglers, where $S < N-2$.

In the presence of Byzantine workers, additional noise components are introduced through the matrices $\mathbf{E}_{i_a}$, for $a \in [A]$, as indicated in \eqref{eq:adv_eq}, thereby corrupting the corresponding computations. Consequently, the recovered result at the master becomes a noisy approximation of the true output $f(\mathbf{U}_{i_a})$, leading to degradation in reconstruction accuracy.
To capture the effects of errors introduced by Byzantine workers in the BACC scheme \cite{b7}, let $\mathbf{Y}'_{j}$ denote an estimate of $f(\mathbf{X}_{j})$ obtained using BACC in the presence of stragglers, Byzantine workers, and floating-point errors. Here, $f(\mathbf{X}_{j})$ denotes the centralized computation performed locally at the master without using BACC. As a result, the relative error introduced by BACC with respect to the centralized computation at the master is computed as,
\begin{equation}
\label{eq:rel_error}
e_{rel}^{(j)} \triangleq \frac{\|f(\mathbf{X}_{j})-\mathbf{Y}'_{j}\|}{\|f(\mathbf{X}_{j})\|}, \quad \forall j \in [K],
\end{equation}
where $\|\cdot\|$ denotes the $\ell^{2}$-norm. 
 \begin{figure}[ht!]
\centering
\includegraphics[scale = 0.21]{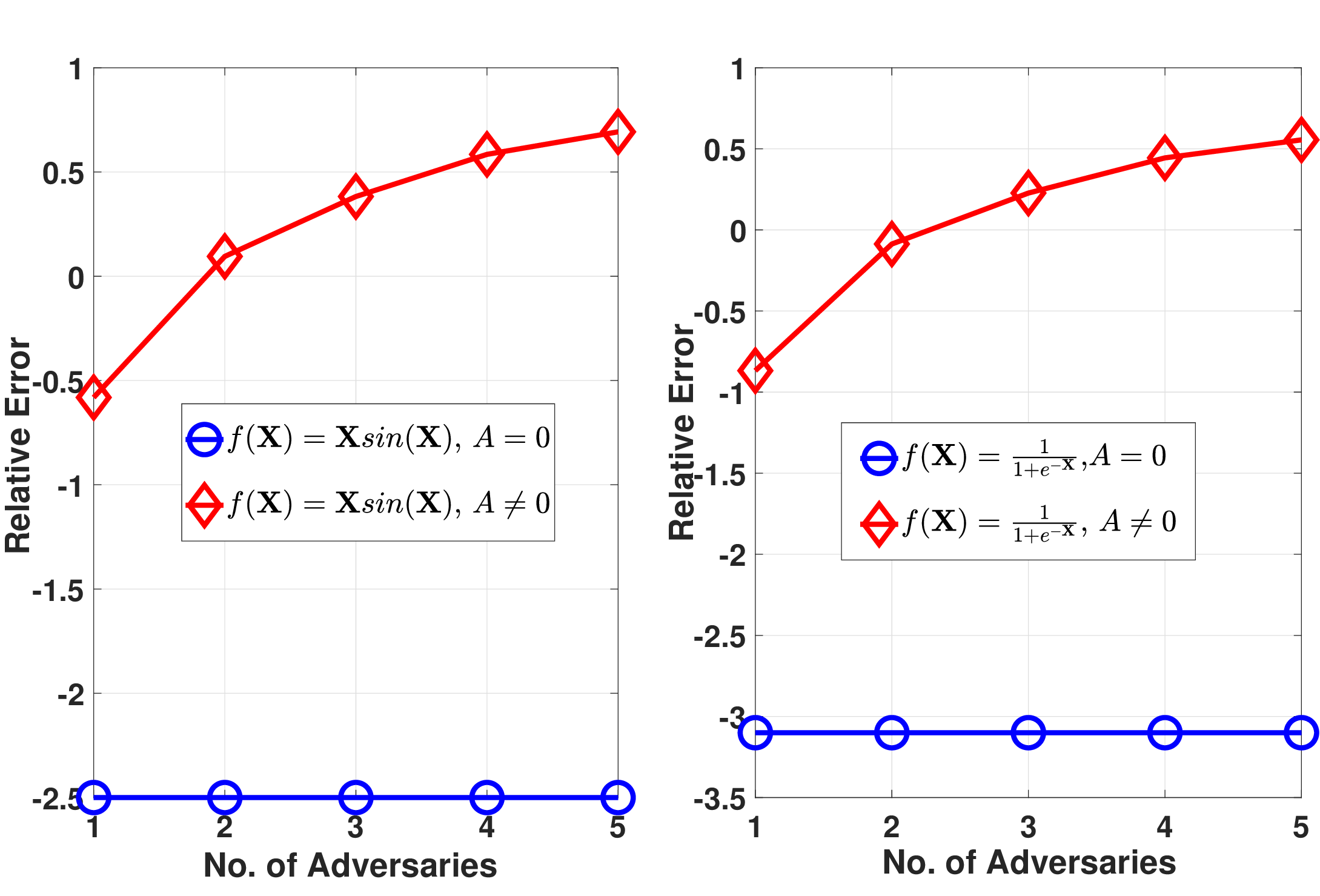}
\vspace{-0.1cm}
\caption{Average relative error (in $\log_{10}$ scale) for BACC  as a function of $A$ with parameters $N=53$, $K=4$, $S=0$. Here the entries of $\{\mathbf{E}_{i_{a}}\}$ are i.i.d. as $\mathcal{N}(0, 10^4)$.}
\label{fig: bacc_with adv}
\end{figure}

To demonstrate the impact of errors introduced by Byzantine workers in BACC, we conduct experiments in the presence of Byzantine workers with no stragglers, i.e., $S=0$, for computing $f(\mathbf{X})=\mathbf{X}\sin(\mathbf{X})$ and $f(\mathbf{X})=\frac{1}{1-e^{-\mathbf{X}}}$, where $\mathbf{X}\in\mathbb{R}^{20\times5}$. The parameters used are $N=53$ and $K=4$. Specifically, we compute the average relative error of the BACC framework \cite{b7} using \eqref{eq:rel_error} in a 64-bit floating-point environment while varying the number of Byzantine workers over 1000 iterations. In each iteration, the Byzantine workers are selected uniformly at random from the $N$ workers. The results are presented in Fig.~\ref{fig: bacc_with adv}. The plots in Fig.~\ref{fig: bacc_with adv} show that the relative error of BACC increases significantly in the presence of Byzantine workers, thereby confirming that the BACC framework in \cite{b7} is not inherently resilient to the presence of Byzantine workers.

\subsection{Possibility of Error Detection and Correction in BACC}
\label{subsec:BACC error detect and correct}
In this section, we discuss the possibility of applying a coding-theoretic approach to detect the presence of Byzantine workers in the BACC scheme. In particular, we investigate whether the resulting computation vectors satisfy the structural conditions required for BCH-like decoding. In this context, in the presence of stragglers, i.e., when $S>0$, the master collects the computation results, as indicated in \eqref{eq:adv_bound}, from the set of non-straggling workers whose indices are given by $[N]\setminus\mathcal{F}=\{l_{1},l_{2},\ldots,l_{M}\}$. In particular, the corresponding set of computations are denoted by $\{\mathbf{R}_{l_{1}}, ~\mathbf{R}_{l_{2}}, ~\ldots~, \mathbf{R}_{l_{M}} \}$. Note that, for every $g \in \{1, 2, \ldots, m\}, h \in \{1, 2, \ldots, n\}$, let $\mathbf{r}_{g,h} = [\mathbf{R}_{l_{1}}(g, h) ~\mathbf{R}_{l_{2}}(g,h) ~\ldots~ \mathbf{R}_{l_{M}}(g,h)]$ represent a vector constructed from the $(g,h)$-th entry of each $\mathbf{R}_{l_{i}}\in \mathbb{R}^{m\times n}$ for $l_{i} \in [N]\setminus\mathcal{F}$. In total, the master receives ${L}=m\times n$ such vectors from the workers, which can be arranged into the matrix $\mathbf{R}_{eff}\in \mathbb{R}^{{L}\times M}$ and can be represented as

\begin{small}
\begin{equation}
\label{eq:rec matrix}
\mathbf{R}_{eff} =
\begin{bmatrix}
\mathbf{R}_{l_{1}}(1,1) & \mathbf{R}_{l_{2}}(1, 1) & \ldots & \mathbf{R}_{l_{M}}(1, 1)\\
\mathbf{R}_{l_{1}}(1,2) & \mathbf{R}_{l_{2}}(1, 2) & \ldots & \mathbf{R}_{l_{M}}(1, 2)\\
\vdots & \vdots & \vdots & \vdots\\
\mathbf{R}_{l_{1}}(1,h) & \mathbf{R}_{l_{2}}(1, h) & \ldots & \mathbf{R}_{l_{M}}(1, h)\\
\mathbf{R}_{l_{1}}(2,1) & \mathbf{R}_{l_{2}}(2, 1) & \ldots & \mathbf{R}_{l_{M}}(2, 1)\\
\vdots & \vdots & \vdots & \vdots\\
\mathbf{R}_{l_{1}}(g,h) & \mathbf{R}_{l_{2}}(g, h) & \ldots & \mathbf{R}_{l_{M}}(g, h)
\end{bmatrix},
\end{equation}
\end{small}

\noindent where each row captures a received vector $\mathbf{r}_{g,h}$ of length $M$. Further, in the presence of Byzantine workers and precision noise, i.e., when $A > 0$ and $\sigma_{P}^{2} > 0$, let $\mathbf{v}_{g,h}$, $\mathbf{e}_{g,h}$, and $\mathbf{p}_{g,h}$ denote the vectors corresponding to the $(g,h)$-th entries of $\mathbf{V}_{i}$, $\mathbf{E}_{i}$, and $\mathbf{P}_{i}$, respectively. Accordingly, each received vector $\mathbf{r}_{g,h}$ can be expressed as a noisy version of the vector $\mathbf v_{g,h}$. To enable error detection and correction through tools from coding theory, each received vector must correspond to a noisy version of a valid codeword from an error-correcting code. Considering that $f$ is a polynomial of degree $D$ and $u(z)$ is a Berrut rational interpolant, the composite function $f(u(z))$ is generally a rational function. Nevertheless, let us suppose that its evaluations at the points $z_k=\cos\!\left(\frac{k\pi}{N}\right)$, for $k\in[N]$, can be approximated using a finite-dimensional polynomial representation of dimension $K_1$, where $K_1$ depends on the degree $D$ of the function $f$. Consequently, the vector $\mathbf v_{g,h}$ can be represented approximately as
\[
\mathbf v_{g,h}\approx \mathbf m_{g,h}\mathbf Y,
\]
where $\mathbf m_{g,h}\in\mathbb R^{1\times K_1}$ denotes the corresponding coefficient vector and $\mathbf Y\in\mathbb R^{K_1\times M}$ is the generator matrix whose $k$-th column is given by $[1,z_k,z_k^2,\ldots,z_k^{K_1-1}]^\top$. Therefore, each received vector $\mathbf r_{g,h}$ can be interpreted as a noisy approximation of the corresponding codeword generated by $\mathbf Y$. Since the evaluation points $\{z_k\}_{k=0}^{N-1}$ are distinct, the matrix $\mathbf Y$ has full row rank for $K_1\le M$, and therefore defines a valid linear block code of dimension $K_1$. In particular, each row of $\mathbf{R}_{{eff}}$ can be interpreted as a noisy codeword from the resulting linear block code. To investigate whether the BCH-like decoding framework developed in \cite{b9,b10} can be directly applied to this code, we establish the following proposition.

\begin{proposition}
\label{prop:no BCH}
Let $\mathbf{Y}$ denote the generator matrix constructed using the
evaluation points
$z_k=\cos\!\left(\frac{k\pi}{N}\right)$,
$k=0,1,\ldots,N-1$, corresponding to the Chebyshev points of the
second kind. Let $\mathbf H$ denote a parity-check matrix of the linear
code generated by $\mathbf Y$, and let $\mathbf H^T$ denote its
transpose. Then, $\mathbf H^T$ does not directly admit the square
nonsingular monomial Vandermonde-based factorization employed in the
BCH-like characterization of \cite{b9,b10}.
\end{proposition}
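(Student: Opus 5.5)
Our plan is to pin down what the BCH-like characterization of \cite{b9,b10} requires and then show that the Chebyshev-second-kind code violates it. In the DFT/DCT setting, the BCH-like property means that a parity-check matrix can be written as $\mathbf H^T=\mathbf W\mathbf V$. Here $\mathbf V$ is a square (size $M-K_1$) nonsingular monomial Vandermonde matrix $[\beta_i^{\,j}]$, whose consecutive powers of the "locator'' values $\beta_i$ are what make the syndrome equations a Prony/Peterson-type system. The factor $\mathbf W$ is a diagonal (or fixed, invertible and index-local) scaling of the rows. Equivalently, the $i$-th row of $\mathbf H^T$ is, up to a row-dependent scalar, $(1,\beta_i,\beta_i^2,\ldots,\beta_i^{M-K_1-1})$, where the locators $\beta_i$ are distinct and depend only on the worker index $i$. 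We will first restate this as the precise property to be refuted: there exist nonzero scalars $w_i$ and distinct $\beta_i$ such that $\mathbf H^T(i,j)=w_i\beta_i^{\,j}$.

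The second step is to compute the dual of the code generated by $\mathbf Y$ explicitly. Since $\mathbf Y$ is a Vandermonde matrix on the nodes $\bar z_r$, the orthogonal complement of its row space is spanned by the vectors $(\omega_r\,p(\bar z_r))_r$ for polynomials $p$ of degree at most $M-K_1-1$. Here $\omega_r=1/\prod_{s\neq r}(\bar z_r-\bar z_s)$ are the barycentric weights, and the claim follows from the identity $\sum_r \omega_r q(\bar z_r)=0$ for $\deg q\le M-2$. Choosing the monomial basis $p(z)=z^j$ gives one parity-check matrix with $\mathbf H^T(r,j)=\omega_r\bar z_r^{\,j}$. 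Every other parity-check matrix is $\mathbf H^T\mathbf T$ for an invertible $\mathbf T$.

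The third step is to show that no choice within this family has the required monomial-Vandermonde structure with genuine error locators. The natural locators are $\beta_r=\bar z_r=\cos(l_r\pi/N)$, which are real values rather than points on the unit circle. The syndrome-based machinery of \cite{b9,b10} relies on the locators being of the form $e^{\mathrm{j}\theta_r}$, or on the DCT symmetrization $\cos(\cdot)$ arising from a Chebyshev-first-kind/DCT-II kernel. That symmetrization makes $\mathbf H^T$ a product of a DCT matrix with a monomial Vandermonde matrix in $e^{\mathrm{j}\theta}$. For second-kind points the weights are $\omega_r\propto(-1)^{l_r}\delta_r$, with $\delta_r=1/2$ at the endpoints $z=\pm1$ and $\delta_r=1$ otherwise. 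So $\mathbf H^T$ is a Chebyshev-type (not monomial) Vandermonde matrix in $\cos$, and rewriting it in $e^{\mathrm{j}\theta}$ gives $\cos^j\theta$ expanded into $2j+1$ exponentials. The resulting factorization is rectangular, of width about $2(M-K_1)-1$, not square. We would then argue by contradiction. Suppose $\mathbf H^T\mathbf T=\mathrm{diag}(w)\,[\beta_r^{\,j}]$ with square nonsingular Vandermonde. Comparing column $0$ forces $w_r\propto$ a combination of $\omega_r p_0(\bar z_r)$, and comparing the ratios of consecutive columns forces $\beta_r=q(\bar z_r)/p_0(\bar z_r)$ for fixed polynomials. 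Exact power relations $\beta_r^{j}=p_j(\bar z_r)/p_0(\bar z_r)$ for all $j$ up to $M-K_1-1$ then force $\beta_r$ to be an affine function of $\bar z_r$. The conclusion is that the only admissible factorization is the real-node one, whose locators lie in $[-1,1]$ and are not unit-modulus. Because of this, the DCT/DFT BCH decoding of \cite{b9,b10} does not directly apply.

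The main obstacle is the third step. It requires making "directly admit'' rigorous, i.e.\ fixing precisely which factorization class \cite{b9,b10} demand, namely unit-circle locators with a square Vandermonde factor. It also requires the degree argument showing that the power relations force affinity. We would first attempt this via a polynomial-degree comparison at the $M$ distinct nodes, using $M>M-K_1$ to make the identities hold as polynomial identities. If that proves delicate, we would fall back on a small explicit counterexample (e.g.\ $N=4$, $K_1=1$) to exhibit the non-factorability.
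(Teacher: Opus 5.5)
Your own Step~2 defeats the plan. Choosing $p(z)=z^j$ gives a parity-check matrix with $\mathbf H^T(r,j)=\omega_r\bar z_r^{\,j}$, that is, $\mathbf H^T=\operatorname{diag}(\omega_r)\,[\bar z_r^{\,j}]$. This has distinct locators $\bar z_r$, nonzero weights $\omega_r$ and an identity coefficient matrix, which is exactly the property $\mathbf H^T(i,j)=w_i\beta_i^{\,j}$ that Step~1 sets out to refute. The phenomenon is structural: the dual of a Vandermonde (generalized Reed--Solomon) code is again of this form. So no argument can show that \emph{no} parity-check matrix of the code generated by $\mathbf Y$ has a square monomial-Vandermonde factorization.

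The rescue in Step~3 declares real locators in $[-1,1]$ inadmissible and demands unit-modulus ones. That is not the criterion the paper takes from \cite{b9,b10}. The DCT code, which is the paper's positive example, also has real locators: the proof of Proposition~\ref{prop:dct err corr} uses $\mathbf H=\mathbf A\mathbf T\mathbf W$ with a Vandermonde $\mathbf T$, and Section~\ref{subsec:err_loc_dct} locates errors through the real roots $X_q=\cos\frac{(2q+1)\pi}{2N}$. Your criterion would disqualify the DCT code too, so it proves too much.

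Two smaller steps also fail. The ``forced affinity'' step is unsupported: the relations $q(\bar z_r)^j=p_j(\bar z_r)\,p_0(\bar z_r)^{j-1}$ involve degrees up to $(M-K_1-1)^2$, which exceeds $M-1$ for typical parameters, so they do not become polynomial identities. Calling $\mathbf H^T$ ``Chebyshev-type, not monomial'' is vacuous, because any change of basis among polynomials of degree at most $M-K_1-1$ is absorbed into the square factor.

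The paper proves only the construction-level claim that the word ``directly'' signals. Its steps are:
\begin{enumerate}
\item It takes the specific matrix formed by the $d=N-K_1$ highest cosine frequencies.
\item It uses $\cos\frac{(N-r)j\pi}{N}=(-1)^j\cos\frac{rj\pi}{N}$ to pull out $\mathbf W=\operatorname{diag}((-1)^j)$.
\item It recognizes the remaining rows as $T_1(z_j),\dots,T_d(z_j)$.
\item Since $\deg T_r=r$ with leading coefficient $2^{r-1}$, and $T_2=2z^2-1$ has a constant term, the monomial expansion needs all of $1,z,\dots,z^d$. This gives a $d\times(d+1)$ coefficient matrix.
\end{enumerate}
The underlying contrast with DCT-II, which the paper does not spell out, is a degree shift. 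For DCT-II the top rows are $(-1)^n\sin\theta_n\,U_{r-1}(\cos\theta_n)$, with $\theta_n=\frac{(2n+1)\pi}{2N}$ and $U_{r-1}$ the second-kind Chebyshev polynomial. Since $U_0,\dots,U_{d-1}$ span exactly $1,\dots,z^{d-1}$, the coefficient matrix is square.

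To reach the stated conclusion you must therefore argue about that frequency-row construction, not about the exact dual. Be aware that this frequency-row matrix is not literally orthogonal to the row space of $\mathbf Y$ for these nodes. For example, with $N=4$ and $K_1=1$ the frequency-$1$ row $(1,\tfrac{\sqrt{2}}{2},0,-\tfrac{\sqrt{2}}{2})$ has sum $1$. The paper's conclusion is thus a statement about that construction, not about every parity-check matrix. Your exact-dual computation shows that the stronger, every-parity-check-matrix reading of the proposition would be false.
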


\begin{IEEEproof}
We refer the reader to Appendix \ref{proof:no bch} for the proof.
\end{IEEEproof}

Proposition~\ref{prop:no BCH} shows that, when $\mathbf{Y}$ is used as the generator matrix together with the corresponding parity-check matrix $\mathbf{H}$, the derivation employed in \cite{b9} does not directly yield a factorization of the form $\mathbf{H}^{T}=\mathbf{A}\mathbf{T}\mathbf{W}$,
where $\mathbf{A}\in\mathbb{R}^{d\times d}$ is a square nonsingular coefficient matrix, $\mathbf{T}\in\mathbb{R}^{d\times N}$ is a monomial Vandermonde matrix, and $\mathbf{W}\in\mathbb{R}^{N\times N}$ is a diagonal matrix, as required by the BCH-like characterization in \cite{b9}. Instead, by expressing the Chebyshev polynomials in the monomial basis, the parity-check matrix admits the factorization $\mathbf{H}^{T}=\widetilde{\mathbf{A}}\widetilde{\mathbf{T}}\mathbf{W}$,
where $\widetilde{\mathbf{A}}\in\mathbb{R}^{d\times(d+1)}$ is the coefficient matrix corresponding to the monomial expansions of the Chebyshev polynomials and $\widetilde{\mathbf{T}}\in\mathbb{R}^{(d+1)\times N}$ is the corresponding monomial Vandermonde matrix. Since this representation necessarily involves $d+1$ monomial basis functions, the resulting coefficient matrix is rectangular rather than square. Consequently, the BCH-like syndrome-decoding framework developed in \cite{b9} is not obtained directly from the present construction. Therefore, the decoding algorithms of \cite{b9,b10} cannot be applied directly to the linear code generated by $\mathbf{Y}$ without additional algebraic transformations or structural modifications.

To overcome these limitations of the existing BACC scheme \cite{b7}, we propose a more robust and generic framework, \textit{Robust Berrut Approximated Coded Computing (RBACC)}. The RBACC framework explicitly incorporates both stragglers and Byzantine workers into the system model. Furthermore, we demonstrate that, by suitably modifying the evaluation points used for distributing the shares among the workers, the RBACC framework achieves resilience to stragglers and robustness against a certain number of Byzantine workers under certain conditions.

\section{Robust Berrut Approximated Coded Computing}
\label{sec:RBACC}
The Robust BACC-based distributed computing setup, as illustrated in Fig. \ref{Fig:allc adv1}(a) comprises a master connected to the set of $N$ workers $\mathcal{W}$, via dedicated links. Similar to BACC, in RBACC, the primary objective of the master is to distribute a computational task $f(\cdot)$ on an underlying dataset among the workers in $\mathcal{W}$, and subsequently aggregate their computed results, leveraging their collective computing power. In contrast to BACC, RBACC assumes a more generic worker model in which the set of workers $\mathcal{W}$ is partitioned into two groups: (i) \textit{reliable} workers, denoted by $\mathcal{W}_{{rel}}$, and (ii) \textit{unreliable} workers, denoted by $\mathcal{W}_{{unrel}}$. Let $\tau$ and $\mu$ denote the number of reliable and unreliable workers, respectively, such that $\tau+\mu=N$. In this context, reliable workers are those that are proven to be honest and accurate in providing the intended computation results to the master. In contrast, unreliable workers are those that have not been proven honest, and may return incorrect computation results to the master. In this context, we assume that the set of $A$ Byzantine workers, denoted by $\mathcal{A}$, is a subset of the unreliable workers, i.e., $\mathcal{A} \subseteq \mathcal{W}_{{unrel}}$. Further, we assume that the master knows the identity of each worker belonging to the sets $\mathcal{W}_{rel}$ and $\mathcal{W}_{unrel}$, however, it does not know the identities of the Byzantine workers within $\mathcal{W}_{{unrel}}$. In contrast, the $S$ stragglers can belong to $\mathcal{W}$ irrespective of whether they are reliable or unreliable. 

The objective of RBACC is similar to that of BACC, that is to approximately evaluate “an arbitrary function” $f:\mathbb{R}^{m\times n} \rightarrow \mathbb{R}^{m\times n}$ over the matrices $\mathcal{X} = (\mathbf{X}_0, \ldots, \mathbf{X}_{K-1})$, where $\mathbf{X}_j \in \mathbb{R}^{m \times n}$ for each $j \in [K]$ in a distributed fashion under the assumption that $f(\cdot)$ is known to all $N$ the workers. More specifically, given that $f(\cdot)$ can be a non-polynomial function, the aim of RBACC is to ensure distributed computation of the function $f(\cdot)$ on $\mathbf{X}_{j}$ for $j\in [K]$ while ensuring bounded approximation errors even in the presence of stragglers and Byzantine workers. In the following subsections, we provide further details by describing the various steps involved in RBACC.


\subsection{Encoding in RBACC}
\label{sec:Encoding RBACC}
This section presents a scheme to encode the dataset $\mathcal{X}$ to generate the encoded shares among the $N$ workers. Similar to BACC, we employ Berrut’s rational interpolant, to construct a rational encoding function $u(z)$ in the indeterminate $z$, defined as
\[
u(z) = \sum_{j=0}^{K-1} \frac{\frac{(-1)^j}{z - \alpha_j}}{\sum_{k=0}^{K-1} \frac{(-1)^k}{z - \alpha_k}} \, \mathbf{X}_j,
\]
where $\alpha_j$ are the Chebyshev points of the first kind, given by
$\alpha_j = \cos \left( \frac{(2j + 1)\pi}{2K} \right), j\in [K]$. By definition of Berrut’s interpolant, it holds that, $u(\alpha_j) = \mathbf{X}_j$, for  $j\in[K]$. 

\subsection{Distribution of Shares among the Workers}
In contrast to \cite{b7}, when sharing the evaluations of the rational encoding function $u(z)$, we choose the evaluation points $z_i$ as $N$ Chebyshev points of the first kind, indicated by
\begin{equation}
\label{eq:cheb_new_kind}
z_i = \cos \left( \frac{(2i + 1)\pi}{2N} \right), \quad i\in[N].
\end{equation} 
Specifically, after constructing $u(z)$, the master computes its evaluations using $z_i \in \mathbb{R}$ indicated in \eqref{eq:cheb_new_kind}, and distributes the share $\mathbf{U}_i = u(z_i)$ to worker $\mathcal{W}_i$.

\begin{remark}
Unlike in \eqref{eq:cheb_new_kind}, we highlight that in the BACC scheme in \cite{b7}, $z_{i}$'s are chosen from Chebyshev points of the second kind, indicated by $z_i = \cos \left( \frac{i \pi}{N} \right)$, for $i \in [N]$.
\end{remark}

In the subsequent sections, we will justify the alternative choice of evaluation points $z_i = \cos \left( \frac{i \pi}{N} \right)$, and discuss the benefits of this selection in terms of resilience against a certain number of Byzantine workers. Further, to distribute the shares $\{\mathbf U_{i} ~|~i\in[N]\}$ in the presence of unreliable workers $\mathcal W_{unrel}$, we define a one-to-one mapping $\phi: [N] \rightarrow [N]$, such that the $i$-th evaluation $\mathbf{U}_{i}$ is assigned to the worker node $\mathcal{W}_{\phi(i)}$ for $i \in [N]$, as illustrated by the permutation block in Fig. \ref{Fig:allc adv1}(a). In this context, when all workers are considered to be reliable, i.e., $\mathcal{W} = \mathcal{W}_{{rel}}$ as in the case of BACC \cite{b7}, an identity mapping may be used, wherein $\phi(i) = i$ for all $i \in [N]$. However, in the presence of unreliable workers, i.e., when $\mathcal{W}_{{unrel}}$ is not a nullset, we show in a subsequent section that a customized assignment can be provided by suitably choosing the mapping $\phi$. This allows master to allocate the shares in a manner that minimizes the approximation error of the RBACC framework.
\subsection{Computation at the Workers}
\label{sec: RBACC computation at worker}
 After receiving its share $\mathbf{U}_{i}$ from the master, the worker $\mathcal{W}_{\phi(i)}$ intends to compute $\mathbf{V}_i= f(\mathbf{U}_{i})\in \mathbb{R}^{m \times n}$ and return the same to the master. As discussed in Section \ref{sec:computation at worker}, in the presence of stragglers, the master only receives results from the non-straggling workers with indices in $[N]/\mathcal{F}$. Furthermore, in the presence of $A$ Byzantine workers with indices in the set $\mathcal{A}$, the computations received by the master from the non-straggling workers can be expressed as in \eqref{eq:adv_eq} under finite-precision arithmetic.

\subsection{Error Correction in RBACC}
\label{sec:err correction RBACC}


In this section, we show that coding-theoretic methods for error detection and error correction can be used to make RBACC resilient to the presence of a certain number of Byzantine adversaries. Formally, as discussed in Section \ref{subsec:BACC error detect and correct}, to detect and correct the errors introduced by Byzantine workers, the set of computations received represented as $\mathbf{R}_{{i}}$ indicated in \eqref{eq:adv_eq} must represent a set of valid codewords of an underlying error-correcting code. In the BACC setting, the use of Chebyshev points of the second kind, as indicated in \eqref{cheb_first_kind}, prevents a BCH-like decomposition of the parity-check matrix, and therefore BCH-like error correction algorithms cannot be applied directly. In contrast, in this section, we show that with the modified evaluation points $z_{i}$ defined in \eqref{eq:cheb_new_kind}, a BCH-like characterization of the underlying parity-check matrix is possible, enabling the use of BCH-like error detection and correction algorithms to identify and correct the errors from a certain number of Byzantine workers. In this context, as discussed in Section~\ref{subsec:BACC error detect and correct}, during the reconstruction stage of RBACC, for every $g \in \{1, 2, \ldots, m\}$ and $h \in \{1, 2, \ldots, n\}$, we construct the vector $\mathbf{r}_{g,h} = [\,\mathbf{R}_{l_{1}}(g,h)\; \mathbf{R}_{l_{2}}(g,h)\; \ldots\; \mathbf{R}_{l_{N}}(g,h)\,]$,
which collects the $(g,h)$-th entries of the received matrices $\mathbf{R}_{l_i}$ for all $l_i \in [N]$. 
In total, $L = m \times n$ such vectors are formed, and these vectors constitute the matrix 
$\mathbf{R}_{{eff}}$ as shown in \eqref{eq:rec matrix}, with the distinction that each entry now corresponds 
to a noisy evaluation of $f(u(z))$ at Chebyshev points of the first kind specified in \eqref{eq:cheb_new_kind}.

Given that $\mathbf{r}_{g,h}$ is a noisy version of $\mathbf{v}_{g,h}$, the master aims to eliminate the erroneous computations introduced by Byzantine workers in each $\mathbf{r}_{g,h}$. It is important to note that, in RBACC, due to the use of Chebyshev points of the first kind defined as $z_i = \cos\left( \frac{(2i + 1)\pi}{2N} \right)$ for $i \in [N]$, the computations received at the master can be viewed as a  noisy codeword of a Discrete Cosine Transform (DCT) code \cite{b9}. To understand this connection, we revisit some fundamental properties of DCT codes \cite{b9,b10}. Let $\mathbf{\Theta}\in\mathbb{R}^{N\times N}$ denote the DCT matrix, whose $(\theta_1+1,\theta_2+1)$-th entry is given by
\[\Theta(\theta_1+1,\theta_2+1)=\sqrt{\frac{2}{N}}\,\beta(\theta_1)
\cos\!\left(\frac{(2\theta_2+1)\theta_1\pi}{2N}\right),
\]
for $0\le \theta_1,\theta_2\le N-1$, where
\[\beta(\theta_1)=
\begin{cases}
\dfrac{1}{\sqrt{2}}, & \theta_1=0,\\[1mm]
1, & \text{otherwise}.
\end{cases}
\]
\begin{proposition}
The following results with respect to the DCT matrix $\Theta$ are well known \cite{b9}, \cite{b10}. For any $K_{1}$ such that $1 \leq K_{1} < N$:
\begin{itemize}
    \item The first $K_{1}$ rows of the DCT matrix, denoted by $\mathbf{G} \in \mathbb{R}^{K_{1} \times N}$, can be used as a generator matrix of a real linear code. Such an $(N, K_{1})$ code is referred to as the DCT code.
    
    \item The remaining $N-K_{1}$ rows of the DCT matrix, denoted by $\mathbf{H} \in \mathbb{R}^{N-K_{1} \times N}$, serve as the parity check matrix of the DCT code.
    
    \item DCT codes satisfy the MDS property in $\mathbb{R}$, akin to Reed-Solomon codes in the parallel world of finite fields.

    \item Since DCT codes satisfy the MDS property, the minimum distance of an $(N,K_1)$ DCT code is $d_{\min}=N-K_1+1.$ Consequently, the code can correct up to $\left\lfloor\frac{N-K_1}{2}\right\rfloor$ errors.

\end{itemize}
\end{proposition}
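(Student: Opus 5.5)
The four claims follow from two facts. First, $\mathbf{\Theta}$ is a real orthogonal matrix. Second, after a diagonal rescaling, every $K_1\times K_1$ column-submatrix of $\mathbf{G}$ is a Chebyshev--Vandermonde matrix built on distinct nodes. I would establish these in that order and then read off the bullets.

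\textbf{Step 1: orthogonality of $\mathbf{\Theta}$.} Write the $\theta_1$-th row as $\sqrt{2/N}\,\beta(\theta_1)\,T_{\theta_1}(z_\theta)$, where $z_\theta=\cos\!\left(\frac{(2\theta+1)\pi}{2N}\right)$ are exactly the points in \eqref{eq:cheb_new_kind} and $T_k$ is the Chebyshev polynomial of the first kind. The row inner products become $\sum_{\theta=0}^{N-1}\cos(k\phi_\theta)\cos(l\phi_\theta)$ with $\phi_\theta=\frac{(2\theta+1)\pi}{2N}$. Using the product-to-sum formula and the geometric sum $\sum_\theta \cos(j\phi_\theta)=0$ for $0<|j|<2N$, this equals $0$ for $k\neq l$, $N/2$ for $k=l\ge1$, and $N$ for $k=l=0$. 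These are the discrete Chebyshev orthogonality relations, and the factors $\beta$ normalize them, so $\mathbf{\Theta}\mathbf{\Theta}^T=\mathbf{I}_N$.

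\textbf{Step 2: generator and parity-check matrices.} Since $\mathbf{\Theta}$ is nonsingular, its first $K_1$ rows $\mathbf{G}$ are linearly independent and generate a $K_1$-dimensional subspace of $\mathbb{R}^N$. This gives the first bullet. Orthogonality yields $\mathbf{G}\mathbf{H}^T=\mathbf{0}$, and $\mathbf{H}$ has full row rank $N-K_1$. Hence the null space of $\mathbf{H}$ has dimension $K_1$ and contains the row space of $\mathbf{G}$, so the two coincide. Therefore $\mathbf{c}\mathbf{H}^T=\mathbf{0}$ if and only if $\mathbf{c}$ is a codeword, which gives the second bullet.

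\textbf{Step 3: MDS property, the main obstacle.} I would use the standard criterion that a code is MDS if and only if every set of $K_1$ columns of $\mathbf{G}$ is linearly independent. A column submatrix indexed by $\{\theta_1,\ldots,\theta_{K_1}\}$ factors as $\mathbf{D}\,[T_k(z_{\theta_j})]_{k,j}$, where $\mathbf{D}$ is the nonsingular diagonal matrix of normalizations. The matrix $[T_k(z_{\theta_j})]$ equals $\mathbf{L}\,\mathbf{V}$, where $\mathbf{V}=[z_{\theta_j}^k]$ is the monomial Vandermonde matrix and $\mathbf{L}$ is the lower-triangular change of basis from monomials to Chebyshev polynomials. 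The matrix $\mathbf{L}$ has nonzero diagonal entries, since the leading coefficient of $T_k$ is $2^{k-1}$ for $k\ge1$ and $1$ for $k=0$, so it is invertible. The nodes $z_\theta$ are distinct because $\cos$ is injective on $(0,\pi)$, so $\det\mathbf{V}\neq0$. Hence every $K_1$ columns are independent, and the code is MDS.

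Equivalently, a nonzero codeword $\mathbf{m}\mathbf{G}$ is the vector of values at the nodes $z_\theta$ of a nonzero polynomial of degree at most $K_1-1$. Such a polynomial has at most $K_1-1$ roots, so the codeword has weight at least $N-K_1+1$. The Singleton bound gives the matching upper bound, so $d_{\min}=N-K_1+1$. The correction radius $\lfloor (d_{\min}-1)/2\rfloor=\lfloor (N-K_1)/2\rfloor$ follows by the usual argument: balls of that Hamming radius around distinct codewords are disjoint. This completes the last two bullets.
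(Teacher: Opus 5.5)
Your proof is correct. The paper gives no argument for this proposition; it simply cites \cite{b9,b10}. Your write-up is therefore a self-contained route rather than a reconstruction of the paper's proof.

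It rests on two ingredients. The first is discrete Chebyshev orthogonality at the first-kind nodes. This gives $\mathbf{\Theta}\mathbf{\Theta}^T=\mathbf{I}_N$, and the generator/parity-check pairing then follows by a dimension count. The second is the factorization of $\mathbf{G}$ as a nonsingular diagonal matrix, times the lower-triangular Chebyshev-to-monomial change of basis, times a monomial Vandermonde matrix on distinct nodes.

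That factorization is exactly the relation $\mathbf{G}=\mathbf{B}\mathbf{Y}\mathbf{Z}$ which the paper imports from \cite{b9} in the proof of Proposition~\ref{prop:dct proof}. Your computation shows one may take $\mathbf{B}=\mathbf{D}\mathbf{L}$ and $\mathbf{Z}=\mathbf{I}_N$. It also rests on the same nonzero leading coefficient $2^{k-1}$ of $T_k$ that is used in Appendix~\ref{proof:no bch}.

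What your route buys is transparency. The MDS property reduces to the fact that a nonzero polynomial of degree at most $K_1-1$ vanishes at no more than $K_1-1$ of the $N$ distinct nodes. This gives $d_{\min}=N-K_1+1$, and the radius $\lfloor (N-K_1)/2\rfloor$ follows as unique decodability in Hamming distance.

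What the cited literature additionally supplies is the BCH-like factorization $\mathbf{H}=\mathbf{A}\mathbf{T}\mathbf{W}$, and with it an efficient syndrome decoder achieving that radius. The paper relies on this later, but the present statement does not require it.
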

Using the above properties, we present our first result.
\begin{proposition}
\label{prop:dct proof}
Let the function $f(\cdot)$ be such that $f(u(z))$ has a Taylor series expansion. Then, for any $K_{1} \in \mathbb{N}$ satisfying $1 < K_{1} < M$, the vector $\mathbf{r}_{g,h} \in \mathbb{R}^{M}$, $\forall g,h$, can be represented as a noisy codeword of a $K_{1}$-dimensional DCT code of blocklength $M$. In other words, $\mathbf{r}_{g,h}$ can be viewed as the sum of a codeword from a $K_{1}$-dimensional DCT code and an arbitrary vector in $\mathbb{R}^{M}$.
\end{proposition}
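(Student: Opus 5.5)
Write the composite function on $[-1,1]$ in the Chebyshev basis and use the fact that Chebyshev polynomials evaluated at Chebyshev points of the first kind are exactly the rows of the DCT matrix $\mathbf{\Theta}$. Fix $(g,h)$ and set $F_{g,h}(z)=[f(u(z))]_{g,h}$. The Taylor-series hypothesis lets me rewrite $F_{g,h}$ on $[-1,1]$ as a Chebyshev series $F_{g,h}(z)=\sum_{k\ge 0} c_k T_k(z)$, where $T_k(\cos\theta)=\cos(k\theta)$. Concretely, I would express each monomial $z^p=\cos^p\theta$ as a finite combination of $T_0,\ldots,T_p$ via the power-reduction formula and then regroup; the regrouping is legitimate whenever the Taylor series converges absolutely on $[-1,1]$.

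The first key step is to evaluate at $z_i=\cos\!\big(\tfrac{(2i+1)\pi}{2N}\big)$. This gives $T_k(z_i)=\cos\!\big(\tfrac{(2i+1)k\pi}{2N}\big)$, which is exactly the $(k+1,i+1)$ entry of $\mathbf{\Theta}$ up to the scaling $\sqrt{2/N}\,\beta(k)$. I would handle the non-straggling subset by working with the $M$-point version of the construction, so that the relevant block length is $M$, matching the statement.

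Next I split the series at index $K_1$:
\[
\mathbf{v}_{g,h}=\sum_{k=0}^{K_1-1} c_k \mathbf{t}_k+\sum_{k\ge K_1} c_k\mathbf{t}_k ,
\]
where $\mathbf{t}_k$ is the vector of evaluations of $T_k$. Rescaling by $\big(\sqrt{2/M}\,\beta(k)\big)^{-1}$ turns the first sum into $\mathbf{m}_{g,h}\mathbf{G}$ with $\mathbf{m}_{g,h}\in\mathbb{R}^{1\times K_1}$, so it is a codeword of the $(M,K_1)$ DCT code generated by the first $K_1$ rows of $\mathbf{\Theta}$. The tail $\sum_{k\ge K_1}c_k\mathbf{t}_k$ is a truncation residual $\mathbf{t}^{(r)}_{g,h}$. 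Adding the error and precision-noise vectors from \eqref{eq:adv_eq} then gives
\[
\mathbf{r}_{g,h}=\mathbf{m}_{g,h}\mathbf{G}+\big(\mathbf{t}^{(r)}_{g,h}+\mathbf{e}_{g,h}+\mathbf{p}_{g,h}\big),
\]
which is a codeword plus a vector in $\mathbb{R}^M$, as the statement claims. For $1<K_1<M$ the matrix $\mathbf{G}$ has full rank because $\mathbf{\Theta}$ is orthogonal.

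The proposition itself is almost tautological: any vector equals a codeword plus an arbitrary vector. The substance, and the main obstacle, lies in making the decomposition meaningful.

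First, the tail must converge. I would try absolute convergence of the Taylor series on $[-1,1]$, together with $|T_k|\le 1$, to bound $\|\mathbf{t}^{(r)}_{g,h}\|$.

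Second, the evaluation points must line up with DCT columns. When $M<N$ the surviving points $z_{l_i}$ are no longer the $M$-point Chebyshev nodes. There I would either treat the columns of $\mathbf{\Theta}$ indexed by $[N]\setminus\mathcal{F}$ as a punctured DCT code, noting that MDS-ness is preserved under puncturing, or simply state the result for the full set of $N$ nodes.

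Third, aliasing: since $T_{2N-k}(z_i)=-T_k(z_i)$ at these nodes, tail terms fold back onto low rows. This does not break the decomposition, but it should be kept inside the residual term rather than absorbed into the codeword.
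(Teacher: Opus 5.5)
Your argument is correct, but it organizes the decomposition differently from the paper. The paper truncates the monomial Taylor series after $K_1$ terms and writes the truncated part as $\mathbf{m}_{g,h}\mathbf{Y}$, with $\mathbf{Y}$ the monomial Vandermonde matrix at the first-kind Chebyshev nodes. It then imports the factorization $\mathbf{G}=\mathbf{B}\mathbf{Y}\mathbf{Z}$ ($\mathbf{B}$ lower triangular, $\mathbf{Z}$ diagonal) to conclude that $\mathbf{t}_{g,h}$ is a DCT codeword up to a diagonal rescaling. Its remainder $\mathbf{q}_{g,h}$ is simply \emph{defined} as $\mathbf{v}_{g,h}-\mathbf{t}_{g,h}$.

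You work in the Chebyshev basis from the outset. The identity $T_k(z_i)=\cos\bigl(\frac{(2i+1)k\pi}{2N}\bigr)$ then places the low-order part literally in the row space of $\mathbf{G}$, with explicit coefficients $c_k/(\sqrt{2/N}\,\beta(k))$. This makes the cited factorization unnecessary and shows that the diagonal factor is in fact trivial for the generator here.

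The price is an extra hypothesis. Identifying your residual with the Chebyshev tail $\sum_{k\ge K_1}c_k\mathbf{t}_k$ requires absolute convergence of the Taylor series on $[-1,1]$, so that the regrouping into Chebyshev coefficients is legitimate. That is more than ``$f(u(z))$ has a Taylor series expansion.'' The paper sidesteps this because its remainder is obtained by subtraction, and you can do the same. Expand only the degree-$(K_1-1)$ Taylor polynomial in $T_0,\ldots,T_{K_1-1}$ (a finite, triangular change of basis) and define the residual as the difference. Your convergence and aliasing concerns then become questions about the size of the residual rather than about the validity of the decomposition.

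In the other direction, your Chebyshev split typically yields a smaller residual than the Taylor remainder. However, the paper's later truncation proxy $\sigma_r^2$ is phrased for the Taylor remainder, which is why it splits that way.

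One slip to fix: the ``$M$-point version of the construction'' (and the scaling $\sqrt{2/M}$) is not available. The surviving nodes $z_{l_i}$ lie on the $N$-point grid, and the shares were encoded before the straggler set was known. The correct reading is the one you give later: restrict the columns of the $N$-point $\mathbf{G}$ to $[N]\setminus\mathcal{F}$, giving a punctured DCT code (still MDS). This is exactly what the paper does, and in both proofs it is what ``DCT code of blocklength $M$'' really means.
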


\begin{IEEEproof}
We present the proof for $M = N$, and subsequently generalize it to any $M< N$. Note that, when $A = 0$, and $M = N$, the set of vectors $\{\mathbf{r}_{{g}{h}}\}$ represents noisy codewords from a $K_{1}$ dimensional DCT code with blocklength $N$. To justify this claim, recall the encoding stage of RBACC, where the encoded rational matrix function $u(z)=\frac{a(z)}{b(z)}\in \mathbb{R}^{m\times n}[z]$ is evaluated at the $N$ Chebyshev points of first kind, denoted as $z_i = \cos \left( \frac{(2i + 1)\pi}{2N} \right), i\in [N]$ in order to distribute the set of evaluations $\{u(z_{i})\in \mathbb{R}^{m\times n}~|~ {i\in[N]\}}$ among $N$ workers. Subsequently, each worker node computes the target function $f:\mathbb{R}^{m\times n}$ $\rightarrow$ $\mathbb{R}^{m\times n}$, and returns the result $f(u(z_{i}))\in \mathbb{R}^{m\times n}$ to the master. Since the target function $f(\cdot)$ is arbitrary function (generally non-polynomial), applying it on each share $u(z_{i})\in \mathbb{R}^{m\times n}$, produces evaluations of the composite matrix-valued function $f(u(z_{i}))\in \mathbb{R}^{m\times n}$ of the matrix rational function $f(u(z))\in \mathbb{R}^{m\times n}$, whose effective degree is unbounded. Suppose that $f(u(\cdot))$ is infinitely differentiable, such that it has Taylor series expansion at its evaluations $\{u(z_{i}), i \in [N]\}$. In that case, the components of the vector $\mathbf{v}_{g,h}$ have Taylor series expansion, and therefore, $\mathbf{v}_{g,h}$ can be written as 
\begin{equation}
    \label{eq:taylor}
    \mathbf{v}_{g,h} = \mathbf{t}_{g,h} + \mathbf{q}_{g,h}, 
\end{equation}
where $\mathbf{t}_{g,h}$ denotes the sum of the first $K_{1}$ term of the Taylor series, for some $K_{1} \in \mathbb{N}$ and $\mathbf{q}_{g, h}$ denotes the sum of the higher-order terms of the expansion. In the rest of the proof, we show that $\mathbf{t}_{g,h}$ is a codeword of a $(N, K_{1})$ DCT code, which in turn implies that $\mathbf{v}_{g,h}$ is a noisy DCT codeword due to \eqref{eq:taylor}, and so is $\mathbf{r}_{g,h}$ since it is an additive noisy version of $\mathbf{t}_{g,h}$. Henceforth, suppose that $\mathbf{t}_{g,h}$ is represented as an $N$-length row vector. Since the evaluation points $\{z_{i}, i \in [N]\}$ are the Chebyshev points of first kind, we can write $\mathbf{t}_{g,h}$ as $\mathbf{t}_{g,h} = \mathbf m_{g,h}\mathbf{Y}$, where $\mathbf{Y}\in \mathbb{R}^{K_{1}\times N}$ can be expressed as


\begin{small}
\begin{equation}
\label{eq:generator matrix}
\mathbf{Y} = \begin{bmatrix}
  1  &  1  &  \cdots & 1 \\
\vspace{0.25cm}
\cos \frac{\pi}{2N} & \cos \frac{3\pi}{2N} & \cdots & \cos \frac{(2N-1)\pi}{2N} \\

\cos^{2}\frac{\pi}{2N} & \cos^{2} \frac{3\pi}{2N} &  \cdots & \cos^{2} \frac{(2N-1)\pi}{2N} \\

\vdots & \vdots & \vdots  & \vdots \\

\cos^{K_{1}-1} \frac{\pi}{2N} & \cos^{K_{1}-1} \frac{3\pi}{2N} & \cdots & \cos^{K_{1}-1} \frac{(2N-1)\pi}{2N}
\end{bmatrix}
\end{equation}
\end{small}

\noindent and $\mathbf m_{g,h} \in \mathbb{R}^{1 \times K_{1}}$ is a vector representing the coefficients of the first $K_{1}$ terms of the Taylor series expansion. Furthermore, in order to proof $\mathbf{t}_{g,h}$ is a codeword of a $(N, K_{1})$ DCT code, it is essential to show  that $\mathbf{Y}$ represents the generator matrix of $(N, K_{1})$ DCT code. In this context, from \cite{b9}, it is well known that the generator matrix $\mathbf{G}$ of a DCT code and the Vandermonde matrix $\mathbf{Y}$ in \eqref{eq:generator matrix} satisfy the relation $\mathbf{G} = \mathbf{B}\mathbf{Y}\mathbf{Z}$, where $\mathbf{B}$ is a $K_{1} \times K_{1}$ full-rank lower triangular matrix and $\mathbf{Z}$ is an $N \times N$ full-rank diagonal matrix. Using the above decomposition, we can further represent $\mathbf{t}_{g,h} = \mathbf{\bar{m}_{g,h}}\mathbf{B}\mathbf{Y}$, 
where $\mathbf{\bar{m}_{g,h}} = \mathbf m_{g,h} \mathbf{B^{-1}}$. Since $\mathbf{G} = \mathbf{B}\mathbf{Y}\mathbf{Z}$, the vector $\mathbf{\bar{m}_{g,h}}\mathbf{B}\mathbf{Y}$ is indeed a codeword of DCT code rotated by the diagonal matrix $\mathbf{Z}$. Therefore, the distance properties of the linear code with generator matrix $\mathbf{B}\mathbf{Y}$ remains identical to that of the linear code generated by $\mathbf{G}$. Therefore, when $M = N$, the $\mathbf{r}_{g,h}$ represents a corrupted codeword of an $(N,K_{1})$ DCT code. In general, when $M < N$, $\mathbf{t}_{g,h}$ can be seen as a punctured codeword of a DCT codeword of block-length $N$. This completes the proof.
\end{IEEEproof}

From the Proposition \ref{prop:dct err corr}, given that $\mathbf{r}_{g, h}$ is a noisy DCT codeword, the following result shows that the erroneous computations returned by the Byzantine workers can be compensated under some conditions.

\begin{proposition}
\label{prop:dct err corr}
Let the RBACC setting be such that the composite function $f(u(\cdot))$ is a polynomial of degree $D - 1$. Under such a scenario, the erroneous computations returned by the $A$ Byzantine workers can be accurately nullified as long as $A \leq \lfloor \frac{M-D}{2}\rfloor$ and the precision noise is zero.
\end{proposition}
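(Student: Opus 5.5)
The plan is to reduce the statement to the exact-arithmetic decoding capability of an MDS DCT code, using the decomposition in the proof of Proposition~\ref{prop:dct proof} with truncation error identically zero. Since $f(u(\cdot))$ is a polynomial of degree $D-1$, I would take $K_1 = D$. The Taylor expansion of $f(u(z))$ then has exactly $D$ terms, so in \eqref{eq:taylor} the residual vanishes, $\mathbf{q}_{g,h}=\mathbf{0}$, and $\mathbf{v}_{g,h}=\mathbf{t}_{g,h}=\mathbf{m}_{g,h}\mathbf{Y}$ with $\mathbf{Y}$ the $D\times N$ Chebyshev-Vandermonde matrix in \eqref{eq:generator matrix}. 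The relation $\mathbf{G}=\mathbf{B}\mathbf{Y}\mathbf{Z}$ then gives $\mathbf{v}_{g,h}\mathbf{Z} = \bar{\mathbf{m}}_{g,h}\mathbf{G}$. So, after the known invertible diagonal scaling $\mathbf{Z}$, each noiseless computation vector is an exact codeword of the $(N,D)$ DCT code. Restricting to the $M$ non-straggling coordinates gives a codeword of the code punctured to those positions.

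The second step shows the punctured code is still MDS with minimum distance $M-D+1$. I would argue this directly rather than through the DCT parity-check matrix. A nonzero codeword of the punctured code corresponds to a nonzero polynomial $p$ of degree at most $D-1$ evaluated at $M$ distinct points $\bar z_r$, then scaled by nonzero diagonal entries. Such a $p$ has at most $D-1$ roots, so the codeword has weight at least $M-D+1$. This matches the Singleton bound, so the punctured code is MDS. Equivalently, any $D$ columns of $\mathbf{Y}$ restricted to the surviving points form a nonsingular Vandermonde matrix.

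Third, with zero precision noise, $\mathbf{r}_{g,h}\mathbf{Z}_M = \bar{\mathbf{m}}_{g,h}\mathbf{G}_M + \mathbf{e}_{g,h}\mathbf{Z}_M$, where the error vector has support contained in $\mathcal{A}$. Hence its weight is at most $A\le\lfloor (M-D)/2\rfloor = \lfloor (d_{\min}-1)/2\rfloor$. Uniqueness of the nearest codeword follows by the standard argument: two codewords within Hamming distance $A$ of $\mathbf{r}$ would differ in at most $2A\le M-D<d_{\min}$ positions, so they must coincide. The master can therefore recover $\mathbf{v}_{g,h}$ exactly for every $(g,h)$ and form $\mathbf{C}_i=\mathbf{V}_i$. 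The subsequent Berrut interpolation then behaves as if there were no Byzantine workers. For an algorithmic decoder, I would invoke the syndrome-based BCH-like (Peterson/Prony-type) decoder of \cite{b9,b10}, which locates up to $\lfloor (M-D)/2\rfloor$ errors exactly in real arithmetic.

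I expect the main obstacle to be the straggler case $M<N$. The punctured code is no longer a standard DCT code on $M$ points, so the BCH-like parity-check structure of \cite{b9} does not apply verbatim. I would handle this in two ways. For the existence and uniqueness claim, the polynomial root-count argument above suffices. For the algorithmic claim, I would treat the erased positions as known erasures of the length-$N$ DCT code: the code corrects $2A + (N-M) \le N-D$ combined errors and erasures, which is exactly $A\le\lfloor (M-D)/2\rfloor$, using an errors-and-erasures variant of the syndrome decoder.
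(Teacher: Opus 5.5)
Your proof is correct, but it reaches the conclusion by a different route than the paper. The paper sets $K_1=D$ so that $\mathbf{v}_{g,h}=\mathbf{t}_{g,h}$, and then declares $\mathbf{r}_{g,h}$ to be a codeword of an $(M,D)$ DCT code plus a weight-$A$ error. It invokes the BCH-like factorization $\mathbf{H}=\mathbf{A}\mathbf{T}\mathbf{W}$, uses $\mathbf{T}\mathbf{W}$ as a modified parity-check matrix for syndrome decoding, and imports the MDS property and the correction radius $\lfloor (M-K_1)/2\rfloor$ from \cite{b9,b10}.

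You instead prove the MDS property from scratch, for the code the master actually observes: evaluations of polynomials of degree at most $D-1$ at the surviving nodes. You do this by root counting plus the Singleton bound, obtain exact recovery from the unique-decoding radius, and handle stragglers as erasures of the length-$N$ code via $2A+(N-M)\le N-D$.

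Your route is more rigorous when $M<N$. There the surviving nodes are a subset of the $N$-point Chebyshev grid, not the $M$-point grid, so the paper's identification of the code with an $(M,K_1)$ DCT code is loose. Your argument holds for any set of distinct nodes. The paper's route, in turn, points directly to an explicit low-complexity syndrome decoder rather than an existence argument.

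You do not actually need the errors-and-erasures workaround. The punctured code is a real generalized Reed--Solomon code, and its dual has parity-check matrix $\mathbf{T}\mathbf{W}$. Here $\mathbf{T}$ is the $(M-D)\times M$ Vandermonde matrix on the surviving nodes $\bar z_r$, and $\mathbf{W}=\mathrm{diag}(1/\prod_{s\neq r}(\bar z_r-\bar z_s))$. This follows because $\sum_r \bar z_r^k p(\bar z_r)/\prod_{s\neq r}(\bar z_r-\bar z_s)$ is a top-order divided difference, which vanishes whenever $k+D-1\le M-2$. Consequently the Peterson/Prony-type syndrome decoder applies directly with $M-D\ge 2A$ syndromes. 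This is exactly the structure the paper's factorization claim implicitly relies on in the straggler case.
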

\begin{IEEEproof}
When $A>0$, i.e., in the presence of $A$ Byzantine workers, where $0<A<\left\lfloor \frac{M-D}{2}\right\rfloor$, the vector $\mathbf{r}_{g,h}$ at the master represents a codeword of an $(M,K_{1})$ DCT code with $K_1 = D$, perturbed by additive noise $\mathbf e_{g,h}$ of Hamming weight $A$. In other words, since the Taylor series of $f(u(\cdot))$ has $D$ number of terms and the precision noise is zero, we have $\mathbf{r}_{g,h} = \mathbf{v}_{g,h} + \mathbf{e}_{g,h}$ such that $\mathbf{v}_{g,h} = \mathbf{t}_{g,h}$ and $\mathbf{t}_{g,h}$ is a DCT codeword of dimension $K_{1}=D$. For the generator matrix $\mathbf{G}$ of the DCT code, the corresponding parity-check matrix $\mathbf{H}$ can also be written as $\mathbf{H} = \mathbf{A}\mathbf{T}\mathbf{W}$ such that $\mathbf{T}$ is an $(M-K_{1}) \times M$ Vandermonde matrix, $\mathbf{A}$ is an $(M-K_{1}) \times (M-K_{1})$ full-rank lower triangular matrix, and $\mathbf{W}$ is an $M \times M$ full-rank diagonal matrix. As a result, the modified parity-check matrix $\mathbf{T}\mathbf{W}$ can be applied to $\mathbf{r}_{g,h}$ to obtain the syndrome vector, and subsequently the BCH decoding algorithm can be used for error localization and error correction in the presence of Byzantine workers. Given that the $(M,K_{1})$ DCT code has the MDS property and has error-correction capability of at most $\left\lfloor \frac{M-K_{1}}{2}\right\rfloor$ \cite{b9,b10}, the corresponding error-correction algorithms can be applied to each noisy codeword $\mathbf{r}_{g,h}$ as long as $A\leq \left\lfloor \frac{M-K_{1}}{2}\right\rfloor$. This implies that the error vector $\mathbf{e}_{g,h}$ of Hamming weight $A$ can be accurately nullified as long as $A \leq \left\lfloor \frac{M-D}{2}\right\rfloor$ and the precision noise is zero.
\end{IEEEproof}

In the context of this work, the composite function $f(u(\cdot))$ is not a finite degree polynomial, and moreover, $\mathbf{r}_{g,h}$ is perturbed by precision noise along with the noise added by the Byzantine workers. Under such a scenario, for any $K_{1}$, such that $1 < K_{1} < M$, the received vector $\mathbf{r}_{g,h}$ can be treated as a noisy codeword of a $K_{1}$-dimensional DCT code, i.e., a codeword corrupted by precision noise, Byzantine noise, and the residual noise resulting from the truncation of the Taylor series. Subsequently, a suitable syndrome based BCH decoding method can be applied to nullify the error vector $\mathbf{e}_{g,h}$. More specifically, to implement these ideas in precision noise scenarios, various coding theoretic-based and subspaced-based approaches are available \cite{b12},\cite{b13},\cite{b14},\cite{b15},\cite{b16}, \cite{b17}. Their implementation involves: (i) computation of the syndrome vector, (ii) estimation of the number of errors, (iii) identifying the location of errors, and (iv) estimation of error.

\begin{remark}
The choice of $K_{1}$ determines the error-correcting capability of the DCT code, and consequently, influences the accuracy with which the DCT decoder can detect and correct errors. In Section \ref{sec:optimal N1}, we provide further details on how the DCT error detection and correction algorithms are employed, and show how the choice of $K_{1}$ affects the error detection and correction capability of the underlying code.
\end{remark}

Once $K_{1}$ is chosen, the master applies the DCT-based error-correction algorithm to the received computations $\{\mathbf{R}_{l_{1}}, ~\mathbf{R}_{l_{2}}, ~\ldots~, \allowbreak \mathbf{R}_{l_{M}} \}$, the master recovers $\{\mathbf{C}_{l_{1}}, ~\mathbf{C}_{l_{2}}, ~\ldots~, \mathbf{C}_{l_{M}} \}$, where 
$\mathbf{C}_{l_{i}} = \mathbf{R}_{l_{i}} - \hat{\mathbf{E}}_{l_{i}}, \forall i$ such that $\hat{\mathbf{E}}_{l_{i}}$ is the estimate of $\mathbf{E}_{l_{i}}$ introduced by error correction mechanism. Thus, we have 
\begin{equation}
\label{eq:rec_code}
\mathbf{C}_{i} =
\begin{cases}
\mathbf{V}_i + \mathbf{P}_{i} + \mathbf{E}_{i} - \hat{\mathbf{E}}_{i}, & i \in \mathcal{A},\\
\mathbf{V}_i + \mathbf{P}_{i} - \hat{\mathbf{E}}_{i}, & i \notin \mathcal{A} \cup \mathcal{F}.
\end{cases}
\end{equation}

\subsection{Function Reconstruction in RBACC}
\label{sec:reconstruction RBACC}
After applying the DCT-based error-correction algorithm on the set of computations 
$\{\mathbf{R}_{l_{1}}, \mathbf{R}_{l_{2}}, \ldots, \mathbf{R}_{l_{M}}\}$ returned by the non-straggling workers, the master obtains a set of recovered computation results denoted by $\{\mathbf{C}_{l_i}\}_{i \in [M]}$, as defined in \eqref{eq:rec_code}. 
These recovered values are then used for function reconstruction. In particular, the master employs Berrut's rational interpolation to construct an approximation of $f(u(z))$ using the set $\{\mathbf{C}_{l_i}\}_{i \in [M]}$.

\begin{equation}
\label{eq:berrut_reconstruction}
r_{\text{Berrut}, f}(z) = 
\sum_{r=0}^{M-1} \frac{\frac{(-1)^r}{z - \bar{{z}}_{r}} }{
\sum_{r=0}^{M} \frac{(-1)^r}{z - \bar{z}_{r}}} ~ \mathbf{C}_{l_{r+1}} ,
\end{equation}
where $\bar{z}_{r} = z_{l_{r+1}}$, such that $z_{l_{r+1}} = \mbox{cos}\left( \frac{(2l_{r+1} + 1)\pi}{2N} \right)$. Finally, to recover $\mathbf{Y}_{j}$, for $j \in [K]$, the master obtains an approximate version of $f(\mathbf{X}_{j})$ as $\mathbf{Y}_{j} = r_{\text{Berrut}, f}(\alpha_{j})$, for $j \in [K]$.

\begin{remark}
As a special case, when $A=0$ and $\sigma_{p}^2=0$, and when the evaluation points $z_i$ of RBACC are replaced by the Chebyshev points of the second kind i.e., $z_{i}= \cos \left( \frac{i \pi}{N} \right),i \in [N]$, we highlight that the reconstruction in \eqref{eq:berrut_reconstruction} collapses to the reconstruction in \cite{b7}. 
\end{remark}

\begin{remark}
Similar to the existing BACC scheme, in RBACC there is no requirement on the minimum number of worker nodes to return their computations for reconstruction.
\end{remark}

Instead of measuring the error directly between the recovered matrices $\mathbf Y_j$ and the target matrices $f(\mathbf X_j)$, the error is measured between the reconstructed function $r_{\mathrm{Berrut},f}(z)$ and the original composite function $f(u(z))$ in the absence of Byzantine workers and precision noise. This captures the approximation error at any point in the interpolation domain of $z$. However, in the presence of Byzantine workers and precision noise, i.e., when $A>0$ and $\sigma_{p}^{2}>0$, the reconstruction error of RBACC is ideally averaged over all possible sets of Byzantine workers, stragglers, the noise matrices added by the Byzantine workers as well as the precision noise. For analytical convenience, we fix a set of Byzantine workers $\mathcal A$ and a set of stragglers $\mathcal{F}$, and analyze the corresponding average reconstruction error in the presence of Byzantine errors $\mathbf{E}_{i}$ and precision noise $\mathbf{P}_{i}$, defined as
\begin{equation}
\label{eq:berrut_approximation_error}
\mathbb{E}_{\{\mathbf{P}_i\}, \{\mathbf{E}_i\}}
\left\| r_{\mathrm{Berrut},f}(z) - f(u(z)) \right\|,
\end{equation}
where the expectation is taken with respect to the precision noise matrices $\{\mathbf{P}_i\}$ and adversarial error matrices $\{\mathbf{E}_i\}$, and $r_{\mathrm{Berrut},f}(z)$ is defined in \eqref{eq:berrut_reconstruction}. Note that the average approximation error defined in \eqref{eq:berrut_approximation_error} depends on several system parameters, i.e., $M$, $A$, $K_1$, and $\sigma_P^2$. In the subsequent sections, we analyze the approximation error under two settings. Section~\ref{sec:straggler resilent new scheme} considers the case of stragglers ($S>0$) without Byzantine workers ($A=0$), whereas Section~\ref{sec:secure BACC s=0} considers the case of Byzantine workers ($A>0$) without stragglers ($S=0$).

\section{Error Bounds on RBACC with Stragglers}
\label{sec:straggler resilent new scheme}
Recall that the RBACC framework uses Chebyshev points of the first kind as evaluation points for distributing the encoded shares among the workers. Specifically, the evaluation points are chosen as $z_i=\cos\!\left(\frac{(2i+1)\pi}{2N}\right)$ for $i\in[N]$. This modification enables error detection and correction through DCT codes under suitable conditions, as established in Proposition~\ref{prop:dct proof} and Proposition~\ref{prop:dct err corr}. It is worth noting that the approximation error bounds for the BACC framework have been studied and derived in \cite[Theorem 9]{b7} as a function of the number of stragglers $S$. However, those results do not directly apply in the present setting. This is because the BACC framework in \cite{b7} uses Chebyshev points of the second kind, whereas the RBACC framework employs a different set of evaluation points as indicated in \eqref{eq:cheb_new_kind}. Therefore, before analyzing the approximation error bound in \eqref{eq:berrut_approximation_error} in the presence of Byzantine workers, stragglers, and precision noise, we first consider a simpler setting where the approximation error is analyzed only as a function of stragglers. In particular, we consider a variant of the RBACC framework in which all workers are reliable, i.e., $\tau=0$ and $\mathcal W=\mathcal W_{\mathrm{rel}}$, corresponding to the case when $A=0$ and $0<S<N-2$. Since there are no unreliable workers in this setting, the identity mapping used in the BACC framework \cite{b7}, i.e., $\phi(i)=i$, can be directly used for distributing the encoded shares of $u(z)$. Under this setting, the expectation in \eqref{eq:berrut_approximation_error} reduces to the deterministic approximation error $\|r_{\mathrm{Berrut},f}(z)-f(u(z))\|$, which captures the error arising solely due to the presence of $S$ stragglers. Thus, in the results of this section, we analyze the approximation error bounds of RBACC for the proposed set of evaluation points only as a function of stragglers.

It is well known that the approximation error in Berrut interpolation depends critically on the choice of evaluation points through the associated Lebesgue constant. Therefore, we next provide the definition of the Lebesgue constant associated with the evaluation points.
\begin{definition}[Lebesgue Constant {\cite{b18}}]
Let $\mathcal{X}_{N}=\{x_{j}\}_{j=0}^{N-1}$ denote a set of $N$ distinct evaluation points in the interval $[a,b]$, such that $x_0<x_1<\cdots<x_{N-1}$, and let $\{\ell_{j}(x)\}_{j=0}^{N-1}$ denote the corresponding interpolation basis functions. The Lebesgue constant associated with $\mathcal{X}_{N}$ is defined as
\begin{equation}
\Lambda_{N}
=\max_{x\in[a,b]}\sum_{j=0}^{N-1}|\ell_{j}(x)|.
\end{equation}
The function
\begin{equation}
\label{eq:leb_fun}
\Lambda_{N}(x)=\sum_{j=0}^{N-1}|\ell_{j}(x)|
\end{equation}
is called the Lebesgue function.
\end{definition}
According to the above definition, evaluation points associated with smaller Lebesgue constants generally provide improved accuracy and numerical stability. Therefore, to quantify the approximation error of the RBACC framework and its dependence on the number of stragglers $S$, we first characterize the Lebesgue constant corresponding to the proposed set of evaluation points. The following lemma provides an upper bound on the Lebesgue constant of the RBACC framework in terms of the number of stragglers $S$.

\begin{lemma}
\label{lemma:lebesgue}
 Let $\mathcal{X}_{M}\!=\!\left\{x_{k}\right\}_{k=0}^{M-1}$ be an ordered set of $M$ distinct interpolation points chosen from the set $\mathcal{X}_N=\{x_k\}_{k=0}^{N-1}$ such that $x_0<x_1<\cdots<x_{M-1}$, where $M=N-S$ and $x_{k}\!=\!\cos \left( \frac{(2k + 1)\pi}{2N} \right), k \in [N]$. Then, the family $\mathcal{X}=\{\mathcal{X}_{M}\}_{M\in\mathbb N}$ is well spaced for $2\le M<N$, with constants $R=\frac{(S+1)(S+4)\pi ^2}{8}$ and $C=\frac{\pi^2(S+1)}{2}$. Therefore, the Lebesgue constant for Berrut's rational interpolant associated with $\mathcal{X}_{M}$ is upper bounded by
 \begin{eqnarray*}
 \Lambda _M\leq \left (\frac{(S+1)(S+4)\pi ^2}{8}+1\right)\big (1+\pi ^2(S+1)\ln (N-S)\big). 
 \end{eqnarray*}
\end{lemma}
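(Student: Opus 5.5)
The plan is to invoke the well-spacing framework of Bos, De Marchi, Hormann and Klein (the one used in \cite{b7} for BACC). That theorem says: if a family of point sets $\{\mathcal{X}_M\}$ is well spaced with constants $R$ and $C$, then the Lebesgue constant of Berrut's interpolant satisfies $\Lambda_M\le (R+1)(1+2C\ln M)$. With $C=\frac{\pi^2(S+1)}{2}$ and $M=N-S$, this is exactly the stated bound. So the real content is checking the well-spacing conditions for the subset $\mathcal{X}_M$ of the Chebyshev points of the first kind left after removing $S$ stragglers. Write $h_k=x_{k+1}-x_k$ for consecutive gaps in $\mathcal{X}_M$. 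The conditions to verify are:
\begin{itemize}
\item[(i)] the local mesh ratio bound $\frac{h_k}{h_{k\pm1}}\le R$ (together with the companion condition on gaps relative to the distances $x_j-x_k$ in the Bos et al.\ definition);
\item[(ii)] the bound $\frac{h_k}{x_j-x_k}\le \frac{C}{j-k}$, and its symmetric counterpart, for $j>k$.
\end{itemize}

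\textbf{Step 1 (gaps of the full set).} Set $\theta_k=\frac{(2k+1)\pi}{2N}$. Then
\[
x_k-x_{k+1}=2\sin\!\Big(\frac{\pi}{2N}\Big)\sin\!\Big(\frac{(k+1)\pi}{N}\Big).
\]
Using $\frac{2}{\pi}t\le\sin t\le t$ on $[0,\pi/2]$, and the symmetry $k\mapsto N-2-k$, each full gap is comparable to $\frac{\pi^2}{N^2}\min(k+1,N-k-1)$, up to factors of $\pi/2$.

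\textbf{Step 2 (gaps after deleting stragglers).} A gap of $\mathcal{X}_M$ is a union of at most $S+1$ consecutive full gaps. An upper bound on $h_k$ is therefore a sum of at most $S+1$ terms $\sin\frac{(k+i)\pi}{N}$, while a lower bound on a neighbouring gap uses a single full gap. Comparing the two, the worst case is a gap next to an endpoint swallowing all $S$ deleted points, which gives a ratio
\[
\sum_{i=1}^{S+1}\frac{\pi}{2}\,i \;=\; \frac{\pi(S+1)(S+2)}{4}.
\]
The extra slack in $\frac{(S+1)(S+4)\pi^2}{8}$ should absorb the second $\pi/2$ factor from the sine comparisons and the index shift. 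For (ii), bound $x_j-x_k$ from below by summing the full gaps between them; in angle variables this is at least $\frac{2}{\pi}\cdot\frac{\pi(j-k)}{N}\cdot(\text{local sine})$. Bound $h_k$ from above by $(S+1)$ times the largest local full gap, giving $C$ proportional to $S+1$ with constant $\pi^2/2$.

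\textbf{Main obstacle.} The hardest part is the bookkeeping in (ii) near the endpoints, where $\sin\theta$ varies rapidly. The index distance $j-k$ in $\mathcal{X}_M$ is at most the distance in $\mathcal{X}_N$, but the straggler gaps must be controlled uniformly. I would first pass to the angle variable, using $x_j-x_k=2\sin\frac{\theta_k+\theta_j}{2}\sin\frac{\theta_k-\theta_j}{2}$. I would then bound $\sin\frac{\theta_k+\theta_j}{2}$ below by the sine at the nearer endpoint of the two, so that everything reduces to elementary inequalities in $k$, $j$ and $S$. Finally, plugging $R$ and $C$ into the Bos et al.\ theorem yields the stated estimate on $\Lambda_M$.
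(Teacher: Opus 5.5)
Your architecture matches the paper's: reduce to the Bos et al.\ bound $\Lambda_M\le(R+1)(1+2C\ln M)$ and verify well-spacing with product-to-sum formulas and Jordan's inequality. However, the step you propose for the main obstacle does not give the stated $C$.

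Write $x_i=\cos\theta_i$ for the nodes of $\mathcal{X}_M$, so $\theta_i$ decreases in $i$. Lower-bounding $\sin\frac{\theta_k+\theta_j}{2}$ by the sine at the endpoint nearer to $\pm1$, i.e.\ by $\min(\sin\theta_k,\sin\theta_j)$, loses a factor of order $S$. It does so exactly where the stragglers matter.

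Keep the extreme node $x_0=-\cos\frac{\pi}{2N}$ and delete the next $S$ nodes. Then $x_1=-\cos\frac{(2S+3)\pi}{2N}$ and $h_0=2\sin\frac{(S+1)\pi}{2N}\sin\frac{(S+2)\pi}{2N}\approx\frac{(S+1)(S+2)\pi^2}{2N^2}$. Your bound gives only $x_j-x_0\ge\frac{2j}{N}\sin\frac{\pi}{2N}\approx\frac{\pi j}{N^2}$. So the best constant you can certify in the $C$-condition at $k=0$ is about $\frac{\pi(S+1)(S+2)}{2}$, or about $\pi(S+1)^2$ if you also bound $h_k$ by $S+1$ times the largest full gap. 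That is quadratic in $S$ and exceeds $\frac{\pi^2(S+1)}{2}$ once $S\ge2$. You would end with a factor of order $(S+1)^2$ in front of $\ln(N-S)$ instead of $\pi^2(S+1)$, even though the true ratio $j h_0/(x_j-x_0)\approx\frac{j(S+1)(S+2)}{(S+j)(S+j+1)}$ stays $O(S+1)$.

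The missing idea is to keep both midpoint sines and compare them directly. Set $\phi=\frac{\theta_k+\theta_{k+1}}{2}$ and $\psi=\frac{\theta_k+\theta_j}{2}$. The gap and the interval $[x_k,x_j]$ share the endpoint $x_k$, so $\phi/2\le\theta_k/2\le\psi\le\phi$. Hence $\min(\phi,\pi-\phi)\le 2\min(\psi,\pi-\psi)$, and Jordan's inequality gives $\sin\phi\le\pi\sin\psi$. Combine this with $h_k=2\sin\frac{\beta\pi}{2N}\sin\phi$, $\beta\le S+1$, and $\sin\frac{\theta_k-\theta_j}{2}\ge\frac{j-k}{N}$. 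You get exactly $\frac{h_k}{x_j-x_k}\le\frac{\pi^2(S+1)}{2(j-k)}$. The paper does the same thing via $\frac{2\alpha_k+\beta+1}{\alpha_k+\alpha_j+\beta+1}\le 2$ and its case split at $\pi/2$.

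Your mesh-ratio computation also needs correcting, although the stated $R$ survives. The merged gap next to the first full gap is made of full gaps proportional to $\sin\frac{m\pi}{N}$ with $m=2,\dots,S+2$, not $m=1,\dots,S+1$. Also, the common factor $2\sin\frac{\pi}{2N}$ cancels exactly, so there is no second $\pi/2$ to absorb. The worst ratio is $\sum_{m=2}^{S+2}\frac{\sin(m\pi/N)}{\sin(\pi/N)}=\frac{\sin\frac{(S+1)\pi}{2N}\sin\frac{(S+4)\pi}{2N}}{\sin\frac{\pi}{2N}\sin\frac{\pi}{N}}\le\frac{(S+1)(S+4)}{2}$. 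This is the paper's closed form at $\bar k=1$, and it lies below $\frac{(S+1)(S+4)\pi^2}{8}$.

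Your value $\frac{\pi(S+1)(S+2)}{4}$ is not an upper bound: for $S=0$ the ratio is $2\cos\frac{\pi}{N}>\frac{\pi}{2}$ once $N\ge5$. To justify that this endpoint configuration is the worst, use that $\frac{\sin(\theta+\delta)}{\sin\theta}=\cos\delta+\sin\delta\cot\theta$ decreases in $\theta$, as the paper does. With that in place, your comparison of at most $S+1$ full gaps against one adjacent full gap covers every deletion pattern. The paper only treats the consecutive one.
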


\begin{IEEEproof}
First, we show that the Chebyshev points of the first kind used for evaluation in RBACC, as defined in \eqref{eq:cheb_new_kind}, form a family of well-spaced points, i.e., they satisfy the conditions in \cite[Definition 2]{b18} for some constants $C,R \geq 1$. To this end, we explicitly determine the corresponding constants $C$ and $R$, which are subsequently used to derive a new upper bound on the Lebesgue constant for the proposed set of evaluation points in terms of the number of stragglers $S$. We refer the reader to Appendix~\ref{proofof lemma1} for the detailed derivations of the constants $C$ and $R$.
\end{IEEEproof}

\begin{remark}
Lemma \ref{lemma:lebesgue} shows that, similar to the straggler-resilient  BACC scheme in \cite{b7}, the Lebesgue $ \Lambda _M$ constant for the straggler-resilient RBACC scheme is also bounded by $c$ $ln$ $(N-S)$ for some constant $c>0$. This logarithmic growth rate implies that RBACC remains numerically stable even in the presence of stragglers.
\end{remark}

\begin{figure*}[htbp]
  \centering
  \begin{subfigure}[b]{0.24\textwidth}
    \includegraphics[width=\linewidth]{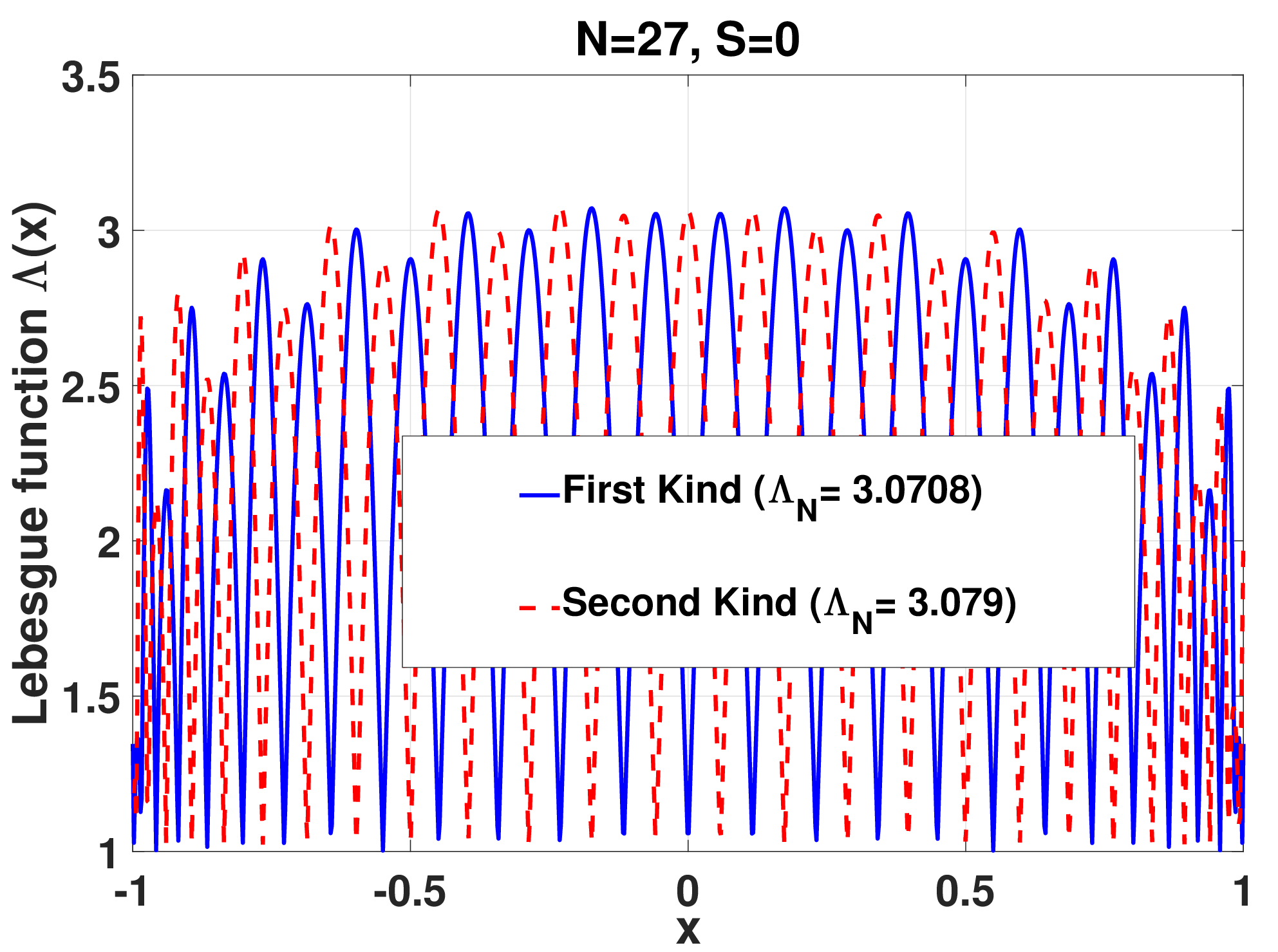}
  \end{subfigure}\hfill
  \begin{subfigure}[b]{0.24\textwidth}
    \includegraphics[width=\linewidth]{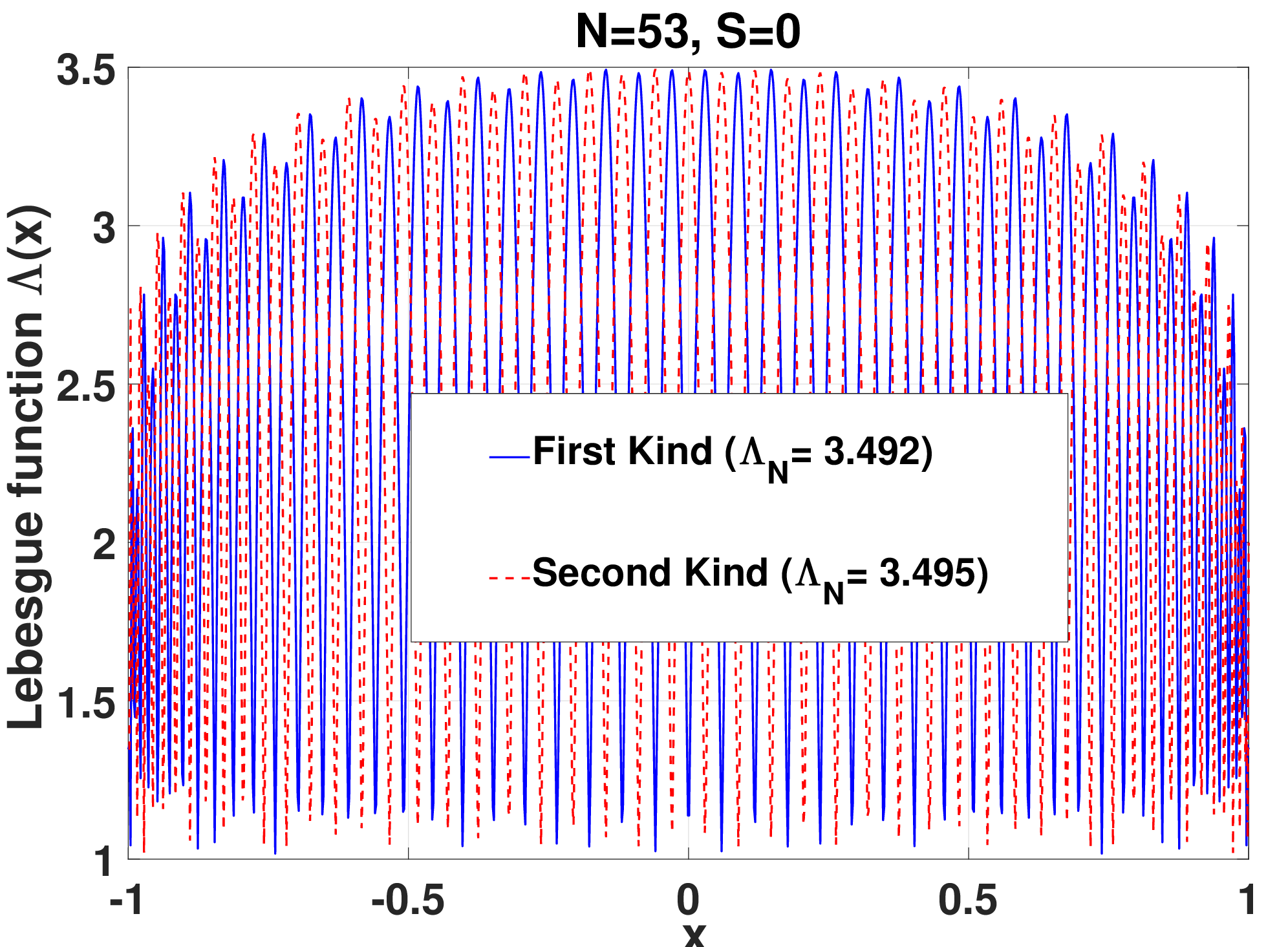}
  \end{subfigure}\hfill
  \begin{subfigure}[b]{0.24\textwidth}
    \includegraphics[width=\linewidth]{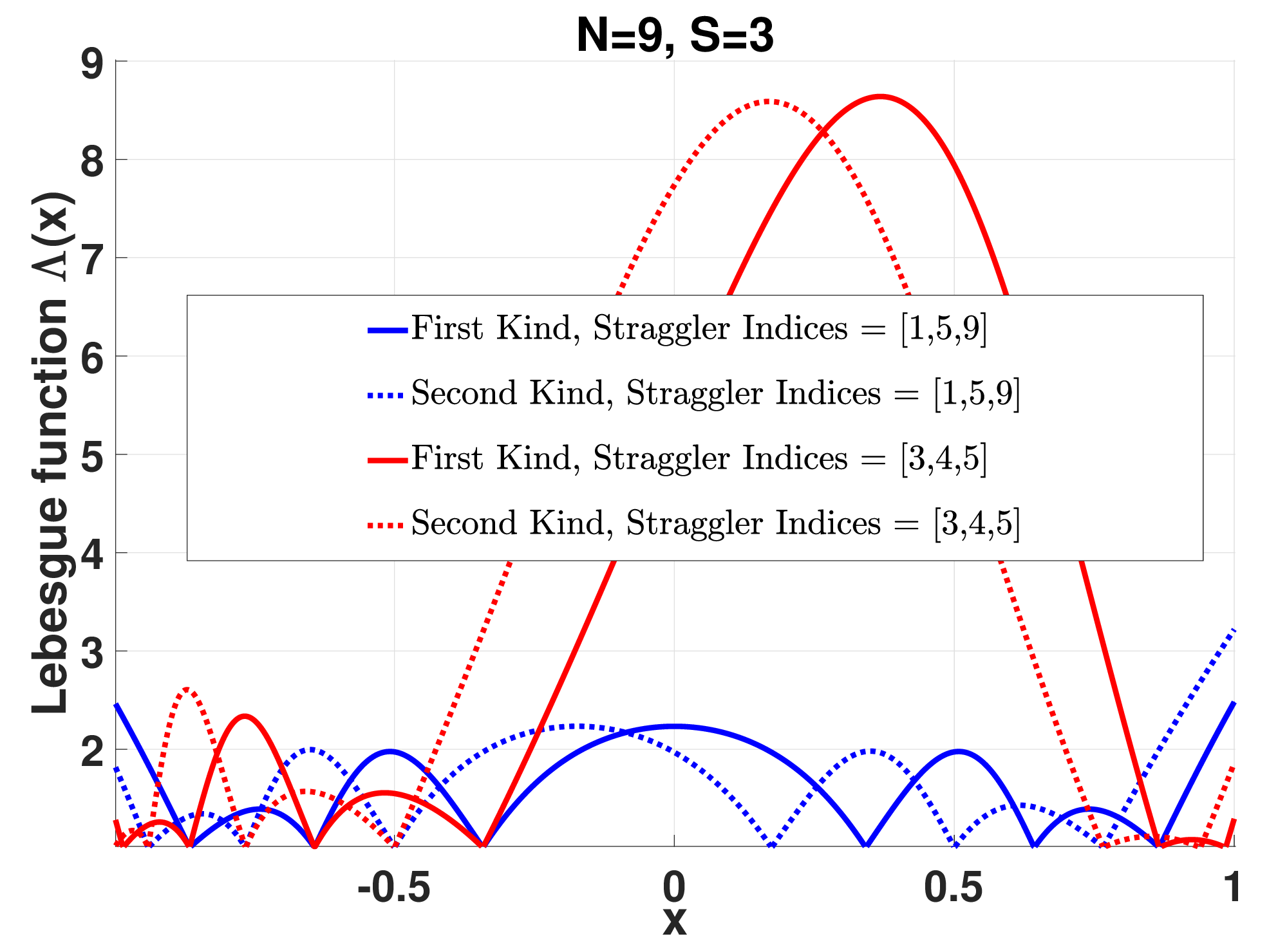}
  \end{subfigure}\hfill
  \begin{subfigure}[b]{0.24\textwidth}
    \includegraphics[width=\linewidth]{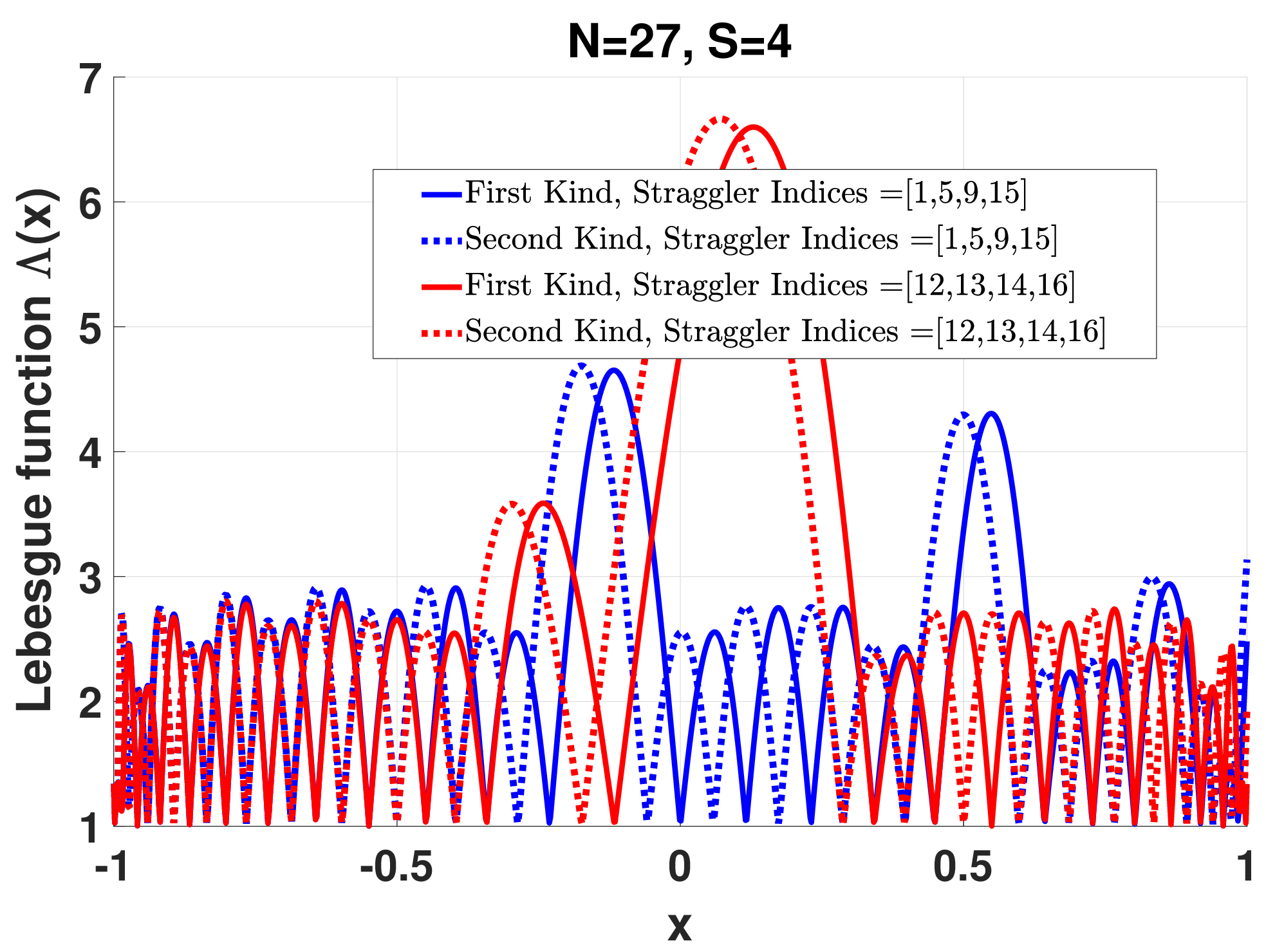}
  \end{subfigure}

\caption{Comparison of the Lebesgue function $\Lambda(x)$ corresponding to Chebyshev points of the first and second kinds for different numbers of worker nodes $N$ and different sets of straggling workers.}
  \label{fig:lebesgue_all}
\end{figure*}

To further validate our results in Lemma~\ref{lemma:lebesgue}, we experimentally compute the Lebesgue functions $\Lambda_{N}(x)$ defined in \eqref{eq:leb_fun} corresponding to the proposed set of evaluation points defined in \eqref{eq:cheb_new_kind} while varying both the number of workers and the straggler sets $\mathcal{F}$. In particular, we compare the behavior of the Lebesgue functions corresponding to Chebyshev points of the first kind used in RBACC and Chebyshev points of the second kind used in the BACC framework of \cite{b7}. The results for $N=9$ and $N=27$ are illustrated in Fig.~\ref{fig:lebesgue_all}. From the plots, it is evident that the Lebesgue constants associated with the proposed evaluation points are approximately the same as those corresponding to the nodes used in \cite{b7}. Further, this empirical observation is consistent with the theoretical bound established in Lemma~\ref{lemma:lebesgue}, confirming that the Lebesgue constant for RBACC exhibits the same logarithmic growth behavior as in \cite{b7}. 

The above results show that the proposed set of evaluation points used in the RBACC framework preserves numerical stability comparable to that of the existing BACC framework, while also satisfying the logarithmic growth behavior of the Lebesgue constant. Building on these results, we next derive an upper bound on the approximation error of the RBACC in terms of the number of stragglers $S$. In this context, the following theorem provides an upper bound on $\| r_{\text{Berrut}, f}(z) - f(u(z)) \|$ for RBACC framework when using the proposed set of evaluation points.

\begin{theorem}
\label{Th:theorem 1}
Let $r_{\text{Berrut},f}(z)$ be defined by \eqref{eq:berrut_approximation_error} and $g(z) = f(u(z))$ have a continuous second derivative on $[-1,1]$. For RBACC with $N$ workers and $S$ stragglers, where $S<N-2$, when $A=0$ and $\sigma_{P}^2 = 0$, the approximation error defined in \eqref{eq:berrut_approximation_error} is upper bounded as,
\begin{eqnarray*}
\left\|r_{\text {Berrut},f}(z)-g(z)\right\| \leq 2\Delta(1+R) \sin \left(\frac{(S+1) \pi}{2 N}\right),
\end{eqnarray*}
where, $\Delta=\left\|g^{\prime \prime}(z)\right\|$, if $N-S$ is odd and $\Delta=\left(\left\|g^{\prime \prime}(z)\right\|+\left\|g^{\prime}(z)\right\|\right)$, when $N-S$ is even and $R=\frac{(S+1)(S+4) \pi^2}{8}$.
\end{theorem}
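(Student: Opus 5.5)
The plan is to follow the template of \cite[Theorem 9]{b7}, which in turn rests on the classical error analysis of Berrut's interpolant on well-spaced nodes (Bos, De Marchi, Hormann and collaborators). The only point-set-specific input is the well-spacing constant $R$ and the maximal gap $h$ of the surviving nodes, both of which Lemma~\ref{lemma:lebesgue} and the definition \eqref{eq:cheb_new_kind} supply.

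\textbf{Step 1 (error representation).} Write the interpolation error of Berrut's rational interpolant at the $M=N-S$ surviving nodes $\bar z_0<\dots<\bar z_{M-1}$ in the standard form
\[
g(z)-r_{\text{Berrut},f}(z)=\frac{\sum_{r}(-1)^r\, g[\bar z_r,z]}{\sum_{r}\frac{(-1)^r}{z-\bar z_r}}.
\]
Pair consecutive terms of the numerator. Then $g[\bar z_r,z]-g[\bar z_{r+1},z]=(\bar z_r-\bar z_{r+1})\,g[\bar z_r,\bar z_{r+1},z]$, and the second divided difference is bounded by $\tfrac12\|g''\|$ by the mean value theorem. When $M$ is odd the terms pair off completely. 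When $M$ is even one unpaired term remains, and it is controlled by $\|g'\|$. This asymmetry is exactly the origin of the two cases for $\Delta$ in the statement.

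\textbf{Step 2 (numerator bound).} Summing the paired differences gives a numerator bounded by $\tfrac12\|g''\|\sum_r|\bar z_{r+1}-\bar z_r|\le \|g''\|$, since the nodes lie in $[-1,1]$. One may instead use $\max$-gap times a count, whichever fits the constants.

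\textbf{Step 3 (denominator bound).} Bound the denominator from below. As in Lemma~\ref{lemma:lebesgue}'s well-spacing argument, for $z$ lying between neighbouring nodes $\bar z_k\le z\le\bar z_{k+1}$, the alternating sum satisfies
\[
\Big|\sum_r \frac{(-1)^r}{z-\bar z_r}\Big|\ge \frac{1}{(1+R)\,h_k},
\]
where $h_k$ is the local gap and $R=\frac{(S+1)(S+4)\pi^2}{8}$ is the well-spacing constant. This is the Bos--De Marchi--Hormann lower bound, applied with the constant obtained in Appendix~\ref{proofof lemma1}.

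\textbf{Step 4 (combine).} Combining Steps 2 and 3 yields $|g-r|\le 2\Delta(1+R)\,h$, where $h$ is the largest gap among surviving nodes, including the endpoint gaps to $\pm1$. With $S$ stragglers removed, a gap can span at most $S+1$ consecutive original spacings of the first-kind nodes. Hence
\[
h\le\cos\!\Big(\frac{(2k+1)\pi}{2N}\Big)-\cos\!\Big(\frac{(2k+2S+3)\pi}{2N}\Big)=2\sin\!\Big(\frac{(S+1)\pi}{2N}\Big)\sin\!\Big(\frac{(2k+S+2)\pi}{2N}\Big)\le 2\sin\!\Big(\frac{(S+1)\pi}{2N}\Big).
\]
Absorbing constants then gives the stated bound. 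The endpoint intervals $[-1,\bar z_0]$ and $[\bar z_{M-1},1]$ need the same treatment, since first-kind nodes do not include $\pm1$; the half-spacing there is covered by the same sine estimate.

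\textbf{Main obstacle.} I expect the delicate step to be Step 3 together with the constant bookkeeping. The lower bound on the alternating denominator must hold uniformly, including near the endpoints where there is no interpolation node. It must also be reconciled with the well-spacing constants of Lemma~\ref{lemma:lebesgue}, so that the result is precisely the factor $2(1+R)\sin\!\big(\frac{(S+1)\pi}{2N}\big)$ rather than a looser multiple. For this I would mirror the proof of \cite[Theorem 9]{b7} line by line, substituting our $R$ and the first-kind gap estimate.
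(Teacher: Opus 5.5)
Your outline is the paper's proof with the cited result unpacked. The paper simply applies \cite[Theorem 5]{b7}, the Floater--Hormann estimate $\|r-g\|\le(1+\lambda)h\Delta$ in terms of the maximal gap $h$ and the local mesh ratio $\lambda$, and plugs in $h\le 2\sin\bigl(\frac{(S+1)\pi}{2N}\bigr)$ and $\lambda\le R$. Your Steps 1--3 rebuild exactly that estimate, and the error representation and the interior denominator bound via $h_k/h_{k-1}\le R$ are fine.

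The genuine flaw is the parity step. With the $M$ summands $(-1)^r g[\bar z_r,z]$, consecutive pairing is complete precisely when $M$ is \emph{even}. For odd $M$ a term $\pm g[\bar z_{M-1},z]$ is left over, and this is not an artefact of the method. For affine $g$ the numerator equals $g'\sum_{r=0}^{M-1}(-1)^r=g'\neq 0$ when $M$ is odd, so the error is nonzero at every non-node although $\|g''\|=0$. Such a $g$ occurs within RBACC itself: take $K=2$, where Berrut's interpolant through two points is linear interpolation so $u$ is affine, and let $f$ be the identity. Hence your claim ``when $M$ is odd the terms pair off completely'' is false. Moreover, the case ``$N-S$ odd, $\Delta=\|g''\|$'' of the statement is itself false. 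What your argument actually proves is $\Delta=\|g''\|$ for $N-S$ even and $\Delta=\|g''\|+\|g'\|$ for $N-S$ odd; Floater--Hormann says the same, since their ``$n$ odd'' refers to $n+1$ nodes. The paper's proof does not notice this because it imports the parity convention from \cite{b7} rather than deriving it. You should state the corrected parity instead of reversing the count to fit the statement.

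Two smaller points. First, your Steps 2 and 3 combine to $|g-r|\le\Delta(1+R)h_k\le\Delta(1+R)h$, not $2\Delta(1+R)h$. With $h\le 2\sin\bigl(\frac{(S+1)\pi}{2N}\bigr)$ this gives exactly the stated constant, whereas ``absorbing constants'' is not allowed in an explicit bound.

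Second, you are right that $[-1,\bar z_0)$ and $(\bar z_{M-1},1]$ need care; the paper's citation only covers $[\bar z_0,\bar z_{M-1}]$. But the sine estimate alone does not settle it. For $z=\bar z_{M-1}+t$, the denominator $D(z)=\sum_r\frac{(-1)^r}{z-\bar z_r}$ is an alternating sum dominated by its last term. So $|D(z)|\ge\frac1t-\frac1{t+h_{M-2}}\ge\bigl(T(1+T/h_{M-2})\bigr)^{-1}$, with $T=1-\bar z_{M-1}$ and $h_{M-2}=\bar z_{M-1}-\bar z_{M-2}$. Besides $T\le 2\sin\bigl(\frac{(S+1)\pi}{2N}\bigr)$, you therefore need the endpoint-to-interior ratio $T/h_{M-2}\le R$, which condition (iii) of Lemma~\ref{lemma:lebesgue} does not supply. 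It does hold: one gets $T/h_{M-2}\le\tan\bigl(\frac{(S+1)\pi}{2N}\bigr)/\bigl(2\sin\frac{\pi}{2N}\bigr)\le R$, and symmetrically at $-1$. It is, however, a separate verification.
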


\begin{IEEEproof}
    We refer the reader to Appendix \ref{proof:th1} for the proof.  
\end{IEEEproof}

\begin{remark}
Theorem~\ref{Th:theorem 1} shows that, similar to the BACC scheme in \cite{b9}, the approximation error of the RBACC decreases as the number of stragglers decreases for a fixed total number of workers $N$. Therefore, the RBACC scheme does not admit a strict recovery threshold. Instead, the approximation error decreases monotonically as the number of available worker responses increases.
\end{remark}


\begin{remark}
Similar to \cite[Lemma~6]{b7}, the worst-case interpolation scenario in RBACC occurs when the $S$ straggling workers correspond to consecutive evaluation points in  $\mathcal{X}_N\!=\!\left\{x_{i}\right\}_{i=0}^{N-1}$. The proof follows the same arguments as in \cite[Lemma~6]{b7} and is therefore omitted.
\end{remark}

\subsection{Experimental Results on the Approximation Accuracy of RBACC in the Presence of Stragglers}
\label{subsec:exp results with stragglers}
To validate the theoretical results established in Lemma~\ref{lemma:lebesgue} and Theorem \ref{Th:theorem 1}, we experimentally compare the approximation errors of RBACC and BACC in the absence of Byzantine workers and precision noise, i.e., when $A=0$ and $\sigma_{P}^{2}=0$. We implement both RBACC and BACC for computing several non-polynomial functions. Here, $\mathbf{X}_{i} \in \mathbb{R}^{5\times 5}$ for $i \in [K]$, where the entries of $\mathbf{X}_{i}$ are drawn independently from a uniform distribution. Let $\mathbf{Y}_{j}' \approx f(\mathbf{X}_{j})$ denote the approximate computation obtained using RBACC or BACC in the presence of stragglers for $j \in [K]$, and let $f(\mathbf{X}_{j})$ denote the corresponding centralized computation performed at the master without using RBACC or BACC. We compute the relative error of both frameworks using \eqref{eq:rel_error}, and subsequently evaluate the average relative error over $1000$ iterations. The results are presented in Fig.~\ref{fig: straggler vs accuracy} when varying the number of stragglers $S$. In each iteration, the straggler set $\mathcal{F}$ is selected uniformly at random from the set of $N$ workers. The parameters used in the experiments are $N=53$, $K=4$, and $A=0$. The plots confirm that RBACC and BACC exhibit comparable resilience against stragglers.

\begin{figure}[t]
\centering
\includegraphics[scale = 0.3]{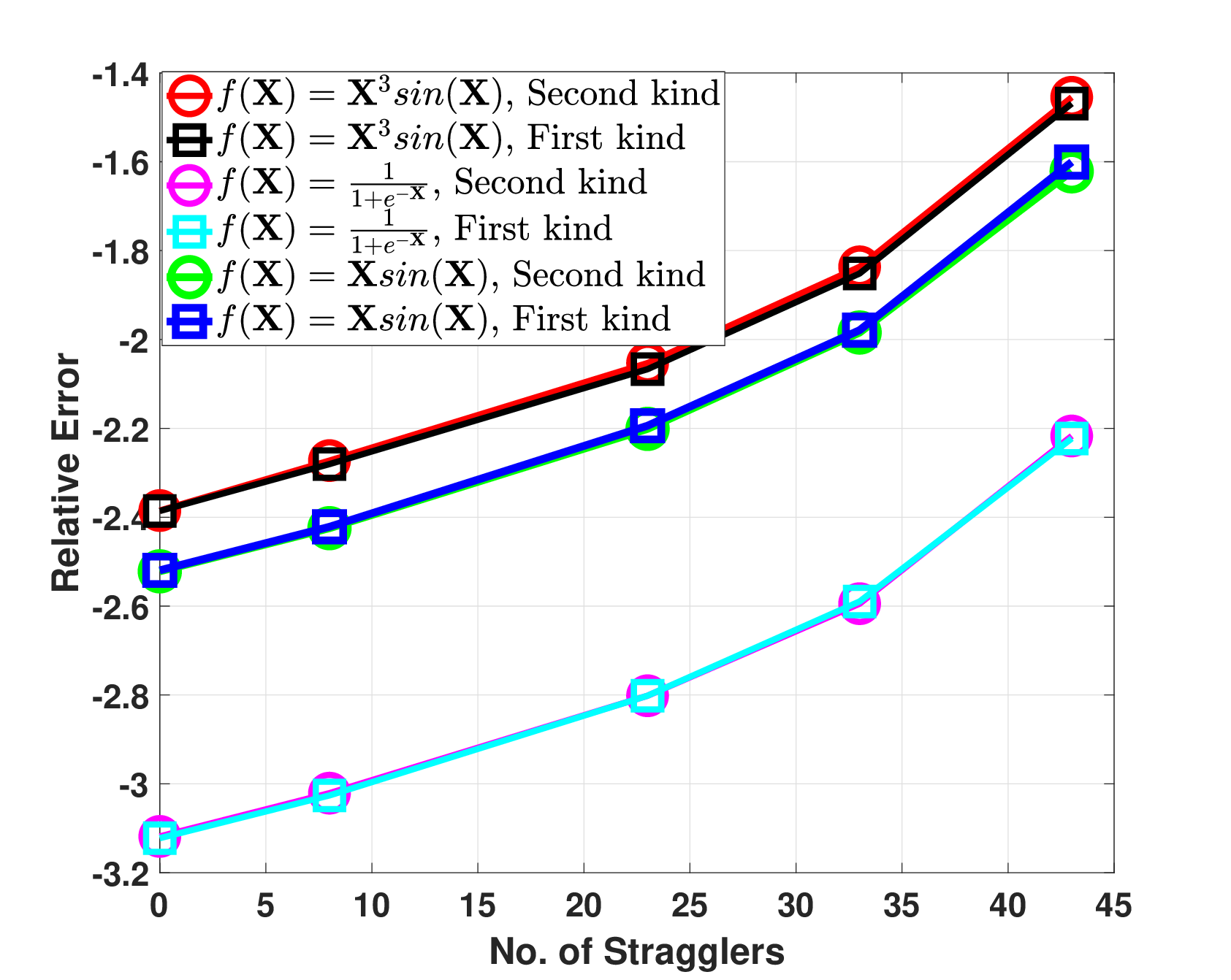}
\vspace{-0.1cm}
\caption{Average relative approximation error (in $\log_{10}$ scale) of BACC and RBACC for various target functions with parameters $N=53$, $K=4$, and $A=0$ under 64-bit floating-point precision.}
\label{fig: straggler vs accuracy}
\end{figure}

In the next section, we study the robustness of the RBACC scheme in the presence of Byzantine workers, and in the absence of stragglers. Subsequently, we derive bounds on the approximation error in the presence of Byzantine workers and precision noise, i.e., when $A>0$ and $\sigma_{P}^{2}>0$.


\section{Error bounds on RBACC with Byzantine Workers}
\label{sec:secure BACC s=0}
In this section, we analyze the average approximation error of the RBACC scheme, defined in \eqref{eq:berrut_approximation_error}, in the presence of Byzantine workers and precision noise, i.e., for the case when $A>0$ and $\sigma_P^2>0$. To focus specifically on the effect of Byzantine workers, we assume the absence of stragglers, i.e., $S=0$, although the presented results can be extended to the case when $S>0$. Recall the reconstruction expression in \eqref{eq:berrut_reconstruction}. Since $S=0$, all $N$ worker responses are available at the master, i.e., $M=N$. Therefore, substituting the recovered computations from \eqref{eq:rec_code} into \eqref{eq:berrut_reconstruction}, the reconstruction expression can be written as
\begin{equation}
\label{eq:all_comp_full}
r_{\mathrm{Berrut},f}(z)
=
\sum_{r=0}^{N-1}
w_r(z)
\left(
\mathbf{V}_{r+1}
+
\mathbf{P}_{r+1}
+
\mathbf{E}_{r+1}
-
\hat{\mathbf{E}}_{r+1}
\right),
\end{equation}
where the Berrut rational weights $w_r(z)$ are given by
\begin{equation}
\label{eq:berrut_weight_full}
w_r(z)=
\frac{\displaystyle \frac{(-1)^r}{z-z_r}}
{\displaystyle \sum_{t=0}^{N-1}\frac{(-1)^t}{z-z_t}}.
\end{equation}

\noindent Here, $\mathbf{V}_{r+1}=f(u(z_r))$ denotes the true function evaluation at the evaluation point $z_r$, while $\mathbf{P}_{r+1}$ denotes the precision noise component associated with the evaluation point $z_r$, for $r\in[N]$. Further, the residual adversarial error captured by the term $\mathbf{E}_{r+1}-\hat{\mathbf{E}}_{r+1}$ arises due to imperfect cancellation of adversarial errors after DCT-based decoding under finite-precision arithmetic. In particular, when the precision noise variance is non-zero, i.e., $\sigma_P^2>0$, imperfect error localization may occur in the DCT decoder, which influences imperfect error cancellation. Moreover, even when there is perfect error localization, finite-precision arithmetic may prevent exact cancellation of the adversarial error magnitudes. In this context, let $\mathrm{Prob}(E_{\mathrm{Loc}})$ denote the probability of imperfect error localization, and let $1-\mathrm{Prob}(E_{\mathrm{Loc}})$ denote the probability of perfect localization in the DCT decoder. Consequently, under perfect localization, the residual adversarial error term $\mathbf{E}_{r+1}-\hat{\mathbf{E}}_{r+1}$ contains at most $A$ non-zero components due to imperfect cancellation of errors, whereas under imperfect localization, the residual adversarial error may affect up to $2A$ interpolation nodes due to both imperfect detection and imperfect cancellation of errors.

\begin{figure*}[t]
\begin{scriptsize}
\begin{IEEEeqnarray}{l}
\label{eq:adv_bound}
\mathbb{E}_{\{\mathbf e_{g,h}\},\{\mathbf p_{g,h}\}}
\left|r_{{Berrut},f,g,h}(z)-g_{g,h}(z)\right|^2
\leq \underbrace{\left[2\Delta'(1+R)\sin\!\left(\frac{\pi}{2N}\right)\right]^2}_{T_1}
+ \underbrace{C_1 N \sigma_P^2}_{T_2} \nonumber \\+\underbrace{C_2(1-\mathrm{Prob}(E_{\mathrm{Loc}}))\sigma_P^2 (A+A^2)
+ C_w \mathrm{Prob}(E_{\mathrm{Loc}})\Big[\sigma_A^2 \|S_{\mathcal{A}}-S_{\hat{\mathcal{A}}^*}\mathbf{P}_{\hat{\mathcal{A}}^*,\mathcal{A}}\|_2^2
 +\sigma_P^2 \|S_{\hat{\mathcal{A}}^*}(\mathbf H_{\hat{\mathcal{A}}^*}^T)^\dagger\|_2^2
\Big] (2A + 4A^2)}_{T_{3}}+ \underbrace{2\sigma_P \sqrt{C_1 N\,\Psi}}_{T_4},
\end{IEEEeqnarray}
\end{scriptsize}

\vspace{-0.2cm}

\begin{scriptsize}
\begin{align*}
\text{where} \quad \Psi =&\leq C_2(1-\mathrm{Prob}(E_{\mathrm{Loc}}))
\sigma_P^2 (A+A^2)+ C_w \mathrm{Prob}(E_{\mathrm{Loc}})\Big[
\sigma_A^2 \|S_{\mathcal{A}}-S_{\hat{\mathcal{A}}^*}\mathbf{P}_{\hat{\mathcal{A}}^*,\mathcal{A}}\|_2^2
 +\sigma_P^2 \|S_{\hat{\mathcal{A}}^*}(\mathbf H_{\hat{\mathcal{A}}^*}^T)^\dagger\|_2^2
\Big] (2A + 4A^2). 
\nonumber \\& \quad \text{Here}\quad \mathbf P_{\hat{\mathcal{A}},\mathcal{A}}
=(\mathbf H_{\hat{\mathcal{A}}}^T)^\dagger \mathbf H_{\mathcal{A}}^T, \quad \text{where} \quad \mathbf S_{\mathcal{A}}=\mathbf I_{\mathcal I}(:,\mathcal{A}), \quad \text{and} \quad \mathbf S_{\hat{\mathcal{A}}}=\mathbf I_{\mathcal I}(:,\hat{\mathcal{A}}).
\end{align*}
\end{scriptsize}

\vspace{-0.2cm}
\hrule
\end{figure*}
For a given $N$ and a fixed Byzantine worker set $\mathcal{A}$, the Berrut weights $w_r(z)$ defined in \eqref{eq:berrut_weight_full} are deterministic functions of the evaluation point $z$. Consequently, the randomness in the reconstruction arises only from the precision noise matrices $\{\mathbf P_i\}$ and adversarial perturbations $\{\mathbf E_i\}$ for $i\in\mathcal [N]$. Therefore, the approximation error in \eqref{eq:berrut_approximation_error} reduces to
\begin{equation}
\label{eq:err_Adv}
\mathbb{E}_{\{\mathbf P_i\},\{\mathbf E_i\}}
\Bigl\|
r_{\text{Berrut},v}(z)
+
r_{\text{Berrut},p}(z)
+
r_{\text{Berrut},e}(z)
-
f(u(z))
\Bigr\|,
\end{equation}
where $r_{\text{Berrut},v}(z)$, $r_{\text{Berrut},p}(z)$, and $r_{\text{Berrut},e}(z)$ denote the contributions corresponding to the true computation vectors $\{\mathbf V_i\}$, precision noise $\{\mathbf P_i\}$, and residual adversarial components $\{\mathbf E_i-\hat{\mathbf E}_i\}$, respectively. Further, note that the overall reconstruction error is measured using the Frobenius norm. However, for analytical convenience, we analyze the reconstruction error corresponding to an arbitrary $(g,h)$-th entry of the reconstructed matrix, where $g\in[m]$ and $h\in[n]$. Under this formulation, the following theorem provides an upper bound on
$\mathbb{E}_{\{\mathbf e_{g,h}\},\{\mathbf p_{g,h}\}}
\left|r_{{Berrut},f,g,h}(z)-g_{g,h}(z)\right|^2$
for a fixed set of Byzantine workers $\mathcal A$.
\begin{theorem}
\label{th:err bound adv}
Consider the RBACC scheme with $N$ workers and no stragglers, i.e., $S=0$, such that $M = N$. Let $K_{1} > 0$ and $A > 0$ satisfy $A \leq \left\lfloor \frac{N - K_{1}}{2} \right\rfloor$, and assume that $\sigma_P^2 > 0$ and $\sigma_A^2 > 0$. For fixed $N$ and Byzantine worker set $\mathcal{A}$, the Berrut weights are deterministic functions of $z$. When $\sigma_P^2 > 0$, imperfect localization may occur i.e., $\hat{\mathcal{A}} \neq \mathcal{A}$ with probability $\mathrm{Prob}(E_{\mathrm{Loc}})>0$. Then, the approximation error bound in \eqref{eq:err_Adv} can be expressed entry-wise as in \eqref{eq:adv_bound}. Here $\Delta'$ is
\[\Delta' =\begin{cases}\|g_{g,h}''(z)\|, & \text{if } N \text{ is odd},\\\|g_{g,h}''(z)\| + \|g_{g,h}'(z)\|, & \text{if } N \text{ is even},
\end{cases}\] where $g(z)=f(u(z))$, and $g_{g,h}(z)$ denotes the $(g,h)$-th entry of $f(u(z))$.
\noindent Here, $\hat{\mathcal{A}}^* = \arg\max_{\hat{\mathcal{A}} \neq \mathcal{A}} E(\hat{\mathcal{A}})$, where
$E(\hat{\mathcal A})$ denotes the error due to incorrect localization, and the constants $C_1, C_2, C_w$ depend only on $z$, $\mathcal{A}$. Further, $\mathcal I \subseteq [N]$ denote the set of evaluation points that contribute to the reconstruction error due to imperfect localization, where  $A+1 \leq |\mathcal I| \leq 2A.$
\end{theorem}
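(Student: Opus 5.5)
We will prove the bound entry-wise by fixing $(g,h)$ and splitting the reconstruction error into three parts. Using \eqref{eq:all_comp_full}, write
\[
r_{\mathrm{Berrut},f,g,h}(z)-g_{g,h}(z)=a(z)+b(z)+c(z),
\]
where the three terms are
\[
a(z)=\sum_r w_r(z)v_r-g_{g,h}(z),\qquad
b(z)=\sum_r w_r(z)p_r,\qquad
c(z)=\sum_r w_r(z)(e_r-\hat e_r).
\]
Here $a$ is the deterministic Berrut interpolation error, $b$ the interpolated precision noise, and $c$ the interpolated residual adversarial error.

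\emph{Deterministic term.} Since $S=0$, $a(z)$ is exactly the straggler-free situation of Theorem~\ref{Th:theorem 1}, applied to the scalar $g_{g,h}$. Invoking it directly gives $|a(z)|\le 2\Delta'(1+R)\sin(\pi/(2N))$, and squaring yields $T_1$. The parity case split in $\Delta'$ is inherited from Theorem~\ref{Th:theorem 1} with $N-S=N$.

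\emph{Expanding the square.} We expand $|a+b+c|^2$ and take expectations. We assume the precision noise is zero-mean, i.i.d. with variance $\sigma_P^2$, and independent of the Byzantine errors.
\begin{itemize}
\item The cross terms $2a\,\mathbb{E}b$ vanish because $a$ is deterministic and $\mathbb{E}b=0$.
\item $\mathbb{E}|b|^2=\sigma_P^2\sum_r w_r(z)^2\le C_1N\sigma_P^2$, with $C_1=\max_r w_r(z)^2$. This gives $T_2$.
\item The cross term $2a\,\mathbb{E}c$ cannot be dropped in general. Our plan is to absorb it by a crude bound, or to assume the decoder is unbiased so that $\mathbb{E}c=0$.
\item The term $2\mathbb{E}[bc]$ does not vanish, because $\hat e$ depends on the precision noise. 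We bound it by Cauchy--Schwarz, $2\sqrt{\mathbb{E}b^2\,\mathbb{E}c^2}\le 2\sigma_P\sqrt{C_1N\Psi}$, where $\Psi$ denotes a bound on $\mathbb{E}c^2$. This gives $T_4$.
\end{itemize}

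\emph{Residual adversarial term.} We bound $\mathbb{E}c^2$ by conditioning on the localization event, using the law of total expectation.

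Under \emph{correct localization} ($\hat{\mathcal A}=\mathcal A$), the decoder estimates the error values by least squares on the syndrome. The syndrome is $\mathbf H_{\mathcal A}^T e_{\mathcal A}+\mathbf H p$, so the residual on $\mathcal A$ is $-(\mathbf H_{\mathcal A}^T)^\dagger \mathbf H p$, and its entries have variance $O(\sigma_P^2)$. Since $c$ is a sum of $A$ such terms weighted by $w_r$, we have $\mathbb{E}c^2\le(\sum_{r\in\mathcal A}|w_r|)^2\max\mathbb{E}\epsilon_r^2$. Applying $(\sum_{i=1}^{A}x_i)^2\le A\sum x_i^2$ and keeping both the diagonal and off-diagonal correlated contributions gives a factor of order $A+A^2$, hence the $C_2(1-\mathrm{Prob}(E_{\mathrm{Loc}}))\sigma_P^2(A+A^2)$ term.

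Under \emph{incorrect localization} ($\hat{\mathcal A}\neq\mathcal A$), the decoder fits $\mathbf H_{\hat{\mathcal A}}^T\hat e=\mathbf H_{\mathcal A}^Te_{\mathcal A}+\mathbf Hp$. The residual vector supported on $\mathcal I=\mathcal A\cup\hat{\mathcal A}$ is therefore
\[
(S_{\mathcal A}-S_{\hat{\mathcal A}}\mathbf P_{\hat{\mathcal A},\mathcal A})e_{\mathcal A}-S_{\hat{\mathcal A}}(\mathbf H_{\hat{\mathcal A}}^T)^\dagger \mathbf H p .
\]
Its second moment is at most $\sigma_A^2\|S_{\mathcal A}-S_{\hat{\mathcal A}}\mathbf P\|_2^2+\sigma_P^2\|S_{\hat{\mathcal A}}(\mathbf H^T_{\hat{\mathcal A}})^\dagger\|_2^2$ per entry; here $\mathbf H$ has orthonormal rows, which removes its contribution to the norm. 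We then take the worst case over $\hat{\mathcal A}$, which is attained at $\hat{\mathcal A}^*$. Finally we sum over $|\mathcal I|\le 2A$ nodes using the same $(\sum x)^2$ inequality, producing $2A+4A^2$ and $C_w=\max_{r\in\mathcal I}w_r^2$. Together the two cases give $T_3$, and their sum is $\Psi$.

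The main obstacle is this last case. The event $\{\hat{\mathcal A}=\cdot\}$ depends on both $p$ and $e$, so conditioning distorts their distributions. We would handle this by replacing the conditional second moments with the unconditional worst-case operator-norm bounds, accepting the looseness, and by treating the $2a\,\mathbb{E}c$ cross term via the unbiasedness assumption noted above.
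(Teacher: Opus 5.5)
Your proposal is essentially the paper's own proof. It uses the same split into interpolation, precision-noise and residual-adversarial parts, Theorem~\ref{Th:theorem 1} at $S=0$ for $T_1$, and $\sigma_P^2\sum_r w_r(z)^2\le C_1N\sigma_P^2$ for $T_2$. It also uses the law of total expectation over the localization event with the least-squares residual covariances (via the orthonormal rows of $\mathbf H$), the worst case $\hat{\mathcal A}^*$ with $|\mathcal I|\le 2A$ producing $2A+4A^2$, and Cauchy--Schwarz for $T_4$. Take $C_w=\max_{r\in[N]}|w_r(z)|^2$, as the paper does, so that $C_w$ does not depend on $\hat{\mathcal A}$.

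The paper handles the two points you flag exactly as you propose. It simply asserts $\mathbb{E}[r_{\mathrm{Berrut},e}(z)]=\mathbf 0$, and it inserts unconditional covariances into each conditional term. Be aware that the latter is an implicit independence assumption between the localization outcome and the noise, not mere looseness: moments conditioned on a localization outcome need not be bounded by unconditional ones, since failures are driven by large $\mathbf p$.

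The former can actually be proved. With syndrome $(\mathbf e+\mathbf p)\mathbf H^T$, as in the paper's model, the least-squares locator and hence $\hat{\mathcal A}$ are invariant under $(\mathbf e,\mathbf p)\mapsto(-\mathbf e,-\mathbf p)$, while the residual $\mathbf e-\hat{\mathbf e}$ is odd. It therefore has zero mean under the symmetric Gaussian model, so your crude-bound fallback is unnecessary; it would also add a term absent from \eqref{eq:adv_bound}.
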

\begin{IEEEproof}
       We refer the reader to Appendix \ref{proof:approx_err_bound_adv} for the proof.  
\end{IEEEproof}
Note that the above bound is derived for an arbitrary $(g,h)$-th entry of the components in \eqref{eq:err_Adv}, however, the bound on the expected squared Frobenius norm follows by summing the corresponding entry-wise bounds over all $(g,h)$-th entries, for $g \in [m]$ and $h \in [n]$.
\begin{remark}
    Note that in the absence of adversaries and precision noise i.e., when $A=0$, and $\sigma_{P}^2=0$, the bound in \eqref{eq:adv_bound} collapses to the bound stated in Theorem \ref{Th:theorem 1}.
\end{remark}

\begin{remark}
    Note that in the absence of adversaries, i.e., when $A=0$, and $\sigma_{P}^2>0$ the bound in \eqref{eq:adv_bound} reduces to 
    \begin{align}
    \mathbb{E}_{\{\mathbf e_{g,h}\},\{\mathbf p_{g,h}\}}
\left|r_{{Berrut},f,g,h}(z)-g_{g,h}(z)\right|^2
\leq T_{1}+T_{2},
    \end{align}   
where $T_{1}$ and $T_{2}$ denote the first and second terms, respectively, on the right-hand side of \eqref{eq:adv_bound}.
\end{remark}


\begin{corollary}
\label{cor:sigmaP}
For fixed $N$, $A$, the error bound in \eqref{eq:adv_bound} increases with $\sigma_P^2$. In particular, the second term $T_2$, and the third term $T_3$ grows linearly with $\sigma_P^2$, and the cross term $T_4$ grows as $\sqrt{\sigma_P^2}$.
\end{corollary}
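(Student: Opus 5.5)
The corollary follows by inspecting the four terms of \eqref{eq:adv_bound} in Theorem~\ref{th:err bound adv} one at a time, with $N$, $A$, $\mathcal{A}$ and $z$ held fixed. Then $R$, $\Delta'$, the constants $C_1,C_2,C_w$, and the matrices $S_{\mathcal{A}}$, $S_{\hat{\mathcal{A}}^*}$, $\mathbf{H}_{\hat{\mathcal{A}}^*}$, $\mathbf{P}_{\hat{\mathcal{A}}^*,\mathcal{A}}$ do not depend on $\sigma_P^2$. Only $\sigma_P^2$ itself and possibly $\mathrm{Prob}(E_{\mathrm{Loc}})$ vary.

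\emph{Term $T_1$.} It contains no $\sigma_P^2$, so it is constant.

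\emph{Term $T_2$.} It equals $C_1N\sigma_P^2$, which is linear in $\sigma_P^2$ with a nonnegative slope.

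\emph{Term $T_3$.} Write $p=\mathrm{Prob}(E_{\mathrm{Loc}})$, $a=C_2(A+A^2)$, $b=C_w(2A+4A^2)\|S_{\hat{\mathcal{A}}^*}(\mathbf H_{\hat{\mathcal{A}}^*}^T)^\dagger\|_2^2$ and $c=C_w(2A+4A^2)\sigma_A^2\|S_{\mathcal{A}}-S_{\hat{\mathcal{A}}^*}\mathbf{P}_{\hat{\mathcal{A}}^*,\mathcal{A}}\|_2^2$. Then
\[
T_3=(1-p)\,a\,\sigma_P^2+p\,(c+b\,\sigma_P^2).
\]
If $p$ is treated as fixed, $T_3$ is affine in $\sigma_P^2$ with slope $(1-p)a+pb\ge 0$; this is the sense in which $T_3$ grows linearly. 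To also allow $p$ to vary, I would use the fact that the localization error probability is nondecreasing in $\sigma_P^2$: more precision noise can only make the syndrome-based support estimate less reliable. Since $c\ge 0$, the $pc$ part is then nondecreasing in $\sigma_P^2$. The coefficient of $\sigma_P^2$ is a convex combination of $a$ and $b$, so it stays bounded between $\min(a,b)$ and $\max(a,b)$ and $T_3$ remains at most linear in $\sigma_P^2$.

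\emph{Term $T_4$.} We have $\Psi=T_3$ (the quantity under the square root is exactly $T_3$), so
\[
T_4=2\sigma_P\sqrt{C_1N\,\Psi}=2\sqrt{C_1N}\,\sqrt{\sigma_P^2}\,\sqrt{\Psi}.
\]
For fixed $p$, $\Psi=pc+\kappa\sigma_P^2$ with $\kappa=(1-p)a+pb$. When $pc>0$ and $\sigma_P^2$ is small, $T_4\approx 2\sqrt{C_1N\,pc}\,\sqrt{\sigma_P^2}$, which is the claimed $\sqrt{\sigma_P^2}$ growth. For large $\sigma_P^2$, the $\sigma_P^2$ part of $\Psi$ dominates and $T_4$ becomes linear in $\sigma_P^2$. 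In either regime $T_4$ is a product of nondecreasing nonnegative functions, so it is nondecreasing.

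\emph{Conclusion.} Summing the four terms gives a bound that is nondecreasing in $\sigma_P^2$, with the stated growth orders.

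\emph{Main obstacle.} The delicate step is the dependence of $\mathrm{Prob}(E_{\mathrm{Loc}})$ on $\sigma_P^2$. The corollary is cleanest if $p$ is treated as a fixed parameter of the bound, and I would state that reading explicitly. If $p$ must be allowed to vary, the monotonicity of $p$ in $\sigma_P^2$ has to be justified. I would argue it by a stochastic-ordering argument: Gaussian precision noise with a larger variance dominates in distribution the projection of the noise onto the parity-check subspace, so any fixed decision region for correct support recovery has smaller probability under the larger variance.
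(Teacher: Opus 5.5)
Your argument is correct and is essentially the paper's. The corollary comes with no separate proof; it is read off term by term from \eqref{eq:adv_bound}, with $\mathrm{Prob}(E_{\mathrm{Loc}})$, $\hat{\mathcal A}^*$ and the constants held fixed. Your remark that $\Psi=T_3$ itself contains $\sigma_P^2$ is a sharper reading than the paper gives: it means the $\sqrt{\sigma_P^2}$ growth of $T_4$ describes only the small-noise regime, whereas the paper tacitly treats $\Psi$ as a constant.

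Commit to the fixed-$p$ reading, however, because the varying-$p$ extension would not go through as sketched. Writing $T_3=a\sigma_P^2+p\,\bigl(c+(b-a)\sigma_P^2\bigr)$ shows that a nondecreasing $p$ can still make $T_3$ decrease when $b<a$ and $(a-b)\sigma_P^2>c$. Separately, your stochastic-ordering step does not hold in general: inflating the variance of a Gaussian is only guaranteed to lower the probability of regions that are star-shaped about its mean, and the correct-localization region need not be.
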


\begin{corollary}
\label{cor:A}
For fixed $N$, $\sigma_P^2$, and $\mathrm{Prob}(E_{\mathrm{Loc}})$, the error bound in \eqref{eq:adv_bound} increases linearly with the number of adversaries $A$. Specifically, the residual term $T_3$ grows as $\mathcal{O}(A^2)$ due to the factors $(A+A^2)$ and $(2A+4A^2)$, and the term $T_4$ increases with $\sqrt{T_3}$.
\end{corollary}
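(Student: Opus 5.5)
The plan is to read the corollary directly off the right-hand side of \eqref{eq:adv_bound} in Theorem~\ref{th:err bound adv}, treating it as a function of $A$ with $N$, $\sigma_P^2$ and $\mathrm{Prob}(E_{\mathrm{Loc}})$ held fixed. I would go through the four terms $T_1,\ldots,T_4$ one at a time.

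\emph{Terms $T_1$ and $T_2$.} $T_1=\left[2\Delta'(1+R)\sin\!\left(\frac{\pi}{2N}\right)\right]^2$ involves only $g$, $N$ and $R$, which for $S=0$ equals $\frac{\pi^2}{2}$. $T_2=C_1N\sigma_P^2$ involves only $C_1$, $N$ and $\sigma_P^2$. If $C_1$ is taken to depend only on $z$ and the Berrut weights (which is how it arises in the precision-noise part of the proof of Theorem~\ref{th:err bound adv}), then both terms are constants in $A$. All of the $A$-dependence therefore sits in $T_3$ and $T_4$.

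\emph{Term $T_3$.} It has the form
\[
T_3=a(A)\,(A+A^2)+b(A)\,(2A+4A^2),
\]
where $a(A)=C_2\,(1-\mathrm{Prob}(E_{\mathrm{Loc}}))\,\sigma_P^2$ and
\[
b(A)=C_w\,\mathrm{Prob}(E_{\mathrm{Loc}})\Big[\sigma_A^2\|S_{\mathcal{A}}-S_{\hat{\mathcal{A}}^*}\mathbf{P}_{\hat{\mathcal{A}}^*,\mathcal{A}}\|_2^2+\sigma_P^2\|S_{\hat{\mathcal{A}}^*}(\mathbf H_{\hat{\mathcal{A}}^*}^T)^\dagger\|_2^2\Big].
\]
Both $a$ and $b$ are nonnegative. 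The key step, and the main obstacle, is that $C_2$, $C_w$ and the matrix norms formally depend on the set $\mathcal{A}$, and hence on $A$. I would handle this with uniform bounds.
\begin{itemize}
\item Since $A\le\lfloor (N-K_1)/2\rfloor$ and $|\hat{\mathcal{A}}^*|\le 2A\le N-K_1$, the MDS property of the DCT code gives that every column submatrix $\mathbf H^T_{\hat{\mathcal A}}$ has full column rank. Hence $(\mathbf H^T_{\hat{\mathcal A}})^\dagger$ is well defined with finite norm.
\item Each $S_{\cdot}$ is a column selection of an identity matrix, so $\|S\|_2\le 1$ and $\|S_{\mathcal{A}}-S_{\hat{\mathcal{A}}^*}\mathbf{P}_{\hat{\mathcal{A}}^*,\mathcal{A}}\|_2\le 1+\|(\mathbf H_{\hat{\mathcal{A}}^*}^T)^\dagger\|_2\|\mathbf H^T\|_2$.
\item There are only finitely many admissible pairs $(\mathcal A,\hat{\mathcal A})$. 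Taking the maximum of $C_2$, $C_w$ and these norms over all of them gives constants $\bar a,\bar b<\infty$ that do not depend on $A$.
\end{itemize}
Then $T_3\le \bar a(A+A^2)+\bar b(2A+4A^2)=(\bar a+2\bar b)A+(\bar a+4\bar b)A^2=\mathcal O(A^2)$. This bound is a polynomial in $A\ge 0$ with nonnegative coefficients, so it is nondecreasing in $A$. The corollary's claim should be read for this worst-case-over-configurations form of the bound: it is increasing, with leading quadratic order coming from the factors $(A+A^2)$ and $(2A+4A^2)$. The word ``linearly'' in the statement refers only to the linear part of that growth.

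\emph{Term $T_4$.} From the definition under \eqref{eq:adv_bound}, $\Psi$ is bounded by exactly the expression defining $T_3$, so $\Psi\le T_3$. Hence
\[
T_4=2\sigma_P\sqrt{C_1N\Psi}\le 2\sigma_P\sqrt{C_1N}\,\sqrt{T_3}.
\]
This is a nondecreasing function of $T_3$, and therefore of $A$, with order $\mathcal O(A)$.

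\emph{Conclusion.} Adding the four terms, the right-hand side of \eqref{eq:adv_bound} is a constant ($T_1+T_2$) plus terms that grow monotonically in $A$, with dominant order $A^2$ from $T_3$ and order $A$ from $T_4$. This is the asserted behaviour.
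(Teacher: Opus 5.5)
Your proposal is correct and is essentially the paper's own argument, which it gives only implicitly by inspecting \eqref{eq:adv_bound} term by term: $T_1$ and $T_2$ do not depend on $A$, $T_3$ is a nonnegative combination of $(A+A^2)$ and $(2A+4A^2)$, and $T_4=2\sigma_P\sqrt{C_1N\,T_3}$ because $\Psi$ is exactly the expression defining $T_3$. Your extra step of bounding the $\mathcal{A}$-dependent quantities ($C_2$ and the pseudo-inverse norms) uniformly over the finitely many configurations is a legitimate tightening that the paper skips; the MDS argument is not needed for finiteness, since Moore--Penrose pseudo-inverses always have finite norm. You are also right that what can actually be shown is monotone growth of order $A^2$, so ``linearly'' in the statement is best read as ``monotonically'' (the paper's own follow-up sentence says ``increasing''), not as a reference to the linear part of the growth.
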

The above corollaries show that the approximation error bound derived for RBACC in \eqref{eq:adv_bound} is an increasing function of both the precision noise variance $\sigma_{P}^{2}$ and the number of Byzantine workers $A$. Next, we present simulation results to demonstrate the effectiveness of the proposed coding-theoretic approach and to experimentally validate these theoretical observations.
\begin{figure*}[t]
\centering

\begin{subfigure}[t]{0.48\textwidth}
    \centering
    \includegraphics[width=\linewidth]{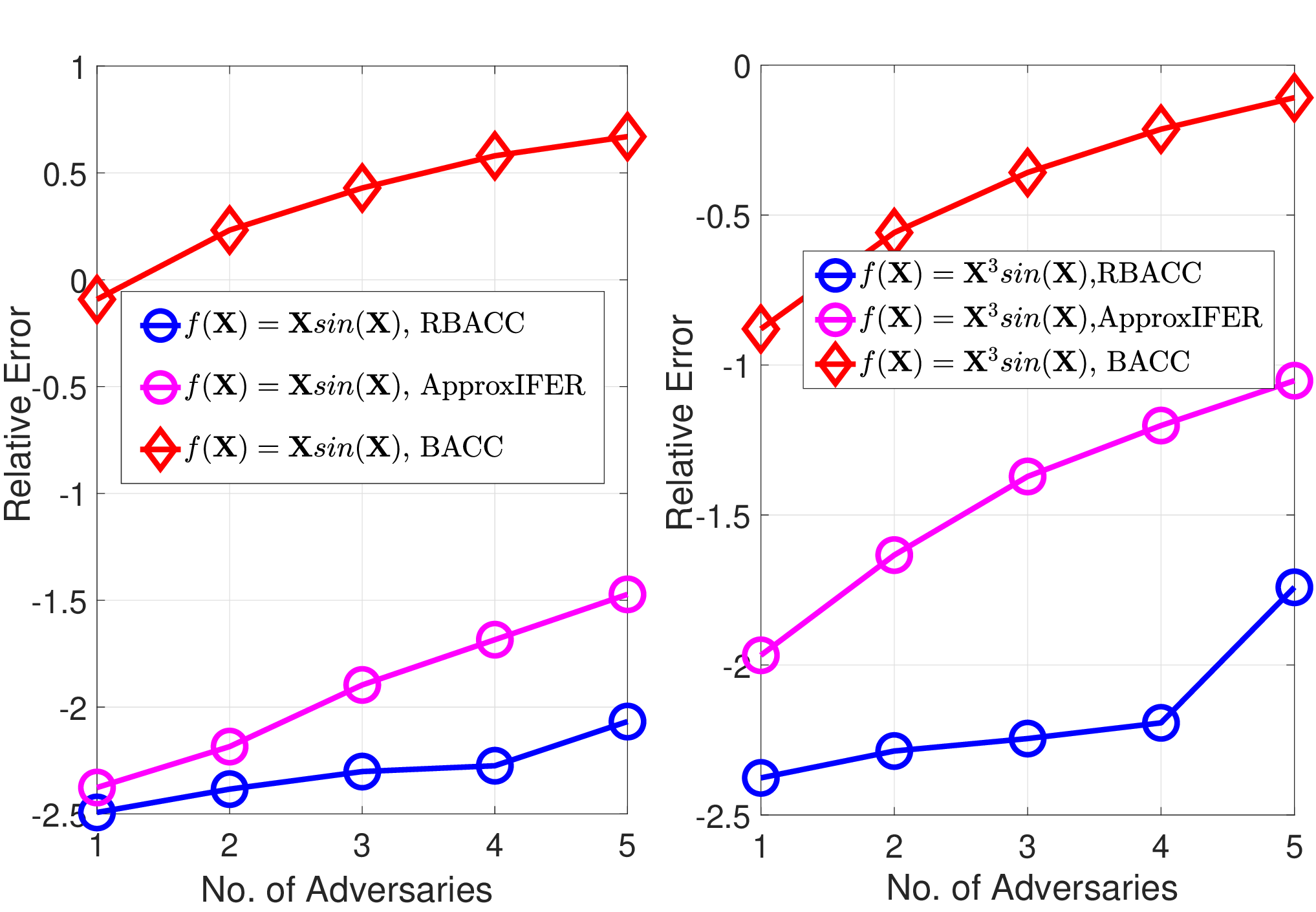}
    \caption{}
    \label{fig: adv vs accuracy}
\end{subfigure}
\hfill
\begin{subfigure}[t]{0.43\textwidth}
    \centering
    \includegraphics[width=\linewidth]{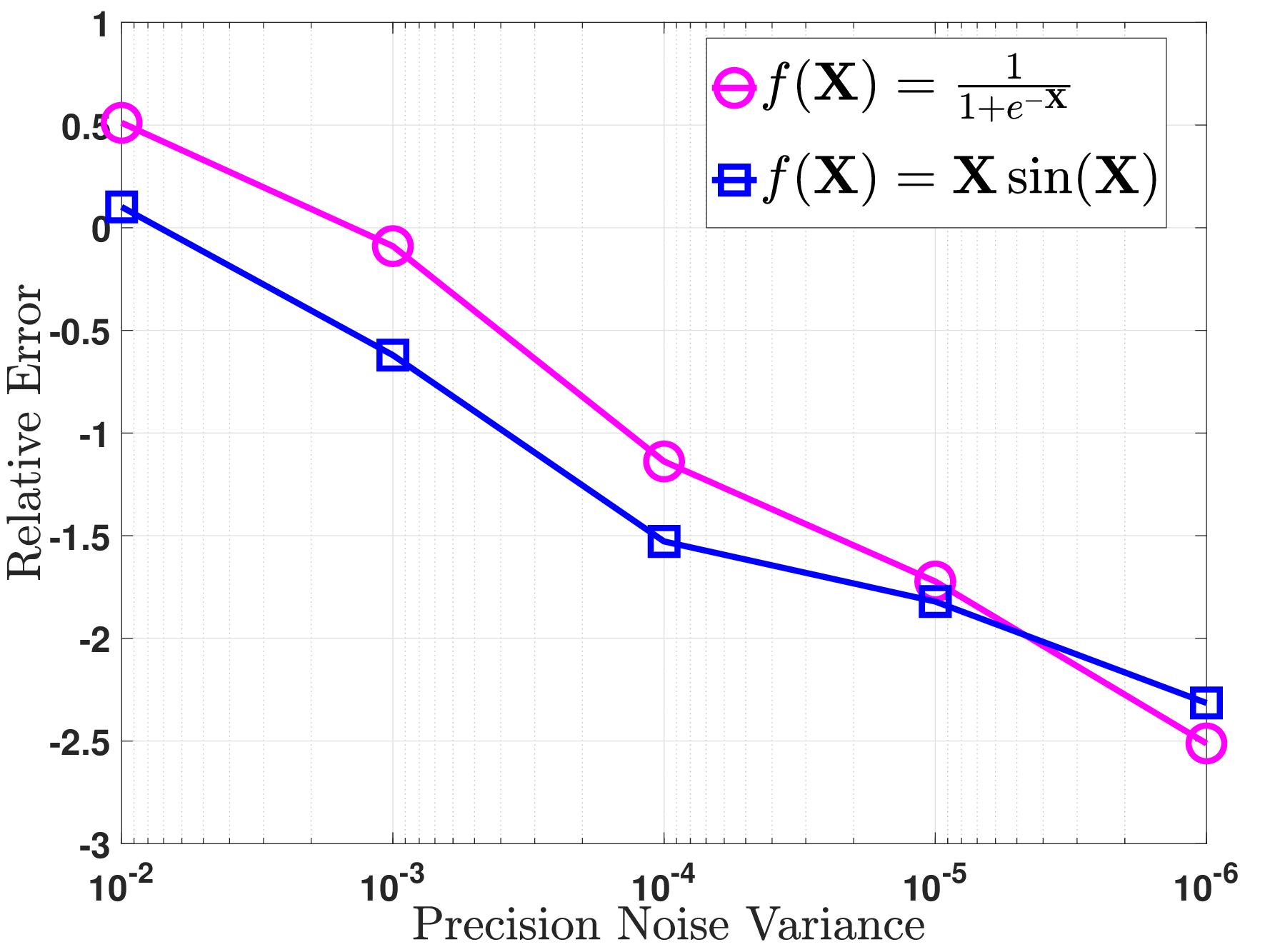}
    \caption{}
    \label{fig: adv vs accuracy with sigmap}
\end{subfigure}

\vspace{-0.1cm}
\caption{Average relative error (in $\log_{10}$ scale) under Byzantine workers and precision noise. Fig.~\subref{fig: adv vs accuracy} compares the average relative error of BACC, ApproxIFER, and RBACC as a function of $A$ with parameters $N=53$, $K=4$, $S=0$, $K_{1}=31$. Fig.~\subref{fig: adv vs accuracy with sigmap} shows the average relative error of RBACC as a function of the precision noise variance $\sigma_{P}^2$ for a fixed number of adversaries, i.e., $A=2$, with the same parameters as in Fig.~\subref{fig: adv vs accuracy}. Here, the entries of $\{\mathbf{E}_{i_a}\}$ are i.i.d. according to $\mathcal{N}(0,10^{4})$.}
\label{fig:adv_combined}
\end{figure*}

\subsection{Experimental Results on Approximation Accuracy of RBACC in the Presence of Byzantine Workers}

We now present simulation results evaluating the average approximation error of the RBACC framework in the presence of Byzantine workers, i.e., when $A>0$. To demonstrate the effectiveness of the proposed approach, we first compare RBACC with the existing BACC framework in the presence of Byzantine workers. We additionally compare RBACC with ApproxIFER \cite{a3}, which provides a method for identifying Byzantine workers but does not explicitly perform error correction and instead removes the detected Byzantine nodes prior to reconstruction. In contrast, the proposed RBACC framework incorporates explicit error-detection and correction capability through DCT-code-based decoding. To demonstrate the approximation accuracy advantages of RBACC over BACC and ApproxIFER in the presence of Byzantine workers, we compare the average relative error of all three schemes over $1000$ iterations, as shown in Fig. \ref{fig: adv vs accuracy} for different functions. The experimental parameters are chosen as $N=53$, $K=4$, $S=0$, $K_{1}=43$. In each iteration, the indices of the Byzantine workers are selected uniformly at random from the set of $N$ workers. For the experiments in Fig. \ref{fig: adv vs accuracy}, precision errors arise only due to finite-precision 64-bit floating-point operations, and no additional precision noise is added. In contrast, for the experiments in Fig. \ref{fig: adv vs accuracy with sigmap}, additional precision noise with variance $\sigma_{P}^{2}$ is introduced to evaluate its impact on the approximation accuracy of RBACC.

The results in Fig. \ref{fig: adv vs accuracy} show that the accuracy of the BACC framework degrades significantly in the presence of Byzantine workers, whereas RBACC achieves improved performance compared to both BACC and ApproxIFER due to its error-detection and correction capability. Furthermore, the results show that the average relative error increases with the number of Byzantine workers $A$, which is consistent with the theoretical behavior presented in Corollary \ref{cor:A}. Further, The plots in Fig. \ref{fig: adv vs accuracy with sigmap} show that the relative error of RBACC is an increasing function of the precision noise variance $\sigma_{P}^2$, which further validates Corollary \ref{cor:sigmaP}.

 Note that, for these experiments, the value of $K_{1}$ is selected only for experimental purposes and is not optimized. In the next section, we formulate and solve an optimization problem to determine the optimal DCT code dimension $K_{1}$. Furthermore, for a fair comparison with ApproxIFER~\cite{a3}, we assume that the number of Byzantine workers is known at the decoder. Unlike ApproxIFER, the RBACC framework employs syndrome-based DCT decoding, through which the number of Byzantine workers can be estimated accurately prior to the error-localization step under infinite-precision arithmetic by determining the rank of the corresponding syndrome matrix $\mathbf{S}_{g,h}$ for each entry $(g,h)$ \cite{b14}.

Further, the approximation error bound derived in \eqref{eq:adv_bound} depends explicitly on the DCT code dimension $K_1$ through the localization error probability $\mathrm{Prob}(E_{\mathrm{Loc}})$. Specifically, increasing $K_1$ reduces the truncation error associated with the Taylor series approximation of $f(u(z))$, but simultaneously decreases the redundancy available to the DCT decoder, thereby degrading its error-localization capability. Conversely, decreasing $K_1$ improves the error-localization performance at the expense of increased truncation error. This observation highlights the fundamental tradeoff between approximation accuracy and error-localization capability governed by the choice of $K_1$.

Motivated by this observation, in the next subsection, we investigate the dependence of the approximation error of RBACC on the DCT code dimension $K_1$ through both theoretical and experimental analysis. Based on the approximation error bound in \eqref{eq:adv_bound}, we then formulate and solve an optimization problem for selecting the optimal value of $K_1$ that minimize the approximation error of RBACC.

\section{On the Choice of the Dimension of the DCT Code}
\label{sec:optimal N1}
Recall that, in the RBACC setting, a DCT decoder of dimension $K_{1}$ can be used to detect and correct adversarial errors, as presented in Proposition \ref{prop:dct err corr}. In this context, the choice of $K_1$ determines the error-correcting capability of the DCT code and therefore affects the accuracy with which the DCT decoder can detect and correct the errors introduced by Byzantine workers. Hence, we have the flexibility to choose $K_1$ based on prior knowledge of the maximum number of adversarial errors that the system may experience. Suppose the RBACC scheme is required to tolerate at most $A$ adversarial errors. The DCT code dimension $K_1$ must therefore be selected such that the decoder can correct at least $A$ errors. For a given choice of $K_1$, the syndrome vector has length $N-K_1$. By the error-correction capability of the DCT code, the decoder can correct up to $\left\lfloor\frac{N-K_1}{2}\right\rfloor$ errors under infinite-precision arithmetic. Thus, feasible choice of $K_1$ is $1\leq K_1\leq N-2A$. In the next section, we provide an overview of the DCT decoder as the next stage of the RBACC framework after the worker computation stage. In particular, after receiving the computation results from the workers, we show how the master employs the DCT decoder to detect and correct the errors introduced by Byzantine workers in their computations.

\subsection{DCT based Error Detection and Correction}
\label{subsec:err_loc_dct}
Towards detecting and correcting errors in RBACC, note that, in the absence of stragglers and in the presence of adversaries, i.e., when $S=0$ and $A>0$, the input to the DCT decoder is the received vector $\mathbf r_{g,h}$ , which can be expressed as
\begin{equation}
\label{eq:rgh}
    \mathbf r_{g,h}
=\mathbf v_{g,h} + \mathbf e_{g,h} + \mathbf p_{g,h},
\end{equation}
\noindent where $\mathbf v_{g,h}$ denotes the $N$-dimensional vector corresponding to the $(g,h)$-th entry of the noiseless computation $f(u(z)) \in \mathbb{R}^{m\times n}$ evaluated on the encoded shares by the workers, for $g\in[m]$ and $h\in[n]$, $\mathbf e_{g,h}$ represents an $N$-length vector with at most $A$ non-zero entries corresponding to adversarial errors, and $\mathbf p_{g,h}$ denotes the vector of precision noise. Further, due to the Taylor approximation of $f(u(z))$, from \eqref{eq:taylor}, we have
\begin{equation}
\label{eq:taylor2}
\mathbf v_{g,h}
=\mathbf t_{g,h} + \mathbf q_{g,h},
\end{equation}
where $\mathbf{t}_{g,h}$ denotes the sum of the first $K_{1}$ terms of the Taylor series expansion for some $K_{1} \in \mathbb{N}$, and $\mathbf{q}_{g,h}$ denotes the sum of the higher-order terms of the expansion. Further, recall Proposition \ref{prop:dct proof}, which states that $\mathbf{t}_{g,h}$ is a codeword of a $(N, K_{1})$ DCT code. This implies that $\mathbf t_{g,h}$ is a DCT codeword, whereas
$\mathbf v_{g,h}$ can be viewed as a perturbed version of the
corresponding DCT codeword due to the truncation term
$\mathbf q_{g,h}$. Consequently, $\mathbf{r}_{g,h}$ is also a noisy DCT codeword since it is an additive noisy version of $\mathbf{t}_{g,h}$, as presented in \eqref{eq:rgh}, corresponding to the generator matrix defined in \eqref{eq:generator matrix} of dimension $K_{1}\times N$. Therefore, once the parameter $K_{1}$ is fixed, the structure of the DCT code and hence its generator matrix is completely determined as defined in \eqref{eq:generator matrix}, which in turn  determines the corresponding parity check matrix $\mathbf H$ of dimension $(N-K_{1})\times N$. Accordingly, the modified syndrome vector is computed as
\begin{equation}
\label{eq:lin eq}
\bar{\mathbf z}_{g,h}
=\mathbf r_{g,h}\mathbf H^T=\underbrace{\mathbf e_{g,h}\mathbf H^T}_{\mathbf z_{g,h}}+\underbrace{(\mathbf p_{g,h}+\mathbf q_{g,h})\mathbf H^T}_{\mathbf w_{g,h}}.
\end{equation}
Here, the first term $\mathbf z_{g,h}$ corresponds to the syndrome component due to the adversarial error vector $\mathbf e_{g,h}$, while the second term $\mathbf w_{g,h}$ represents the aggregate perturbation in the syndrome due to precision noise and truncation error.

From \eqref{eq:lin eq}, it follows that the perturbation $\mathbf w_{g,h}$ arises from two sources. The first source is the precision noise $\mathbf p_{g,h}$ introduced by finite-precision arithmetic. The second source is the residual error $\mathbf q_{g,h}$ resulting from truncating the Taylor series expansion of
$f(u(z))$ to a finite order $K_1$, as shown in \eqref{eq:taylor2}. For a fixed function $f(\cdot)$, truncation order $K_1$, and input dataset, the residual error is deterministic. However, in the RBACC framework, the input matrices are treated as random, and consequently the encoded evaluations $u(z)$, and hence the truncation error, also vary across different realizations. Furthermore, Since the evaluation points corresponding to different workers are distinct, the corresponding truncation-error contributions arise from
different function evaluations. Therefore, for analytical tractability, we model the truncation-error contributions as independent zero-mean random variables with variance $\sigma_r^2$, independent of the precision noise. Accordingly, the aggregate perturbation term $\mathbf w_{g,h}$ in \eqref{eq:lin eq} is modeled as a zero-mean random variable with effective variance $\sigma_{{eff}}^2=\sigma_P^2+\sigma_r^2.$

Under infinite-precision arithmetic and in the absence of truncation error, i.e., when $\mathbf w_{g,h}=0$, the modified syndrome vector $\bar{\mathbf z}_{g,h}$ reduces to the exact syndrome vector $\mathbf z_{g,h}$, which is solely due to the adversarial error vector  $\mathbf e_{g,h}$. In this case, the DCT decoder can exactly estimate both the number of errors and the coefficients of the corresponding error-locator polynomial. Consequently, the roots of the error-locator polynomial accurately identify the true error locations. However, when $\mathbf w_{g,h}\neq 0$, the syndrome vector is perturbed by the aggregate noise term. As a result, the decoder may incorrectly estimate the number of errors and the coefficients of the error-locator polynomial, which can lead to erroneous error localization. In this context, using the noisy syndrome vector $\bar{\mathbf z}_{g,h}$, given in \eqref{eq:lin eq}, the decoder estimates the number and locations of adversarial errors. Since the RBACC scheme is designed to tolerate at most $A$ adversarial errors, with $A\leq v$, the decoder subsequently computes the coefficients of the corresponding error-locator polynomial of degree $A$. In particular, these coefficients satisfy
\begin{equation}
\label{eq:syndrome_lon_eq}
    \bar{z}_{g,h}(i)\lambda_A+\bar{z}_{g,h}(i+1)\lambda_{A-1}
+ \cdots + \bar{z}_{g,h}(i+A-1)\lambda_1=\bar{z}_{g,h}(i+A),
\end{equation}
\noindent for $i=1,2,\ldots,(N-K_1)-A$. Writing this system compactly as
$\mathbf S_{g,h}\boldsymbol{\lambda}_{g,h}=\mathbf y_{g,h}$,
where
$\boldsymbol{\lambda}_{g,h}
=
[\lambda_1,\lambda_2,\ldots,\lambda_A]^T$
denotes the coefficient vector of the error-locator polynomial, and $\mathbf S_{g,h}$ and $\mathbf y_{g,h}$ are constructed from the entries of $\bar{\mathbf z}_{g,h}$ \cite{b14}. Note that, under finite-precision arithmetic both $\mathbf S_{g,h}$ and $\mathbf y_{g,h}$ are perturbed due to precision noise and truncation error. Consequently, the linear system is no longer exact, and the coefficients of the error-locator polynomial are estimated using the least-squares (LS) solution \cite{b17}
\[\bar{\boldsymbol{\lambda}}_{g,h}=(\mathbf S_{g,h}^{T}\mathbf S_{g,h})^{-1}\mathbf S_{g,h}^{T}\mathbf y_{g,h},\]
\noindent where
$\bar{\boldsymbol{\lambda}}_{g,h}=[\bar{\lambda}_1,\bar{\lambda}_2,\ldots,\bar{\lambda}_A]^T$ denotes the estimated coefficient vector of the error-locator polynomial. Using the estimated coefficient vector $\bar{\boldsymbol{\lambda}}_{g,h}$, the corresponding estimated error-locator polynomial is denoted by $\bar{\Lambda}_{g,h}(z)$. Further, the estimated error-locator polynomial can be expressed as
$\bar{\Lambda}_{g,h}(z)=\Lambda_{g,h}(z)+\Delta\Lambda_{g,h}(z),$
where $\Lambda_{g,h}(z)$ denotes the error-locator polynomial obtained from the noiseless syndrome vector, and $\Delta\Lambda_{g,h}(z)$ denotes the perturbation induced by the aggregate syndrome noise $\mathbf w_{g,h}$, which includes the effects of both precision noise and truncation error. Under infinite-precision arithmetic, the roots of $\Lambda_{g,h}(z)$ directly correspond to the true adversarial locations. Specifically, if $X_q=\cos\!\left(\frac{(2q+1)\pi}{2N}\right)$, for $q\in[N],$ is a root of $\Lambda_{g,h}(z)$, then the corresponding index $q$ identifies an adversarial location.
However, under finite-precision arithmetic, instead of directly identifying the roots of the error-locator polynomial, the decoder evaluates
$\|\bar{\Lambda}_{g,h}(X_q)\|^2$, for all $q\in[N]$, and arranges these values in ascending order as $\|\bar{\Lambda}_{g,h}(X_{\hat{i}_1})\|^2
\leq
\|\bar{\Lambda}_{g,h}(X_{\hat{i}_2})\|^2
\leq
\cdots
\leq
\|\bar{\Lambda}_{g,h}(X_{\hat{i}_N})\|^2.$ The indices corresponding to the $A$ smallest evaluations form the detected adversarial set $\hat{\mathcal A} =\{\hat{i}_1,\hat{i}_2,\ldots,\hat{i}_A\}.$ After identifying the detected error locations, the next step is to compute the error vector $\mathbf e_{g,h}$ for each noisy codeword $\mathbf r_{g,h}$ by solving the linear system specified in \eqref{eq:lin eq}. Once the error vector for a noisy codeword is determined, these values are subtracted from the corresponding error locations to correct those errors. The corrected codewords are then used for the subsequent reconstruction stage.

The perturbation $\Delta\Lambda_{g,h}(z)$ arises from the error in estimating the coefficient vector $\boldsymbol{\lambda}_{g,h}$ from the noisy syndrome vector. Since the coefficient vector is obtained through an LS solution, the variance of the coefficient estimation error depends on both the aggregate variance $\sigma_e^2$ and the number of available syndrome equations $(N-K_1)-A$. Throughout the subsequent analysis, we denote the resulting variance of the estimated error-locator polynomial coefficients by $\sigma_{{eff}}^2$ and refer to it as the effective variance. As discussed above, under the perturbation $\Delta\Lambda_{g,h}(z)$, the coefficients of the estimated error-locator polynomial are no longer exact. Consequently, the DCT decoder may incorrectly localize the adversarial locations. Therefore, in the next section, we characterize the localization error probability of the DCT decoder as a function of the effective variance $\sigma_{{eff}}^2$.

\subsection{ Bounds on Localization Error Rate}
\label{subsec:bounds on Perror}
We now analyze the error-localization block of the DCT decoder under the combined effects of precision noise and the truncation error arising from the Taylor series approximation of $f(u(z))$, and thereby study the impact of the effective noise variance $\sigma_{{eff}}^2$ on the error-locator coefficients. In this context, let the true error locations be denoted by $\mathcal{A} = \{i_1,i_2,\ldots,i_A\}$, and let the detected error locations obtained from the estimated error-locator polynomial $\bar{\Lambda}_{{g}{h}}(x)$ be $\hat{\mathcal{A}} = \{\hat{i}_1,\hat{i}_2,\ldots,\hat{i}_A\}$. The localization step is said to be in error if $\hat{\mathcal{A}} \neq \mathcal{A}$. Conditioned on a given error vector $\mathbf{e}$ affecting the received codeword $\mathbf{r}_{g,h}$, the probability of localization error due to precision and truncation effects is defined as
\begin{equation}
\label{eq:Ploc_def}
\mathrm{Prob}(E_{\mathrm{Loc}})
=\mathrm{Prob}(\hat{\mathcal{A}} \neq \mathcal{A})=\mathrm{Prob}\!\left(\bigcup_{a=1}^{A} i_a \notin \hat{\mathcal{A}}
\right).
\end{equation}

For a given true error location $i_a \in \mathcal{A}$ to be excluded from $\hat{\mathcal{A}}$, there must exist an index
$j_b \in \mathcal{V} \triangleq [N]\setminus\mathcal{A}$, where $b \in \{1,2,\ldots,N-A\}$, such that $\|\bar{\Lambda}_{{g},{h}}(X_{j_b})\|^2 \le \|\bar{\Lambda}_{{g},{h}}(X_{i_a})\|^2.$ Letting $\mathcal{V}=\{j_1,j_2,\ldots,j_{N-A}\}$ and applying the union bound yields
\begin{equation}
\label{eq:P_loc2_actual}
\mathrm{Prob}(E_{\mathrm{Loc}})
\leq\sum_{b=1}^{N-A}\sum_{a=1}^{A}\mathrm{Prob}\!\left(\|\bar{\Lambda}(X_{j_b})\|^2\leq \|\bar{\Lambda}(X_{i_a})\|^2\right).
\end{equation}

Defining the pairwise error probability $PEP$ corresponding to any two pair $(j_b,i_a)$ as
\begin{equation}
\label{eq:PEP_def}
PEP_{j_b,i_a}=\mathrm{Prob}\!\left(\|\bar{\Lambda}(X_{j_b})\|^2\leq\|\bar{\Lambda}(X_{i_a})\|^2\right),
\end{equation}
we obtain
\begin{equation}
\label{eq:P_loc}
\mathrm{Prob}(E_{\mathrm{Loc}})
\leq \sum_{b=1}^{N-A}\sum_{a=1}^{A} PEP_{j_b,i_a}.
\end{equation}
\noindent Since a closed-form expression for $PEP_{j_b,i_a}$ is intractable to obtain, we use a similar approach to that in \cite[Theorem 1]{a2} and characterize the pairwise error probability through the analytical expression presented in the following theorem.

\begin{theorem}
\label{th:lower_bound-perror}
For given $N$, $A$, $\sigma_P^2>0$, $1\leq K_{1}\leq N-2A$, such that $A \leq \left\lfloor \frac{N - K_{1}}{2} \right\rfloor$, and $\mathcal{A} = \{i_1,i_2,\ldots,i_A\}$, the pairwise error probability corresponding to any two pair $(j_b,i_a)$, i.e., $PEP_{j_b,i_a}$ in \eqref{eq:P_loc} where $b\in\{1,2,\ldots,N-A\}$ and $a\in\{1,2,\ldots,A\}$, can be expressed as
\begin{equation}
PEP_{j_{b},i_{a}}\geq\sqrt{\frac{\kappa_{ba}}{1+\kappa_{ba}}}\exp\!\left(-\frac{\eta C_{I}^2 \kappa_{ba}}{8\sigma_{{eff}}^2(1+\kappa_{ba})}\right),
\end{equation}
where $C_{I}^2=\left|\prod_{a=1}^{A}\left(\cos\!\left(\frac{(2j_b+1)\pi}{2N}\right)-\cos\!\left(\frac{(2i_a+1)\pi}{2N}\right)\right)\right|^{2},$
and $\kappa_{ba}=\frac{4}{\eta \sum_{k=1}^{A}\left(\cos^k\theta_{j_b}-\cos^k\theta_{i_a}\right)^2}.$
Here $\theta_{j_{b}}=\frac{(2j_b+1)\pi}{2N}$,\quad $\theta_{i_{a}}=\frac{(2i_{a}+1)\pi}{2N}$, $\eta$ is the lower bound constant.
\end{theorem}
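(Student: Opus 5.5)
The plan is to follow the template of \cite[Theorem 1]{a2}. First I will reduce the pairwise comparison in \eqref{eq:PEP_def} to a statement about a quadratic form in the (modelled Gaussian) perturbation of the error-locator coefficients. Then I will lower bound the probability of that quadratic form being non-positive using the moment generating function of a noncentral Gaussian square. The key structural fact is that, under the noiseless syndrome, $\Lambda_{g,h}(z)=\prod_{a=1}^{A}(z-X_{i_a})$ up to normalisation. Hence $\Lambda(X_{i_a})=0$ at every true location, while $|\Lambda(X_{j_b})|=C_I$ at any non-adversarial location $j_b\in\mathcal V$, which is exactly the constant appearing in the statement.

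\textbf{Step 1 (linearisation).} Write $\bar{\boldsymbol\lambda}=\boldsymbol\lambda+\boldsymbol\delta$, where $\boldsymbol\delta$ is the LS estimation error of Section~\ref{subsec:err_loc_dct}. Consistent with the modelling there, I take $\boldsymbol\delta$ to be zero-mean Gaussian with covariance $\sigma_{eff}^2\boldsymbol\Sigma$, and assume $\boldsymbol\Sigma\succeq \eta^{-1}\mathbf I$ in the appropriate direction; this is where the "lower bound constant" $\eta$ enters. Then $\Delta\Lambda(X_q)=\boldsymbol\delta^T\mathbf x_q$ with $\mathbf x_q=[X_q,X_q^2,\ldots,X_q^{A}]^T$, up to the ordering convention of the $\lambda_k$. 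Hence
\[
PEP_{j_b,i_a}=\mathrm{Prob}\big((\pm C_I+U)^2\le V^2\big),\qquad U=\boldsymbol\delta^T\mathbf x_{j_b},\; V=\boldsymbol\delta^T\mathbf x_{i_a}.
\]
\textbf{Step 2 (rotation to sum/difference variables).} Factor the event as $(\pm C_I+U-V)(\pm C_I+U+V)\le 0$ and introduce $D=U-V$, $W=U+V$. The difference $D=\boldsymbol\delta^T(\mathbf x_{j_b}-\mathbf x_{i_a})$ has variance proportional to $\sigma_{eff}^2\sum_{k=1}^{A}(\cos^k\theta_{j_b}-\cos^k\theta_{i_a})^2$. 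Its reciprocal, scaled by $4/\eta$, is precisely $\kappa_{ba}$. The event contains the sub-event in which $D$ drives the first factor to the opposite sign of the second. I will lower bound the probability of this sub-event by dropping the dependence on $W$, either by conditioning on $W$ or by using a worst-case sign argument, leaving a one-dimensional Gaussian tail in $D$ with mean shift $C_I$.

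\textbf{Step 3 (MGF-type lower bound).} The target expression $\sqrt{\kappa/(1+\kappa)}\exp\!\big(-\eta C_I^2\kappa/(8\sigma_{eff}^2(1+\kappa))\big)$ is exactly of the form $\mathbb E[e^{-sX^2}]=(1+2s\sigma^2)^{-1/2}\exp\!\big(-s m^2/(1+2s\sigma^2)\big)$ for $X\sim\mathcal N(m,\sigma^2)$. Here $m\propto C_I$ and $2s\sigma^2=1/\kappa_{ba}$. So I will (i) lower bound the Gaussian tail by an exponential, $Q(x)\gtrsim e^{-x^2/2}$, with the constant absorbed into $\eta$; (ii) express the remaining dependence as an expectation over the Gaussian fluctuation of the second factor; and (iii) evaluate that expectation in closed form via the noncentral-square MGF. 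Along the way, $\eta$ is fixed so that the factor $1/8$ and the factor $4$ in $\kappa_{ba}$ line up.

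The main obstacle is Step 2. $U$ and $V$ are correlated, and the event is an indefinite quadratic form, so a clean lower bound requires controlling the cross-covariance of $D$ and $W$. My first attempt is to condition on $W$ and use only the worst-case (smallest) variance direction, bounded via $\eta$. If that fails to produce the exact constants, I would fall back to an exact indefinite-quadratic-form representation, using the characteristic function of $(C_I+U)^2-V^2$ as in \cite{a2}, and lower bound its inversion integral.
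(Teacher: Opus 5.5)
\textbf{There is a genuine gap, in Step 2.} The paper's proof is only a pointer to \cite{a2}, but the constants in the statement determine the derivation, and your Step 2 conditions on the wrong variable.

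Your Step 3 identification is correct, and it pins the structure down. Matching $\mathbb{E}[e^{-sX^2}]$, $X\sim\mathcal N(C_I,\tau^2)$, to the target forces $s=\eta/(8\sigma_{eff}^2)=\eta/\bigl(2(2\sigma_{eff})^2\bigr)$ and $\tau^2=\sigma_{eff}^2\sum_{k=1}^{A}(\cos^k\theta_{j_b}-\cos^k\theta_{i_a})^2=\mathrm{Var}(D)$. So the Gaussian being averaged over must be $C_I+D$, and the Gaussian tail must have scale $2\sigma_{eff}$.

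Your plan does the reverse. It takes a tail in $D$ (scale $\sigma_D$, shift $C_I$) and averages over the second factor $C_I+W$.
\begin{itemize}
\item Carried through, the tail step gives $Q(C_I/\sigma_D)\gtrsim\exp\bigl(-\eta^2C_I^2\kappa_{ba}/(8\sigma_{eff}^2)\bigr)$. This has neither the $1+\kappa_{ba}$ nor the square-root prefactor.
\item Averaging over $W$ then imports $\mathrm{Var}(W)$, which is built from $\mathbf x_{j_b}+\mathbf x_{i_a}$ rather than the difference.
\item The sub-event $\{C_I+D\le 0\le C_I+W\}$ cannot shed its $W$-constraint. Since $\mathrm{Cov}(D,W)=\mathrm{Var}(U)-\mathrm{Var}(V)\neq 0$ in general, conditioning on $W$ leaves the indicator $\mathbf{1}\{C_I+W\ge 0\}$ inside the expectation. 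That destroys the closed-form MGF evaluation.
\end{itemize}
Tuning $\eta$ afterwards so that the $4$ and the $1/8$ line up is reverse-engineering, not a derivation. The characteristic-function fallback gives an exact inversion integral, not this closed form.

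\textbf{The missing idea.} Split $U=V+D$ with $V=\Delta\Lambda(X_{i_a})$ and $D=\Delta\Lambda(X_{j_b})-\Delta\Lambda(X_{i_a})$. Model these as independent, in the same heuristic spirit as the rest of the paper's decoder analysis, with $V\sim\mathcal N(0,\sigma_{eff}^2)$ and $D\sim\mathcal N\bigl(0,\sigma_{eff}^2\sum_{k=1}^{A}(\cos^k\theta_{j_b}-\cos^k\theta_{i_a})^2\bigr)$. Then $(C_I+U)^2-V^2=(C_I+D)(C_I+D+2V)$. Conditionally on $D$, the error event has probability exactly $Q\bigl(|C_I+D|/(2\sigma_{eff})\bigr)$ in both sign cases. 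Hence $PEP_{j_b,i_a}=\mathbb{E}_D\bigl[Q(|C_I+D|/(2\sigma_{eff}))\bigr]$, with no sub-event argument.

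Next, apply a Chernoff-type bound $Q(x)\ge c_\eta e^{-\eta x^2/2}$. It holds for every $\eta>1$ with some $c_\eta\in(0,\tfrac12]$. The noncentral-square MGF then gives $c_\eta(1+1/\kappa_{ba})^{-1/2}\exp\bigl(-\eta C_I^2/(8\sigma_{eff}^2(1+1/\kappa_{ba}))\bigr)$, which is the stated form.

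This also shows what $\eta$ is: the rate of the $Q$-function bound, which is why it multiplies both $C_I^2$ and $\mathrm{Var}(D)$. It is not a covariance floor as in your Step 1. A bound $\boldsymbol\Sigma\succeq\eta^{-1}\mathbf I$ puts $\eta$ in the denominator of $\mathrm{Var}(D)$, whereas $1/\kappa_{ba}=\eta\sum_k(\cos^k\theta_{j_b}-\cos^k\theta_{i_a})^2/4$ has it in the numerator.

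Finally, $Q(x)\gtrsim e^{-x^2/2}$ with the constant absorbed into the exponent is impossible, since $Q(0)=\tfrac12<1$. The prefactor $c_\eta$ survives, and the statement, which omits it, holds only up to that factor.
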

\begin{IEEEproof}
    Proof is along the lines of \cite[Theorem 1]{a2}.
\end{IEEEproof}

\noindent Therefore, for a given set of parameters $N$, $A$, $K_{1}$, $\sigma_{P}^{2}$, and $\mathcal{A}$, Theorem~\ref{th:lower_bound-perror} provides a computable lower bound on the pairwise error probability $PEP_{j_b,i_a}$ for all $b,a$. Further, from \eqref{eq:P_loc},
$\mathrm{Prob}(E_{\mathrm{Loc}}) \leq \sum_{b=1}^{N-A}\sum_{a=1}^{A}PEP_{j_b,i_a}.$ Since $PEP_{j_b,i_a}
\leq \max_{b,a}PEP_{j_b,i_a}$ for all $a$ and $b$, it follows that
\begin{equation}
\label{eq:approx pep}
\mathrm{Prob}(E_{\mathrm{Loc}})
\leq
A(N-A)\max_{b,a}PEP_{j_b,i_a}.
\end{equation}
Using the lower bound on the pairwise error probability presented in Theorem~\ref{th:lower_bound-perror}, we define the following analytically tractable surrogate for the localization error probability $\mathrm{Prob}(E_{\mathrm{Loc}})$
\begin{equation}
\widetilde{\mathrm{Prob}}(E_{\mathrm{Loc}})
\triangleq A(N-A)\exp\left(
-\frac{\eta f_{\min}\gamma_{\max}} {8\sigma_{\mathrm{eff}}^2}
\right),
\end{equation}
\noindent where $f_{\min}=\min_{j_b}g(\mathcal A,j_b),$ $g(\mathcal{A}, j_b)
\triangleq \left|\prod_{a=1}^{A}\left(\cos\left(\frac{(2j_b+1)\pi}{2N}\right) -\cos\left(\frac{(2i_a+1)\pi}{2N}\right)\right)\right|^{2},$ and $\gamma_{\min}\leq \frac{\kappa_{ba}}{1+\kappa_{ba}}
\leq \gamma_{\max}$ for all $b,a$. The surrogate localization error probability $\widetilde{\mathrm{Prob}}(E_{\mathrm{Loc}})$ explicitly characterizes the dependence of the localization performance on the effective variance $\sigma_{{eff}}^2$.


\begin{corollary}
\label{corr:dec var}
The lower bound on the pairwise error probability $PEP_{j_b,i_a}$ in Theorem~\ref{th:lower_bound-perror} is a non-decreasing function of the effective variance $\sigma_{{eff}}^2$. Consequently, the corresponding upper bound on $\mathrm{Prob}(E_{\mathrm{Loc}})$ in \eqref{eq:approx pep} is also a non-decreasing function of $\sigma_{{eff}}^2$.
\end{corollary}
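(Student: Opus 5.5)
The plan is to treat the lower bound in Theorem~\ref{th:lower_bound-perror} as a function of a single variable $s=\sigma_{\mathrm{eff}}^2>0$, with all other quantities held fixed, and check its monotonicity directly. For fixed $N$, $A$, $K_1$, $\mathcal{A}$ and a fixed pair $(j_b,i_a)$, the quantities $C_I^2$, $\kappa_{ba}$ and $\eta$ depend only on the evaluation points $\cos\theta_{j_b}$, $\cos\theta_{i_a}$ and the constant $\eta$. None of them depends on $\sigma_{\mathrm{eff}}^2$. Hence the prefactor $\sqrt{\kappa_{ba}/(1+\kappa_{ba})}$ is a nonnegative constant. In fact it is strictly positive, since $\kappa_{ba}>0$: the denominator of $\kappa_{ba}$ is a sum of squares that is nonzero because $\cos\theta_{j_b}\neq\cos\theta_{i_a}$ for $j_b\neq i_a$.

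Next I write the exponent as $-c/s$ with
\[
c=\frac{\eta C_I^2\kappa_{ba}}{8(1+\kappa_{ba})}\ge 0,
\]
which holds because $\eta>0$, $C_I^2\ge0$ and $\kappa_{ba}>0$. For $s_1<s_2$ we have $c/s_1\ge c/s_2$, so $\exp(-c/s)$ is non-decreasing in $s$; the derivative $\frac{c}{s^2}e^{-c/s}\ge 0$ gives the same conclusion. A nonnegative constant times a non-decreasing function is non-decreasing, which settles the first claim. Equality, that is non-strict growth, occurs only in the degenerate case $C_I^2=0$. This cannot happen for $j_b\notin\mathcal{A}$ because the points are distinct, so the bound is in fact strictly increasing. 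I will note this but state only the weaker claim.

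For the second claim, the corresponding bound on $\mathrm{Prob}(E_{\mathrm{Loc}})$ is $A(N-A)\max_{b,a}PEP_{j_b,i_a}$, evaluated through the tractable form, namely the surrogate $\widetilde{\mathrm{Prob}}(E_{\mathrm{Loc}})$. I will argue this in two steps:
\begin{enumerate}
\item A pointwise maximum of finitely many non-decreasing functions of $s$ is non-decreasing. Multiplying by the positive constant $A(N-A)$ preserves this.
\item The surrogate $A(N-A)\exp\bigl(-\eta f_{\min}\gamma_{\max}/(8s)\bigr)$ has the same form as above, with a nonnegative constant in the numerator, so the identical argument applies.
\end{enumerate}

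The only delicate point is interpretive rather than technical. The ``upper bound'' in \eqref{eq:approx pep} is expressed through the PEP terms, while Theorem~\ref{th:lower_bound-perror} provides only lower bounds on them. The monotonicity claim therefore has to be understood as a statement about the explicit expression obtained by substituting the analytical PEP expression, i.e.\ the surrogate. I will make this explicit and verify that the constants $f_{\min}$ and $\gamma_{\max}$ are independent of $\sigma_{\mathrm{eff}}^2$; that verification is the step I expect to require the most care.
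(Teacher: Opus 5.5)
Your argument is correct and is the reasoning the paper relies on. The paper gives no separate proof and treats the corollary as immediate: $C_I^2$, $\kappa_{ba}$ and $\eta$ do not depend on $\sigma_{\mathrm{eff}}^2$, so the prefactor is constant and $\exp\bigl(-\eta C_I^2\kappa_{ba}/(8\sigma_{\mathrm{eff}}^2(1+\kappa_{ba}))\bigr)$ increases with $\sigma_{\mathrm{eff}}^2$. Your remark that the ``upper bound'' in \eqref{eq:approx pep} is built from the true PEPs, so the second claim only makes sense for the surrogate $\widetilde{\mathrm{Prob}}(E_{\mathrm{Loc}})$, points to a looseness in the paper's ``Consequently'' rather than a gap in your proof.
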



The above corollary implies that, for given $N$, $A$, and $\mathcal A$, the localization error probability is governed by the effective variance $\sigma_{{eff}}^2$. We next characterize $\sigma_{{eff}}^2$ as a function of the DCT code dimension $K_1$, and subsequently study the resulting impact on the localization error probability and the approximation error of the RBACC framework.
\subsection{Relation Between the Dimension of the DCT Code and Probability of Localization Error}
\label{subsec:optimal K_{1}}
We now characterize the dependence of the effective variance
$\sigma_{{eff}}^2$ on the DCT code dimension $K_1$.
Recall that, for a fixed tolerable number of adversarial errors $A$, the DCT code dimension must satisfy $1\leq K_1\leq N-2A$,
so that the decoder can correct at least $A$ adversarial errors.
Further, recall from the previous subsection that the coefficients of the error-locator polynomial are estimated using the least-squares (LS) solution $\bar{\boldsymbol{\lambda}}_{g,h}
=(\mathbf S_{g,h}^{T}\mathbf S_{g,h})^{-1}
\mathbf S_{g,h}^{T}\mathbf y_{g,h}.$ For a given choice of $K_1$, the syndrome vector has length $N-K_1$, and hence the number of syndrome equations available for estimating the coefficient vector $\bar{\boldsymbol{\lambda}}_{g,h}$ is  $(N-K_1)-A$. Consequently, the choice of $K_1$ determines the number of syndrome equations and the amount of redundancy available to the LS decoder and therefore influences the accuracy with which the coefficients of the error-locator polynomial are estimated. To quantify this dependence further, we first characterize the contribution of the truncation error arising from the Taylor series approximation of $f(u(z))$ to the effective noise variance $\sigma_{{eff}}^2$. In particular, we recall the following result, which provides a bound on the truncation error and hence characterizes $\sigma_r^2$ as a function of $K_1$.

\begin{proposition}
\label{lemma:trunaction}
Let $f:\mathbb{R}\rightarrow\mathbb{R}$ be $(K_1+1)$-times differentiable with
$|f^{(K_1+1)}(z)|\leq M_{K_1+1}$ for all $|z|\leq\gamma$, where
$1\leq K_1\leq N-2A$. Let
$f_{K_1}(u(z))\in\mathbb{R}^{m\times n}$
denote the truncated Taylor approximation of order $K_1$
obtained by truncating the Taylor series expansion of
$f(u(z))$ about the origin. Assume that each entry of the matrix-valued function $u(z)$ satisfies
$|u_{g,h}(z)|\le\gamma$, for all $(g,h)$ and all $z\in[-1,1]$. For each $(g,h)$, let $\mathbf{t}_{g,h}(z)$ and $\mathbf{q}_{g,h}(z)$ denote the $(g,h)$-th entries of $f_{K_1}(u(z))$ and the remainder term $f(u(z))-f_{K_1}(u(z))$, respectively, such that
$\mathbf{v}_{g,h}(z)=\mathbf{t}_{g,h}(z)+\mathbf{q}_{g,h}(z)$. Then, by Taylor's theorem with the Lagrange remainder,
\begin{equation}
|\mathbf q_{g,h}(z)|^2
\le
\left(
\frac{M_{K_1+1}}{(K_1+1)!}\gamma^{K_1+1}
\right)^2.
\end{equation}
\end{proposition}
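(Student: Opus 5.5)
The bound is the Lagrange form of Taylor's remainder, applied one entry at a time. The plan has four steps.

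First, fix an entry $(g,h)$ and a point $z\in[-1,1]$. Since $f$ acts entrywise on a scalar argument here ($f:\mathbb{R}\to\mathbb{R}$ applied to $u_{g,h}(z)$), set $x \triangleq u_{g,h}(z)$. By hypothesis $|x|\le\gamma$. Expanding $f$ about the origin, the truncated approximation is $f_{K_1}(x)=\sum_{k=0}^{K_1} \frac{f^{(k)}(0)}{k!}x^k$, and by definition $\mathbf q_{g,h}(z)=f(x)-f_{K_1}(x)$. This is consistent with the decomposition $\mathbf v_{g,h}=\mathbf t_{g,h}+\mathbf q_{g,h}$ in \eqref{eq:taylor2}.

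Second, invoke Taylor's theorem with the Lagrange remainder. It requires $f$ to be $(K_1+1)$-times differentiable on the segment between $0$ and $x$, which lies inside $[-\gamma,\gamma]$. The theorem gives some $\xi$ strictly between $0$ and $x$ such that
\begin{equation*}
\mathbf q_{g,h}(z)=\frac{f^{(K_1+1)}(\xi)}{(K_1+1)!}\,x^{K_1+1}.
\end{equation*}
Since $|\xi|\le|x|\le\gamma$, the derivative hypothesis gives $|f^{(K_1+1)}(\xi)|\le M_{K_1+1}$. Together with $|x|^{K_1+1}\le\gamma^{K_1+1}$, this yields $|\mathbf q_{g,h}(z)|\le \frac{M_{K_1+1}}{(K_1+1)!}\gamma^{K_1+1}$.

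Third, squaring both nonnegative sides gives the stated bound. The bound is uniform in $z$, $g$ and $h$, so it holds at every evaluation point $z_i$. This is what allows $\sigma_r^2$ to be bounded as a function of $K_1$ in the subsequent analysis.

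The only real obstacle is interpretive rather than technical. One must be clear that the expansion is in the argument $u_{g,h}(z)$ about $0$, not a series in $z$ itself, and that the Lagrange point $\xi$ stays in the region $|\xi|\le\gamma$ where the derivative bound is assumed. If $f$ acted on matrices non-entrywise, one would instead need a matrix-function Taylor bound. Under the scalar hypothesis stated, however, the entrywise argument above suffices.
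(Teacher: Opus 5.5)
Your proof is correct and is essentially the paper's argument. The paper's proof is a one-line appeal to Taylor's theorem with the Lagrange remainder about the origin, and you supply the details it leaves implicit: setting $x=u_{g,h}(z)$ with $|x|\le\gamma$, noting that the intermediate point also satisfies $|\xi|\le\gamma$ so the derivative bound $M_{K_1+1}$ applies, and squaring.
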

 \begin{IEEEproof}
The result follows directly from Taylor's theorem with the Lagrange remainder applied to the Taylor series expansion of $f(u(z))$ about the origin \cite[Theorem~1.1]{TaylorRemainder}. This completes the proof.
\end{IEEEproof}

\vspace{0.2cm}
\noindent Therefore, we define a variance proxy for the truncation error as a function of $K_{1}$ as
\begin{equation}
\label{eq:trunacation proxy}
\sigma_r^2 \triangleq \left(\frac{M_{K_1+1}}{(K_1+1)!}\gamma^{K_1+1}\right)^2.
\end{equation}


Recall that the effective variance $\sigma_{{eff}}^2$ of the estimated error-locator polynomial coefficients is influenced by both the precision-noise variance $\sigma_P^2$ and the truncation-error variance proxy $\sigma_r^2$ as defined in \eqref{eq:trunacation proxy}, as discussed in Section~\ref{subsec:err_loc_dct}. Combining these effects provides the following characterization of $\sigma_{{eff}}^2$.

\begin{theorem}
\label{th:sigma_eff}
For fixed $N$ and $A$, the effective variance $\sigma_{{eff}}^2$ of the estimated error-locator coefficients in the LS-based DCT decoder can be approximated as

\begin{scriptsize}
\begin{equation}
\label{eq:comb_eff}
\sigma_{{eff}}^2
\approx
\frac{\sigma_P^2}{(N-K_1)-A}+\frac{1}{(N-K_1)-A}\left(\frac{M_{K_1+1}}{(K_1+1)!}\gamma^{K_1+1}\right)^2,
\end{equation}
\end{scriptsize}
where $\sigma_P^2$ denotes the precision-noise variance and $M_{K_1+1}$ is the bound on the $(K_1+1)$-th derivative of the target function.
\end{theorem}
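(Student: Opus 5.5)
The approach is a heuristic least-squares variance calculation. The effective variance of the estimated error-locator coefficients is obtained by propagating the variance of the syndrome perturbation $\mathbf w_{g,h}$ through the LS solution $\bar{\boldsymbol{\lambda}}_{g,h}=(\mathbf S_{g,h}^{T}\mathbf S_{g,h})^{-1}\mathbf S_{g,h}^{T}\mathbf y_{g,h}$. The ``$\approx$'' in the statement will be justified by a first-order linearisation of the syndrome equations.

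\textbf{Step 1: decompose the perturbation.} Start from \eqref{eq:lin eq} and write $\bar{\mathbf z}_{g,h}=\mathbf z_{g,h}+\mathbf w_{g,h}$ with $\mathbf w_{g,h}=(\mathbf p_{g,h}+\mathbf q_{g,h})\mathbf H^T$. Since $\mathbf H$ consists of rows of the orthonormal DCT matrix $\boldsymbol\Theta$, its rows are orthonormal, so $\mathbf H\mathbf H^T=\mathbf I_{N-K_1}$. Under the modelling assumption of Section~\ref{subsec:err_loc_dct}, the entries of $\mathbf p_{g,h}$ and $\mathbf q_{g,h}$ are independent, zero mean, with variances $\sigma_P^2$ and $\sigma_r^2$. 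Hence
\[
\mathrm{Cov}(\mathbf w_{g,h})=(\sigma_P^2+\sigma_r^2)\,\mathbf H\mathbf H^T=(\sigma_P^2+\sigma_r^2)\,\mathbf I_{N-K_1}.
\]
So each syndrome entry carries noise of variance $\sigma_P^2+\sigma_r^2$, and $\sigma_r^2$ is replaced by the proxy \eqref{eq:trunacation proxy} from Proposition~\ref{lemma:trunaction}.

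\textbf{Step 2: linearise the LS estimate.} Write $\mathbf S_{g,h}=\mathbf S^{0}+\Delta\mathbf S$ and $\mathbf y_{g,h}=\mathbf y^{0}+\Delta\mathbf y$, where $\mathbf S^0\boldsymbol\lambda=\mathbf y^0$ holds exactly (noiseless Prony/BCH relation). To first order,
\[
\bar{\boldsymbol\lambda}-\boldsymbol\lambda\approx(\mathbf S^{0T}\mathbf S^{0})^{-1}\mathbf S^{0T}\,\boldsymbol\epsilon, \qquad \boldsymbol\epsilon=\Delta\mathbf y-\Delta\mathbf S\,\boldsymbol\lambda .
\]
Each entry of $\boldsymbol\epsilon$ is a fixed linear combination (with coefficients $1,-\lambda_1,\ldots,-\lambda_A$) of $A+1$ consecutive entries of $\mathbf w_{g,h}$. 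Its variance is therefore $c_\lambda(\sigma_P^2+\sigma_r^2)$ with $c_\lambda=1+\|\boldsymbol\lambda\|^2$, which depends only on $\mathcal A$ and $N$. I would treat $\mathrm{Cov}(\boldsymbol\epsilon)\approx c_\lambda(\sigma_P^2+\sigma_r^2)\mathbf I$, ignoring the banded correlation, and absorb $c_\lambda$ into the normalisation of $\sigma_{\mathrm{eff}}^2$ (equivalently into $\eta$ in Theorem~\ref{th:lower_bound-perror}). This gives
\[
\mathrm{Cov}(\bar{\boldsymbol\lambda})\approx(\sigma_P^2+\sigma_r^2)(\mathbf S^{0T}\mathbf S^{0})^{-1}.
\]

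\textbf{Step 3: scale with the number of equations.} $\mathbf S^0$ has $(N-K_1)-A$ rows and $A$ columns. Its Gram matrix is a sum over rows of outer products of vectors of exact syndromes $\mathbf z_{g,h}=\mathbf e_{g,h}\mathbf H^T$. Using the BCH-like factorisation $\mathbf H=\mathbf A\mathbf T\mathbf W$ from Proposition~\ref{prop:dct err corr}, these rows are of comparable magnitude (bounded cosine powers times the error magnitudes). Hence
\[
\mathbf S^{0T}\mathbf S^{0}\approx\big((N-K_1)-A\big)\,\bar{\mathbf R},
\]
with $\bar{\mathbf R}$ a per-row average Gram matrix independent of $K_1$, normalised to the identity scale. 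Taking the per-coefficient variance then yields
\[
\sigma_{\mathrm{eff}}^2\approx\frac{\sigma_P^2+\sigma_r^2}{(N-K_1)-A},
\]
which is \eqref{eq:comb_eff} after substituting the truncation proxy.

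\textbf{Main obstacle.} Step 3, together with the decorrelation in Step 2, is the weak point. The Gram matrix growing linearly in the number of rows with a $K_1$-independent average is only heuristic, because the syndrome rows are structured rather than i.i.d. Likewise, the noise in $\mathbf S$ and $\mathbf y$ is correlated (errors-in-variables), not white. I would handle both by stating them explicitly as the approximations behind ``$\approx$'': a first-order perturbation argument, plus the standard LS variance $\sigma^2/(\#\text{equations})$ for well-conditioned regressors. The constants depending on $\mathcal A$ are absorbed into $\eta$, consistent with their treatment in Theorem~\ref{th:lower_bound-perror}.
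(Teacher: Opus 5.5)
Your argument follows essentially the paper's heuristic route: treat the syndrome perturbation as noise of variance $\sigma_P^2+\sigma_r^2$, take the LS error covariance proportional to $(\mathbf S_{g,h}^T\mathbf S_{g,h})^{-1}$, posit the $1/((N-K_1)-A)$ scaling, absorb constants, and insert the truncation proxy for $\sigma_r^2$. The paper's only support for that scaling is that $\lambda_{\min}(\mathbf S_{g,h}^T\mathbf S_{g,h})$ is non-decreasing as rank-one PSD rows are added, so your Step~3 assumption is no weaker than the paper's.

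One caution about the extra detail you add: whiteness in Step~1 uses the orthonormal $\mathbf H$, but the exact order-$A$ relation $\mathbf S^0\boldsymbol\lambda=\mathbf y^0$ in Step~2 holds for the Vandermonde-form syndrome $\mathbf r_{g,h}(\mathbf T\mathbf W)^T$, not for $\mathbf r_{g,h}\mathbf H^T$. That Vandermonde-form syndrome has noise coloured by $\mathbf A^{-1}$, and its exact entries shrink like $|X_{i_a}|^k$, so the ``comparable rows'' claim in Step~3 should be stated as an assumption rather than derived from $\mathbf H=\mathbf A\mathbf T\mathbf W$.
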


\begin{IEEEproof}
Let $\tilde{\boldsymbol{\lambda}}_{g,h}=\bar{\boldsymbol{\lambda}}_{g,h}-\boldsymbol{\lambda}_{g,h},$ where $\boldsymbol{\lambda}_{g,h}$ denotes the true coefficient vector of the error-locator polynomial, $\bar{\boldsymbol{\lambda}}_{g,h}$ denotes the corresponding estimate obtained from the noisy syndrome vector, and $\tilde{\boldsymbol{\lambda}}_{g,h}$ denotes the resulting estimation error vector. Since the perturbations arise from the noisy syndrome vector, we model their combined effect as additive noise with variance $\sigma_e^2$, as discussed above. Under this model, the covariance of the LS estimation error can be approximated as $\mathrm{Cov}(\tilde{\boldsymbol{\lambda}}_{g,h}) \approx \sigma_e^2 (\mathbf{S}_{g,h}^{T}\mathbf{S}_{g,h})^{-1}.$
Consequently, each component of the estimation error vector satisfies
\begin{equation}
\label{eq:var_in_terms_of_lamdamin}
\mathrm{Var}(\tilde{\lambda}_{g,h,i})\leq
\frac{\sigma_e^2}{\lambda_{\min}(\mathbf{S}_{g,h}^{T}\mathbf{S}_{g,h})},\qquad i\in[A].
\end{equation}

To understand the dependence of the coefficient estimation variance on the DCT code dimension $K_1$, we next study the behavior of the quantity $\lambda_{\min}(\mathbf S_{g,h}^{T}\mathbf S_{g,h})$ appearing in \eqref{eq:var_in_terms_of_lamdamin}. Let $M=(N-K_1)-A$ denote the number of available syndrome equations, i.e., the number of linear relations obtained from the syndrome samples. Recall that the coefficients of the error-locator polynomial are estimated using the LS solution. Let $\mathbf{s}_k^T$ denote the $k$-th row of the syndrome matrix $\mathbf S_{g,h}$. Then, $\mathbf S_{g,h}^{T}\mathbf S_{g,h}=\sum_{k=1}^{M}\mathbf s_k\mathbf s_k^{T}.$ Therefore, if one additional syndrome equation is available, the corresponding Gram matrix becomes $\mathbf S_{M+1}^{T}\mathbf S_{M+1}
=\mathbf S_{M}^{T}\mathbf S_{M}+\mathbf s_{M+1}\mathbf s_{M+1}^{T}.$
Since $\mathbf s_{M+1}\mathbf s_{M+1}^{T}$ is positive semidefinite, each additional syndrome equation contributes a positive semidefinite term to the Gram matrix. Consequently, $\lambda_{\min}(\mathbf S_{g,h}^{T}\mathbf S_{g,h})$ is a non-decreasing function of $M$. It therefore follows from \eqref{eq:var_in_terms_of_lamdamin} that increasing the number of available syndrome equations decreases the variance of the estimated error-locator polynomial coefficients. Hence, increasing the number of available syndrome equations generally improves the conditioning of the LS problem and reduces the estimation variance of the error-locator polynomial coefficients \cite{b17}. Therefore, for fixed noise variance, the effective variance after LS decoding decreases as the number of syndrome equations increases. Since $M=(N-K_1)-A$, we approximate this dependence as
\begin{equation}
\label{eq:coeff var}
\sigma_{{eff}}^2
\approx \frac{\sigma_e^2}{(N-K_1)-A}
=\frac{\sigma_P^2+\sigma_r^2}{(N-K_1)-A}.
\end{equation}
Since $c$ is independent of $K_1$, it can be absorbed into the approximation constant. Therefore, $\sigma_{{eff}}^2\approx\frac{\sigma_P^2+\sigma_r^2} {(N-K_1)-A}$. Further, substituting the truncation-error proxy from \eqref{eq:trunacation proxy} into the above expression yields \eqref{eq:comb_eff}. This completes the proof.
\end{IEEEproof}

\begin{proposition}
\label{prop:redundancy}
For fixed $N$ and $A$, decreasing $K_1$
increases the number of syndrome equations available to the
LS decoder and reduces the effective variance $\sigma_{{eff}}^2$  of the estimated error-locator polynomial coefficients.
\end{proposition}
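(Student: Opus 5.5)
The plan has two parts: first count syndrome equations as a function of $K_1$, then feed that count into the Gram-matrix argument from the proof of Theorem~\ref{th:sigma_eff}. Fix $N$ and $A$ and consider two admissible dimensions $K_1' < K_1 \le N-2A$.

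\emph{Counting step.} The parity-check matrix $\mathbf H$ consists of the last $N-K_1$ rows of the DCT matrix $\mathbf\Theta$. Hence the syndrome $\bar{\mathbf z}_{g,h}=\mathbf r_{g,h}\mathbf H^T$ has length $N-K_1$. The Prony-type system \eqref{eq:syndrome_lon_eq} then contains exactly $(N-K_1)-A$ equations, which is strictly larger for $K_1'$ than for $K_1$.

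\emph{Nesting step.} I would argue that the syndrome matrix for $K_1$ can be embedded in the one for $K_1'$. Since $\mathbf H(K_1')$ contains every row of $\mathbf H(K_1)$ plus the additional rows $K_1',\dots,K_1-1$ of $\mathbf\Theta$, the syndrome sequence for $K_1'$ extends the one for $K_1$. The consecutive-window rows $\mathbf s_k^T$ of $\mathbf S_{g,h}(K_1)$ therefore appear among the rows of $\mathbf S_{g,h}(K_1')$, up to the index shift coming from the BCH factorization $\mathbf H=\mathbf A\mathbf T\mathbf W$ used in Proposition~\ref{prop:dct err corr}. The lower-triangular $\mathbf A$ is exactly what makes this embedding non-trivial, so I would work with the modified syndrome $\mathbf r\mathbf W\mathbf T^T$, where the rows are pure power sums and the nesting is literal.

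\emph{Variance step.} Given the nesting, $\mathbf S^T\mathbf S$ for $K_1'$ equals the Gram matrix for $K_1$ plus a sum of rank-one PSD terms $\mathbf s_k\mathbf s_k^T$. Weyl's inequality then gives $\lambda_{\min}$ non-decreasing. Plugging this into \eqref{eq:var_in_terms_of_lamdamin} makes the bound $\sigma_e^2/\lambda_{\min}$ non-increasing, and under the approximation of Theorem~\ref{th:sigma_eff} this yields $\sigma^2_{\mathrm{eff}}\approx\sigma_e^2/((N-K_1)-A)$. For fixed $\sigma_e^2$ this is strictly decreasing as $K_1$ decreases, since the denominator grows.

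\emph{Main obstacle.} The statement cannot hold unconditionally, because $\sigma_r^2$ in \eqref{eq:trunacation proxy} generally increases as $K_1$ decreases. The claim must therefore be read as holding with the aggregate perturbation variance $\sigma_e^2$ held fixed, which is the redundancy effect only. I would state this explicitly and note that the full tradeoff, including $\sigma_r^2(K_1)$, is deferred to the optimization of \eqref{eq:comb_eff}. A secondary obstacle is making the nesting of syndrome rows rigorous, since the noise entering $\mathbf S_{g,h}$ also changes with $K_1$; I would handle this by working with the noiseless Gram matrix plus a first-order perturbation, as in the LS covariance approximation.
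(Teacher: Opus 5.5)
Your proposal is correct and follows essentially the paper's route: the count of $(N-K_1)-A$ syndrome equations, the positive semidefinite Gram-matrix update from the proof of Theorem~\ref{th:sigma_eff}, and the approximation $\sigma_{\mathrm{eff}}^2\approx\sigma_e^2/((N-K_1)-A)$ with $\sigma_e^2$ held fixed. Your explicit nesting via the power-sum syndrome $\mathbf r\mathbf W\mathbf T^T$ and your caveat that $\sigma_r^2$ grows as $K_1$ decreases make precise what the paper leaves implicit when it speaks only of averaging precision noise; moreover, in the power-sum coordinates the noisy entries at common indices are literally the same numbers for both values of $K_1$, so your secondary obstacle does not actually arise.
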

\begin{IEEEproof}
Note that, for a fixed tolerable number of adversarial errors $A$, the DCT code dimension $K_1$ can be selected over the feasible range $1\leq K_1\leq N-2A$. Since the syndrome length is $N-K_1$, smaller values of $K_1$ produce a longer syndrome vector and hence provide more syndrome equations for error localization. In the LS-based decoding step, for $A\leq v\triangleq\left\lfloor\frac{N-K_1}{2}\right\rfloor$, the decoder can exploit up to $2v-A$ syndrome samples, which is approximately $(N-K_1)-A$ equations to estimate $A$ unknown coefficients. Thus, for fixed $A$, decreasing $K_1$ increases the redundancy available to the LS decoder, allowing it to average precision noise across more observations and thereby reducing the variance of the estimated error-locator coefficients. In contrast, increasing $K_1$ reduces the number of available equations and weakens this averaging effect. Unlike the classical Peterson decoding method, which uses only $2A$ syndromes to correct $A$ errors, the LS decoder utilizes all remaining redundant syndrome equations beyond this minimum requirement, thereby improving estimation accuracy whenever $K_1<N-2A$ \cite{b17}. Letting $M=(N-K_1)-A$, the effective coefficient variance scales according to \eqref{eq:coeff var} as $\sigma_{{eff}}^2\approx\frac{\sigma_e^2}{c\,M},$ for some constant $c>0$. Hence, for fixed $A$, decreasing $K_1$ increases $M$ and suppresses the effect of precision noise through stronger averaging in the LS decoder. This completes the proof.
\end{IEEEproof}

\begin{corollary}
\label{corr:tradeoff}
For fixed $N$ and $A$, the DCT code dimension $K_1$ affects the
effective variance $\sigma_{{eff}}^2$ through two factors

\begin{enumerate}
\item By Proposition \ref{lemma:trunaction}, increasing $K_1$ improves the approximation of $f(u(z))$ and thereby reduces the truncation-error contribution to $\sigma_{{eff}}^2$.

\item By Proposition~\ref{prop:redundancy}, increasing $K_1$
reduces the redundancy available to the LS decoder, thereby
increasing the effective variance $\sigma_{{eff}}^2$
through the reduction in the number of available syndrome
equations $(N-K_1)-A$.
\end{enumerate}

\end{corollary}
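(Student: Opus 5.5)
Corollary~\ref{corr:tradeoff} is a direct consequence of Theorem~\ref{th:sigma_eff}. The plan is to use the approximation \eqref{eq:comb_eff} and split it into its two additive terms,
\[
\sigma_{eff}^2 \approx T_P(K_1) + T_r(K_1), \qquad T_P(K_1)=\frac{\sigma_P^2}{(N-K_1)-A}, \qquad T_r(K_1)=\frac{\sigma_r^2(K_1)}{(N-K_1)-A}.
\]
Here $\sigma_r^2(K_1)$ is the truncation proxy in \eqref{eq:trunacation proxy}. Each item of the corollary then corresponds to the monotonic behaviour, in $K_1$, of one of the two ingredients: the numerator $\sigma_r^2$ and the denominator $(N-K_1)-A$. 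Throughout, $N$ and $A$ are fixed and $K_1$ ranges over the feasible set $1\le K_1\le N-2A$. On that set $(N-K_1)-A\ge A\ge 1>0$, so every quotient is well defined and positive.

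For item 1, I would invoke Proposition~\ref{lemma:trunaction}. The ratio of consecutive truncation proxies is
\[
\frac{\sigma_r^2(K_1+1)}{\sigma_r^2(K_1)} = \left(\frac{M_{K_1+2}\,\gamma}{M_{K_1+1}\,(K_1+2)}\right)^2 .
\]
This ratio is below one whenever the derivative bounds grow more slowly than factorially, i.e.\ $M_{K_1+2}\gamma < (K_1+2)M_{K_1+1}$. That is precisely the regime in which the Taylor series of $f(u(z))$ converges on $|z|\le\gamma$, which is already assumed in Proposition~\ref{prop:dct proof}. In that regime a larger $K_1$ gives a smaller remainder bound $|\mathbf q_{g,h}(z)|$, hence a better approximation of $f(u(z))$ and a smaller truncation contribution to $\sigma_{eff}^2$.

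For item 2, I would cite Proposition~\ref{prop:redundancy} together with the Gram-matrix argument in the proof of Theorem~\ref{th:sigma_eff}. Raising $K_1$ by one removes one syndrome equation, so $(N-K_1)-A$ drops by one. Removing a rank-one PSD term $\mathbf s\mathbf s^T$ from $\mathbf S^T\mathbf S$ cannot increase $\lambda_{\min}$, so the variance bound \eqref{eq:var_in_terms_of_lamdamin} can only grow. Equivalently, $1/((N-K_1)-A)$ is strictly increasing in $K_1$, so $T_P(K_1)$ strictly increases with $K_1$, since its numerator $\sigma_P^2$ does not depend on $K_1$.

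The main obstacle is to keep the two effects separate, because $T_r$ is a product of a factor that decreases in $K_1$ and a factor that increases in $K_1$. The corollary only asserts the two mechanisms, not the net monotonicity of $\sigma_{eff}^2$, so I would phrase each item as a monotonicity of one factor with the other held fixed. I would also record explicitly the mild growth condition on $M_{K_1+1}$ needed for item 1, since it is only implicit in the paper's Taylor-series assumption.
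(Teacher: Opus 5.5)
Your route is essentially the paper's. The paper justifies the corollary only by reading the two effects off \eqref{eq:comb_eff}: it cites Proposition~\ref{lemma:trunaction} for the truncation numerator, and Proposition~\ref{prop:redundancy}, together with the Gram-matrix monotonicity of $\lambda_{\min}$ from the proof of Theorem~\ref{th:sigma_eff}, for the denominator $(N-K_1)-A$. Your item~2 reproduces this correctly. Raising $K_1$ deletes one row of $\mathbf S_{g,h}$ and leaves the others unchanged, so your interlacing step is legitimate. Your ratio $\sigma_r^2(K_1+1)/\sigma_r^2(K_1)=\bigl(M_{K_1+2}\gamma/((K_1+2)M_{K_1+1})\bigr)^2$ is also right. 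It usefully exposes that Proposition~\ref{lemma:trunaction} only bounds the remainder for each fixed $K_1$, and so gives no monotonicity in $K_1$ by itself.

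The step that fails is your claim that $M_{K_1+2}\gamma<(K_1+2)M_{K_1+1}$ is ``precisely the regime in which the Taylor series converges'' and is therefore already assumed in Proposition~\ref{prop:dct proof}. Convergence is an asymptotic statement, $M_k\gamma^k/k!\to 0$, whereas your condition must hold at every step, and convergence does not imply it. For $f(x)=e^x$ one has $M_k=e^{\gamma}$ and an everywhere convergent Taylor series. Yet your ratio equals $\gamma^2/(K_1+2)^2$, which exceeds one whenever $K_1+2<\gamma$. So when $\gamma$ is large relative to $K_1$ (possibly over the whole feasible range $K_1\le N-2A$), the truncation proxy \emph{increases} with $K_1$, and item~1 cannot be derived from the paper's hypotheses. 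You must impose the growth condition as an explicit additional assumption rather than present it as a consequence of Taylor convergence. Alternatively, restrict item~1 to $K_1$ beyond the threshold where it holds, e.g.\ $K_1+2>\gamma\sup_k M_{k+1}/M_k$ when this supremum is finite. A minor point: in Proposition~\ref{lemma:trunaction}, $\gamma$ bounds the argument $u_{g,h}(z)$ of $f$, not $z$ itself.
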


The above corollary can also be observed directly from
\eqref{eq:comb_eff}. In particular, increasing $K_1$ improves the
approximation of $f(u(\cdot))$ and reduces the truncation-error
contribution to $\sigma_{{eff}}^2$. However, by
Proposition~\ref{prop:redundancy}, increasing $K_1$ simultaneously
reduces the redundancy available to the LS decoder, thereby weakening its ability to average precision noise. Hence, smaller values of $K_1$ favor error detection and correction, whereas larger values of $K_1$ favor approximation accuracy.

\begin{corollary}
\label{corr}
For fixed $N$, $A$, and $\mathcal A$, the localization error probability $\mathrm{Prob}(E_{\mathrm{Loc}})$ depends on the DCT code dimension $K_1$ through the effective variance $\sigma_{{eff}}^2$.
\end{corollary}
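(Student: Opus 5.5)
The plan is to prove Corollary~\ref{corr} by chaining two results that are already available. The first is the $K_1$-dependence of $\sigma_{{eff}}^2$ given by Theorem~\ref{th:sigma_eff}. The second is the monotone dependence of the localization-error bound on $\sigma_{{eff}}^2$ given by Theorem~\ref{th:lower_bound-perror} together with Corollary~\ref{corr:dec var}.

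\textbf{Step 1 (isolating $\sigma_{{eff}}^2$).} Fix $N$, $A$, and $\mathcal A=\{i_1,\ldots,i_A\}$. Inspect the pairwise bound of Theorem~\ref{th:lower_bound-perror}. The quantities $C_I^2$ and $\kappa_{ba}$ are built only from the evaluation points $\cos\theta_{j_b}$ and $\cos\theta_{i_a}$, which depend on $N$, $\mathcal A$, and the index $j_b$. The constant $\eta$ is likewise independent of $K_1$. Hence $K_1$ can enter the right-hand side only through $\sigma_{{eff}}^2$. The same observation applies to the union bound \eqref{eq:approx pep} and to the surrogate $\widetilde{\mathrm{Prob}}(E_{\mathrm{Loc}})$, since $A(N-A)$, $f_{\min}$, and $\gamma_{\max}$ are all functions of $N$, $A$, and $\mathcal A$ alone.

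\textbf{Step 2 (the map $K_1\mapsto\sigma_{{eff}}^2$).} Invoke Theorem~\ref{th:sigma_eff} and write $\sigma_{{eff}}^2$ as
\[
\sigma_{{eff}}^2(K_1)\approx \frac{\sigma_P^2+\sigma_r^2(K_1)}{(N-K_1)-A},
\]
where $\sigma_r^2(K_1)$ is the truncation proxy \eqref{eq:trunacation proxy}. Both the denominator and $\sigma_r^2$ vary with $K_1$ over the feasible range $1\le K_1\le N-2A$, as described in Proposition~\ref{prop:redundancy} and Corollary~\ref{corr:tradeoff}. So $K_1$ acts on the decoder only through this scalar.

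\textbf{Step 3 (composition).} Compose the two maps. The bound on $\mathrm{Prob}(E_{\mathrm{Loc}})$ is a non-decreasing function of $\sigma_{{eff}}^2$ by Corollary~\ref{corr:dec var}. Therefore $K_1\mapsto \mathrm{Prob}(E_{\mathrm{Loc}})$ factors as
\[
K_1\mapsto\sigma_{{eff}}^2(K_1)\mapsto \mathrm{Prob}(E_{\mathrm{Loc}}).
\]
It also follows that, whenever $\sigma_{{eff}}^2(K_1)\le\sigma_{{eff}}^2(K_1')$, the bound for $K_1$ is no larger than the bound for $K_1'$.

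\textbf{Main obstacle.} The hard part is justifying that the actual localization probability, not merely its bound, depends on $K_1$ only through $\sigma_{{eff}}^2$. The parity-check matrix $\mathbf H$, and hence the syndrome matrix $\mathbf S_{g,h}$, do change with $K_1$. I would handle this in two moves.

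First, under the LS perturbation model of Section~\ref{subsec:err_loc_dct}, the decision statistic $\|\bar\Lambda(X_q)\|^2$ is determined by the noiseless locator $\Lambda_{g,h}$ plus a coefficient perturbation whose covariance is summarized by $\sigma_{{eff}}^2$. The noiseless locator $\Lambda_{g,h}$ has roots at $\{X_{i_a}\}$ regardless of $K_1$.

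Second, I would state the corollary at the level of the tractable characterization, that is, the bound \eqref{eq:approx pep} and the surrogate, which is what the paper uses downstream. This avoids claiming that the exact distribution is a function of $\sigma_{{eff}}^2$ alone.
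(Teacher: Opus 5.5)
Your proposal follows the paper's own argument. The paper obtains Corollary~\ref{corr} in one line by combining Corollary~\ref{corr:dec var} (the localization-error bound is non-decreasing in $\sigma_{eff}^2$) with Theorem~\ref{th:sigma_eff}, whose expression \eqref{eq:comb_eff} makes the $K_1$-dependence of $\sigma_{eff}^2$ explicit; this is exactly your Steps~2--3. Your Step~1 (checking that $C_I^2$, $\kappa_{ba}$, $\eta$, $f_{\min}$, $\gamma_{\max}$ do not involve $K_1$) and your choice to state the result for the bound \eqref{eq:approx pep} and the surrogate, rather than the exact probability, only make explicit what the paper leaves implicit, since $\mathbf H$ and $\mathbf S_{g,h}$ also change with $K_1$ beyond the scalar $\sigma_{eff}^2$.
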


The above corollary follows directly from Corollary~\ref{corr:dec var} and Theorem~\ref{th:sigma_eff}. Further, equation~\eqref{eq:comb_eff} explicitly shows the dependence of
$\sigma_{{eff}}^2$ on $K_1$.

Moreover, for fixed $N$, $A$, $\sigma_P^2$, and $N_1$, the first two terms of the approximation error bound in \eqref{eq:adv_bound} are independent of $K_1$, whereas the remaining terms depend on $K_1$ only through the localization error probability $\mathrm{Prob}(E_{\mathrm{Loc}})$. Therefore, by Corollary~\ref{corr}, the dependence of the approximation error bound on $K_1$ is captured through the effective variance $\sigma_{{eff}}^2$. Furthermore, Corollary~\ref{corr:tradeoff} shows that the choice of
$K_1$ affects both the truncation-error contribution and the
error-localization capability of the DCT decoder. Consequently,
these observations suggest the existence of an optimal value of
$K_1$ for minimizing the approximation error of the RBACC framework. Therefore, selecting the optimal value of $K_1$ can be formulated as the following optimization problem. Since the localization error probability is a non-decreasing function of
$\sigma_{{eff}}^2$, minimizing $\sigma_{{eff}}^2$
provides a surrogate criterion for selecting $K_1$. Hence, instead
of directly minimizing the approximation error bound in
\eqref{eq:adv_bound}, we determine the optimal value of $K_1$ by
solving

\begin{mdframed}
\begin{problem}
\label{prob:opt_K1}
Given an RBACC scheme with $N$ workers, $K$, $A>0$, $\sigma_P^2>0$, and $\sigma_r^2>0$, determine the optimal value of $K_1$ satisfying $A\leq \left\lfloor \frac{N-K_1}{2}\right\rfloor$ by solving
\begin{equation}
K_1^\star=\arg\min_{K_1\in\mathcal K}
\frac{\sigma_P^2}{(N-K_1)-A}
+\frac{1}{(N-K_1)-A}
\left(\frac{M_{K_1+1}}{(K_1+1)!}\gamma^{K_1+1}
\right)^2,
\end{equation}
where the feasible set is $\mathcal K=\left\{K_1\in\mathbb N:\;2\leq K_1\leq N-2A\right\}.$
\end{problem}
\end{mdframed}
After computing $K_1^\star$ by solving the above optimization problem, we subsequently use $K_1^\star$ and proceed with error localization, error correction, and final reconstruction.

\subsection{Experimental Results}
\label{subsec:opt_k1_exp_results}

To demonstrate the effectiveness of the proposed optimization framework for selecting the DCT code dimension $K_1$, we compare the value of $K_1$ obtained by solving Problem~\ref{prob:opt_K1}, denoted by $K_1^\star$, with an empirical baseline, denoted by $K_{1,\mathrm{E}}^\star$. The empirical baseline is obtained by directly minimizing the average relative error of the RBACC scheme through Monte Carlo simulations. Specifically, for each feasible value of $K_1$ satisfying $2\leq K_1\leq N-2A$, the average relative error is computed over $10^3$ independent Monte Carlo trials, where, in each trial, the set of $A$ Byzantine workers is selected uniformly at random from the $N$ workers. The value of $K_1$ yielding the minimum average relative error is selected as $K_{1,\mathrm{E}}^\star$. In contrast, $K_1^\star$ is obtained by minimizing the effective noise variance bound in Problem~\ref{prob:opt_K1}. The values of $K_{1,\mathrm{E}}^\star$ and $K_1^\star$ for different precision noise variances are presented in Table~\ref{tab:K1}. The system parameters used for the experiments are $N=15$, $K=4$, $\eta=10^2$, $\sigma_A^2=10^4$, $A=2$, and $f(\mathbf{X})=\mathbf{X}\sin(\mathbf{X})$. The optimization is performed over the feasible range $2\leq K_1\leq N-2A$, i.e., $2\leq K_1\leq11$.

From Table~\ref{tab:K1}, both $K_{1,\mathrm{E}}^\star$ and $K_1^\star$ vary with the precision noise variance $\sigma_P^2$. For relatively large values of $\sigma_P^2$, smaller values of $K_1$ are preferred because they provide additional syndrome equations, thereby improving the error-localization capability of the DCT decoder, consistent with Proposition~\ref{prop:redundancy}. As $\sigma_P^2$ decreases, larger values of $K_1$ become preferable because the truncation error becomes the dominant source of approximation error, consistent with Proposition~\ref{lemma:trunaction}. Furthermore, the close agreement between $K_1^\star$ and the empirical baseline $K_{1,\mathrm{E}}^\star$ demonstrates the effectiveness of the proposed surrogate optimization framework.

\begin{figure}[!t]
\centering

\begin{minipage}[t]{0.47\columnwidth}
\vspace{0pt}
\centering
\footnotesize

\refstepcounter{table}
\textbf{TABLE \thetable}

\vspace{1mm}

Experimental and theoretical optimal values of $K_1^\star$ under different precision noise variances. The parameters used are $N=15$, $K=4$, $\eta=10^2$, $\sigma_A^2=10^4$, $A=2$, and $f(x)=\mathbf{X}\sin(\mathbf{X})$. The range of $K_1$ considered is $2\leq K_1\leq11$.

\vspace{2mm}

\setlength{\tabcolsep}{18pt}
\begin{tabular}{|c|c|c|}
\hline
\begin{tabular}[c]{@{}c@{}}Precision\\Noise Variance\end{tabular}
&
$K_{1,\mathrm{E}}^\star$
&
$K_1^\star$\\
\hline
$10^{-2}$ & 2 & 4\\
$10^{-3}$ & 4 & 5\\
$10^{-4}$ & 4 & 6\\
$10^{-5}$ & 4 & 6\\
\hline
\end{tabular}

\label{tab:K1}

\end{minipage}
\hfill
\begin{minipage}[t]{0.50\columnwidth}
\vspace{0pt}
\centering
\includegraphics[width=\linewidth]{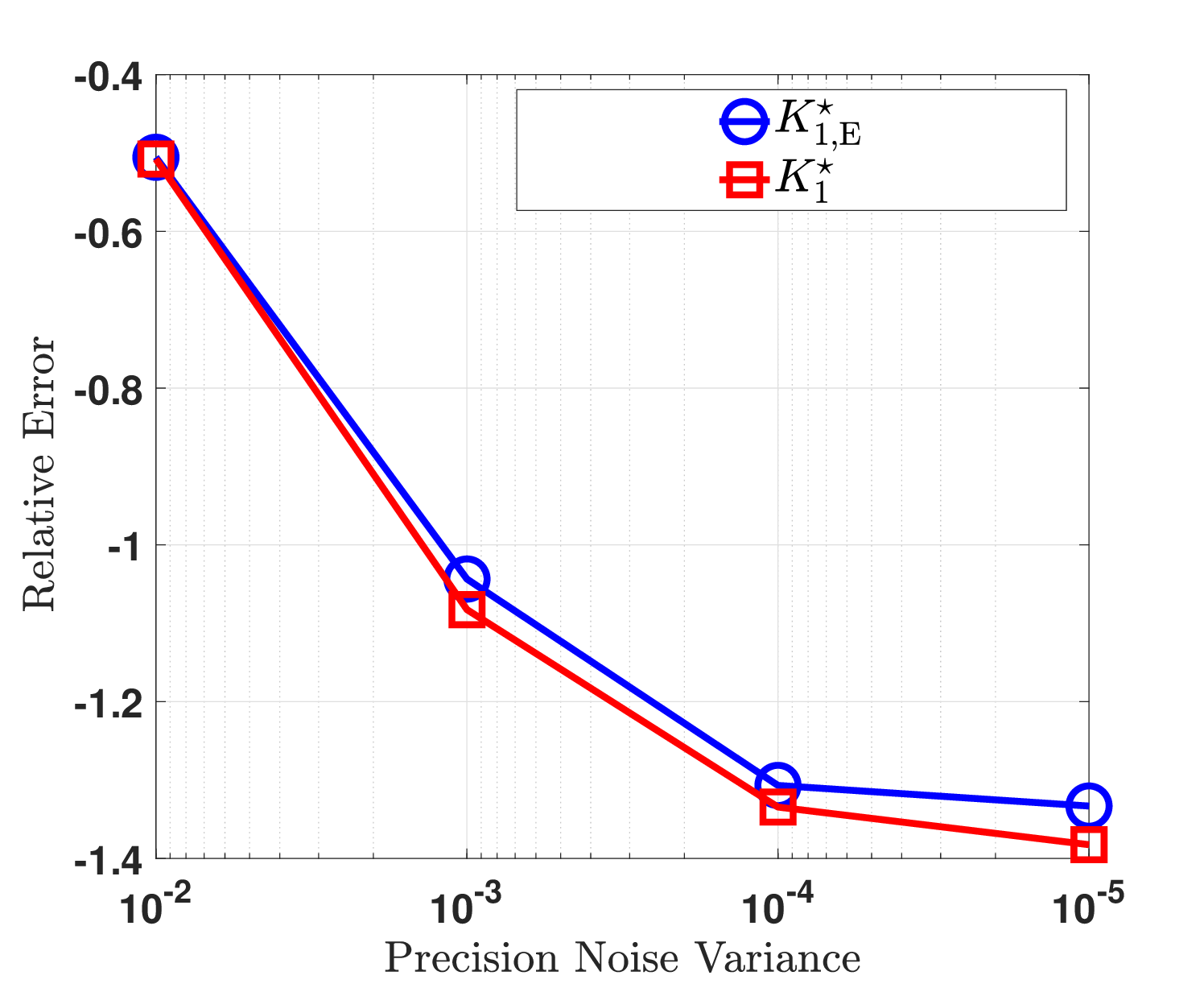}
\caption{Comparison of the average relative error of the RBACC framework obtained using $K_{1,\mathrm{E}}^\star$ and $K_1^\star$ as presented in Table~\ref{tab:K1} for different precision noise variances.}
\label{fig:optk1}
\end{minipage}

\end{figure}
To further demonstrate the effectiveness of the proposed surrogate optimization framework, we compare the average relative error of the RBACC scheme obtained using the DCT code dimensions $K_{1,\mathrm{E}}^\star$ and $K_1^\star$. The results are shown in Fig. \ref{fig:optk1}. It can be observed that the average relative error obtained using $K_1^\star$ closely matches that obtained using the empirical baseline $K_{1,\mathrm{E}}^\star$. Note that the empirical baseline is obtained by exhaustively evaluating the RBACC scheme for every feasible value of $K_1$ through Monte Carlo simulations, whereas the proposed approach requires only solving Problem \ref{prob:opt_K1}. Therefore, the proposed surrogate optimization framework eliminates the need for exhaustive Monte Carlo simulations while providing a computationally efficient approach for selecting the DCT code dimension with performance close to the empirical baseline.
\section{Placement of Reliable and Unreliable Workers in RBACC}
\label{sec:placement of nodes}
In this section, we consider the RBACC framework presented in Section~\ref{sec:RBACC}, wherein both reliable and unreliable workers are present, such that the set of unreliable workers $\mathcal{W}_{\mathrm{unrel}}$ is non-empty. For this setting, we highlight that the encoding strategy remains identical to that described in Section~\ref{sec:RBACC}. However, since the set of unreliable workers is non-empty, we argue that the manner in which the encoded evaluations are assigned to the unreliable workers plays a crucial role in determining the reconstruction accuracy. Towards this direction, we first recall the assignment mapping introduced in Section~\ref{sec:RBACC}. In this context, let $\mathcal U=\{u_1,u_2,\ldots,u_\mu\}\subseteq[N]$ denote the indices of the unreliable workers known to the master, and let $\mathcal Q=\{i_{u_1},i_{u_2},\ldots,i_{u_\mu}\}\subseteq[N]$ denote the set of $\mu$ distinct evaluation indices assigned to these unreliable workers, where $i_{u_k}$ denotes the evaluation index assigned to worker $\mathcal W_{u_k}$. Recall that each evaluation index $i\in[N]$ is associated with an encoded evaluation $\mathbf U_i$. Under the assignment mapping $\phi:[N]\rightarrow[N]$ introduced in Section~\ref{sec:RBACC}, the evaluation indices assigned to the unreliable workers satisfy $\phi(i_{u_k})=u_k$, for $k=1,2,\ldots,\mu.$ The remaining encoded evaluations $\{\mathbf U_i:i\notin\mathcal Q\}$ may be assigned arbitrarily among the reliable workers.

\begin{remark}
Although we assume that master knows the identities of the unreliable workers, excluding them from the computation may degrade the approximation accuracy of the RBACC by reducing the number of computations. Therefore, we instead investigate the design of the assignment mapping $\phi(\cdot)$.
\end{remark}

In particular, since the master has prior knowledge of the identities of the unreliable workers, we propose an assignment strategy for allocating the encoded shares ${\mathbf U_i\in[N]}$ to the unreliable workers during the encoding stage. The assignment mapping $\phi(\cdot)$ is designed based on the approximation error bound of the RBACC framework, with the objective of minimizing the upper bound on the approximation error given in \eqref{eq:adv_bound}. More specifically, we highlight that using the identity mapping $\phi(i)=i$, as in the case $\mu=0$ when all workers are reliable, may not be an optimal choice when $\mu>0$. Instead, a customized assignment of evaluation indices to the unreliable workers can be employed to reduce the approximation error bound in \eqref{eq:adv_bound}. To motivate such an assignment strategy, it is first necessary to understand how the assignment mapping $\phi(\cdot)$ influences the approximation error of the RBACC framework. Towards this direction, we first establish the effect of the assignment mapping on the DCT error-localization block and the approximation error bound through the following propositions.

\begin{proposition}
\label{prop:psi effect on P_loc}
The localization error probability $\mathrm{Prob}(E_{\mathrm{Loc}})$ of the DCT decoder is a function of the mapping $\phi(\cdot)$.
\end{proposition}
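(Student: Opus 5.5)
The argument traces how $\phi$ enters the localization analysis of Section~\ref{subsec:bounds on Perror}. Under the mapping $\phi$, worker $\mathcal W_{u_k}$ computes on the evaluation point $z_{i_{u_k}}$ with $\phi(i_{u_k})=u_k$. Since $\mathcal A\subseteq\mathcal W_{\mathrm{unrel}}$, the positions in the received vector $\mathbf r_{g,h}$ that carry Byzantine errors are the evaluation indices $\phi^{-1}(\mathcal A)\subseteq\mathcal Q$, not the worker labels. In the syndrome model \eqref{eq:lin eq}, the error vector $\mathbf e_{g,h}$ is therefore supported on $\phi^{-1}(\mathcal A)$. The true error-locator polynomial $\Lambda_{g,h}$ has roots at $X_q=\cos\!\left(\frac{(2q+1)\pi}{2N}\right)$ for $q\in\phi^{-1}(\mathcal A)$.

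Next I would substitute this support into the union bound \eqref{eq:P_loc} and into Theorem~\ref{th:lower_bound-perror}. With the index set $\mathcal A$ replaced by $\phi^{-1}(\mathcal A)$, both $C_I^2$ and $\kappa_{ba}$ become explicit functions of the assigned cosines $\cos\theta_{i_a}$, $i_a\in\phi^{-1}(\mathcal A)$, and of the complementary points $\cos\theta_{j_b}$. These quantities are
\[
C_I^2=\Bigl|\prod_{i_a\in\phi^{-1}(\mathcal A)}\bigl(\cos\theta_{j_b}-\cos\theta_{i_a}\bigr)\Bigr|^2,
\qquad
\kappa_{ba}=\frac{4}{\eta\sum_{k=1}^{A}\bigl(\cos^k\theta_{j_b}-\cos^k\theta_{i_a}\bigr)^2}.
\]
The same dependence carries over to the surrogate $\widetilde{\mathrm{Prob}}(E_{\mathrm{Loc}})$ through $f_{\min}=\min_{j_b}g(\phi^{-1}(\mathcal A),j_b)$. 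Meanwhile $\sigma_{\mathrm{eff}}^2$ in Theorem~\ref{th:sigma_eff} depends only on $N$, $K_1$, $A$ and $\sigma_P^2$, and not on $\phi$. So the only channel through which $\phi$ acts is the geometry of the error locations on the Chebyshev grid. The LS syndrome matrix $\mathbf S_{g,h}$ is also built from $\mathbf e_{g,h}\mathbf H^T$ restricted to columns $\phi^{-1}(\mathcal A)$, so its conditioning depends on $\phi$ as well.

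To show the dependence is genuine, it suffices to exhibit two mappings that produce different values. Take $\phi_1$, which assigns the unreliable workers contiguous indices. The Chebyshev points are then clustered, so some $|\cos\theta_{j_b}-\cos\theta_{i_a}|$ is of order $1/N$, and $f_{\min}$ is small. Take $\phi_2$, which spreads the indices apart. This leaves the differences and $f_{\min}$ bounded below by a larger quantity. The pairwise-error-probability bounds and the surrogate then differ strictly, which establishes the claim.

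The main obstacle is that Theorem~\ref{th:lower_bound-perror} gives only bounds, not $\mathrm{Prob}(E_{\mathrm{Loc}})$ itself. A fully rigorous statement that the exact probability changes with $\phi$ would need the distribution of $\bar\Lambda_{g,h}(X_q)$. I would handle this by phrasing the result, as the paper does elsewhere, at the level of the characterization in \eqref{eq:approx pep} and $\widetilde{\mathrm{Prob}}(E_{\mathrm{Loc}})$. Separately, I would note that the exact probability is a function of the support $\phi^{-1}(\mathcal A)$, because the joint law of $\bar{\mathbf z}_{g,h}$ depends on $\mathbf H^T$ restricted to that support.
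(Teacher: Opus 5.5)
Your argument is correct at the paper's own level of rigor, and it uses the same core mechanism: the $\phi$-dependence is read off from the Chebyshev-grid geometry that enters $C_I^2$ (and, in your version, also $\kappa_{ba}$ and $f_{\min}$) in Theorem~\ref{th:lower_bound-perror}.

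The difference lies in how $\phi$ reaches the decoder. The paper's proof first modifies the localization step. Using $\mathcal A\subseteq\mathcal U$, it evaluates $\bar\Lambda_{g,h}$ only at $X_q$ for $q\in\mathcal Q$. Then $\hat{\mathcal A}\subseteq\mathcal U$, and the competitors $j_b$ range over $\mathcal Q\setminus\phi^{-1}(\mathcal A)$, giving the $\mu-A$ terms of \eqref{eq:loc surro_place}. So $\phi$ acts through the decoder's candidate set as well as through the error support. You keep the full-grid decoder and the $N-A$ competitors of \eqref{eq:P_loc}, so $\phi$ acts only through $\phi^{-1}(\mathcal A)$.

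Your route has several advantages. It shows the dependence survives without changing the decoder. It separates evaluation indices from worker labels more cleanly than the paper does. Your remark that the law of $\bar{\mathbf z}_{g,h}$ depends on the columns of $\mathbf H$ indexed by $\phi^{-1}(\mathcal A)$ gives a rigorous proof of the weak claim that the exact probability is determined by $\phi$. You are also right that moving a \emph{lower} bound on $PEP_{j_b,i_a}$ does not by itself show that $\mathrm{Prob}(E_{\mathrm{Loc}})$ moves; the paper's proof passes over this point. The paper's route, in turn, is what makes the whole set $\mathcal Q$, rather than the unknown $\phi^{-1}(\mathcal A)$, the design variable of Problems~\ref{prob:final_weighted} and~\ref{prob:final_surrogate_max}.

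One caveat on your contiguous-versus-spread witness. With the full-grid decoder, some honest index is always grid-adjacent to some Byzantine index. Hence $\min_{b,a}|\cos\theta_{j_b}-\cos\theta_{i_a}|$ never exceeds the local grid spacing, however $\phi$ spreads the indices. The contiguous-versus-spread contrast then works only through the product in $g(\phi^{-1}(\mathcal A),j_b)$, which requires $A\ge 2$. For $A=1$ you should instead contrast a central placement (spacing $\Theta(1/N)$) with one near $\pm 1$ (spacing $\Theta(1/N^2)$). Your sentence saying the individual differences become bounded below by a larger quantity really describes the paper's restricted-candidate setting, not yours.
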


\begin{IEEEproof}
Under the assumption that all Byzantine workers belong to the unreliable worker set, i.e., $\mathcal A\subseteq\mathcal U$, the error-localization block of the DCT decoder can be modified accordingly. Recall that, in the absence of unreliable workers, i.e., when $\mu=0$, the roots of the noisy error locator polynomial $\bar{\Lambda}_{gh}(x)$ are identified by evaluating $\bar{\Lambda}_{gh}(x)$ at the nodes
$X_q=\cos\frac{(2q+1)\pi}{2N},$ for $q\in[N]$, and arranging the evaluations in ascending order as
$\|\bar{\Lambda}(X_{\hat{i}_1})\|^2
\leq \|\bar{\Lambda}(X_{\hat{i}_2})\|^2
\leq \cdots \leq \|\bar{\Lambda}(X_{\hat{i}_N})\|^2,$
where $\{\hat{i}_1,\hat{i}_2,\ldots,\hat{i}_N\}$ denotes the ordered set of indices. The $A$ smallest evaluations and their corresponding indices then form the detected error set
$\hat{\mathcal A}=\{\hat{i}_1,\hat{i}_2,\ldots,\hat{i}_A\}.$ However, when $\mu>0$, the localization block of the DCT decoder can exploit the prior knowledge that all Byzantine workers belong to the unreliable worker set $\mathcal W_{\mathrm{unrel}}$. Let
$\mathcal Q=\{i_{u_1},i_{u_2},\ldots,i_{u_\mu}\}$
denote the set of evaluation indices assigned to the unreliable workers through the mapping $\phi(\cdot)$, where $i_{u_k}$ denotes the evaluation index assigned to worker $u_k\in\mathcal U$. In this case, the noisy error locator polynomial is evaluated only at the nodes corresponding to the evaluation indices in $\mathcal Q$, i.e., $X_q=\cos\frac{(2q+1)\pi}{2N},
\quad q\in\mathcal Q.$ The resulting evaluations are arranged in ascending order as $\|\bar{\Lambda}(X_{i_{u_{(1)}}})\|^2
\leq \|\bar{\Lambda}(X_{i_{u_{(2)}}})\|^2
\leq \cdots \leq \|\bar{\Lambda}(X_{i_{u_{(\mu)}}})\|^2,$
where $i_{u_{(1)}},i_{u_{(2)}},\ldots,i_{u_{(\mu)}}$ denote the ordered evaluation indices in $\mathcal Q$. The workers associated with the $A$ smallest evaluations are identified as adversarial by the decoder, yielding the detected adversarial set
$\hat{\mathcal A}\subseteq\mathcal U$.

Since all Byzantine workers belong to the unreliable worker set, the true adversarial set satisfies $\mathcal A\subseteq\mathcal U$, while the detected adversarial set satisfies $\hat{\mathcal A}\subseteq\mathcal U$. Consequently, the DCT error-localization block operates only on the evaluation points corresponding to the evaluation indices assigned to the workers in $\mathcal U$. These evaluation indices are determined by the set $\mathcal Q$ and the assignment mapping $\phi(\cdot)$. Therefore, the localization error probability $\mathrm{Prob}(E_{\mathrm{Loc}})$ depends on the choice of evaluation indices assigned to the unreliable workers, and hence on the assignment mapping $\phi(\cdot)$. Furthermore, the evaluation points used in RBACC are chosen as the Chebyshev points of the first kind, $z_i=\cos\!\left(\frac{(2i+1)\pi}{2N}\right),$ for $i\in[N].$ Hence, different choices of the evaluation index set $\mathcal Q$ result in different geometric separation of the evaluation points assigned to the workers in the sets $\mathcal A$ and $\mathcal U\setminus {\mathcal A}$. In particular, from Theorem~\ref{th:lower_bound-perror}, the pairwise error probability between an adversarial evaluation index $i_a$ and a non-adversarial evaluation index $j_b$ depends on the factor $C_I^2=\left|\prod_{\ell=1}^{A}
\left(\cos\!\left(\frac{(2j_b+1)\pi}{2N}\right)-\cos\!\left(\frac{(2i_\ell+1)\pi}{2N}\right)\right)\right|^2,$ which is determined by the relative separation of the corresponding Chebyshev points associated with the evaluation indices $j_b$ and $\{i_1,\ldots,i_A\}$. Since the localization error probability is upper bounded by a sum of pairwise error probabilities, as given in \eqref{eq:P_loc}, different choices of the evaluation index set $\mathcal Q$ lead to different values of $C_I^2$, and consequently different values of the pairwise error probabilities. Therefore, different assignment mappings $\phi(\cdot)$ gives different localization error probabilities $\mathrm{Prob}(E_{\mathrm{Loc}})$. Hence, the assignment mapping $\phi(\cdot)$ affects the localization error probability $\mathrm{Prob}(E_{\mathrm{Loc}})$. This completes the proof.
\end{IEEEproof}


\begin{proposition}
\label{prop:psi effect on HA and HAcap}
The conditioning of the matrices $\mathbf H_{\mathcal A}$ and $\mathbf H_{\hat{\mathcal A}}$ depends on the assignment mapping $\phi(\cdot)$ used to assign evaluation indices to the unreliable worker set $\mathcal U$.
\end{proposition}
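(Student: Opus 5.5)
The plan follows the template of the proof of Proposition~\ref{prop:psi effect on P_loc}: first identify exactly which columns of the parity-check matrix make up $\mathbf H_{\mathcal A}$ and $\mathbf H_{\hat{\mathcal A}}$, and then show that these columns change with $\phi(\cdot)$.

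Recall from Section~\ref{subsec:err_loc_dct} that $\mathbf H\in\mathbb R^{(N-K_1)\times N}$ consists of the last $N-K_1$ rows of the DCT matrix $\mathbf\Theta$. Its $q$-th column is therefore
\[
\sqrt{2/N}\,\bigl[\cos(K_1\theta_q),\ldots,\cos((N-1)\theta_q)\bigr]^T, \qquad \theta_q=\frac{(2q+1)\pi}{2N}.
\]
Hence $\mathbf H_{\mathcal A}$ (resp.\ $\mathbf H_{\hat{\mathcal A}}$) is the column submatrix indexed by the evaluation indices of the true (resp.\ detected) Byzantine workers. Since $\mathcal A,\hat{\mathcal A}\subseteq\mathcal U$, these index sets are subsets of $\mathcal Q=\phi^{-1}(\mathcal U)$. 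The first step is to make this dependence explicit: changing $\phi$ changes which nodes $\theta_q$ appear in these submatrices.

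The second step is to quantify conditioning in terms of the nodes. I would use the factorization $\mathbf H=\mathbf A\mathbf T\mathbf W$ from the proof of Proposition~\ref{prop:dct err corr}, with $\mathbf A$ lower triangular and nonsingular, $\mathbf T$ Vandermonde in the nodes $X_q=\cos\theta_q$, and $\mathbf W$ diagonal. Restricting to the columns in $\mathcal A$ gives $\mathbf H_{\mathcal A}=\mathbf A\,\mathbf T_{\mathcal A}\mathbf W_{\mathcal A}$, so
\[
\kappa(\mathbf H_{\mathcal A})\le\kappa(\mathbf A)\,\kappa(\mathbf T_{\mathcal A})\,\kappa(\mathbf W_{\mathcal A}).
\]
A matching lower bound follows from $\sigma_{\min}(\mathbf H_{\mathcal A})\le\|\mathbf A\|\,\sigma_{\min}(\mathbf T_{\mathcal A})\,\|\mathbf W_{\mathcal A}\|$. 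The conditioning of the tall Vandermonde block $\mathbf T_{\mathcal A}$ is governed by the separations $|X_{i_a}-X_{i_{a'}}|$. In the square case, this enters through the determinant $\prod_{a<a'}(X_{i_{a'}}-X_{i_a})$, which is the same geometric quantity that appears in $C_I^2$ of Theorem~\ref{th:lower_bound-perror}. So $\sigma_{\min}(\mathbf T_{\mathcal A})$, and hence the conditioning, depends on how far apart the Chebyshev nodes assigned to the workers in $\mathcal A$ and $\hat{\mathcal A}$ are.

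The third step is to exhibit two assignments that give different conditioning. I would pick $\phi_1$ mapping $\mathcal U$ to consecutive indices and $\phi_2$ mapping $\mathcal U$ to indices spread across $[N]$. For $A=2$, the Gram matrix $\mathbf H_{\mathcal A}^T\mathbf H_{\mathcal A}$ has unit diagonal up to scaling. Its off-diagonal entry is a partial DCT orthogonality sum $\sum_{k=K_1}^{N-1}\cos(k\theta_i)\cos(k\theta_j)$, which is a Dirichlet-kernel-type expression in $\theta_i\pm\theta_j$. For adjacent nodes this sum is large, giving near-collinear columns, while for widely separated nodes it is small. The eigenvalues are $1\pm$ this entry, so the two assignments yield different condition numbers. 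Because the decoder matrices $(\mathbf H_{\hat{\mathcal A}}^T)^\dagger$ and $\mathbf P_{\hat{\mathcal A},\mathcal A}$ in \eqref{eq:adv_bound} depend on these singular values, this also shows how $\phi$ feeds into $T_3$.

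The main obstacle is that $\mathcal A$ is unknown and $\hat{\mathcal A}$ is random, so the statement has to be read as saying that the distribution or worst case of these quantities, over subsets of $\mathcal Q$, depends on $\phi$. I would therefore phrase the conclusion over all $A$-subsets of $\mathcal Q$. The explicit $A=2$ Dirichlet-kernel computation then serves as a rigorous witness that the dependence is nontrivial, rather than attempting a full characterization for general $A$.
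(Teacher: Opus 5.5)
Your Step 1 is in substance the paper's whole proof. The paper observes that $\phi$ determines which columns of $\mathbf H$ form $\mathbf H_{\mathcal A}$ and $\mathbf H_{\hat{\mathcal A}}$, then asserts that different column subsets have different singular values and hence different condition numbers. That inference is not automatic. For $K_1=1$ one has $\mathbf H^T\mathbf H=\mathbf I_N-\frac1N\mathbf 1\mathbf 1^T$, so every $A$-column submatrix has the same Gram matrix $\mathbf I_A-\frac1N\mathbf 1\mathbf 1^T$; and for $A=1$ the condition number is identically $1$. So your plan to exhibit an explicit witness is the right way to make the statement rigorous.

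However, the witness in Step 3 rests on a false premise. $\mathbf H$ consists of rows of the orthogonal matrix $\mathbf\Theta$, so its \emph{rows} are orthonormal, not its columns. Hence $\mathbf H_{\mathcal A}^T\mathbf H_{\mathcal A}=\mathbf I_A-\mathbf G_{\mathcal A}^T\mathbf G_{\mathcal A}$, and its diagonal entries $1-\frac1N-\frac2N\sum_{k=1}^{K_1-1}\cos^2(k\theta_q)$ depend on $q$ once $K_1\ge 2$. The Gram matrix therefore does not have constant diagonal and its eigenvalues are not $1\pm c$. Rescaling columns to unit norm would change exactly the condition number you are trying to compute.

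The mechanism you propose is also not right in general. For $N=6$, $K_1=2$, $A=2$:
the adjacent edge pair $\{0,1\}$ gives $\kappa\approx 2.27$;
the adjacent central pair $\{2,3\}$ gives $\kappa\approx 1.20$;
the maximally separated pair $\{0,5\}$ gives $\kappa\approx 1.33$.
Adjacent columns become nearly (anti)collinear only when $N-K_1$ is small, via $\cos((N-r)\theta_q)=(-1)^q\sin(r\theta_q)$.

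To repair the argument:
\begin{itemize}
\item State the restriction $K_1\ge 2$, $A\ge 2$.
\item Compute the Gram matrix $\begin{pmatrix} a & c\\ c & b\end{pmatrix}$ from $\mathbf I-\mathbf G^T\mathbf G$ and use $\lambda_\pm=\frac{a+b}{2}\pm\sqrt{\left(\frac{a-b}{2}\right)^2+c^2}$.
\item Give one concrete instance. The numbers above, with $\mu=A=2$ and $\mathcal Q=\{0,1\}$ versus $\mathcal Q=\{2,3\}$, already suffice, since there $\mathcal A=\hat{\mathcal A}=\mathcal Q$ is forced.
\end{itemize}
Step 2 can be dropped. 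The two-sided bounds through $\mathbf H=\mathbf A\mathbf T\mathbf W$ carry $\kappa(\mathbf A)$, which grows rapidly with $N-K_1$, and $\mathbf W_{\mathcal A}$, which itself depends on the chosen nodes. So they cannot separate assignments. Also, the Vandermonde determinant involves differences within $\mathcal A$, not the cross differences between $j_b$ and $\mathcal A$ that make up $C_I^2$.
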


\begin{IEEEproof}
Since all Byzantine workers belong to the unreliable worker set, the true adversarial set satisfies $\mathcal A\subseteq\mathcal U$, while the detected adversarial set satisfies $\hat{\mathcal A}\subseteq\mathcal U$. Recall that the unreliable workers in $\mathcal U$ are assigned evaluation indices from the set $\mathcal Q$ through the mapping $\phi(\cdot)$. Consequently, the evaluation indices associated with the workers in the sets $\mathcal A$ and $\hat{\mathcal A}$ are determined by the assignment mapping $\phi(\cdot)$. The matrices $\mathbf H_{\mathcal A}$ and $\mathbf H_{\hat{\mathcal A}}$ are formed by selecting columns of the parity-check matrix corresponding to these evaluation indices. Therefore, different choices of the assignment mapping $\phi(\cdot)$ result in different column subsets, and hence different matrices $\mathbf H_{\mathcal A}$ and $\mathbf H_{\hat{\mathcal A}}$. Since the singular values of $\mathbf H_{\mathcal A}$ and $\mathbf H_{\hat{\mathcal A}}$ depend on the selected columns, different assignments result in different minimum and maximum singular values of these matrices, and consequently different condition numbers. Therefore, the conditioning of $\mathbf H_{\mathcal A}$ and $\mathbf H_{\hat{\mathcal A}}$ depends on the assignment mapping $\phi(\cdot)$. This completes the proof.
\end{IEEEproof}

\begin{corollary}
\label{cor:psi_effect_t3_t4}
The assignment mapping $\phi(\cdot)$ affects the approximation error bound in \eqref{eq:adv_bound} through the terms $T_3$ and $T_4$. In particular, $\phi(\cdot)$ influences the localization error probability $\mathrm{Prob}(E_{\mathrm{Loc}})$ and the conditioning of the matrices $\mathbf H_{\mathcal A}$ and $\mathbf H_{\hat{\mathcal A}}$, which appear in the localization-dependent terms $T_3$ and $T_4$.
\end{corollary}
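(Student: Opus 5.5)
The corollary should follow by combining Proposition~\ref{prop:psi effect on P_loc} and Proposition~\ref{prop:psi effect on HA and HAcap} with an inspection of the four terms in \eqref{eq:adv_bound}. The plan is to partition the right-hand side of \eqref{eq:adv_bound} into quantities that are invariant under $\phi(\cdot)$ and quantities that are not. Then we show that every $\phi$-dependent quantity appears only inside $T_3$ and $T_4$.

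\textbf{The terms $T_1$ and $T_2$.} For $T_1$, the factor $\Delta'$ depends only on $g(z)=f(u(z))$ and the parity of $N$. The constant $R$ depends only on $S$, here $S=0$. Finally, $\sin(\pi/(2N))$ depends only on $N$. None of these involve which worker receives which evaluation, because the set of evaluation points $\{z_i\}_{i\in[N]}$ is the same for every bijection $\phi$. For $T_2=C_1N\sigma_P^2$, $C_1$ arises from $\sum_r|w_r(z)|^2$ over all $N$ nodes. The precision noise is i.i.d. across workers, so permuting which worker holds which node leaves this term unchanged. Hence $T_1$ and $T_2$ are $\phi$-invariant.

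\textbf{The term $T_3$.} By Proposition~\ref{prop:psi effect on P_loc}, $\mathrm{Prob}(E_{\mathrm{Loc}})$ depends on $\phi$, since it depends on the index set $\mathcal Q$ and on the factors $C_I^2$ through Theorem~\ref{th:lower_bound-perror} and \eqref{eq:P_loc}. By Proposition~\ref{prop:psi effect on HA and HAcap}, the matrices $\mathbf H_{\mathcal A}$ and $\mathbf H_{\hat{\mathcal A}}$, and therefore $(\mathbf H_{\hat{\mathcal A}^*}^T)^\dagger$ and $\mathbf P_{\hat{\mathcal A}^*,\mathcal A}$, depend on $\phi$. The evaluation indices of $\mathcal A$ and $\hat{\mathcal A}^*$ are $\phi^{-1}$-images of worker sets. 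It follows that the selection matrices $S_{\mathcal A}$ and $S_{\hat{\mathcal A}^*}$, the constants $C_2$ and $C_w$ (which depend on $z$ and on the node locations in $\mathcal A$), and the index set $\mathcal I$ also change with $\phi$. Every one of these enters \eqref{eq:adv_bound} only inside $T_3$, so $T_3$ depends on $\phi$.

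\textbf{The term $T_4$.} Since $T_4=2\sigma_P\sqrt{C_1N\Psi}$ and $\Psi$ is exactly the expression forming $T_3$, $T_4$ inherits its dependence on $\phi$ from $T_3$.

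\textbf{Main obstacle.} The delicate point is the claim that $C_1$ and $T_2$ are independent of $\phi$. Strictly speaking, the Berrut weights $w_r(z)$ are attached to nodes, not to workers. Under the i.i.d. precision-noise model, relabeling workers does not change the distribution of $\sum_r w_r(z)p_r$. I would state this explicitly from the derivation in Appendix~\ref{proof:approx_err_bound_adv}. A second caveat concerns the claim that $T_3$ and $T_4$ depend on $\phi$: this is only \emph{dependence}, not a proof that the bound strictly varies with $\phi$. If strictness were required, I would exhibit two index sets $\mathcal Q$ giving different values of $C_I^2$.
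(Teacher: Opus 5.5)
Your argument is essentially the paper's: you combine Proposition~\ref{prop:psi effect on P_loc} and Proposition~\ref{prop:psi effect on HA and HAcap}, note that $T_1$ and $T_2$ depend only on $N$, $S$, $\sigma_P^2$ and the full node set, and conclude that $\phi$ enters $T_3$ through $\mathrm{Prob}(E_{\mathrm{Loc}})$ and $\mathbf H_{\mathcal A},\mathbf H_{\hat{\mathcal A}}$, and enters $T_4$ through $\Psi$. One minor slip: in the appendix both $C_1$ and $C_w=\max_{r\in[N]}|w_r(z)|^2$ are maxima over all $N$ nodes rather than sums, so $C_w$ does not depend on $\phi$; among the constants, only $C_2=\|(\mathbf H_{\mathcal A}^T)^\dagger\|_2^2C_w$ does, and this does not affect your conclusion.
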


The above corollary follows directly from Propositions~\ref{prop:psi effect on P_loc} and \ref{prop:psi effect on HA and HAcap}. In particular, Proposition~\ref{prop:psi effect on P_loc} establishes that the assignment mapping $\phi(\cdot)$ affects the localization error probability $\mathrm{Prob}(E_{\mathrm{Loc}})$, while Proposition~\ref{prop:psi effect on HA and HAcap} establishes that $\phi(\cdot)$ affects the conditioning of the matrices $\mathbf H_{\mathcal A}$ and $\mathbf H_{\hat{\mathcal A}}$. Consequently, the unreliable workers, which include the Byzantine workers, affect the approximation error primarily through the error-localization block of the DCT decoder. From the approximation error bound in \eqref{eq:adv_bound}, the assignment-dependent terms are $T_3$ and $T_4$. Specifically, $T_3$ consists of components corresponding to correct and incorrect localization events. The correct-localization component depends on the localization error probability $\mathrm{Prob}(E_{\mathrm{Loc}})$, whereas the incorrect-localization component depends not only on $\mathrm{Prob}(E_{\mathrm{Loc}})$ but also on the conditioning of the matrices $\mathbf H_{\mathcal A}$ and $\mathbf H_{\hat{\mathcal A}}$. Furthermore, $T_4$ depends on the localization error performance of DCT decoder through its dependence on $T_3$, as shown in \eqref{eq:adv_bound}. In contrast, the terms $T_1$ and $T_2$ depend only on the system parameters $N$, $S$, $\sigma_P^2$, and are therefore independent of the assignment mapping $\phi(\cdot)$. Consequently, the effect of the assignment mapping on the approximation error bound is captured entirely through the terms $T_3$ and $T_4$. Hence, the effect of the assignment mapping $\phi(\cdot)$ on the approximation error bound is captured entirely through the terms $T_3$ and $T_4$. Consequently, minimizing the approximation error bound in \eqref{eq:adv_bound} is equivalent to minimizing the combined terms $T_3+T_4$.

In this context, since the master has $\binom{N}{\mu}$ possible choices for selecting the set of evaluation indices $\mathcal Q$ to be assigned to the unreliable worker set $\mathcal U$, the objective is to identify the optimal set of evaluation indices, denoted by $\mathcal Q^\star$, for assignment to the workers in $\mathcal U$. Specifically, $\mathcal Q^\star$ is chosen to minimize the contribution of the terms $T_3$ and $T_4$, and consequently the approximation error bound in \eqref{eq:adv_bound}. Moreover, since the true adversarial set satisfies $\mathcal A\subseteq\mathcal U$, the approximation error depends on all possible subsets of adversarial locations $\mathcal A\subseteq\mathcal U$ satisfying $|\mathcal A|=A$. Therefore, rather than optimizing the approximation error bound for a particular adversarial set, we consider the average approximation error over all subsets $\mathcal A\subseteq\mathcal U$ satisfying $|\mathcal A|=A$. This leads to the following optimization problem for determining the optimal assignment of evaluation indices to the unreliable worker set $\mathcal U$.
\begin{mdframed}
\begin{problem}
\label{prob:final_weighted}
For the RBACC system with $N$ workers, for given $K_1$, $\mu>0$, $\tau>0$, $A>0$, and $\sigma_P^2>0$, solve
\begin{align}
\mathcal Q^\star
=\arg\min_{\substack{\mathcal Q\subseteq[N]\\|\mathcal Q|=\mu}}
\mathbb E_{\substack{\mathcal A\subseteq\mathcal U\\|\mathcal A|=A}}
\left[
T_3+T_4
\right],
\end{align}
where $T_3$ and $T_4$ are defined in \eqref{eq:adv_bound}.
\end{problem}
\end{mdframed}

Note that solving the above problem is not analytically tractable, since the localization error probability $\mathrm{Prob}(E_{\mathrm{Loc}})$ does not admit a closed-form expression. Therefore, to solve Problem~\ref{prob:final_weighted}, we derive a computationally tractable surrogate for $\mathrm{Prob}(E_{\mathrm{Loc}})$ by characterizing the localization error event in terms of pairwise error events. Towards this end, we first upper bound $\mathrm{Prob}(E_{\mathrm{Loc}})$ using a union bound over pairwise error events and subsequently use the lower bound on the pairwise error probability presented in Theorem~\ref{th:lower_bound-perror}. This leads to the following proposition.

\begin{proposition}
\label{prop:surrogate_ploc}
For the RBACC scheme with parameters $N,\mu>0,A>0,\tau>0,\eta$ and unreliable worker set $\mathcal U$, a computationally tractable surrogate for the localization error probability $\mathrm{Prob}(E_{\mathrm{Loc}})$ is given by
\begin{equation}
\label{eq:surrogate_ploc}
P_{\mathrm{Loc}}=\mathbb E_{j_b}
\left[\exp\!\left(-\frac{\eta g(\mathcal A,j_b)\gamma_b}
{8\sigma_{{eff}}^2}\right)\right],
\end{equation}
where the expectation is taken uniformly over all evaluation indices $j_b$ assigned to the workers in $\mathcal U\setminus\mathcal A$.
\end{proposition}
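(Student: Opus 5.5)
The plan is to obtain the surrogate in \eqref{eq:surrogate_ploc} by restricting the localization analysis of Section~\ref{subsec:bounds on Perror} to the unreliable worker set. Then bound $\mathrm{Prob}(E_{\mathrm{Loc}})$ by pairwise error events, insert the pairwise expression of Theorem~\ref{th:lower_bound-perror}, and finally replace the worst-case $\max_{b,a}$ step used in \eqref{eq:approx pep} by a uniform average over the competing indices. This mirrors exactly how the surrogate $\widetilde{\mathrm{Prob}}(E_{\mathrm{Loc}})$ was built earlier. The only genuinely new ingredient is the reduced candidate set $\mathcal Q$ coming from the proof of Proposition~\ref{prop:psi effect on P_loc}.

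\textbf{Step 1 (restricted union bound).} By the modified decoder of Proposition~\ref{prop:psi effect on P_loc}, $\bar\Lambda$ is evaluated only at nodes $X_q$ with $q\in\mathcal Q$. Since $\mathcal A\subseteq\mathcal U$, a localization error requires some true index $i_a$ to be beaten by an index $j_b$ assigned to a worker in $\mathcal U\setminus\mathcal A$. Repeating the argument leading to \eqref{eq:P_loc} gives
\[
\mathrm{Prob}(E_{\mathrm{Loc}})\le \sum_{j_b\in\mathcal U\setminus\mathcal A}\sum_{a=1}^{A}PEP_{j_b,i_a}.
\]
This sum has $A(\mu-A)$ terms rather than $A(N-A)$.

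\textbf{Step 2 (pairwise term).} Invoke Theorem~\ref{th:lower_bound-perror} for each pair. Write $g(\mathcal A,j_b)=C_I^2$, noting that this quantity depends on $j_b$ and $\mathcal A$ but not on $a$. Set $\gamma_b$ to be a representative value of $\kappa_{ba}/(1+\kappa_{ba})$ for fixed $b$, for example $\max_a$, in the same spirit as $\gamma_{\max}$ in the earlier surrogate. Then each pairwise term is governed by
\[
\sqrt{\gamma_b}\,\exp\!\bigl(-\eta\, g(\mathcal A,j_b)\gamma_b/(8\sigma_{\mathrm{eff}}^2)\bigr).
\]
The prefactor $\sqrt{\gamma_b}\le 1$ and the factor $A$ coming from the sum over $a$ do not depend on the assignment beyond a constant. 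They can therefore be dropped for optimization purposes, as was done when passing to $\widetilde{\mathrm{Prob}}(E_{\mathrm{Loc}})$.

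\textbf{Step 3 (averaging).} Replacing $\max_b$ by the empirical mean over the $\mu-A$ indices $j_b\in\mathcal U\setminus\mathcal A$ gives \eqref{eq:surrogate_ploc}, up to the constant $A(\mu-A)$. This constant is fixed once $\mu$ and $A$ are fixed, so minimizing over $\mathcal Q$ is unaffected. I would stress that the mean retains the dependence on every competing index, which is what makes it sensitive to $\phi(\cdot)$, whereas a max would see only the worst pair. Tractability follows because $g$ and $\gamma_b$ are closed-form in the cosines of the assigned Chebyshev nodes, and $\sigma_{\mathrm{eff}}^2$ is given by Theorem~\ref{th:sigma_eff}.

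\textbf{Main obstacle.} Theorem~\ref{th:lower_bound-perror} provides a \emph{lower} bound on $PEP$. Substituting it into an upper (union) bound is therefore not rigorous, so the statement can only be justified as a surrogate, not as a bound. I would handle this explicitly. First, I would argue that in the high-SNR regime the Gaussian-tail expression in Theorem~\ref{th:lower_bound-perror} is tight up to polynomial prefactors, so the exponent $\eta g\gamma_b/(8\sigma_{\mathrm{eff}}^2)$ captures the true decay. Second, I would note that the surrogate is monotone in the same quantities as the true probability, by Corollary~\ref{corr:dec var} and the dependence on $C_I^2$ identified in Proposition~\ref{prop:psi effect on P_loc}. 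Together these points justify using it for ranking candidate sets $\mathcal Q$.
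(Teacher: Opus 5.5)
Your derivation follows the paper's route: a union bound over pairs $(j_b,i_a)$ with $j_b\in\mathcal U\setminus\mathcal A$, substitution of the Theorem~\ref{th:lower_bound-perror} expression, collapsing the sum over $a$ into a single factor $\gamma_b$, dropping the $\sqrt{\cdot}$ prefactor and counting constants, and averaging over $j_b$ (the paper's proof also averages over $\mathcal A\subseteq\mathcal U$ at this stage, and that average reappears in Problem~\ref{prob:final_surrogate_max}). The differences are minor: the paper takes $\gamma_b=\kappa_b/(1+\kappa_b)$ with $\kappa_b=\frac{1}{A}\sum_{a=1}^{A}\kappa_{ba}$ rather than a maximum over $a$, and it substitutes the lower bound into the union bound without comment, so your caveat that the result is only a ranking surrogate rather than a bound is a fair addition; note also that $\sqrt{\gamma_b}$ does depend on the assignment, so it is only the bound $\sqrt{\gamma_b}\le 1$, not assignment-independence, that justifies dropping it.
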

\begin{IEEEproof}
Since $\mathrm{Prob}(E_{\mathrm{Loc}})$ does not admit a closed-form expression, we upper bound it using a union bound over pairwise error events, resulting in an analytically tractable expression involving pairwise error probabilities. Using the union bound, the localization error probability satisfies
\begin{equation}
\label{eq:loc surro_place}
\mathrm{Prob}(E_{\mathrm{Loc}})
\leq
\sum_{b=1}^{\mu-A}\sum_{a=1}^{A}
PEP_{j_b,i_a},
\end{equation}
where $PEP_{j_b,i_a}$ denotes the pairwise error probability between the evaluation index $j_b$ assigned to a non-adversarial worker and the evaluation index $i_a$ assigned to an adversarial worker. Therefore, \eqref{eq:loc surro_place} provides a tractable upper bound on $\mathrm{Prob}(E_{\mathrm{Loc}})$. Since the Byzantine workers are selected from the unreliable worker set $\mathcal U$, the localization error probability depends on all possible adversarial subsets $\mathcal A\subseteq\mathcal U$ satisfying $|\mathcal A|=A$. Therefore, for given $\mu>0$, $\tau>0$, and $A$, an upper bound on the average localization error probability based on pairwise error probabilities is given by
\begin{small}
\begin{IEEEeqnarray}{rcl}
\label{eq:place:2}
\mathbb{E}_{\substack{\mathcal A\subseteq\mathcal U\\|\mathcal A|=A}}
\Bigg[
\sum_{b=1}^{\mu-A}
\sum_{a=1}^{A}
\mathrm{Prob}
\Big(|\bar{\Lambda}_{gh}(X_{j_b})|^2
\leq |\bar{\Lambda}_{gh}(X_{i_a})|^2
\Big)
\Bigg].
\end{IEEEeqnarray}
\end{small}

Using the analytical expression for the pairwise error probability from Theorem \ref{th:lower_bound-perror}, we define the following computationally tractable surrogate metric
\begin{scriptsize}
\begin{equation}
\label{eq:avearge ploc}
\mathbb{E}_{\substack{\mathcal A\subseteq\mathcal U\\|\mathcal A|=A}}
\Bigg[\sum_{b=1}^{\mu-A}\sum_{a=1}^{A}
\sqrt{\frac{\kappa_{ba}}{1+\kappa_{ba}}}
\exp\!\left(-\frac{\eta g(\mathcal A,j_b)\kappa_{ba}}
{8\sigma_{{eff}}^2(1+\kappa_{ba})}\right)\Bigg].
\end{equation}
\end{scriptsize}
Since $\kappa_{ba}$ corresponds to the pairwise geometric separation between a non-adversarial evaluation index $j_b$ and an adversarial evaluation index $i_a$, to obtain a computationally tractable metric,
we aggregate the pairwise geometric separations
associated with a non-adversarial index $j_b$
through the average quantity $\kappa_b=\frac{1}{A}\sum_{a=1}^{A}\kappa_{ba},$ where $\kappa_{ba}=\frac{4}{\eta\sum_{k=1}^{A}\left(z_{i_a}^{k}-z_{j_b}^{k}\right)^2}.$ Further, defining
$\gamma_b=\frac{\kappa_b}{1+\kappa_b},$ we obtain the computationally tractable surrogate $P_{\mathrm{Loc}}$ given in \eqref{eq:surrogate_ploc}. This completes the proof.
\end{IEEEproof}

Note that, in the approximation error bound in \eqref{eq:adv_bound}, the incorrect-localization component in the term $T_{3}$ depends on the worst-case detected adversarial set $\hat{\mathcal A}^{*}$, which maximizes the corresponding reconstruction error term over all incorrect detected sets. However, instead of optimizing with respect to the worst-case detected set $\hat{\mathcal A}^{*}$, we replace the corresponding term by its average over all candidate detected sets $\hat{\mathcal A}\subseteq\mathcal U$ satisfying $|\hat{\mathcal A}|=A$ and $\hat{\mathcal A}\neq\mathcal A$ for our optimization framework. Moreover, since the true adversarial set and the detected adversarial set satisfy $\mathcal A\subseteq\mathcal U$ and $\hat{\mathcal A}\subseteq\mathcal U$, respectively, the localization error probability and the conditioning of the corresponding submatrices depend on the evaluation indices assigned to the workers in $\mathcal U$, and hence on the assignment mapping $\phi(\cdot)$. Therefore, by replacing $\mathrm{Prob}(E_{\mathrm{Loc}})$ with the surrogate $P_{\mathrm{Loc}}$ defined in \eqref{eq:surrogate_ploc}, and replacing the worst-case detected set $\hat{\mathcal A}^{*}$ by its average over all candidate detected sets, we obtain the surrogate terms $\tilde T_3$ and $\tilde T_4$ corresponding to $T_3$ and $T_4$ in \eqref{eq:adv_bound}. This leads to the following surrogate optimization problem for assigning evaluation indices to the unreliable worker set $\mathcal U$. The solution of this problem is denoted by $\mathcal Z^\star$.
\begin{mdframed}
\begin{problem}
\label{prob:final_surrogate_max}
For the RBACC system with $N$ workers, for given $K_{1}$, $\mu>0$, $\tau>0$, $A>0$, and $\sigma_{P}^{2}>0$, solve
\begin{scriptsize}
\begin{align}
\mathcal Z^\star
=\arg\min_{\substack{\mathcal Q\subseteq[N]\\|\mathcal Q|=\mu}}
\mathbb E_{\substack{\mathcal A\subseteq\mathcal U\\|\mathcal A|=A}}
\Big[
\tilde T_3+\tilde T_4
\Big],
\end{align}
\begin{align} \text{where}\quad \tilde T_3 &=C_2(1-P_{\mathrm{Loc}})\sigma_P^2(A+A^2) +C_w P_{\mathrm{Loc}} \,\mathbb E_{\substack{\hat{\mathcal A}\subseteq\mathcal U |\hat{\mathcal A}|=A, \hat{\mathcal A}\neq\mathcal A}} \Big(\sigma_A^2 \quad \|S_{\mathcal A}-S_{\hat{\mathcal A}} (H_{\hat{\mathcal A}})^\dagger H_{\mathcal A}\|_2^2 + \sigma_P^2 \|S_{\hat{\mathcal A}} (H_{\hat{\mathcal A}})^\dagger\|_2^2\Big)(2A+4A^2), \end{align} \end{scriptsize}\ \text {and} $\tilde T_4 =2\sigma_P\sqrt{C_1N_1 \tilde T_3}$, and $P_{\mathrm{Loc}}$ is defined in \eqref{eq:surrogate_ploc}.
\end{problem}
\end{mdframed}

The solution $\mathcal Z^\star$ obtained by solving the above problem represents the optimal set of evaluation indices assigned to the unreliable workers based on the surrogate approximation error bound for RBACC presented in \eqref{eq:adv_bound}. Since the set of unreliable workers $\mathcal U\subseteq[N]$ is known to the master, the evaluation indices in $\mathcal Z^\star$ are assigned to the workers in $\mathcal U$ through the mapping $\phi(\cdot)$. In particular, for each worker index $j\in\mathcal U$, there exists a unique evaluation index $i\in\mathcal Z^\star$ such that $\phi(i)=j$, ensuring that the encoded evaluation $\mathbf U_i$ is assigned to the worker indexed by $j$. In the next section, we present the experimental results in terms of the average relative error of the RBACC framework using the assignment set obtained by solving Problem~\ref{prob:final_surrogate_max}, and compare its performance with that of the empirical baseline and two naive assignment strategies.


\subsection{Experimental Results}
In order to demonstrate the effectiveness of the proposed customized assignment strategy, we compare the average relative error of the RBACC framework using the set of evaluation indices assigned to the unreliable workers obtained from the surrogate optimization problem in Problem~\ref{prob:final_surrogate_max} with an empirical baseline. As a baseline, we consider the set of evaluation indices obtained by directly minimizing the average relative error of the RBACC framework through Monte Carlo simulations. Specifically, for a given $N$ and $\mu$, we consider each candidate set of evaluation indices among the $\binom{N}{\mu}$ possible assignments to the unreliable workers. For each candidate set, we run the complete RBACC framework over $10^3$ independent Monte Carlo iterations and compute the corresponding average relative error. The candidate set achieving the minimum average relative error is denoted by $\mathcal R^\star$.
 On the other hand, the proposed approach determines the set for assignment by solving the surrogate optimization problem in Problem~\ref{prob:final_surrogate_max}, and the resulting set of evaluation indices is denoted by $\mathcal Z^\star$. The sets $\mathcal R^\star$ and $\mathcal Z^\star$ obtained at different precision noise variances are reported in Table~\ref{tab:optimal and suboptimal}. In addition to these two assignments, we also consider two naive assignment strategies for the unreliable workers in order to share the evaluation of encoded shares. In the first strategy, the unreliable workers are assigned contiguous evaluation indices, and in the second strategy, the unreliable workers are assigned evaluation indices selected uniformly at random from the $N$ evaluation indices.

Further, we compute the average relative error of the RBACC framework using the index sets obtained from both methods at their corresponding precision noise variances reported in Table~\ref{tab:optimal and suboptimal}. Specifically, we compute the average relative error at different precision noise variances while evaluating the function $f(\mathbf X)=\mathbf X\sin(\mathbf X),\quad \mathbf X\in\mathbb R^{20\times 5}$. The average relative error is computed over $10^3$ iterations. For each assignment strategy, a set of $A$ Byzantine workers is selected uniformly at random from the corresponding set of $\mu$ unreliable workers in each iteration. The parameters used in the experiments are $N=11, K=4, \mu=6, \tau=5, A=2, K_1=7, N_1=N, \eta=10^3.$ The non-zero entries of the adversarial noise matrices $\{\mathbf E_{i_a}\}$ are generated independently according to $\mathcal N(10,10^4)$.

 \begin{figure}[ht]
\centering
\includegraphics[scale = 0.25]{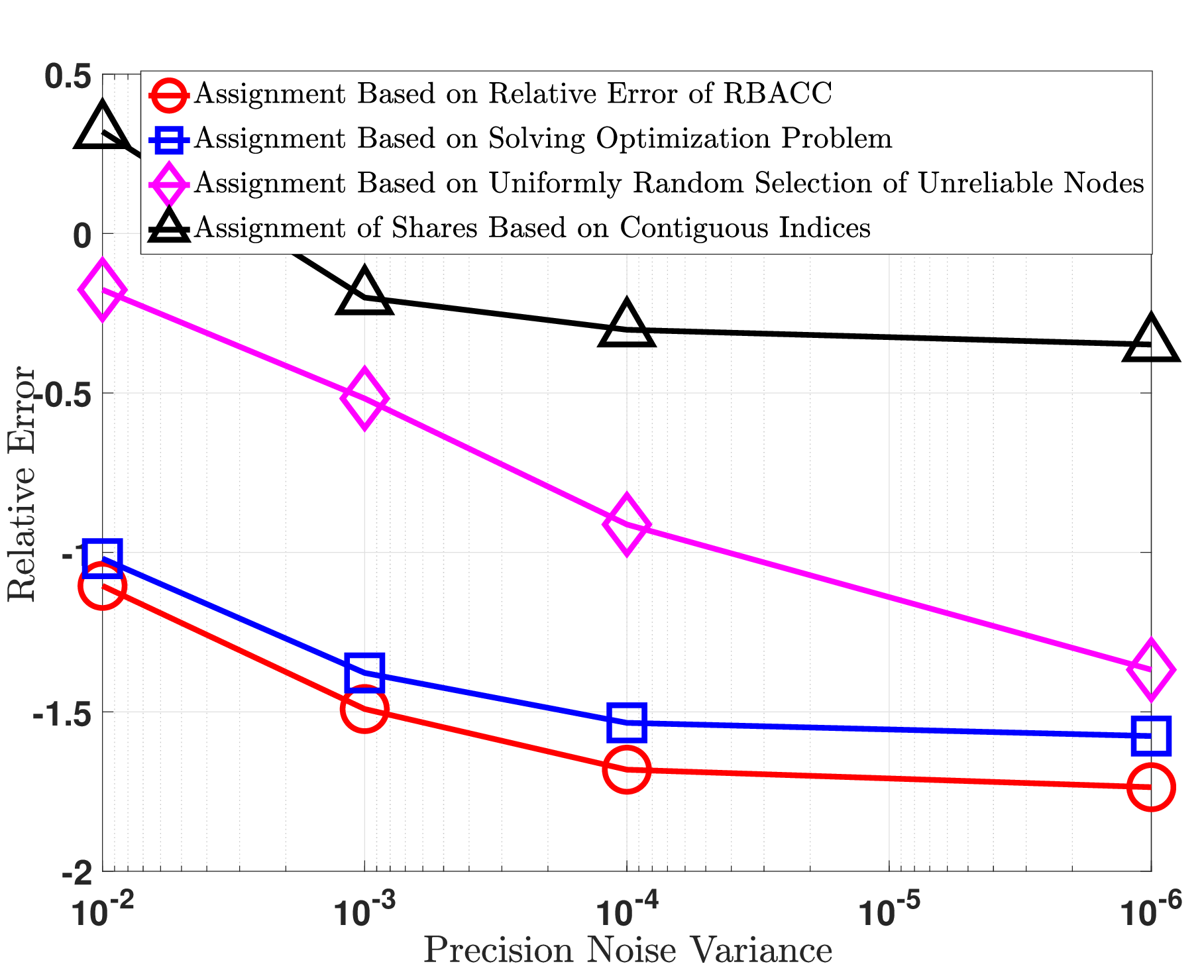}
\vspace{-0.1cm}
\caption{Average relative error (in $\log_{10}$ scale) when the evaluation indices assigned to the unreliable workers are obtained by: (i) empirically minimizing the average relative error of the framework, (ii) solving Problem~\ref{prob:final_surrogate_max}, (iii) using random indices, and (iv) using contiguous indices.}

\label{fig:accuracy of optimal vs suboptimal}
\end{figure}

Based on the experiments, the average relative error results are presented in Fig.~\ref{fig:accuracy of optimal vs suboptimal}. As illustrated in Fig.~\ref{fig:accuracy of optimal vs suboptimal}, the proposed assignment strategy consistently outperforms both naive assignment strategies, contiguous and random assignment, by achieving a significantly lower average relative error. Furthermore, although the proposed assignment strategy exhibits a slight performance degradation compared to the empirical baseline, it achieves an average relative error that remains very close to that of the empirical baseline. It is important to highlight that the empirical baseline is obtained by exhaustively searching over all $\binom{N}{\mu}$ possible assignments of the evaluation indices to the unreliable workers and selecting the assignment that minimizes the average relative error through Monte Carlo simulations. Consequently, the empirical baseline is computationally expensive and impractical to implement in real distributed systems. In contrast, the proposed assignment strategy is obtained by solving the analytical surrogate optimization problem in Problem~\ref{prob:final_surrogate_max}, which relies only on the derived approximation error bound. Since the proposed method is based on analytical expressions rather than exhaustive simulations, it is significantly easier to compute, and implement in practice, while achieving performance that remains close to the empirical baseline.

\begin{table}[htbp]
\centering
\begin{small}
\begin{tabular}{|c|l|l|}
\hline
\begin{tabular}[c]{@{}c@{}}Precision \\ Noise Variance\end{tabular} 
& $\mathcal{R}^{*}$ 
& $\mathcal{Z}^{*}$ \\ 
\hline

$10^{-2}$ 
& 3, 4, 5, 7, 8, 9 
& 1, 4 ,5, 7, 8, 11  \\ 
\hline

$10^{-3}$ 
& 3, 4, 5, 7, 8, 9 
& 2, 4 ,5, 7, 8, 10  \\ 
\hline

$10^{-4}$ 
&  3, 4, 5, 7, 8, 9 
& 2, 4 ,5, 7, 8, 10  \\ 
\hline

$10^{-6}$ 
& 3, 4, 5, 7, 8, 9 
& 2, 4 ,5, 7, 8, 10 \\ 
\hline
\end{tabular}
\end{small}
\caption{Comparison of the unreliable worker index sets obtained by empirically minimizing the average relative error of the RBACC framework, denoted by $\mathcal{R}^{\star}$, and by solving Problem~\ref{prob:final_surrogate_max}, denoted by $\mathcal{Q}^{\star}$, for different precision noise variances. The parameters used in the experiments are $N=11$, $K=4$, $\mu=6$, $A=2$, $\eta=10^3$, $K_1=7$, $N_1=N$, and $f(\mathbf X)=\mathbf X\sin(\mathbf X)$.}
\label{tab:optimal and suboptimal}
\end{table}
\begin{remark}

As observed in Table~\ref{tab:optimal and suboptimal}, the optimal index sets are concentrated around the middle evaluation indices, consistent with Proposition~\ref{prop:psi effect on P_loc}. This is because the Chebyshev points of the first kind are more widely spaced near the center of the interval, leading to better-separated evaluation points assigned to the unreliable workers, which improves the error-localization performance of the DCT decoder and consequently reduces the reconstruction error.
\end{remark}
\section{Summary and Future Work}
\label{sec:summary}

In this work, we developed a coding-theoretic framework for providing robustness against Byzantine workers in Berrut-Approximated Coded Computing (BACC). We first investigated whether the worker computations generated by the original BACC framework admit the algebraic structure required by existing BCH-like decoding algorithms. By analyzing the linear code induced by the Chebyshev points of the second kind, we showed that it does not directly admit the algebraic structure required by existing BCH-like syndrome-decoding frameworks. Consequently, the original BACC construction cannot directly leverage efficient syndrome-based error detection and correction techniques. Motivated by this observation, we proposed the Robust Berrut-Approximate Coded Computing (RBACC) framework by employing Chebyshev points of the first kind, thereby establishing a coding-theoretic connection between Berrut rational interpolation and DCT codes. This connection enables syndrome-based error localization, estimation of the number of Byzantine workers, and error correction, while preserving the bounded approximation error guarantees of the original BACC framework.

We then established rigorous theoretical guarantees for the proposed framework. In particular, we proved that RBACC preserves the logarithmic growth of the Lebesgue constant and the bounded approximation error guarantees of BACC in the presence of stragglers. Furthermore, we derived analytical bounds on the reconstruction error in the presence of Byzantine workers by explicitly incorporating the localization error probability of the DCT decoder under finite-precision computations. Our analysis also revealed a fundamental tradeoff between approximation accuracy and error-localization capability through the choice of the DCT code dimension, leading to an optimization framework for selecting the optimal code dimension. Finally, we considered a practical RBACC framework comprising reliable and unreliable workers and proposed a reliability-aware evaluation-point assignment strategy for minimizing the approximation error of the framework. To this end, we formulated an optimization problem based on the theoretically derived approximation error bound and subsequently developed a reliability-aware encoding strategy by appropriately assigning evaluation points to workers according to their reliability profiles. Finally, both theoretical analysis and numerical results demonstrate that the proposed RBACC framework consistently outperforms existing framework BACC, ApproxIFER, while preserving the numerical stability, straggler resilience, and bounded approximation error guarantees of the original BACC framework.

Future work includes applying the proposed RBACC framework to distributed learning problems in the presence of Byzantine workers. Another promising direction is the development of robust algorithms for RBACC that leverage DCT code methods to improve error-localization performance in the presence of finite-precision noise.

\begin{appendix}
\subsection{Proof of Proposition \ref{prop:no BCH}}
\label{proof:no bch}

Following the BCH-like characterization of DCT codes in \cite{b9,b10}, we investigate whether an analogous BCH-like factorization can be obtained when the evaluation points are the Chebyshev points of the second kind, i.e., $z_j=\cos\!\left(\frac{j\pi}{N}\right),$ for $j=0,1,\ldots,N-1,$
which are used in the BACC scheme \cite{b7} as the worker evaluation points for the encoding rational function $u(z)$. Following the DCT-code construction in \cite{b9}, for a given $N$, $K_{1}$, let $d=N-K_1$. For the evaluation points
$z_j=\cos\!\left(\frac{j\pi}{N}\right),$ for $j=0,1,\ldots,N-1,$
corresponding to the Chebyshev points of the second kind, the parity-check matrix obtained by selecting the highest $d$ cosine frequencies is given by

\begin{scriptsize}

\[
\mathbf{H}^T=
\begin{bmatrix}
1 & \cos\!\left(\frac{(N-d)\pi}{N}\right) & \cdots &
\cos\!\left(\frac{(N-d)(N-1)\pi}{N}\right)\\
\vdots & \vdots & \ddots & \vdots\\
1 & \cos\!\left(\frac{(N-2)\pi}{N}\right) & \cdots &
\cos\!\left(\frac{(N-2)(N-1)\pi}{N}\right)\\
1 & \cos\!\left(\frac{(N-1)\pi}{N}\right) & \cdots &
\cos\!\left(\frac{(N-1)(N-1)\pi}{N}\right)
\end{bmatrix}.
\]
\end{scriptsize}

\noindent Reversing the order of the rows so that the frequency indices become $r=1,2,\ldots,d$, and using the identity $\cos\!\left(\frac{(N-r)j\pi}{N}\right)=\cos\!\left(j\pi-\frac{rj\pi}{N}\right)=(-1)^j\cos\!\left(\frac{rj\pi}{N}\right),$ the above parity-check matrix can be rewritten as

\begin{scriptsize}
\[\mathbf{H}^T =
\begin{bmatrix}
1 & -\cos \frac{\pi}{N} & \cos \frac{2\pi}{N} & \cdots &
(-1)^{N-1}\cos \frac{(N-1)\pi}{N}\\
1 & -\cos \frac{2\pi}{N} & \cos \frac{4\pi}{N} & \cdots &
(-1)^{N-1}\cos \frac{2(N-1)\pi}{N}\\
\vdots & \vdots & \vdots & \ddots & \vdots\\
1 & -\cos \frac{d\pi}{N} & \cos \frac{2d\pi}{N} & \cdots &
(-1)^{N-1}\cos \frac{d(N-1)\pi}{N}
\end{bmatrix}.\]
\end{scriptsize}
\noindent Using the trigonometric identity
\begin{align}
\cos(n\beta)&=\frac{1}{2}
\Big\{(2\cos\beta)^n-n(2\cos\beta)^{n-2}+\frac{n}{2}\binom{n-3}{1}
(2\cos\beta)^{n-4}-\cdots\Big\},
\label{eq:cos_identity}
\end{align}
\noindent each term $\cos(n\beta)$, for $n=1,2,\ldots,d$, appearing in the rows of $\mathbf H^T$ can be expressed as a polynomial in $\cos\beta$. Using the identity $T_n(z)=\cos(n\arccos z)$, we obtain $T_n(z_j)=T_n\!\left(\cos\!\left(\frac{j\pi}{N}\right)\right)=\cos\!\left(\frac{nj\pi}{N}\right).$ Therefore, the rows of $\mathbf H^T$ correspond to evaluations of the Chebyshev polynomial sequence $T_1(z),T_2(z),\ldots,T_d(z),$ where $T_1(z)=z,\quad T_2(z)=2z^2-1,\quad T_3(z)=4z^3-3z.$ The first few polynomial expansions can be written as
\[\begin{bmatrix}
T_1(z)\\
T_2(z)\\
T_3(z)
\end{bmatrix}
=
\begin{bmatrix}
0 & 1 & 0 & 0\\
-1 & 0 & 2 & 0\\
0 & -3 & 0 & 4
\end{bmatrix}
\begin{bmatrix}
1\\
z\\
z^2\\
z^3
\end{bmatrix}.
\]
Using \eqref{eq:cos_identity} and substituting $z=\cos\beta$, we obtain
$T_r(z)=\cos(r\beta)=\frac12\Big\{(2z)^r-r(2z)^{r-2}+\cdots\Big\}.$ Therefore, $T_{r}(z)=2^{r-1}z^r+\sum_{k=0}^{r-2} c_{r,k}z^k,$ which shows that $T_r(z)$ is a polynomial of degree $r$ whose leading coefficient is $2^{r-1}$. Consequently, the parity-check matrix admits the representation $\mathbf H^T=\widetilde{\mathbf A}\widetilde{\mathbf T}\mathbf W,$ where

\[\widetilde{\mathbf T}=
\begin{bmatrix}
1 & 1 & \cdots & 1\\
z_0 & z_1 & \cdots & z_{N-1}\\
z_0^2 & z_1^2 & \cdots & z_{N-1}^2\\
\vdots & \vdots & \ddots & \vdots\\
z_0^d & z_1^d & \cdots & z_{N-1}^d
\end{bmatrix}_{(d+1)\times N},\]

\noindent and $\widetilde{\mathbf A}\in\mathbb R^{d\times(d+1)}$ contains the coefficients arising from the Chebyshev-polynomial expansions,
and $\mathbf W=\operatorname{diag}(
(-1)^0,(-1)^1,\allowbreak\ldots,\allowbreak(-1)^{N-1})$.
In the BCH-like characterization of DCT codes in \cite{b9}, one seeks a factorization of the form $\mathbf H^T=\mathbf A\mathbf T\mathbf W,$ where $\mathbf T\in\mathbb R^{d\times N}$ is a monomial Vandermonde matrix and $\mathbf A\in\mathbb R^{d\times d}$ is square and nonsingular. In contrast, the factorization obtained above necessarily involves the monomial basis $\{1,z,z^2,\ldots,z^d\},$ which consists of $d+1$ basis functions. The constant basis function is indispensable since $T_2(z)=2z^2-1$ contains a nonzero constant term. Moreover, for every $r\ge1$, the Chebyshev polynomial $T_r(z)=2^{r-1}z^r+\sum_{k=0}^{r-2}c_{r,k}z^k$ has leading coefficient $2^{r-1}\neq0$. Therefore, $T_r(z)$ has degree $r$, and the monomial $z^r$ cannot be generated by any linear combination of lower-degree monomials. Consequently, representing the Chebyshev polynomial sequence
$T_1(z),T_2(z),\ldots,T_d(z)$ in the standard monomial basis necessarily requires all $d+1$ basis functions
$\{1,z,\ldots,z^d\}.$ Hence, the corresponding coefficient matrix necessarily has dimension
$d\times(d+1)$.

Therefore, the construction based on the Chebyshev points of the second kind does not directly yield the square nonsingular monomial Vandermonde-based factorization required in the BCH-like characterization of \cite{b9}. Consequently, the BCH-like syndrome-decoding derivation developed in \cite{b9,b10}, which relies on such a square nonsingular factorization, is not obtained directly for the linear code generated by $\mathbf Y$. Therefore, those decoding algorithms cannot be invoked directly for the linear code generated using $\mathbf Y$ without additional algebraic transformations or structural modifications. This completes the proof.

\subsection{Proof for Lemma \ref{lemma:lebesgue}}
\label{proofof lemma1}
Let $\mathcal{X}_{M}=\{x_{k}\}_{k=0}^{M-1}$ be an ordered set of distinct interpolation nodes chosen from the set $\mathcal{X}_N=\{x_{k}\}_{k=0}^{N-1}$, where $M = N - S$. The nodes are defined as $x_{k}= \cos \left( \frac{(2k + 1)\pi}{2N} \right),\; k = 0,1,\dots,N-1,$ which correspond to the Chebyshev points of the first kind. Without loss of generality, we assume that the nodes in $\mathcal{X}_M$ are ordered increasingly, i.e., $x_0 < x_1 < \cdots < x_{M-1}.$ Before proving Theorem~\ref{Th:theorem 1}, we first determine constants $C, R \geq 1$ such that the three conditions in \cite[Definition 7]{b7} are satisfied for all $x_k \in \mathcal{X}_M$, where $k = 0,1,\dots,M-1$. These constants will be used to bound the Lebesgue constant for the proposed RBACC scheme.

Following \cite[Definition~7]{b7} and \cite[Definition 2.1]{b18}, the family
$\mathcal{X}=\{\mathcal{X}_{M}\}_{M\in\mathbb{N}}$ is said to be
well-spaced if there exist constants $C,R\ge1$, independent of $M$,
such that the following conditions hold

\begin{scriptsize}
\begin{IEEEeqnarray}{r}
\label{eq:cond:1}
\text{(i)}\quad
\frac{x_{k+1}-x_k}{x_{k+1}-x_j}
\leq
\frac{C}{k+1-j},
\qquad
j=0,\ldots,k,\;\;
k=0,\ldots,M-2.
\end{IEEEeqnarray}
\begin{IEEEeqnarray}{r}
\label{eq:cond:2}
\text{(ii)}\quad
\frac{x_{k+1}-x_k}{x_j-x_k}
\leq
\frac{C}{j-k},
\quad
j=k+1,\ldots,M-1,\;
k=0,\ldots,M-2.
\end{IEEEeqnarray}
\begin{IEEEeqnarray}{r}
\label{eq:cond:3}
\text{(iii)}
\hspace{2 em}
\frac{1}{R}
\leq
\frac{x_{k+1}-x_k}{x_k-x_{k-1}}
\leq
R,
\qquad
k=1,\ldots,M-2.
\end{IEEEeqnarray}
\end{scriptsize}
\subsubsection{Finding $C\geq 1$}
First, we determine the constant parameter $C$ and verify the first condition. Let $N = M + S$, and suppose that for each $k \in \{0,1,\dots,M-1\}$, there exists an index $\alpha_k \in \{0,1,\dots,N-1\}$ such that $x_k = \bar{x}_{\alpha_k} = \cos\left( \frac{(2\alpha_k+1)\pi}{2N} \right).$ Since the nodes are ordered increasingly, i.e., $x_0 < x_1 < \cdots < x_{M-1}$, it follows that
$\alpha_0 < \alpha_1 < \cdots < \alpha_{M-1}.$ For $j \leq k$, where $j=0,\dots,k$ and $k=0,\dots,M-2$, we have
\begin{align*}
\frac{x_{k+1} - x_k}{x_{k+1} - x_j}
&= \frac{-\cos\left(\frac{(2\alpha_{k+1}+1)\pi}{2N}\right)+ \cos\left(\frac{(2\alpha_k+1)\pi}{2N}\right)
}{-\cos\left(\frac{(2\alpha_{k+1}+1)\pi}{2N}\right)+ \cos\left(\frac{(2\alpha_j+1)\pi}{2N}\right)}.
\end{align*}

\noindent Using standard trigonometric identities, we obtain
\begin{align*}
\frac{x_{k+1} - x_k}{x_{k+1} - x_j}
&= \frac{
\sin\left( \frac{(\alpha_{k+1} + \alpha_k + 1)\pi}{2N} \right)
\sin\left( \frac{(\alpha_{k+1} - \alpha_k)\pi}{2N} \right)
}{\sin\left( \frac{(\alpha_{k+1} + \alpha_j + 1)\pi}{2N} \right)
\sin\left( \frac{(\alpha_{k+1} - \alpha_j)\pi}{2N} \right)
}.
\end{align*}

\noindent  Since $\alpha_{k+1} > \alpha_k$, let $\alpha_{k+1} = \alpha_k + \beta$ for some integer $1 \leq \beta \leq S+1$. Then,
\begin{align*}
\frac{x_{k+1} - x_k}{x_{k+1} - x_j}
&= \frac{
\sin\left( \frac{(2\alpha_k + \beta + 1)\pi}{2N} \right)
\sin\left( \frac{\beta\pi}{2N} \right)
}{
\sin\left( \frac{(\alpha_k + \alpha_j + \beta + 1)\pi}{2N} \right)
\sin\left( \frac{(\alpha_k - \alpha_j + \beta)\pi}{2N} \right)
}.
\end{align*}

\noindent  From the ranges of $k$, $j$, and $\beta$, we have $0<\frac{\beta\pi}{2N}\le\frac{\pi}{2}, $ and $0<\frac{(\alpha_k-\alpha_j+\beta)\pi}{2N}\le\frac{\pi}{2},$ and therefore Jordan's inequality can be applied to these two sine terms. The remaining sine terms are analyzed separately in the following two cases, wherein the angle $\frac{(2\alpha_k+\beta+1)\pi}{2N}$
lies in $\left[0,\frac{\pi}{2}\right]$ and $\left[\frac{\pi}{2},\pi\right]$, respectively.

\textit{Case (i):} Suppose $\frac{(2\alpha_k+\beta+1)\pi}{2N} \leq \frac{\pi}{2}.$
Since $\alpha_j < \alpha_k$, we have $\alpha_k + \alpha_j + \beta + 1 < 2\alpha_k + \beta + 1,$
which implies $\frac{(\alpha_k+\alpha_j+\beta+1)\pi}{2N} \leq \frac{\pi}{2}.$
Applying Jordan’s inequality $\frac{2\theta}{\pi} \leq \sin\theta \leq \theta$ for $\theta \in [0,\frac{\pi}{2}]$, we obtain
\begin{align*}
\frac{x_{k+1}-x_k}{x_{k+1}-x_j}
&\leq \frac{\frac{(2\alpha_k+\beta+1)\pi}{2N} \cdot \frac{\beta\pi}{2N}}{
\frac{2}{\pi}\frac{(\alpha_k+\alpha_j+\beta+1)\pi}{2N}
\cdot\frac{2}{\pi}\frac{(\alpha_k-\alpha_j+\beta)\pi}{2N}}.
\end{align*}
\noindent  Simplifying, we get
\vspace{-0.3cm}
\begin{align*}
\frac{x_{k+1}-x_k}{x_{k+1}-x_j}
&\leq  \frac{\pi^2 \beta (2\alpha_k+\beta+1)}
{4(\alpha_k+\alpha_j+\beta+1)(\alpha_k-\alpha_j+\beta)}.
\end{align*}
\noindent Since $\frac{2\alpha_k+\beta+1}{\alpha_k+\alpha_j+\beta+1} \leq 2,$ we obtain $\frac{x_{k+1}-x_k}{x_{k+1}-x_j}  \leq  \frac{\pi^2 \beta}{2(\alpha_k-\alpha_j+\beta)}.$ Using $\beta \leq S+1$ and $\alpha_k-\alpha_j+\beta \ge \alpha_k-\alpha_j+1$,
we obtain
\begin{equation}
\frac{x_{k+1}-x_k}{x_{k+1}-x_j} \leq
\frac{\pi^2 (S+1)}{2}
\frac{1}{\alpha_k-\alpha_j+1}.
\label{eq:C1}
\end{equation}

\noindent  Since $\alpha_k-\alpha_j \geq k-j$, it follows that $\alpha_k-\alpha_j+1 \geq k+1-j,$ and therefore $\frac{1}{\alpha_k-\alpha_j+1}\leq\frac{1}{k+1-j}.$ Hence,
\begin{equation}
\label{eq:case:1}
\frac{x_{k+1}-x_k}{x_{k+1}-x_j}
\leq
\frac{\pi^2(S+1)}{2(k+1-j)}.
\end{equation}

\textit{Case (ii):} Suppose $\frac{(2\alpha_k+\beta+1)\pi}{2N} >\frac{\pi}{2}.$
Then,
\begin{eqnarray*}
\frac{x_{k+1}-x_k}{x_{k+1}-x_j}
\leq 
\frac{\sin\left(\frac{(2\alpha_k+\beta+1)\pi}{2N}\right)\cdot \frac{\beta\pi}{2N}}{
\sin\left(\frac{(\alpha_k+\alpha_j+\beta+1)\pi}{2N}\right)
\cdot \frac{2}{\pi}\frac{(\alpha_k-\alpha_j+\beta)\pi}{2N}}.
\end{eqnarray*}

\noindent We now consider the following two subcases. Suppose $\frac{(\alpha_k+\alpha_j+\beta+1)\pi}{2N}\geq\frac{\pi}{2}.$ Since $\alpha_j<\alpha_k$, we have $2\alpha_k+\beta+1>\alpha_k+\alpha_j+\beta+1.$
Moreover, both angles lie in $[\frac{\pi}{2},\pi]$, where the sine function is decreasing. Hence,
$\frac{\sin\left(\frac{(2\alpha_k+\beta+1)\pi}{2N}\right)}
{\sin\left(\frac{(\alpha_k+\alpha_j+\beta+1)\pi}{2N}\right)}
\leq 1.$ Thus, $\frac{x_{k+1}-x_k}{x_{k+1}-x_j}
\leq \frac{\pi\beta}{2(\alpha_k-\alpha_j+\beta)}.$ Using $\beta\leq S+1$ and $\alpha_k-\alpha_j\geq k-j$, we obtain
\begin{equation}
\frac{x_{k+1}-x_k}{x_{k+1}-x_j}
\leq
\frac{\pi(S+1)}{2(k+1-j)}.
\label{eq:C2a}
\end{equation}
On the other hand, if $\frac{(\alpha_k+\alpha_j+\beta+1)\pi}{2N}
\leq \frac{\pi}{2},$ then, since $\frac{(2\alpha_k+\beta+1)\pi}{2N}
\geq \frac{\pi}{2},$ we use the inequalities $\sin\theta\le\frac{2\theta}{\pi},\quad \theta\in\left[\frac{\pi}{2},\pi\right],$ and Jordan's inequality $\sin\theta\ge\frac{2\theta}{\pi}, \quad \theta\in\left[0,\frac{\pi}{2}\right].$ Hence,
\begin{align*}
\frac{x_{k+1}-x_k}{x_{k+1}-x_j}
&\leq
\frac{\frac{2}{\pi}\frac{(2\alpha_k+\beta+1)\pi}{2N}
\cdot
\frac{\beta\pi}{2N}
}{
\frac{2}{\pi}\frac{(\alpha_k+\alpha_j+\beta+1)\pi}{2N}
\cdot
\frac{2}{\pi}\frac{(\alpha_k-\alpha_j+\beta)\pi}{2N}
}=
\frac{\pi\beta(2\alpha_k+\beta+1)}
{2(\alpha_k+\alpha_j+\beta+1)(\alpha_k-\alpha_j+\beta)}.
\end{align*}

\noindent Since $\frac{2\alpha_k+\beta+1}{\alpha_k+\alpha_j+\beta+1} \leq 2,$ we obtain $\frac{x_{k+1}-x_k}{x_{k+1}-x_j}
\leq \frac{\pi\beta}
{\alpha_k-\alpha_j+\beta}.$ Using $\beta\le S+1$ and $\alpha_k-\alpha_j+\beta\ge\alpha_k-\alpha_j+1\ge k+1-j$, we have $\frac{x_{k+1}-x_k}{x_{k+1}-x_j}
\leq \frac{\pi(S+1)}{k+1-j}.$ Since $\pi\le\frac{\pi^2}{2},$ it follows that
\begin{equation}
\label{eq:case iib}
\frac{x_{k+1}-x_k}{x_{k+1}-x_j}
\leq
\frac{\pi^2(S+1)}
{2(k+1-j)}.
\end{equation}

\noindent Therefore, combining the bounds obtained in \eqref{eq:case:1}, \eqref{eq:C2a}, and \eqref{eq:case iib}, condition (i) in \eqref{eq:cond:1} holds with $C=\frac{\pi^2(S+1)}{2}.$

\noindent The proof of condition (ii) in ~\eqref{eq:cond:2} follows along the same lines as that of condition (i) in \eqref{eq:cond:1}. Consequently,
$\frac{x_{k+1} - x_k}{x_j - x_k} 
\leq \frac{C}{j - k}, \quad \text{for } j = k+1,\dots,M,\; k = 0,\dots,M-1.$

\noindent In the next subsection, we determine the constant $R \geq 1$ such that the third condition given in \eqref{eq:cond:3} is satisfied. Specifically, we show that $\frac{1}{R} \leq \frac{x_{k+1}-x_k}{x_k-x_{k-1}} \leq R, \quad \text{for } k = 1,\dots,M-2.$

\subsubsection{Finding $R \geq 1$}
In order to find the constant $R$, let $\mathcal{X}_{M} = \{x_{k}\}_{k=0}^{M-1}$ be a subset of $\mathcal{X}_{N} = \{\tilde{x}_k\}_{k=0}^{N-1}$ with $M$ elements such that $x_0 < x_1 < \cdots < x_{M-1}$, where $M = N - S$, and $\tilde{x}_k$, $k \in [0, N-1]$ are the Chebyshev points of the first kind, i.e., $\tilde{x}_k = \cos\left( \frac{(2k+1)\pi}{2N} \right), \quad k \in [0,N-1]$, and $S$ is a constant number independent of $N$. To find the constant parameter $R$, we consider the worst-case interpolation scenario which occurs when the Lebesgue function associated with Berrut's interpolant on $\mathcal{X}_{M} = \{x_k\}_{k=0}^{M-1}$ attains its maximum if there exists $\bar{k}$ such that $x_{j}= \tilde{x}_{j}=\cos\left( \frac{(2j+1)\pi}{2N} \right) \quad \text{for } j \in [0, \bar{k}],$ and $x_{j} = \tilde{x}_{j+S} = \cos\left( \frac{(2(j+S)+1)\pi}{2N} \right)$, for  $j \in [\bar{k}+1,M-1],$ i.e., all $S$ elements not included in $\mathcal{X}_{M}$ are ordered consecutively in $\mathcal{X}_{N}$. Note that the proof of this statement follows a similar approach to that used in the proof of \cite[Lemma 6]{b7}.

\noindent
\textit{Case (i):} $k=\bar{k}$
\begin{align*}
\frac{x_{\bar{k}+1} - x_{\bar{k}}}{x_{\bar{k}} - x_{\bar{k}-1}}
=\frac{\cos\left( \frac{(2(\bar{k}+S+1)+1)\pi}{2N} \right)
-\cos\left( \frac{(2\bar{k}+1)\pi}{2N} \right)}
{\cos\left( \frac{(2\bar{k}+1)\pi}{2N} \right)-\cos\left( \frac{(2(\bar{k}-1)+1)\pi}{2N} \right)}.
\end{align*}
Using the trigonometric identity
$\cos u-\cos v=-2\sin\left(\frac{u+v}{2}\right)\sin\left(\frac{u-v}{2}\right),$we obtain
\begin{align*}
\frac{x_{\bar{k}+1} - x_{\bar{k}}}{x_{\bar{k}} - x_{\bar{k}-1}}=
\frac{\sin\left( \frac{(2\bar{k}+S+2)\pi}{2N} \right)\sin\left( \frac{(S+1)\pi}{2N} \right)}
{\sin\left( \frac{\bar{k}\pi}{N} \right)\sin\left( \frac{\pi}{2N} \right)}.
\end{align*}
Let $\theta=\frac{\bar{k}\pi}{N}.$ Then, $\theta\in\left[\frac{\pi}{N},\pi-\frac{(S+3)\pi}{N}\right].$ Therefore, $\frac{x_{\bar{k}+1} - x_{\bar{k}}}{x_{\bar{k}} - x_{\bar{k}-1}}
=\frac{\sin\left(\theta+\frac{(S+2)\pi}{2N}\right)\sin\left(\frac{(S+1)\pi}{2N}\right)}
{\sin\theta\sin\left(\frac{\pi}{2N}\right)}.$ Further, it is clear that
$\frac{(S+1)\pi}{2N},\frac{\pi}{2N}\leq\frac{\pi}{2}.$
Therefore, using Jordan's inequality
$\frac{2\theta}{\pi}\leq\sin\theta\leq\theta$
for $\theta\in[0,\frac{\pi}{2}],$
we obtain
\begin{align*}
\frac{x_{\bar{k}+1} - x_{\bar{k}}}{x_{\bar{k}} - x_{\bar{k}-1}}\leq
\frac{(S+1)\pi}{2}\frac{\sin\left(\theta+\frac{(S+2)\pi}{2N}\right)}
{\sin\theta},
\end{align*}
 which can be further expressed as
\begin{scriptsize}
\begin{align*}
\frac{x_{\bar{k}+1} - x_{\bar{k}}}{x_{\bar{k}} - x_{\bar{k}-1}} \leq \frac{(S+1)\pi}{2}
\left(\cos\frac{(S+2)\pi}{2N}+\sin\frac{(S+2)\pi}{2N}\cot\theta\right).
\end{align*}
\end{scriptsize}

\noindent According to the range of $\theta,$ we have $\cot\theta\leq\cot\frac{\pi}{N}.$ Therefore,

\begin{scriptsize}
\begin{align*}
\frac{x_{\bar{k}+1} - x_{\bar{k}}}{x_{\bar{k}} - x_{\bar{k}-1}}  \leq \frac{(S+1)\pi}{2}
\left(\cos\frac{(S+2)\pi}{2N}+\sin\frac{(S+2)\pi}{2N}\cot\frac{\pi}{N}\right).
\end{align*}
\end{scriptsize}

\noindent Further simplifying,  we obtain
\begin{align}
\label{eq:R1}
\frac{x_{\bar{k}+1} - x_{\bar{k}}}{x_{\bar{k}} - x_{\bar{k}-1}}
=\frac{(S+1)\pi}{2}\frac{\sin\left(\frac{(S+4)\pi}{2N}\right)}
{\sin\left(\frac{\pi}{N}\right)}\leq
\frac{(S+1)(S+4)\pi^2}{8}.
\end{align}
\noindent On the other hand, to derive a lower bound, we similarly obtain
\begin{scriptsize}
\begin{align*} \frac{x_{\bar{k}+1}-x_{\bar{k}}}
{x_{\bar{k}}-x_{\bar{k}-1}}
\ge  \frac{2(S+1)}{\pi} \big(\cos\frac{(S+2)\pi}{2N}+\sin\frac{(S+3)\pi}{N} {\cot\big(\pi-\frac{(S+3)\pi}{N}}\big)\big).
\end{align*}
\end{scriptsize}

\noindent Using the identity $\cot(\pi-\theta)=-\cot\theta,$ we obtain
\begin{align}
\label{eq:R2}
\frac{x_{\bar{k}+1}-x_{\bar{k}}}
{x_{\bar{k}}-x_{\bar{k}-1}}
&=
\frac{2(S+1)}{\pi}
\frac{\sin\!\left(
\frac{(S+3)\pi}{N}
-\frac{(S+2)\pi}{2N}
\right)}
{\sin\frac{(S+3)\pi}{N}}
= \frac{2(S+1)}{\pi} \frac{\sin\frac{(S+4)\pi}{2N}}{\sin\frac{(S+3)\pi}{N}}\geq\frac{(S+1)}{\pi},
\end{align}
if $\frac{(S+3)\pi}{N}\leq\frac{\pi}{2}$. Otherwise, if $\frac{(S+3)\pi}{N}\ge\frac{\pi}{2},$ then $\frac{\pi}{4}
< \frac{(S+3)\pi}{2N}\le\frac{\pi}{2}+\frac{\pi}{2N}.$ Therefore,
$\frac{x_{\bar{k}+1}-x_{\bar{k}}}
{x_{\bar{k}}-x_{\bar{k}-1}}
\geq \frac{2(S+1)}{\pi}
\frac{\sin\frac{(S+3)\pi}{2N}}
{\sin\frac{(S+3)\pi}{N}}.$ Since $\frac{\pi}{4}<\frac{(S+3)\pi}{2N},$ it follows that
\begin{align}
\label{eq:R3}
\frac{x_{\bar{k}+1}-x_{\bar{k}}}
{x_{\bar{k}}-x_{\bar{k}-1}}
&\ge
\frac{2(S+1)}{\pi}
\frac{\sin\frac{\pi}{4}}{1}
=
\frac{\sqrt{2}(S+1)}{\pi}.
\end{align}

\textit{Case (ii):} $k=\bar{k}+1$
\begin{align*}
\frac{x_{\bar{k}+2}-x_{\bar{k}+1}}
{x_{\bar{k}+1}-x_{\bar{k}}}
=
\frac{
\cos\left(\frac{(2(\bar{k}+S+2)+1)\pi}{2N}\right)
-
\cos\left(\frac{(2(\bar{k}+S+1)+1)\pi}{2N}\right)
}{
\cos\left(\frac{(2(\bar{k}+S+1)+1)\pi}{2N}\right)
-
\cos\left(\frac{(2\bar{k}+1)\pi}{2N}\right)
}.
\end{align*}
\noindent Using standard trigonometric identities, we obtain
\begin{align*}
\frac{x_{\bar{k}+2}-x_{\bar{k}+1}}
{x_{\bar{k}+1}-x_{\bar{k}}}
=
\frac{
\sin\left(\frac{(2\bar{k}+2S+4)\pi}{2N}\right)
\sin\left(\frac{\pi}{2N}\right)
}{
\sin\left(\frac{(2\bar{k}+S+2)\pi}{2N}\right)
\sin\left(\frac{(S+1)\pi}{2N}\right)
}.
\end{align*}

\noindent Further, it is clear that
$\frac{(S+1)\pi}{2N},\frac{\pi}{2N}\leq\frac{\pi}{2}$.
Therefore, applying Jordan's inequality
$\frac{2\theta}{\pi}\leq\sin\theta\leq\theta$ for
$\theta\in[0,\frac{\pi}{2}]$, we obtain
\begin{align*}
\frac{x_{\bar{k}+2}-x_{\bar{k}+1}}
{x_{\bar{k}+1}-x_{\bar{k}}}
\leq
\frac{\pi}{2(S+1)}
\frac{
\sin\left(\frac{(2\bar{k}+2S+4)\pi}{2N}\right)
}{
\sin\left(\frac{(2\bar{k}+S+2)\pi}{2N}\right)
}.
\end{align*}

\noindent We define $\bar{\theta}\triangleq \frac{(2\bar{k}+S+2)\pi}{2N}.$ Since $\bar{\theta}
\in\left[\frac{(S+4)\pi}{2N},\,\pi-\frac{(S+4)\pi}{2N}\right],$ we can write
$\frac{x_{\bar{k}+2}-x_{\bar{k}+1}}{x_{\bar{k}+1}-x_{\bar{k}}}\leq\frac{\pi}{2(S+1)}
\frac{\sin\left(\bar{\theta}+\frac{(S+2)\pi}{2N}\right)}
{\sin\bar{\theta}},$ which can be further expressed as
\begin{scriptsize}
\begin{align*}
\frac{x_{\bar{k}+2}-x_{\bar{k}+1}}{x_{\bar{k}+1}-x_{\bar{k}}}\leq\frac{\pi}{2(S+1)}\Bigg(\cos\frac{(S+2)\pi}{2N}+\sin\frac{(S+2)\pi}{2N}\cot\bar{\theta}\Bigg).
\end{align*}
\end{scriptsize}

\noindent Since $\bar{\theta}\geq \frac{(S+4)\pi}{2N},$ it follows that $\cot\bar{\theta}
\leq \cot\frac{(S+4)\pi}{2N}.$ Hence,
\begin{scriptsize}
\begin{align}
\label{eq:2.1}
\frac{x_{\bar{k}+2}-x_{\bar{k}+1}}
{x_{\bar{k}+1}-x_{\bar{k}}}
&\leq
\frac{\pi}{2(S+1)}
\Bigg(
\cos\frac{(S+2)\pi}{2N}
+
\sin\frac{(S+2)\pi}{2N}
\cot\frac{(S+4)\pi}{2N}
\Bigg) =
\frac{\pi}{2(S+1)}
\frac{\sin\left(\frac{(2S+6)\pi}{2N}\right)}{\sin\left(\frac{(S+4)\pi}{2N}\right)}
=
\frac{\pi}{2(S+1)}
\frac{
\sin\left(\frac{(S+3)\pi}{N}\right)
}{
\sin\left(\frac{(S+4)\pi}{2N}\right)
}\leq
\frac{\pi}{S+1}.
\end{align}
\end{scriptsize}

\noindent On the other hand, applying Jordan's inequality
$\frac{2\theta}{\pi}\leq\sin\theta\leq\theta$
for $\theta\in[0,\frac{\pi}{2}]$, we obtain
\begin{align*}
\frac{x_{\bar{k}+2}-x_{\bar{k}+1}}
{x_{\bar{k}+1}-x_{\bar{k}}}
&\geq
\frac{2}{(S+1)\pi}
\frac{
\sin\left(\frac{(2\bar{k}+2S+4)\pi}{2N}\right)
}{
\sin\left(\frac{(2\bar{k}+S+2)\pi}{2N}\right)
}.
\end{align*}
Using $\bar{\theta}=\frac{(2\bar{k}+S+2)\pi}{2N},$ we have
$\frac{x_{\bar{k}+2}-x_{\bar{k}+1}}
{x_{\bar{k}+1}-x_{\bar{k}}}
\geq
\frac{2}{(S+1)\pi}
\frac{
\sin\left(\bar{\theta}+\frac{(S+2)\pi}{2N}\right)
}
{\sin\bar{\theta}},$ which can be further expressed as
\begin{scriptsize}
\begin{align*}
\frac{x_{\bar{k}+2}-x_{\bar{k}+1}}
{x_{\bar{k}+1}-x_{\bar{k}}}
&\geq
\frac{2}{(S+1)\pi}
\Bigg(
\cos\frac{(S+2)\pi}{2N}
+
\sin\frac{(S+2)\pi}{2N}
\cot\bar{\theta}
\Bigg).
\end{align*}
\end{scriptsize}

\noindent Since $\bar{\theta}\leq\pi-\frac{(S+4)\pi}{2N},$ it follows that
$\cot\bar{\theta}\geq-\cot\frac{(S+4)\pi}{2N}.$ Hence,

\begin{scriptsize}
\begin{align*}
\frac{x_{\bar{k}+2}-x_{\bar{k}+1}}
{x_{\bar{k}+1}-x_{\bar{k}}}
&\geq
\frac{2}{(S+1)\pi}
\Bigg(
\cos\frac{(S+2)\pi}{2N}
-
\sin\frac{(S+2)\pi}{2N}
\cot\frac{(S+4)\pi}{2N}
\Bigg)
=
\frac{2}{(S+1)\pi}
\frac{
\sin\frac{\pi}{N}
}
{\sin\frac{(S+4)\pi}{2N}}.
\end{align*}
\end{scriptsize}

\noindent Applying Jordan's inequality once again yields
\begin{align}
\label{eq:2.2}
\frac{x_{\bar{k}+2}-x_{\bar{k}+1}}
{x_{\bar{k}+1}-x_{\bar{k}}}
&\geq
\frac{2}{(S+1)\pi}
\frac{\frac{2}{N}}
{\frac{(S+4)\pi}{2N}}S=
\frac{8}{(S+1)(S+4)\pi^2}.
\end{align}
\noindent Therefore, according to these cases, from equation \eqref{eq:R1}, \eqref{eq:R2}, \eqref{eq:R3}, \eqref{eq:2.1}, \eqref{eq:2.2}, $R=\frac{(S+1)(S+4)\pi^2}{8}$ which satisfy the condition (iii) in \eqref{eq:cond:3} i.e., $\frac{1}{R}\leq \frac{x_{k+1} - x_{k}}{x_{k} - x_{k-1}} \leq R$.\\
Now, with the derived constant parameters $C=\frac{\pi^2(S+1)}{2}$ and
$R=\frac{(S+1)(S+4)\pi^2}{8}$, $\mathcal{X}=\left(\mathcal{X}_{M}\right)_{M\in\mathbb N}$
represents a family of well-spaced points. Therefore, if Berrut's interpolant in \cite{b18} is used to interpolate a function $f(\cdot)$ at the points $\mathcal{X}_M$, then by \cite[Theorem 2.2]{b18}, the Lebesgue constant satisfies $\Lambda_M \leq (R+1)(1+2C\ln M).$ Substituting the values of $R$ and $C$ derived above, and using $M=N-S$, we obtain
\begin{align*}
\Lambda_M
&\leq \left(\frac{(S+1)(S+4)\pi^2}{8}+1\right)
\left(1+\pi^2(S+1)\ln(N-S)\right).
\end{align*}
\noindent Hence, the Lebesgue constant for the proposed RBACC scheme grows logarithmically with the number of interpolation points, i.e., $\Lambda_{M}$ is bounded by $c$ $ln$ $(N-S)$ for some constant $c>0$, and given $S$. This completes the proof.

\subsection{Proof for Theorem \ref{Th:theorem 1}}
\label{proof:th1}
In order to derive the approximation bound defined in Theorem \ref{Th:theorem 1}, we invoke \cite[Theorem 5]{b7}. Therefore, it remains to determine the parameters $h$ and $\lambda$ for the ordered set of interpolation nodes. In this context, we first find the parameter $h$ for the proposed set of evaluation points.

\subsubsection{Finding $h$}

\noindent Let $\mathcal{X}=\left(\mathcal{X}_{M}\right)_{M\in\mathbb N}$ be a family of ordered distinct interpolation points chosen from the Chebyshev points of the first kind $\mathcal{X}_{N}=\{\tilde{x}_{\alpha}\}_{\alpha=0}^{N-1}$, where $x_k=\tilde{x}_{\alpha_k}=\cos\left(\frac{(2\alpha_k+1)\pi}{2N}\right).$ Here, $N=M+S$ and $\alpha_k\ge k$. We define $h(k)=x_{k+1}-x_k.$ Then $h(k)=\cos\left(\frac{(2\alpha_{k+1}+1)\pi}{2N}\right)-\cos\left(\frac{(2\alpha_k+1)\pi}{2N}\right).$ Since $\alpha_{k+1}>\alpha_k$, there exists an integer $1\le \beta \le S+1$ such that $\alpha_{k+1}=\alpha_k+\beta.$ Substituting $\alpha_{k+1}=\alpha_k+\beta$ into $h(k)$, we obtain $h(k)=\cos\left(\frac{(2(\alpha_k+\beta)+1)\pi}{2N}\right)-\cos\left(\frac{(2\alpha_k+1)\pi}{2N}\right).$
Applying the trigonometric identity, gives
\begin{equation}
\label{eq:finding h}
h(k)=2\sin\left(\frac{(2\alpha_k+\beta+1)\pi}{2N}\right)\sin\left(\frac{\beta\pi}{2N}\right).
\end{equation}
Since $\sin\left(\frac{(2\alpha_k+\beta+1)\pi}{2N}\right)\leq 1,$ the quantity $h(k)$ attains its maximum when $\sin\left(\frac{(2\alpha_k+\beta+1)\pi}{2N}\right)=1.$ Therefore, $h=\max_{0\le k\le M-2}\allowbreak(x_{k+1}-x_k)=2\sin\left(\frac{\beta\pi}{2N}\right).$ Since $1\leq \beta\le S+1$ and $\sin(x)$ is increasing on $[0,\pi/2]$, it follows that
\[
h=2\sin\left(\frac{\beta\pi}{2N}\right)
\leq 2\sin\left(\frac{(S+1)\pi}{2N}\right).
\]


\subsubsection{Finding local mesh ratio $\lambda$}

\noindent According to the max-min inequality, the local mesh ratio satisfies
\begin{align*}
\lambda \leq \min \left\{\max_{1\leq i\leq M-2}\frac{x_{i+1}-x_i}{x_i-x_{i-1}},\;
\max_{1\leq i\leq M-2}\frac{x_{i+1}-x_i}{x_{i+2}-x_{i+1}}\right\}.
\end{align*}

\noindent From condition (iii), we have already shown that
$\frac{1}{R}\leq\frac{x_{i+1}-x_i}{x_i-x_{i-1}}\leq R,
\quad i=1,\ldots,M-1.$ Replacing $i$ by $i+1$, we obtain
$\frac{1}{R}\leq \frac{x_{i+2}-x_{i+1}}{x_{i+1}-x_i}\leq R,\quad i=0,\ldots,M-2.$ Since all mesh widths are positive, taking reciprocals yields
$\frac{1}{R}\leq\frac{x_{i+1}-x_i}{x_{i+2}-x_{i+1}}\leq R,\quad i=0,\ldots,M-2.$ Therefore, $\max_{1\leq i\leq M-2}\frac{x_{i+1}-x_i}{x_i-x_{i-1}}\leq R,$ and $\max_{1\leq i\leq M-2}\frac{x_{i+1}-x_i}{x_{i+2}-x_{i+1}}\leq R.$

\noindent Substituting these bounds into the max-min inequality, we obtain $\lambda \leq R.$ Hence, the local mesh ratio associated with the node set $\mathcal{X}_M=\{x_k\}_{k=0}^{M-1}$ is bounded by $R.$ Subsequently, we substitute the value of $h$ and $\lambda$ on the upper bounds provided in \cite[Theorem 5]{b7} to obtain the approximation error bounds provided in the theorem statement \ref{Th:theorem 1}. This completes the proof.
\subsection{Proof of Theorem \ref{th:err bound adv}}
\label{proof:approx_err_bound_adv}
In the absence of stragglers, i.e., when $S=0$ and $M=N$, after error correction by the DCT decoder using the responses from all $N$ workers, the master reconstructs the computation using all $N$ corrected evaluations. Let $\bar z_i=\cos\!\left(\frac{(2i+1)\pi}{2N}\right)$, for $i=0,1,\ldots,N-1$, denote the $N$ Chebyshev points of the first kind used for encoding and evaluation while distributing the shares among the $N$ workers. During reconstruction, the master uses all the interpolation nodes $\{\bar z_0,\bar z_1,\ldots,\bar z_{N-1}\}$ corresponding to the $N$ corrected evaluations. Further, let the $A$ Byzantine workers among the $N$ workers be fixed and denoted by the set $\mathcal{A}=\{i_1,i_2,\ldots,i_A\}$, where $|\mathcal{A}|=A$. Therefore, for the fixed interpolation nodes and adversarial set $\mathcal{A}$, the Berrut weights $w_r(z)$, defined in \eqref{eq:berrut_weight_full}, are deterministic functions of $z$. Hence, after DCT-based error correction, the reconstruction function using all $N$ corrected evaluations can be decomposed component-wise as $r_{\mathrm{Berrut},f}(z)
=r_{\mathrm{Berrut},v}(z)+r_{\mathrm{Berrut},p}(z)+r_{\mathrm{Berrut},e}(z),$
where
\vspace{-0.4cm}
\begin{align}
\label{eq:all matrix}
r_{\mathrm{Berrut},v}(z)=\sum_{r=0}^{N-1} w_r(z)\mathbf V_{{r+1}}, \quad
r_{\mathrm{Berrut},p}(z)=\sum_{r=0}^{N-1} w_r(z)\mathbf P_{{r+1}},\quad
r_{\mathrm{Berrut},e}(z)=\sum_{r=0}^{N-1} w_r(z)\big(\mathbf E_{{r+1}}-\hat{\mathbf E}_{{r+1}}\big).
\end{align}
\noindent Hence, the reconstruction error can be written as
\begin{align*}
r_{\mathrm{Berrut},f}(z)-f(u(z))
&=
r_{\mathrm{Berrut},v}(z)-f(u(z))
+r_{\mathrm{Berrut},p}(z)+r_{\mathrm{Berrut},e}(z).
\end{align*}

\noindent For a given $N$ and adversarial set $\mathcal{A}$, we take the squared norm and expectation over the randomness of $\{\mathbf E_i\}$ and $\{\mathbf P_i\}$, i.e.,
\begin{equation*}
\mathbb{E}_{\{\mathbf{E}_{i}\},\{\mathbf{P}_{i}\}}
\left\| 
r_{\mathrm{Berrut},v}(z)-g(z)
+r_{\mathrm{Berrut},p}(z)
+r_{\mathrm{Berrut},e}(z)
\right\|^2,
\end{equation*}
and expanding the squared norm gives
\begin{align}
\label{eq:total_sq}
&\mathbb{E}_{\{\mathbf E_i\},\{\mathbf P_i\}}
\|r_{\mathrm{Berrut},v}(z)-g(z)\|^2
+ \mathbb{E}_{\{\mathbf E_i\},\{\mathbf P_i\}}
\!\left\|r_{\mathrm{Berrut},p}(z)\right\|^2+
\mathbb{E}_{\{\mathbf E_i\},\{\mathbf P_i\}}
\!\left\|r_{\mathrm{Berrut},e}(z)\right\|^2
\nonumber\\
&+
2\,\mathbb{E}_{\{\mathbf E_i\},\{\mathbf P_i\}}
\!\left\langle
r_{\mathrm{Berrut},v}(z)-g(z),
r_{\mathrm{Berrut},p}(z)
\right\rangle+
2\,\mathbb{E}_{\{\mathbf E_i\},\{\mathbf P_i\}}
\!\left\langle
r_{\mathrm{Berrut},v}(z)-g(z),
r_{\mathrm{Berrut},e}(z)
\right\rangle
\nonumber\\
&+ 2\,\mathbb{E}_{\{\mathbf E_i\},\{\mathbf P_i\}}
\!\left\langle
r_{\mathrm{Berrut},p}(z),
r_{\mathrm{Berrut},e}(z)
\right\rangle,
\end{align}

\noindent where we denote $g(z)=f(u(z))$.
We assume that the components of $\mathbf P_i$ and $\mathbf E_i$ are statistically independent zero-mean Gaussian random variables with variances $\sigma_P^2$ and $\sigma_A^2$, respectively, and are independent across workers, such that $\mathbb E[r_{\mathrm{Berrut},p}(z)]=\mathbf 0,\mathbb E[r_{\mathrm{Berrut},e}(z)]=\mathbf 0$. Since $r_{\mathrm{Berrut},v}(z)-g(z)$ is deterministic for the fixed interpolation nodes, therefore, $\mathbb E\langle r_{\mathrm{Berrut},v}(z)-g(z),r_{\mathrm{Berrut},p}(z)\rangle = 0$, and $\mathbb E\langle r_{\mathrm{Berrut},v}(z)-g(z),r_{\mathrm{Berrut},e}(z)\rangle = 0.$ Therefore, \eqref{eq:total_sq} reduces to
\begin{align}
\label{eq:final_exp1}
&\mathbb{E}_{\{\mathbf E_i\},\{\mathbf P_i\}}\|r_{\mathrm{Berrut},f}(z)-g(z)\|^2
=\|r_{\mathrm{Berrut},v}(z)-g(z)\|^2
+
\mathbb{E}_{\{\mathbf E_i\},\{\mathbf P_i\}}
\|r_{\mathrm{Berrut},p}(z)\|^2
\nonumber+
\mathbb{E}_{\{\mathbf E_i\},\{\mathbf P_i\}}
\|r_{\mathrm{Berrut},e}(z)\|^2
\nonumber\\
&+
2\,\mathbb{E}_{\{\mathbf E_i\},\{\mathbf P_i\}}
\!\left\langle
r_{\mathrm{Berrut},p}(z),
r_{\mathrm{Berrut},e}(z)
\right\rangle .
\end{align}

\noindent Note that $r_{\mathrm{Berrut},f}(z)$, $r_{\mathrm{Berrut},v}(z)$, $r_{\mathrm{Berrut},p}(z)$, and $r_{\mathrm{Berrut},e}(z)$ are $m\times n$ matrix-valued rational functions of $z$, as defined in \eqref{eq:all matrix}, whose entries are scalar rational functions obtained using Berrut interpolation. Note that, the overall reconstruction error is computed using the Frobenius norm, however, for the ease of our analysis, we consider for an arbitrary $(g,h)$-th entry of each matrix, where $g\in[m]$ and $h\in[n]$. Fixing $(g,h)$ converts the worker outputs into vectors formed from the $(g,h)$-th components across the workers. For instance, $e_{g,h}\in\mathbb{R}^{N}$ and $\mathbf{p}_{g,h}\in\mathbb{R}^{N}$ denote vectors formed from the $(g,h)$-th entries of the adversarial error matrices $\{\mathbf{E}_i\}$ and precision noise matrices $\{\mathbf{P}_i\}$, respectively. Since the precision noise and adversarial errors are independent across workers and matrix entries, the error contributions from different entries are independent. Therefore, it suffices to analyze an arbitrary $(g,h)$-th entry of the matrices, and the resulting bound extends directly to all $m\times n$ such entries. Therefore, \eqref{eq:final_exp1} for the $(g,h)$-th entry can be rewritten as

\begin{scriptsize}
\begin{align}
\label{eq:final_exp}
&\mathbb{E}_{\{\mathbf e_{g,h}\},\{\mathbf p_{g,h}\}}
\left|r_{\mathrm{Berrut},f,g,h}(z)-g_{g,h}(z)\right|^2
=\left|r_{\mathrm{Berrut},v,g,h}(z)-g_{g,h}(z)\right|^2+
\mathbb{E}_{\{\mathbf e_{g,h}\},\{\mathbf p_{g,h}\}}
\left|r_{\mathrm{Berrut},p,g,h}(z)\right|^2
+
\mathbb{E}_{\{\mathbf e_{g,h}\},\{\mathbf p_{g,h}\}}
\left|r_{\mathrm{Berrut},e,g,h}(z)\right|^2
\nonumber\\
&+
2\,\mathbb{E}_{\{\mathbf e_{g,h}\},\{\mathbf p_{g,h}\}}
\Big[r_{\mathrm{Berrut},p,g,h}(z)\,r_{\mathrm{Berrut},e,g,h}(z)\Big].
\end{align}
\end{scriptsize}

\noindent where $r_{\mathrm{Berrut},f,g,h}(z)$, $r_{\mathrm{Berrut},v,g,h}(z)$, $r_{\mathrm{Berrut},p,g,h}(z)$, and $r_{\mathrm{Berrut},e,g,h}(z)$ denote the $(g,h)$-th entries of $r_{\mathrm{Berrut},f}(z)$, $r_{\mathrm{Berrut},v}(z)$, $r_{\mathrm{Berrut},p}(z)$, and $r_{\mathrm{Berrut},e}(z)$, respectively, for $g\in[m]$ and $h\in[n]$.
Now, we bound each right hand side term of \eqref{eq:final_exp} in terms of system parameters. Since $r_{\mathrm{Berrut},v,g,h}(z)$ is the Berrut interpolant of $g(z)$ at $(g,h)$ entry, which is a deterministic quantity, therefore, interpolation error depends only on the parameters $N$, $S$. In the absence of stragglers, i.e., when $S=0$ and $M=N$, when $N$ computations are available using the RBACC approximation error bound from Theorem \ref{Th:theorem 1}, substituting $S=0$, and further squaring both sides and applying the bound entry-wise to the $(g,h)$ component, we get

\begin{scriptsize}
\begin{equation}
\label{eq:term1}
\left|
r_{\mathrm{Berrut},v,g,h}(z)-g_{g,h}(z)
\right|^2
\leq
\left[
2\Delta(1+R)
\sin\!\left(\frac{\pi}{2N}\right)
\right]^2 ,
\end{equation}
\end{scriptsize}
\noindent where $\Delta=\|g_{g,h}''(z)\|$ if $N$ is odd, and
$\Delta=\|g_{g,h}''(z)\|+\|g_{g,h}'(z)\|$ if $N$ is even.
Therefore, \eqref{eq:term1} provides an upper bound for the first term on the right-hand side of \eqref{eq:final_exp}.

The second term of \eqref{eq:final_exp}, i.e., $\mathbb{E}_{\{\mathbf e_{g,h}\},\{\mathbf p_{g,h}\}} \left|r_{\mathrm{Berrut},p,g,h}(z)\right|^2$, arises from the underlying precision noise of the system, originating from the set $\{\mathbf P_i\}$. This term affects all $N$ interpolation nodes used for reconstruction. In order to analyze the effect of $r_{\mathrm{Berrut},p,g,h}(z)$ due to $\mathbf{P}_{i}$, we consider the $(g,h)$-th entry of each $\mathbf P_i$,
for $i =1,2\ldots,{N}$, which can be written as
\begin{align}
r_{\mathrm{Berrut},p,g,h}(z)
&=
\sum_{r=0}^{N-1}\frac{\frac{(-1)^r}{z-\bar z_{{r}}}}{\sum_{k=0}^{N-1}\frac{(-1)^k}{z-\bar z_{{k}}}}
\, P_{{r+1},g,h}.
\end{align}
We assume that the noise matrices $\{\mathbf P_i\}$ are statistically independent across workers and that their entries are independent and identically distributed as $\mathcal N(0,\sigma_P^2)$. Consequently,
\begin{align}
\mathbb{E}_{\{\mathbf e_{g,h}\},\{\mathbf p_{g,h}\}}
\left[
P_{{r+1},g,h}
P_{{r'+1},g,h}
\right]
=\begin{cases}
\sigma_P^2, & r=r',\\
0, & r\neq r'.
\end{cases}
\end{align}
Since $r_{\mathrm{Berrut},p,g,h}(z)$ depends only on the precision noise $\{\mathbf p_{g,h}\}$ and is independent of the adversarial noise $\{\mathbf e_{g,h}\}$, we obtain
$\mathbb E_{\{\mathbf p_{g,h}\},\{\mathbf e_{g,h}\}}\left|r_{\mathrm{Berrut},p,g,h}(z)\right|^2=\mathbb E_{\{\mathbf p_{g,h}\}}\left|r_{\mathrm{Berrut},p,g,h}(z)\right|^2.$
Therefore, $\mathbb{E}_{\{\mathbf p_{g,h}\}}\left|r_{\mathrm{Berrut},p,g,h}(z)\right|^2
=\sigma_P^2\sum_{r=0}^{N-1}\frac{\frac{1}{|z-\bar z_{{r}}|^2}}{\left|\sum_{k=0}^{N-1}\frac{(-1)^k}{z-\bar z_{{k}}}\right|^2},$ which can be upper bounded as
\begin{align}
\mathbb{E}_{\{\mathbf p_{g,h}\}}\left|r_{\mathrm{Berrut},p,g,h}(z)\right|^2 \leq N\sigma_P^2\max_{r\in[N]}\frac{\frac{1}{|z-\bar z_{{r}}|^2}}{\left|\sum_{k=0}^{N-1}\frac{(-1)^k}{z-\bar z_{{k}}}\right|^2}.
\end{align}
\noindent For the fixed interpolation nodes, we bound
\!$\max_{r\in[N]}\frac{\frac{1}{|z-\bar z_{r}|^2}}{\left|\sum_{k=0}^{N-1}\frac{(-1)^k}{z-\bar z_{k}}\right|^2}\leq\! C_{1}$\!,
where $C_{1}\!>\!0$ is a constant. Hence,
\begin{align}
\label{eq:norm_square_p(z)}
\mathbb{E}_{\{\mathbf p_{g,h}\}}
\left|r_{\mathrm{Berrut},p,g,h}(z)\right|^2\leq C_1 N \sigma_P^2.
\end{align}

\noindent Therefore, \eqref{eq:norm_square_p(z)} provides an upper bound for the second term on the right-hand side of \eqref{eq:final_exp}.

Further, note that the third term of \eqref{eq:final_exp1}, i.e., $\mathbb{E}_{\{\mathbf E_i\},\{\mathbf P_i\}}\!\left\|r_{\mathrm{Berrut},e}(z)\right\|^2$ arises due to residual errors $\{\mathbf{E}_i - \hat{\mathbf{E}}_i\}$ i.e., when $\hat{\mathbf{E}}_i\neq{\mathbf{E}}_i$. Note that, this can happen either due to imperfect localization or perfect localization of errors by DCT decoder, as perfect localization does not always mean perfect error correction. For the fixed interpolation nodes and the fixed Byzantine worker set $\mathcal{A}$, the Berrut interpolation weights are deterministic. The residual component depends on both the adversarial noise $\{\mathbf E_i\}$ and the precision noise $\{\mathbf P_i\}$. Hence, the randomness in this term are due to adversarial noise, precision noise, and the success or failure of error localization. Let $E_{\mathrm{Loc}}$ denote the event that the DCT decoder fails to correctly localize the adversarial workers. In this context, the events of correct localization and localization failure are mutually exclusive, and hence by the law of total expectation, we express
\begin{align}
\label{eq:law of total expectation}
&\mathbb{E}_{\{\mathbf E_i\},\{\mathbf P_i\}}
\!\left\| r_{\text{Berrut},e}(z) \right\|^2
=(1-\mathrm{Prob}(E_{\mathrm{Loc}}))
\mathbb{E}\!\left\| r_{\text{Berrut},e_1}(z) \right\|^2+ \mathrm{Prob}(E_{\mathrm{Loc}})
\mathbb{E}\!\left\| r_{\text{Berrut},e_2}(z) \right\|^2 ,
\end{align}
\noindent where $r_{\text{Berrut},e_1}(z)$ and $r_{\text{Berrut},e_2}(z)$ correspond to the contributions when the DCT decoder perfectly and imperfectly localizes the location of the adversaries, respectively. The probabilities of the corresponding events are denoted by $1 - \mbox{Prob}(E_{Loc})$ and $\mbox{Prob}(E_{Loc})$. 

Note that, when there is perfect localization by the DCT decoder i.e., $\hat{\mathcal{A}}=\mathcal{A}$, it does not give the guarantee of perfect cancellation of error magnitudes i.e., $\hat{\mathbf{E}}_i\neq \mathbf{E}_i$, in the scenario, when $\sigma_{P}^2>0$, and therefore $\mathbf{E}_i - \hat{\mathbf{E}_i}\neq0$, and let this residual noise due to imperfect cancellation of error magnitude be denoted as $\tilde{\mathbf{E}}_i$, such that $\tilde{\mathbf{E}}_i=\mathbf{E}_i - \hat{\mathbf{E}_i}$ for $i\in \mathcal{A}$, therefore it has atmost $A$ non-zero positions. More specifically, even when the adversarial locations are correctly identified, finite-precision arithmetic prevents exact cancellation of the error magnitudes. Consequently, a small residual cancellation error remains at those $A$ error locations. Further, $r_{\mathrm{Berrut},e_1}(z)$ is $m\times n$ matrix-valued rational functions of $z$, therefore, to quantify the residual error due to imperfect cancellation of errors by DCT decoder under perfect localization, we analyze it for first any $(g,h)$-th entry of $r_{\text{Berrut},e_{1}}(z)$, however, this can be generalized to every entry of $r_{\mathrm{Berrut},e_1}(z)$, where $g\in[m], h\in[n]$. In this context, let $ \mathbf r_{g,h}\in\mathbb{R}^{N},\ \mathbf e_{g,h}\in\mathbb{R}^{N},\ \mathbf p_{g,h}\in\mathbb{R}^{N}$ denote vectors curved from the $(g,h)$-th entries of $\{\mathbf R_i\}$, $\{\mathbf E_i\}$, and $\{\mathbf P_i\}$, respectively. The received vector satisfies $\mathbf r_{g,h}=\mathbf v_{g,h}+\mathbf e_{g,h}+\mathbf p_{g,h}$, such that $\mathbf p_{g,h}\sim\mathcal N(\mathbf 0,\sigma_P^2\mathbf I)$, and  the non-zero entries of  $\mathbf e_{g,h}$ are distributed as $\mathbf e_{g,h}\sim\mathcal N(\mathbf 0,\sigma_A^2\mathbf I)$. Then we use the system of linear equations to find the estimate of the error magnitude as
\begin{equation}
\mathbf z_{g,h}=
\mathbf r_{g,h}\mathbf H^T=
\mathbf e_{g,h}\mathbf H^T+
\mathbf p_{g,h}\mathbf H^T.
\end{equation}
where $\mathbf z_{g,h}$ denote the syndrome vector corresponding to $(g,h)$ entry of the received computation matrix $\mathbf{R}_{i}$, for $i\in[N]$. In order to estimate the error magnitude $\tilde{\mathbf{e}}_{\mathcal{A},g,h}$ corresponding to $(g,h)$ entry of ${{\mathbf{\tilde{E}}_{i}}}$, for $i\in\mathcal{A}$, such that $\tilde{\mathbf{e}}_{\mathcal{A},g,h}=\mathbf{e}_{\mathcal{A},g,h} - \hat{\mathbf{e}}_{\mathcal{A},g,h}$, where $\mathcal{A}$ denote the correctly detected error locations by the DCT decoder and $\mathbf H_{\mathcal{A}}$ represents the matrix taking the corresponding columns of $\mathbf H$, we use least square method. The least squares estimate of the error magnitudes is $
\hat{\mathbf e}_{\mathcal{A},g,h}=\mathbf z_{g,h}
\left((\mathbf H_{\mathcal{A}}^T)^\dagger\right)^T$, where $\dagger$ denotes the pseudo inverse operator. Under correct localization, $
\mathbf e_{g,h}\mathbf H^T=\mathbf e_{\mathcal{A},g,h}\mathbf H_{\mathcal{A}}^T,$ and yields $ \hat{\mathbf e}_{\mathcal{A},g,h}=\mathbf e_{\mathcal{A},g,h}+\mathbf p_{g,h}\mathbf H^T\left((\mathbf H_{\mathcal{A}}^T)^\dagger\right)^T$, as $(\mathbf H_{\mathcal{A}}^T)^{\dagger}(\mathbf H_{\mathcal{A}}^T)=\mathbf{I}$. Hence, the residual error after cancellation of vector $\mathbf e_{g,h}$ is $\tilde{\mathbf e}_{\mathcal{A},g,h}
=-\mathbf p_{g,h}\mathbf H^T\left((\mathbf H_{\mathcal{A}}^T)^\dagger\right)^T.$ Since the rows of DCT matrix is orthogonal, therefore, $\mathbf p_{g,h}\mathbf H^T
\sim \mathcal N(\mathbf 0,\sigma_P^2\mathbf I),$ which follows from  $\mathbf H^T\mathbf H=\mathbf{I}$ and therefore the vector $\tilde{\mathbf e}_{\mathcal{A},g,h}$ is zero-mean Gaussian with covariance $\boldsymbol{\Sigma}_e
=\sigma_P^2\left((\mathbf H_{\mathcal{A}}^T)^\dagger\right)^T(\mathbf H_{\mathcal{A}}^T)^\dagger$. Hence, the residual error contribution to the $(g,h)$-th entry due to imperfect cancellation at the locations corresponding to the Byzantine worker set $\mathcal{A}$ can be expressed as
\begin{equation}
\label{eq:perfect cancel}
 r_{\mathrm{Berrut},e_1,g,h}(z)
=\sum_{i\in\mathcal{A}} w_{i}(z)\,\tilde e_{i,g,h}, 
\end{equation}
where $w_{i}(z)$ denotes the Berrut weight defined in \eqref{eq:berrut_weight_full}.
Note that, for $i\in \mathcal{A}\cap\hat{\mathcal{A}}$, $\tilde e_{i,g,h}$ is the corresponding component of $\tilde{\mathbf e}_{\mathcal{A},g,h}$. Taking the magnitude square of the above expression in \eqref{eq:perfect cancel} and expectation over $\{\mathbf e_{g,h}\},\{\mathbf p_{g,h}\}$, and finally expanding component-wise, we obtain
\begin{align}
\label{eq:componetwise}
\mathbb{E}_{\{\mathbf e_{g,h}\},\{\mathbf p_{g,h}\}}
|r_{\mathrm{Berrut},e_1,g,h}(z)|^2
&=\sum_{i\in\mathcal{A}} w_i^2 \mathrm{Var}(\tilde e_i)+
\sum_{i\ne j} w_i w_j \mathrm{Cov}(\tilde e_i,\tilde e_j).
\end{align}
Since $\tilde{\mathbf e}_{\mathcal{A},g,h}\sim \mathcal N(\mathbf 0,\boldsymbol{\Sigma}_e)$ with $\boldsymbol{\Sigma}_e=\sigma_P^2\left((\mathbf H_{\mathcal{A}}^T)^\dagger\right)^T(\mathbf H_{\mathcal{A}}^T)^\dagger,$ further we bound the entries of the covariance matrix $\boldsymbol{\Sigma}_e$ as $|\boldsymbol\Sigma_{e,i,j}| \leq \|\boldsymbol{\Sigma}_e\|_2$. Moreover, $\|\boldsymbol{\Sigma}_e\|_2=\sigma_P^2\left\|(\mathbf H_{\mathcal{A}}^T)^{\dagger}\right\|_2^2.$ Therefore, $\mathrm{Var}(\tilde e_i),|\mathrm{Cov}(\tilde e_i,\tilde e_j)|\leq\sigma_P^2\left\|(\mathbf H_{\mathcal{A}}^T)^{\dagger}\right\|_2^2.$
Substituting these bounds into \eqref{eq:componetwise} gives
\vspace{-0.4cm}
\begin{align}
\mathbb{E}_{\{\mathbf e_{g,h}\},\{\mathbf p_{g,h}\}}
|r_{\mathrm{Berrut},e_1,g,h}(z)|^2
&\leq
\sigma_P^2
\left\|(\mathbf H_{\mathcal{A}}^T)^{\dagger}\right\|_2^2\left(
\sum_{i\in\mathcal{A}}|w_i|^2
+\sum_{i\ne j}|w_i w_j|
\right).
\end{align}
Further, for fixed set of interpolation nodes and evaluation point $z$, we define a constant, such that $C_w = \max_{r\in [N]} |w_r(z)|^2$. Since $|\mathcal{A}|=A$,
\begin{equation}
\sum_{i\in\mathcal{A}}|w_i|^2 \le A C_w,
\quad \text{and} \quad \sum_{i\ne j}|w_i w_j|\leq A(A-1)C_w \le A^2 C_w.
\end{equation}
Thus, $\mathbb{E}_{\{\mathbf e_{g,h}\},\{\mathbf p_{g,h}\}}|r_{\mathrm{Berrut},e_1,g,h}|^2
\leq\sigma_P^2\left\|(\mathbf H_{\mathcal{A}}^T)^{\dagger}\right\|_2^2
C_w (A + A^2).$ Defining $C_2 =\left\|(\mathbf H_{\mathcal{A}}^T)^{\dagger}\right\|_2^2 C_w$, we obtain
\begin{equation}
  \label{eq:E(||e_1(z)||^2}
 \mathbb{E}_{\{\mathbf e_{g,h}\},\{\mathbf p_{g,h}\}}|r_{\mathrm{Berrut},e_1,g,h}|^2
\leq C_2 \sigma_P^2 (A + A^2).
\end{equation}
Therefore, \eqref{eq:E(||e_1(z)||^2} provides a bound on $\mathbb{E}_{\{\mathbf p_{g,h}\}}|r_{\mathrm{Berrut},e_1,g,h}(z)|^2$ conditioned on perfect localization of the adversarial workers, i.e., when $\hat{\mathcal{A}}=\mathcal{A}$.

Further, when localization errors occur, i.e., when the event $E_{\mathrm{Loc}}$ occurs with probability $\mathrm{Prob}(E_{\mathrm{Loc}})>0$, the decoder may incorrectly identify the locations of adversarial errors, leading to $\hat{\mathcal{A}}\neq \mathcal{A}$. In this case, some adversarial nodes may remain uncorrected due to missed detections, while erroneous cancellations may be applied at nodes $i\notin\mathcal{A}$ for which $\mathbf E_i=0$.
Thus, under imperfect localization, three types of residual noise may arise during reconstruction:
\begin{itemize}
\item[(i)] missed adversarial nodes, where adversarial errors remain uncorrected,
\item[(ii)] correctly detected adversarial nodes, where imperfect cancellation leaves residual noise due to finite-precision error estimation, and
\item[(iii)] false detections, where erroneous cancellations are applied at nodes for which $\mathbf E_i=0$.
\end{itemize}

Suppose that $m$ adversarial locations are misdetected by the decoder. In this case, $m$ adversarial nodes remain uncorrected, and up to $m$ non-adversarial nodes may be falsely detected and incorrectly corrected. Consequently, the total number of interpolation nodes contributing residual error becomes $A+m$. Since $1 \leq m \leq A$, the number of affected nodes lies between $A+1$ and $2A$. Let $\mathcal I \subseteq \mathcal [N]$ denote the set of interpolation nodes that contribute to the reconstruction error due to imperfect localization. This set includes nodes corresponding to missed detections, correctly detected adversarial nodes with imperfect cancellation, and falsely detected non-adversarial nodes. Hence, $|\mathcal I| = A + m$, and in the worst case, when all $A$ adversarial nodes are misdetected and replaced by non-adversarial nodes, we have $|\mathcal I| = 2A$. Therefore, $A+1 \leq |\mathcal I| \leq 2A.$ When localization is imperfect, i.e., $\hat{\mathcal{A}} \neq \mathcal{A}$, the decoder estimates the adversarial error using the incorrect set $\hat{\mathcal{A}}$ and subtracts it from the received computation. In this context, due to incorrect localization and the presence of precision noise, this subtraction does not perfectly cancel the error, resulting in a non-zero residual error at the affected locations. Consequently, the reconstruction error arising from imperfect localization and cancellation for the $(g,h)$-th entry can be expressed as
\begin{equation}
\label{eq:epselen_new}
r_{\mathrm{Berrut},e_2,g,h}(z)
=\sum_{i\in\mathcal I} w_i(z)\,\epsilon_{i,g,h}.
\end{equation}
To characterize this residual noise, we use the relation $\mathbf r_{g,h} = \mathbf v_{g,h} + \mathbf e_{g,h} + \mathbf p_{g,h}$, and compute the corresponding syndrome vector as $\mathbf z_{g,h}=\mathbf e_{g,h}\mathbf H^T+\mathbf p_{g,h}\mathbf H^T$. Since the error is supported on $\mathcal{A}$, we have $\mathbf z_{g,h}=\mathbf e_{\mathcal{A},g,h}\mathbf H_{\mathcal{A}}^T+\mathbf p_{g,h}\mathbf H^T.$
Due to localization error, i.e., $\hat{\mathcal{A}}\neq \mathcal{A}$, for a given detected location $\hat{\mathcal{A}}$, the decoder computes
$\hat{\mathbf e}_{\hat{\mathcal{A}},g,h}=\mathbf z_{g,h}\left((\mathbf H_{\hat{\mathcal{A}}}^T)^\dagger\right)^T.$ Substituting the expression for $\mathbf z_{g,h}$, we obtain $\hat{\mathbf e}_{\hat{\mathcal{A}},g,h} =\left(\mathbf e_{\mathcal{A},g,h}\mathbf H_{\mathcal{A}}^T+\mathbf p_{g,h}\mathbf H^T\right)\left((\mathbf H_{\hat{\mathcal{A}}}^T)^\dagger\right)^T.$ Expanding the above expression gives
$\hat{\mathbf e}_{\hat{\mathcal{A}},g,h} =\mathbf e_{\mathcal{A},g,h}\mathbf H_{\mathcal{A}}^T\left((\mathbf H_{\hat{\mathcal{A}}}^T)^\dagger\right)^T+ \mathbf p_{g,h}\mathbf H^T\left((\mathbf H_{\hat{\mathcal{A}}}^T)^\dagger\right)^T.$
The residual error is defined on the common index set $\mathcal I = \mathcal{A} \cup \hat{\mathcal{A}}$.
To ensure dimensional consistency, both the true and estimated error vectors are embedded into the space indexed by $\mathcal I$ using selection matrices $\mathbf S_{\mathcal{A}}$, and $\mathbf S_{\hat{\mathcal{A}}}$. In this context, let $\mathbf I_{\mathcal I}$ denote the identity matrix of size $|\mathcal I| \times |\mathcal I|$, and define
$\mathbf S_{\mathcal{A}}=\mathbf I_{\mathcal I}(:,\mathcal{A})$, and $\mathbf S_{\hat{\mathcal{A}}}=\mathbf I_{\mathcal I}(:,\hat{\mathcal{A}})$. These matrices embed vectors supported on $\mathcal{A}$ and $\hat{\mathcal{A}}$ into $\mathcal I$ by placing their entries at the corresponding indices and zeros elsewhere. Then, $\boldsymbol{\epsilon}_{\mathcal I,g,h}=\mathbf e_{\mathcal{A},g,h}\mathbf S_{\mathcal{A}}^T-\hat{\mathbf e}_{\hat{\mathcal{A}},g,h}\mathbf S_{\hat{\mathcal{A}}}^T.$ Substituting,
\begin{align*}
\boldsymbol{\epsilon}_{\mathcal I,g,h}
&=\mathbf e_{\mathcal{A},g,h}\mathbf S_{\mathcal{A}}^T
-\left(\mathbf e_{\mathcal{A},g,h}\mathbf H_{\mathcal{A}}^T
+\mathbf p_{g,h}\mathbf H^T
\right)\left((\mathbf H_{\hat{\mathcal{A}}}^T)^\dagger\right)^T\mathbf S_{\hat{\mathcal{A}}}^T =\mathbf e_{\mathcal{A},g,h}\Big( \mathbf S_{\mathcal{A}}^T-\mathbf H_{\mathcal{A}}^T\left((\mathbf H_{\hat{\mathcal{A}}}^T)^\dagger\right)^T\mathbf S_{\hat{\mathcal{A}}}^T\Big)-\mathbf p_{g,h}\mathbf H^T\left((\mathbf H_{\hat{\mathcal{A}}}^T)^\dagger\right)^T\mathbf S_{\hat{\mathcal{A}}}^T.
\end{align*}
We define
$\mathbf A =\mathbf S_{\mathcal{A}}-\mathbf S_{\hat{\mathcal{A}}}\mathbf P_{\hat{\mathcal{A}},\mathcal{A}},
\quad \text{where} \quad \mathbf P_{\hat{\mathcal{A}},\mathcal{A}}
=(\mathbf H_{\hat{\mathcal{A}}}^T)^\dagger \mathbf H_{\mathcal{A}}^T,$
and $\mathbf B =\mathbf S_{\hat{\mathcal{A}}}(\mathbf H_{\hat{\mathcal{A}}}^T)^\dagger.$
Thus, $\boldsymbol{\epsilon}_{\mathcal I,g,h}=\mathbf e_{\mathcal{A},g,h}\mathbf A^T-\mathbf p_{g,h}\mathbf H^T  \mathbf B^T.$ Hence, $\boldsymbol{\Sigma}_\epsilon =\sigma_A^2 \mathbf A \mathbf A^T+\sigma_P^2 \mathbf B \mathbf B^T.$ Therefore,
\begin{equation}
\label{eq:bound_cov_var}
\mathrm{Var}(\epsilon_i),|\mathrm{Cov}(\epsilon_i,\epsilon_j)|
\leq \|\boldsymbol{\Sigma}_\epsilon\|_2,
\end{equation}
where
\begin{equation}
\label{eq:cov_matrix_bound}
\|\boldsymbol{\Sigma}_\epsilon\|_2
\leq\sigma_A^2 \|\mathbf A\|_2^2+\sigma_P^2 \|\mathbf B\|_2^2.
\end{equation}
Taking the squared magnitude of \eqref{eq:epselen_new} gives $|r_{\mathrm{Berrut},e_2,g,h}(z)|^2=\left|\sum_{i\in\mathcal I} w_i(z)\epsilon_{i,g,h}\right|^2.$ Expanding the square component wise gives
$|r_{\mathrm{Berrut},e_2,g,h}(z)|^2=\sum_{i\in\mathcal I}|w_i|^2\epsilon_i^2+\sum_{i\neq j} w_i w_j \epsilon_i\epsilon_j .$ Taking expectation over $\{\mathbf e_{g,h}\}$ and $\{\mathbf p_{g,h}\}$ we obtain

\begin{scriptsize}
\begin{equation}
\mathbb{E}_{\{\mathbf e_{g,h}\},\{\mathbf p_{g,h}\}}|r_{\mathrm{Berrut},e_2,g,h}(z)|^2=\sum_{i\in\mathcal I}|w_i|^2\mathrm{Var}(\epsilon_i)+\sum_{i\ne j} w_i w_j\mathrm{Cov}(\epsilon_i,\epsilon_j).
\label{eq:expansion_new}
\end{equation}
\end{scriptsize}

To obtain an upper bound, we apply the triangle inequality to the cross term. Therefore, using the relation from \eqref{eq:bound_cov_var}, we express $|w_i w_j\mathrm{Cov}(\epsilon_i,\epsilon_j)|
\leq |w_i w_j|\;\|\boldsymbol{\Sigma}_\epsilon\|_2.$ Hence, $\left|\sum_{i\ne j} w_i w_j\mathrm{Cov}(\epsilon_i,\epsilon_j)\right|
\leq \|\boldsymbol{\Sigma}_\epsilon\|_2\sum_{i\ne j}|w_i w_j|.$
Similarly, for the diagonal terms, $|w_i|^2 \mathrm{Var}(\epsilon_i)\leq |w_i|^2 \|\boldsymbol{\Sigma}_\epsilon\|_2.$ Hence,

\begin{scriptsize}
\begin{equation}
\mathbb{E}_{\{\mathbf e_{g,h}\},\{\mathbf p_{g,h}\}}|r_{\mathrm{Berrut},e_2,g,h}(z)|^2
\leq \|\boldsymbol{\Sigma}_\epsilon\|_2
\left(\sum_{i\in\mathcal I}|w_i|^2+\sum_{i\neq j}|w_i w_j|
\right).
\end{equation}
\end{scriptsize}

Finally, using the bound in \eqref{eq:cov_matrix_bound}, we express the above bound for a given $\hat{\mathcal{A}}$ as
\begin{scriptsize}
\begin{align}
\label{eq:E(||e_1(z)2||^2}
\mathbb{E}_{\{\mathbf e_{g,h}\},\{\mathbf p_{g,h}\}}|r_{\mathrm{Berrut},e_2,g,h}(z)|^2
&\leq \Bigg(\sigma_A^2 \|\mathbf S_{\mathcal{A}}-\mathbf S_{\hat{\mathcal{A}}}\mathbf{P}_{\hat{\mathcal{A}},\mathcal{A}}\|_2^2 +\sigma_P^2 \|\mathbf S_{\hat{\mathcal{A}}}(\mathbf H_{\hat{\mathcal{A}}}^T)^\dagger\|_2^2\Bigg) \left(\sum_{i\in\mathcal I}|w_i|^2+\sum_{i\neq j}|w_i w_j|
\right),
\end{align}
\end{scriptsize}
\noindent where $\mathbf{P}_{\hat{\mathcal{A}},\mathcal{A}} = (\mathbf H_{\hat{\mathcal{A}}}^T)^\dagger \mathbf H_{\mathcal{A}}^T$.
Note that the detected set $\hat{\mathcal{A}}$ depends on the localization outcome. Each possible localization outcome produces a different detected set $\hat{\mathcal{A}}$, which determines both the decoding matrix $(\mathbf H_{\hat{\mathcal{A}}}^T)^\dagger$ and hence the covariance matrix of the residual noise i.e., $\boldsymbol{\Sigma}_\epsilon$ and thereby $\|\boldsymbol{\Sigma}_\epsilon\|_2$. Consequently, the reconstruction error bound varies across different localization outcomes. Let $E(\hat{\mathcal{A}})$ denote the reconstruction error bound corresponding to a given detected set $\hat{\mathcal{A}}$ as defined in \eqref{eq:E(||e_1(z)2||^2}, and let $\Pr(\hat{\mathcal{A}})$ denote the probability that the decoder outputs $\hat{\mathcal{A}}$. Then, $\mathbb{E}_{\{\mathbf e_{g,h}\},\{\mathbf p_{g,h}\}}|r_{\mathrm{Berrut},e_2,g,h}(z)|^2
=\sum_{\hat{\mathcal{A}}} \Pr(\hat{\mathcal{A}})\,E(\hat{\mathcal{A}}).$ 

Since the above expression is a convex combination, it follows that
$\mathbb{E}_{\{\mathbf e_{g,h}\},\{\mathbf p_{g,h}\}}|r_{\mathrm{Berrut},e_2,g,h}(z)|^2
\leq \max_{\hat{\mathcal{A}}} E(\hat{\mathcal{A}}).$ Let $\hat{\mathcal{A}}^*$ denote the worst-case detected set, defined as $\hat{\mathcal{A}}^* = \arg\max_{\hat{\mathcal{A}}\neq \mathcal{A}} E(\hat{\mathcal{A}}).$ Thus,
$\mathbb{E}_{\{\mathbf e_{g,h}\},\{\mathbf p_{g,h}\}}|r_{\mathrm{Berrut},e_2,g,h}(z)|^2
\leq E(\hat{\mathcal{A}}^*).$ Evaluating the bound in \eqref{eq:E(||e_1(z)2||^2} at $\hat{\mathcal{A}}^*$, we obtain $E(\hat{\mathcal{A}}^*)
\leq \|\boldsymbol{\Sigma}_\epsilon(\hat{\mathcal{A}}^*)\|_2
\left(\sum_{i\in\mathcal I}|w_i|^2+\sum_{i\ne j}|w_i w_j|\right),$
where $\|\boldsymbol{\Sigma}_\epsilon(\hat{\mathcal{A}}^*)\|_2
\leq \sigma_A^2 \|\mathbf S_{\mathcal{A}}-\mathbf S_{\hat{\mathcal{A}}^*}\mathbf{P}_{\hat{\mathcal{A}}^*,\mathcal{A}}\|_2^2
+\sigma_P^2 \|\mathbf S_{\hat{\mathcal{A}}^*}(\mathbf H_{\hat{\mathcal{A}}^*}^T)^\dagger\|_2^2.$ Since $|\mathcal I| \le 2A$, we have $\sum_{i\in\mathcal I}|w_i|^2 \le 2A C_w$ and $\sum_{i\ne j}|w_i w_j| \le 4A^2 C_w$. Thus,
\begin{align}
E(\hat{\mathcal{A}}^*)
&\leq C_w\Big[
\sigma_A^2 \|\mathbf S_{\mathcal{A}}-\mathbf S_{\hat{\mathcal{A}}^*}\mathbf{P}_{\hat{\mathcal{A}}^*,\mathcal{A}}\|_2^2
+\sigma_P^2 \|\mathbf S_{\hat{\mathcal{A}}^*}(\mathbf H_{\hat{\mathcal{A}}^*}^T)^\dagger\|_2^2
\Big](2A + 4A^2).
\end{align}
\noindent Finally, we obtain
\vspace{-0.4cm}
\begin{align}
\label{eq:E(||e_1(z)1||^2}
\mathbb{E}_{\{\mathbf e_{g,h}\},\{\mathbf p_{g,h}\}}|r_{\mathrm{Berrut},e_2,g,h}(z)|^2
&\leq C_w\Big[
\sigma_A^2 \|\mathbf S_{\mathcal{A}}-\mathbf S_{\hat{\mathcal{A}}^*}\mathbf{P}_{\hat{\mathcal{A}}^*,\mathcal{A}}\|_2^2
+\sigma_P^2 \|\mathbf S_{\hat{\mathcal{A}}^*}(\mathbf H_{\hat{\mathcal{A}}^*}^T)^\dagger\|_2^2
\Big](2A + 4A^2).
\end{align}
\noindent Therefore, by substituting \eqref{eq:E(||e_1(z)||^2} and \eqref{eq:E(||e_1(z)1||^2} into \eqref{eq:law of total expectation}, we obtain an upper bound on the third term on the right-hand side of \eqref{eq:final_exp} for the $(g,h)$-th entry.

\noindent Finally, the fourth term on the right-hand side of \eqref{eq:final_exp1} is
$2\,\mathbb{E}_{\{\mathbf P_i\},\{\mathbf E_i\}}\!\left\langle r_{\mathrm{Berrut},p}(z),
r_{\mathrm{Berrut},e}(z) \right\rangle .$ Since the inner product may take either positive or negative values, we upper bound it by its magnitude. Using Jensen's inequality, we obtain
\begin{align}
\label{eq:jensen}
&\left|
\mathbb{E}_{\{\mathbf P_i\},\{\mathbf E_i\}}\!\left\langle
r_{\mathrm{Berrut},p}(z),
r_{\mathrm{Berrut},e}(z)
\right\rangle
\right| \leq\mathbb{E}_{\{\mathbf P_i\},\{\mathbf E_i\}} 
\left|\left\langle r_{\mathrm{Berrut},p}(z), r_{\mathrm{Berrut},e}(z)
\right\rangle
\right|.
\end{align}
\noindent Applying Holder’s inequality with $p=q=2$ in the right-hand side of \eqref{eq:jensen} gives
\begin{scriptsize}
    \begin{align}
&\mathbb{E}_{\{\mathbf P_i\},\{\mathbf E_i\}}
\left|
\left\langle r_{\mathrm{Berrut},p}(z),r_{\mathrm{Berrut},e}(z) \right\rangle \right| \leq \sqrt{
\mathbb{E}_{\{\mathbf P_i\},\{\mathbf E_i\}} \|r_{\mathrm{Berrut},p}(z)\|^2} \sqrt{\mathbb{E}_{\{\mathbf P_i\},\{\mathbf E_i\}}\|r_{\mathrm{Berrut},e}(z)\|^2 } .
\end{align}
\end{scriptsize}
\noindent Further, for any $(g,h)$-th entry, applying Jensen's inequality followed by the Holders inequality to the scalar random variables $r_{\mathrm{Berrut},p,g,h}(z)$ and $r_{\mathrm{Berrut},e,g,h}(z)$ gives
\begin{scriptsize}
    \begin{align}
\label{eq:3}
&\left|
\mathbb{E}_{\{\mathbf p_{g,h}\},\{\mathbf e_{g,h}\}}
\left[r_{\mathrm{Berrut},p,g,h}(z)\,r_{\mathrm{Berrut},e,g,h}(z)
\right]\right|\leq \sqrt{\mathbb{E}_{\{\mathbf p_{g,h}\}}|r_{\mathrm{Berrut},p,g,h}(z)|^2}\;
\sqrt{\mathbb{E}_{\{\mathbf p_{g,h}\},\{\mathbf e_{g,h}\}}|r_{\mathrm{Berrut},e,g,h}(z)|^2 } .
\end{align}
\end{scriptsize}
\noindent Thus, \eqref{eq:3} provides an upper bound for the $(g,h)$-th entry of the fourth term in \eqref{eq:final_exp1}. Using \eqref{eq:norm_square_p(z)}, we upper bound the first term on the right-hand side of \eqref{eq:3} i.e., $\mathbb{E}_{\{\mathbf p_{g,h}\}}|r_{\mathrm{Berrut},p,g,h}(z)|^2$, as
\begin{align}
\label{eq:1}
\mathbb{E}_{\{\mathbf p_{g,h}\}}
\left|r_{\mathrm{Berrut},p,g,h}(z) \right|^2 \leq C_1 N\sigma_P^2 .
\end{align}
\noindent Further, we upper bound the second term on the right-hand side of \eqref{eq:3},
$\mathbb{E}_{\{\mathbf p_{g,h}\},\{\mathbf e_{g,h}\}} |r_{\mathrm{Berrut},e,g,h}(z)|^2$,
by substituting the bounds in \eqref{eq:E(||e_1(z)||^2} and \eqref{eq:E(||e_1(z)1||^2}
into the law of total expectation in \eqref{eq:law of total expectation}, yielding
\begin{scriptsize}
\begin{align}
\label{eq:2}
\mathbb E_{\{\mathbf p_{g,h}\},\{\mathbf e_{g,h}\}}
|r_{\mathrm{Berrut},e,g,h}(z)|^2 \leq C_2(1-\mathrm{Prob}(E_{\mathrm{Loc}}))
\sigma_P^2 (A+A^2)+
\mathrm{Prob}(E_{\mathrm{Loc}})C_w\Big[
\sigma_A^2 \|\mathbf S_{\mathcal{A}}-\mathbf S_{\hat{\mathcal{A}}^*}\mathbf{P}_{\hat{\mathcal{A}}^*,\mathcal{A}}\|_2^2
\quad +\sigma_P^2 \|\mathbf S_{\hat{\mathcal{A}}^*}(\mathbf H_{\hat{\mathcal{A}}^*}^T)^\dagger\|_2^2
\Big] (2A + 4A^2).
\end{align}
\end{scriptsize}
\noindent Let $\Psi$ denote the right-hand side of \eqref{eq:2}. Substituting
\eqref{eq:1} and \eqref{eq:2} into \eqref{eq:3}, we obtain
\[\left|
\mathbb{E}_{\{\mathbf p_{g,h}\},\{\mathbf e_{g,h}\}}
\left[
r_{\mathrm{Berrut},p,g,h}(z)
r_{\mathrm{Berrut},e,g,h}(z)
\right]
\right|
\leq
\sigma_P\sqrt{\Psi C_1N}.\] Finally, the fourth term on the right-hand side of \eqref{eq:final_exp1} can be bounded as
\begin{align}
\label{eq:E(pe)}
&2\,\mathbb{E}_{\{\mathbf e_{g,h}\},\{\mathbf p_{g,h}\}}
\Big[r_{\mathrm{Berrut},p,g,h}(z)\,r_{\mathrm{Berrut},e,g,h}(z)
\Big]\leq 2\left|\mathbb{E}_{\{\mathbf p_{g,h}\},\{\mathbf e_{g,h}\}}
\left[r_{\mathrm{Berrut},p,g,h}(z)\,r_{\mathrm{Berrut},e,g,h}(z)\right]\right|\leq 2\sigma_P\sqrt{\Psi C_1 N}.
\end{align}


\noindent Now combining all the terms in \eqref{eq:term1}, \eqref{eq:norm_square_p(z)},
 \eqref{eq:E(||e_1(z)||^2}, \eqref{eq:E(||e_1(z)1||^2}, and \eqref{eq:E(pe)} we can finally upper bound \eqref{eq:final_exp} as shown in \eqref{eq:adv_bound}. This completes the proof.

\end{appendix}

\begin{thebibliography}{16}


\bibitem{b1} Q. Yu, M. A. Maddah-Ali, and A. S. Avestimehr, ``Straggler mitigation in distributed matrix multiplication: Fundamental limits and optimal coding,'' in \emph{IEEE Transactions on Information Theory}, vol. 66, no. 3, pp. 1920--1933, Mar. 2020.

\bibitem{w1} A. Ramamoorthy, A. B. Das, and L. Tang, ``Straggler-resistant distributed matrix computation via coding theory: Removing a bottleneck in large-scale data processing,'' in \emph{IEEE Signal Processing Magazine}, vol. 37, no. 3, pp. 136--145, May 2020.

\bibitem{s1} K. Lee, M. Lam, R. Pedarsani, D. Papailiopoulos, and K. Ramchandran, ``Speeding up distributed machine learning using codes,'' in \emph{IEEE Transactions on Information Theory}, vol. 64, no. 3, pp. 1514--1529, Mar. 2018.

\bibitem{s2} Q. Yu, M. A. Maddah-Ali, and A. S. Avestimehr, ``Polynomial codes: An optimal design for high-dimensional coded matrix multiplication,'' in \emph{Advances in Neural Information Processing Systems (NeurIPS)}, vol. 30, 2017.

\bibitem{s3} Q. Yu and A. S. Avestimehr, ``Entangled polynomial codes for secure, private, and batch distributed matrix multiplication: Breaking the cubic barrier,'' in \emph{IEEE Transactions on Information Theory}, vol. 67, no. 4, pp. 2458--2480, Apr. 2021.

\bibitem{b3} Q. Yu, S. Li, N. Raviv, S. M. M. Kalan, M. Soltanolkotabi, and S. A. Avestimehr, ``Lagrange coded computing: Optimal design for resiliency, security, and privacy,'' in \emph{22nd International Conference on Artificial Intelligence and Statistics}, 2019, pp. 1215--1225.

\bibitem{b5} J. Zhu and S. Li, ``Generalized Lagrange coded computing: A flexible computation-communication tradeoff,'' in \emph{IEEE International Symposium on Information Theory (ISIT)}, Finland, 2022, pp. 832--837.

\bibitem{NSLCC} M. Fahim and V. R. Cadambe, ``Numerically Stable Polynomially Coded Computing," in \emph{IEEE Transactions on Information Theory}, vol. 67, no. 5, pp. 2758-2785, May 2021.

\bibitem{b4} M. Soleymani \emph{et al.}, ``List-decodable coded computing: Breaking the adversarial toleration barrier,'' in \emph{IEEE Journal on Selected Areas in Information Theory}, vol. 2, no. 3, pp. 867--878, 2021.

\bibitem{verifiableCC} W. Kim, S. Kruglik, and H. M. Kiah, ``Verifiable coded computation of multiple functions,'' \emph{IEEE Transactions on Information Forensics and Security}, vol. 19, pp. 8009--8022, 2024.ed
  
\bibitem{b6} M. Soleymani, H. Mahdavifar, and A. S. Avestimehr, ``Analog Lagrange coded computing,'' in \emph{IEEE Journal on Selected Areas in Information Theory}, vol. 2, no. 1, pp. 283--295, Mar. 2021.


\bibitem{robust ALCC} R. Borah and J. Harshan, ``Robust Analog Lagrange coded computing: Theory and algorithms via Discrete Fourier Transforms,'' \emph{arXiv preprint arXiv:2510.20379}, 2025.


\bibitem{a2} R. Borah and J. Harshan, ``On securing analog Lagrange coded computing from colluding adversaries,'' in \emph{2024 IEEE International Symposium on Information Theory (ISIT)}, Athens, Greece, 2024.

\bibitem{privacy ALCC} M. Soleymani, H. Mahdavifar and A. S. Avestimehr, ``Analog Privacy-Preserving Coded Computing," in \emph{IEEE International Symposium on Information Theory (ISIT)}, Melbourne, Australia, 2021.

\bibitem{alcc DL} M. Soleymani, H. Mahdavifar, and A. S. Avestimehr, ``Privacy-preserving distributed learning in the analog domain,'' \emph{arXiv preprint arXiv:2007.08803}, 2020.









\bibitem{b7} T. Jahani-Nezhad and M. A. Maddah-Ali, ``Berrut approximated coded computing: Straggler resistance beyond polynomial computing,'' in \emph{IEEE Transactions on Pattern Analysis and Machine Intelligence}, vol. 45, no. 1, pp. 111--122, Jan. 2023.

\bibitem{a3} M. Soleymani \emph{et al.}, ``ApproxIFER: A model-agnostic approach to resilient and robust prediction serving systems,'' in \emph{Proceedings of the AAAI Conference on Artificial Intelligence}, vol. 36, no. 8, 2022.

\bibitem{b8} X. Martínez-Luaña, M. Fernández-Veiga, and R. P. Díaz-Redondo, ``Privacy-aware Berrut approximated coded computing applied to federated learning,'' in \emph{2024 IEEE International Workshop on Information Forensics and Security (WIFS)}, Rome, Italy, 2024.

\bibitem{spotcc}
L. Wang, Y. Hu, Z. Duan, M. Li, C. Yao, F. Liu, X. Li, L. Qin, and D. Feng, ``SpotCC: Facilitating coded computation for prediction serving systems on spot instances,'' in \emph{Proc. IEEE International Symposium on High Performance Computer Architecture (HPCA)}, 2026, pp. 1600--1615.

\bibitem{barycentricMEC}
H. Qiu, K. Zhu, D. Niyato, N. C. Luong, C. Yi, and C. Dai, ``Barycentric coded distributed computing with flexible recovery threshold for collaborative mobile edge computing,'' \emph{IEEE Transactions on Mobile Computing}, vol. 24, no. 11, pp. 1--14, Nov. 2025.


\bibitem{b9} G. Rath and C. Guillemot, ``Characterization of a class of error correcting frames for robust signal transmission over wireless communication channels,'' in \emph{EURASIP Journal on Applied Signal Processing}, pp. 229--214, 2005.


\bibitem{b10} J.-L. Wu and J. Shiu, ``Real-valued error control coding by using DCT,'' in \emph{IEE Proceedings I: Communications, Speech and Vision}, vol. 139, no. 2, pp. 133--139, 1992.



\bibitem{BSCC}
R. Borah, J. Harshan, and V. Lalitha, ``Basis-Spline Assisted Coded Computing: Strategies and Error Bounds,'' to appear in the \emph{Proceedings of the IEEE International Symposium on Information Theory (ISIT)}, Guangzhou, China, 2026.

\bibitem{learning theory} P. Moradi, B. Tahmasebi, and M. Maddah-Ali, ``Coded computing for resilient distributed computing: A learning-theoretic framework,'' in \emph{Advances in Neural Information Processing Systems}, vol. 37, pp. 111923--111964, 2024.



\bibitem{SBACC} R. Borah and J. Harshan, ``On securing Berrut approximated coded computing through discrete cosine transforms,” in  \emph{IEEE Information Theory Workshop (ITW)}, Sydney, Australia, 2025



\bibitem{b12} A. Gabay, P. Duhamel, and O. Rioul, ``Real BCH codes as joint source channel codes for satellite images coding,'' in \emph{IEEE Global Telecommunications Conference (GLOBECOM)}, vol. 2, pp. 820--824, 2000.

\bibitem{b13} G. Rath and C. Guillemot, ``Subspace-based error and erasure correction with DFT codes for wireless channels,'' in \emph{IEEE Transactions on Signal Processing}, vol. 52, no. 11, pp. 3241--3252, Nov. 2004. 

\bibitem{b14} G. Takos and C. N. Hadjicostis, ``Determination of the number of errors in DFT codes subject to low-level quantization noise,'' in \emph{IEEE Transactions on Signal Processing}, vol. 56, pp. 1043--1054, Mar. 2008.

\bibitem{b15} M. Vaezi and F. Labeau, ``Generalized and extended subspace algorithms for error correction with quantized DFT codes,'' in \emph{IEEE Transactions on Communications}, vol. 62, no. 2, pp. 410--422, Feb. 2014.

\bibitem{b16} G. Rath and C. Guillemot, ``ESPRIT-like error localization algorithm for a class of real number codes,'' in \emph{GLOBECOM '03. IEEE Global Telecommunications Conference}, San Francisco, CA, USA, 2003, pp. 2203--2207.

\bibitem{b17} M. Vaezi and F. Labeau, ``Least squares solution for error correction on the real field using quantized DFT codes,'' in \emph{20th European Signal Processing Conference (EUSIPCO)}, Bucharest, Romania, 2012, pp. 2561--2565.

\bibitem{b18} L. P. Bos, S. De Marchi, K. Hormann, and J. Sidon, ``Bounding the Lebesgue constant for Berrut's rational interpolant at general nodes,'' in \emph{Journal of Approximation Theory}, vol. 169, pp. 7--22, 2013.

\bibitem{TaylorRemainder}
K. Conrad, ``The Remainder in Taylor Series,'' lecture notes, Univ. of Connecticut. [Online]. Available: \url{https://kconrad.math.uconn.edu/blurbs/analysis/taylorremainder.pdf}

\end{thebibliography}
\end{document}